\documentclass[11pt, letterpaper, reqno]{amsart}
\usepackage[utf8]{inputenc} 	
\usepackage{microtype} 			
\usepackage{geometry} 			
\usepackage{amsmath}			
\usepackage{amsthm}		 		
\usepackage{amssymb}	 		
\usepackage{bm}					
\usepackage{mathrsfs}			
\usepackage{thmtools}
\usepackage{xcolor}				
\definecolor{darkcandyapp}{rgb}{0.66, 0.0, 0.0}
\definecolor{darkblue}{rgb}{0.0, 0.0, 0.66}
\definecolor{mygreen}{rgb}{0.0, 0.66,0}
\usepackage{tikz}				
\usepackage{tikz-3dplot}
\usepackage{pgfplots}
\pgfplotsset{compat=1.13}
\usetikzlibrary{decorations.markings,decorations.pathmorphing,shapes}
\usepackage[all]{xy}			
\usepackage{graphicx}			
\usepackage{subcaption}
\usepackage{floatrow}
\usepackage{enumitem} 			
\usepackage{booktabs}
\usepackage{array}
\usepackage{lmodern}			
\usepackage[T1]{fontenc}		
\usepackage[colorlinks=true, citecolor=cyan, urlcolor=magenta]{hyperref}
\usepackage{cleveref}
\usepackage[backend=bibtex, style=alphabetic, date=year, backref=true, firstinits=true, isbn=false]{biblatex}
\setcounter{tocdepth}{1}
\addbibresource{SBG.bib}

\makeatletter
\renewbibmacro{in:}{}
\DeclareFieldFormat{pages}{#1}
\renewcommand*{\bibnamedash}{%
	\leavevmode\raise +0.6ex\hbox to 5.5ex{\hrulefill}.\space\space}

\InitializeBibliographyStyle{\global\undef\bbx@lasthash}

\newbibmacro*{bbx:savehash}{\savefield{fullhash}{\bbx@lasthash}}

\renewbibmacro*{author}{%
	\ifboolexpr{
		test \ifuseauthor
		and
		not test {\ifnameundef{author}}
	}
	{%
		\iffieldequals{fullhash}{\bbx@lasthash}
		{\bibnamedash\addcomma\space}
		{\printnames{author}}%
		\usebibmacro{bbx:savehash}%
		\iffieldundef{authortype}
		{}
		{%
			\setunit{\addcomma\space}%
			\usebibmacro{authorstrg}%
		}%
	}
	{\global\undef\bbx@lasthash}%
}
\makeatother
\geometry
{
	letterpaper,
	margin=1in,
	headheight=15pt
}

\newenvironment{proposition}
{\pushQED{\qed}\propositionx}
{\popQED\endpropositionx}
\newenvironment{propositionp}
{\pushQED{\qed}\propositionx}
{\popQED\endpropositionx}

\newenvironment{theorem}
{\pushQED{\qed}\theoremx}
{\popQED\endtheoremx}

\newenvironment{corollary}
{\pushQED{\qed}\corollaryx}
{\popQED\endcorollaryx}

\theoremstyle{remark}

\newenvironment{example}
{\pushQED{\qed}\examplex}
{\popQED\endexamplex}

\numberwithin{equation}{section}


\newcommand{\dd}{\,\mathrm{d}}

\DeclareRobustCommand{\atled}{\text{\reflectbox{$\delta$}}}

\makeatletter
\@namedef{subjclassname@2020}{\textup{2020} Mathematics Subject Classification}
\makeatother

%


\newcommand{\bbC}{\mathbb{C}}

\newcommand{\bbN}{\mathbb{N}}

\newcommand{\bbR}{\mathbb{R}}

\newcommand{\bbZ}{\mathbb{Z}}

\newcommand{\calA}{\mathcal{A}}

\newcommand{\calD}{\mathcal{D}}
\newcommand{\calE}{\mathcal{E}}
\newcommand{\calF}{\mathcal{F}}

\newcommand{\calH}{\mathcal{H}}
\newcommand{\calI}{\mathcal{I}}
\newcommand{\calJ}{\mathcal{J}}

\newcommand{\calL}{\mathcal{L}}
\newcommand{\calM}{\mathcal{M}}

\newcommand{\calP}{\mathcal{P}}

\newcommand{\calS}{\mathcal{S}}

\newcommand{\calX}{\mathcal{X}}


\newcommand{\scrO}{\mathscr{O}}


\newcommand{\frakS}{\mathfrak{S}}



\newcommand{\bmalpha}{\bm{\alpha}}
\newcommand{\bmbeta}{\bm{\beta}}
\newcommand{\bmgamma}{\bm{\gamma}}

\newcommand{\bmdelta}{\bm{\delta}}

\newcommand{\bmlambda}{\bm{\lambda}}

\newcommand{\bmrho}{\bm{\rho}}
\newcommand{\bmvarrho}{\bm{\varrho}}


\title{The singularities of Selberg- and Dotsenko--Fateev-like integrals}
\author{Ethan Sussman}
\date{December 22nd, 2023 (Published version), January 8th, 2023 (Preprint).}
\email{ethanws@mit.edu}
\address{Department of Mathematics, Massachusetts Institute of Technology, Massachusetts 02139-4307, USA}
\subjclass[2020]{Primary 32A20; Secondary 33C60, 33C90, 81T40}

\begin{document}

\begin{abstract}
	We discuss the meromorphic continuation of certain hypergeometric integrals modeled on the Selberg integral, including the 3-point and 4-point functions of BPZ's minimal models of 2D CFT as described by Felder \& Silvotti and Dotsenko \& Fateev (the ``Coulomb gas formalism''). This is accomplished via a geometric analysis of the singularities of the integrands. In the case that the integrand is symmetric (as in the Selberg integral itself) or, more generally, what we call ``DF-symmetric,'' we show that a number of apparent singularities are removable, as required for the construction of the minimal models via these methods. 
\end{abstract}
	
\maketitle
\tableofcontents

\section{Introduction}
Let 
\begin{equation}
\triangle_N = \{(x_1,\ldots,x_N)\in [0,1]^N : x_1 \leq \cdots \leq x_N\} 
\end{equation}
denote the standard $N$-simplex, which we consider as a subset of $\bbC^N$. 
We study in this note \emph{Selberg-like} integrals, by which we mean definite integrals of the form 
\begin{equation} 
S_N[F](\bmalpha,\bmbeta,\bmgamma) = \int_{\triangle_N}  F(x_1,\ldots,x_N) \prod_{j=1}^N x_j^{\alpha_j}(1-x_j)^{\beta_j}   \prod_{1\leq j<k \leq N} (x_k-x_j)^{2\gamma_{j,k}}  \dd x_1\cdots \dd x_N,  
\label{eq:Selberg_def}
\end{equation} 
for $N\in \bbN^+$, $F\in C^\infty(\triangle_N)$, and $\bmalpha=\{\alpha_j\}_{j=1}^N,\bmbeta=\{\beta_j\}_{j=1}^N,\bmgamma=\{\gamma_{j,k}=\gamma_{k,j}\}_{1\leq j<k \leq N} \subset \bbC$ such that the integrand above is absolutely integrable on $\triangle_N$. 
Integrals of this form are relevant to an array of topics in mathematical physics \cite{FW}. 
However, it is often necessary to consider exponents $\bmalpha,\bmbeta,\bmgamma$ for which the integral above is \emph{not} absolutely convergent, in which case a meromorphic extension needs to be performed. 
In some applications, only the behavior of this extension at generic exponents is required. In others, such as the application -- discussed below -- to the construction of the minimal models of 2D CFT, it is necessary to consider particular values, e.g.\ $\gamma_{j,k}=-1$. Unfortunately, for these particular values, previous work on the subject is not sufficient.

We will identify indexed collections of complex numbers (and tuples thereof) with column vectors. For example, we identify $\bmgamma$ with an element of $\smash{\bbC^{N(N-1)/2}}$ and 
\begin{equation} 
(\bmalpha,\bmbeta,\bmgamma)\in \bbC^N\times \bbC^N \times \bbC^{N(N-1)/2}
\end{equation} 
with an element of $\bbC^{2N+N(N-1)/2}$. Similar identifications will be made throughout the rest of the paper without further comment.
Let 
\begin{equation} 
\Omega_{N} = \bigg\{(\bmalpha,\bmbeta,\bmgamma) \in \bbC^{2N+N(N-1)/2} :  \prod_{j=1}^N x_j^{\alpha_j}(1-x_j)^{\beta_j}  \prod_{1\leq j<k \leq N} (x_k-x_j)^{2\gamma_{j,k}} \in L^1(\triangle_N) \bigg\}
\end{equation} 
denote the (open, nonempty) subset of $\bbC^{2N+N(N-1)/2}$ consisting of the $(\bmalpha,\bmbeta,\bmgamma)\in \bbC^{2N+N(N-1)/2}$ for which the integrand in \cref{eq:Selberg_def} is absolutely integrable on $\triangle_N$. We begin with $S_N[F]$ defined as a function $S_N[F]:\Omega_N\to \bbC$.  
It can be checked -- see \S\ref{sec:geometry} -- that, letting 
\begin{align}
\begin{split} 
\alpha_{j,*}(\bmalpha,\bmbeta,\bmgamma) &=\sum_{j_0=1}^j \alpha_{j_0} + 2\sum_{\substack{ j_0,k \in \{1,\ldots,N\}\\ 1\leq j_0 < k \leq j}} \gamma_{j_0,k}, \\ \beta_{j,*}(\bmalpha,\bmbeta,\bmgamma) &= \sum_{j_0=N-j+1}^{N} \beta_{j_0}  + 2 \sum_{\substack{j_0,k\in \{1,\dots,N\}\\ N-j+1 \leq j_0 < k \leq N}} \gamma_{j_0,k} 
\end{split}
\label{eq:ab*}
\end{align}
for each $j\in \{1,\ldots,N\}$, and letting 
\begin{equation}
\gamma_{j,k,*}(\bmalpha,\bmbeta,\bmgamma) = 2 \sum_{\substack{j_0,k_0\in \{1,\ldots,N\} \\ j \leq j_0<k_0 \leq k}} \gamma_{j_0,k_0} 
\label{eq:c*}
\end{equation}
for each pair of $j,k\in \{1,\ldots,N\}$ with $j<k$,
\begin{equation}
\Omega_N = \Big[ \bigcap_{j=1}^N \{\Re \alpha_{j,*} > -j\}  \Big] \cap \Big[ \bigcap_{j=1}^N \{\Re \beta_{j,*} > -j\}  \Big] \cap \Big[ \bigcap_{1\leq j < k \leq N} \{\Re \gamma_{j,k,*} > -(k-j)\}  \Big].
\label{eq:Omega_N_definition}
\end{equation}
So, $\Omega_N$ is nonempty, open, and convex (in particular, connected) and contains all $(\bmalpha,\bmbeta,\bmgamma)\in \smash{\bbC^{2N+N(N-1)/2}}$ such that the real parts of the components of $\bmalpha,\bmbeta,\bmgamma$ are sufficiently large. 

To simplify the formula above, let $\gamma_{0,k,*}=\alpha_{k,*}$ and $\gamma_{N+1-j,N+1,*}=\beta_{j,*}$. Then 
\begin{equation}
\Omega_N = \bigcap_{0\leq j < k \leq N+1} \{(\bmalpha,\bmbeta,\bmgamma)\in \bbC^{2N+N(N-1)/2} : \Re \gamma_{j,k,*} > -(k-j) \}.
\end{equation}

Our first goal is to prove that $S_N[F]$ can be analytically continued to a subset 
\begin{equation} 
\dot{\Omega}_N\subseteq \bbC^{2N+N(N-1)/2}
\end{equation}  
having full measure in $\bbC^{2N+N(N-1)/2}$. 

In order to describe precisely the structure of the singularity at $\bbC^{2N+N(N-1)/2}\backslash \dot{\Omega}_N$, we introduce some terminology.
Let $\mathtt{T}(N)$ denote the collection of maximal families $\mathtt{I}$ of consecutive subsets $\calI\subsetneq \{0,\ldots,N+1\}$ such that 
\begin{itemize}
	\item $2\leq |\calI| \leq N+1$ for all $\calI\in \mathtt{I}$ and 
	\item if $\calI,\calI'\in \mathtt{I}$ satisfy $\calI\cap \calI'\neq \varnothing$, then either $\calI\subseteq \calI'$ or $\calI'\subseteq \calI$. 
\end{itemize}
``$\mathtt{T}$'' stands either for ``tree'' in ``full binary trees'' or ``Tamari'' in \textit{Tamari lattice} \cite{Tamari}\cite{Tamari2}, and the elements of $\mathtt{T}(N)$ can be thought of as specifying the valid ways of adding a maximal number of nonredundant parentheses to a string of $N+2$ identical characters. 
There are $\#\mathtt{T}(N)=C_{N+1}$ such ways, where $C_{N+1}$ is the $(N+1)$st Catalan number.
To each $\calI\in \mathtt{I}$, we associate the facet 
\begin{equation} 
\mathrm{f}_{\calI}=\{(x_1,\ldots,x_N) \in \triangle_N : x_j=x_k\text{ for all }j,k\in \calI \}
\end{equation} 
of $\triangle_N$, where $x_0=0$ and $x_{N+1}=1$. Let $o_\calI\in \bbN$ denote the order of vanishing of $F$ at $\mathrm{f}_{\calI}$. (So, $o_\calI=0$ unless $F$ is vanishing identically at $\mathrm{f}_\calI$.)
\begin{theorem}
	There exist entire functions $S_{N;\mathrm{reg},\mathtt{I}}[F]:\bbC^{2N+N(N-1)/2} \to \bbC$  associated to the $\mathtt{I}\in \mathtt{T}(N)$ such that 
	\begin{equation}
	S_N[F](\bmalpha,\bmbeta,\bmgamma) = \sum_{\mathtt{I}\in \mathtt{T}(N)} S_{N;\mathrm{reg},\mathtt{I}}[F](\bmalpha,\bmbeta,\bmgamma) \prod_{\calI\in \mathtt{I}} \Gamma( o_{\calI}+ |\calI|-1 + \gamma_{\min \calI,\max \calI,*} ) 
	\label{eq:misc_gen}
	\end{equation}
	for all $(\bmalpha,\bmbeta,\bmgamma)\in \Omega_N$.
	\label{thm:generic}
\end{theorem}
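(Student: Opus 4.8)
The plan is to resolve the singularities of the integrand by a geometric blowup of $\triangle_N$ whose combinatorics is governed by $\mathtt{T}(N)$, to reduce the integral in each resulting corner chart to a product of $N$ one-dimensional Mellin integrals, and to read off the $\Gamma$-factors from the elementary meromorphic continuation of $\int_0^\epsilon t^s\phi(t)\dd t$. First I would pass to the gap coordinates $u_i = x_{i+1}-x_i$ ($i = 0,\dots,N$, with $x_0 = 0$ and $x_{N+1}=1$), which identify $\triangle_N$ with the probability simplex $\{u_i \geq 0,\ \sum_i u_i = 1\}$ and express every factor $x_k - x_j$, $x_j$, $1-x_j$ as a sum of consecutive gaps $u_a + \cdots + u_{b-1}$. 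Thus the integrand is a product of powers of consecutive sums, singular precisely along the coordinate subspaces $\{u_a = \cdots = u_{b-1}=0\}$, i.e.\ along the facets $\mathrm{f}_\calI$. These subspaces form the building set of consecutive intervals, whose De Concini--Procesi wonderful model is the associahedron, the deepest corners of which are indexed exactly by the maximal laminar families $\mathtt{I}\in\mathtt{T}(N)$. Concretely, I would build this model by iterated blowup, equivalently by attaching to each tree $\mathtt{I}$ a system of nested cluster-scale coordinates $(t_\calI)_{\calI\in\mathtt{I}} \in [0,\epsilon)^N$ (there are $|\mathtt{I}| = N$ blocks), where $t_\calI$ is the diameter of the cluster $\calI$ relative to its parent, arranged so that $\mathrm{f}_\calI = \{t_\calI = 0\}$ near the corner while the rescaled internal configurations supply the remaining bounded ``angular'' variables.

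The main computation is the local normal form in the chart of $\mathtt{I}$, and it requires two exponent tallies. On the one hand, a factor $(x_k - x_j)^{2\gamma_{j,k}}$ acquires the scale $t_\calI$ exactly when $j,k\in\calI$ --- equivalently when $t_\calI$ divides the diameter of the minimal block of $\mathtt{I}$ containing $j$ and $k$ --- so, since $\calI$ is a consecutive set, the total power of $t_\calI$ contributed by the integrand is $\sum_{\min\calI\leq j<k\leq\max\calI} 2\gamma_{j,k} = \gamma_{\min\calI,\max\calI,*}$; here one uses the extended convention $\gamma_{0,k}=\alpha_k/2$ and $\gamma_{j,N+1}=\beta_j/2$, so that the endpoint factors $x_j^{\alpha_j}$ and $(1-x_j)^{\beta_j}$ are subsumed uniformly. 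On the other hand, a Jacobian computation --- which I would verify by induction on the tree, having checked it directly in small cases --- gives $\prod_{\calI\in\mathtt{I}} t_\calI^{\,|\calI|-2}$ for the density $\dd x_1\cdots \dd x_N$ in the coordinates $(t_\calI)$ together with the angular variables. Finally, since $\mathrm{f}_\calI = \{t_\calI = 0\}$ near the corner and $F$ vanishes there to order $o_\calI$, the pullback of $F$ carries the factor $\prod_{\calI} t_\calI^{o_\calI}$. Collecting the three contributions, the pulled-back integrand times the measure equals $\prod_{\calI\in\mathtt{I}} t_\calI^{\,o_\calI + |\calI| - 2 + \gamma_{\min\calI,\max\calI,*}}\, H_{\mathtt{I}}\dd t$, with $H_{\mathtt{I}}$ smooth on the chart and holomorphic in $(\bmalpha,\bmbeta,\bmgamma)$.

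To finish, I would take a partition of unity on the resolution subordinate to the corner charts, absorbing the contribution of the interior (where the integrand is smooth and the resulting integral is manifestly entire in the parameters) into the tree-indexed terms. In a fixed chart I would expand $H_{\mathtt{I}}$ to finite order in each $t_\calI$ and apply, variable by variable, the one-dimensional fact that for smooth $\phi$ the integral $\int_0^\epsilon t^s \phi(t)\dd t$ continues meromorphically with at worst simple poles at $s\in\bbZ_{<0}$, of residue $\phi^{(k)}(0)/k!$ at $s = -k-1$, and hence equals $\Gamma(s+1)$ times a function entire in $s$. Iterating over the $N$ coordinates $t_\calI$ with $s = o_\calI + |\calI| - 2 + \gamma_{\min\calI,\max\calI,*}$ produces exactly $\prod_{\calI\in\mathtt{I}}\Gamma(o_\calI + |\calI| - 1 + \gamma_{\min\calI,\max\calI,*})$ times an entire cofactor, which I would take to be $S_{N;\mathrm{reg},\mathtt{I}}[F]$; uniform bounds on the $t$-derivatives of $H_{\mathtt{I}}$, holomorphic in the parameters, guarantee that this cofactor is entire on all of $\bbC^{2N+N(N-1)/2}$ rather than merely meromorphic, since the parameter hyperplanes on which the chart integral develops poles are precisely those on which the prefactor $\Gamma$'s are singular. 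Summing over $\mathtt{I}\in\mathtt{T}(N)$ yields \cref{eq:misc_gen} on $\Omega_N$, and hence on all of $\bbC^{2N+N(N-1)/2}$ by analytic continuation.

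The hardest part will be the geometry underlying the normal form: constructing the resolution so that the boundary hypersurfaces meeting a given corner are exactly the blocks of the corresponding tree, and then verifying the two exponent identities uniformly across every chart --- in particular the Jacobian identity $\prod_{\calI}t_\calI^{|\calI|-2}$ and the correct handling of blocks containing an endpoint $0$ or $N+1$, where $\bmalpha$ and $\bmbeta$ enter through the extended convention. Once these are in place, everything downstream is bookkeeping together with the single-variable Mellin lemma.
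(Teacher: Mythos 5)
Your proposal is correct and follows essentially the same route as the paper: resolve $\triangle_N$ to the associahedron (your wonderful-model/cluster-coordinate construction is the paper's iterated mwc blowup $K_{0,N,0}$), establish the normal form $\prod_\calI t_\calI^{o_\calI+|\calI|-2+\gamma_{\min\calI,\max\calI,*}}$ times a smooth factor, and conclude via a partition of unity and the iterated one-variable Mellin lemma, with corners indexed by $\mathtt{T}(N)$. Your exponent tallies (including the conventions subsuming $\bmalpha,\bmbeta$ into $\gamma_{0,k},\gamma_{j,N+1}$ and the Jacobian power $|\calI|-2$) match the paper's functionals $\rho_{j,k}$ exactly.
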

Here, $\Gamma: \bbC\backslash \{-n:n\in \bbN\}\to \bbC$ is the gamma function. As a consequence of the theorem, there exists an entire function $S_{N;\mathrm{reg}}[F]:\bbC^{2N+N(N-1)/2}\to \bbC$ such that

\begin{equation}
S_N[F](\bmalpha,\bmbeta,\bmgamma) =  S_{N;\mathrm{reg}}[F](\bmalpha,\bmbeta,\bmgamma) \prod_{0\leq j <k \leq N+1} \Gamma(k-j + \gamma_{j,k,*}) 
\end{equation}
for all $(\bmalpha,\bmbeta,\bmgamma)\in \Omega_N$.
\begin{corollary}
	The function $S_N[F]:\Omega_N\to \bbC$ admits an analytic continuation $\dot{S}_N[F]:\dot{\Omega}_N\to \bbC$
	to the domain
	\begin{multline}
	\dot{\Omega}_N = \bbC^{2N + N(N-1)/2}_{\bmalpha,\bmbeta,\bmgamma} \Big\backslash \Big[ 
	\Big( \bigcup_{j=1}^N \{\alpha_{j,*} \in \bbZ^{\leq -j - \delta_j}\}  \Big) \cup \Big( \bigcup_{j=1}^N \{\beta_{j,*} \in \bbZ^{\leq -j-\atled_j}\}  \Big) \\ \cup \Big( \bigcup_{1\leq j < k \leq N} \{\gamma_{j,k,*} \in \bbZ^{\leq -(k-j) - d_{j,k}} \}  \Big) \Big],
	\label{eq:dotOmegaN}
	\end{multline}
	where  $\bbZ^{\leq n} = \{m\in \bbZ: m\leq n\}$ and $\delta_j =\delta_j[F]= o_{\{0,\ldots,j\}}$, $\atled_j = \atled_j[F] = o_{\{N-j+1,\ldots,N+1\}}$, and $d_{j,k} =d_{j,k}[F]= o_{\{j,\ldots,k\}}$.
	\label{cor:main}
\end{corollary}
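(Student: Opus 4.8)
The plan is to read the corollary off directly from the factorization in \Cref{thm:generic}, so that essentially no new analytic work is required. Since each $S_{N;\mathrm{reg},\mathtt{I}}[F]$ is entire and $\mathtt{T}(N)$ is finite, the right-hand side of \cref{eq:misc_gen} is a finite sum of products of entire functions with the meromorphic Gamma factors, hence is meromorphic on all of $\bbC^{2N+N(N-1)/2}$. This meromorphic function agrees with $S_N[F]$ on $\Omega_N$, which is open, nonempty, and connected (being convex); so by the identity theorem it is the \emph{unique} meromorphic continuation of $S_N[F]$, and I would simply take it as the definition of $\dot{S}_N[F]$. Once the formula is in hand there is nothing further to continue --- the only remaining content is to locate the poles precisely.

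The second step is to verify that no Gamma factor is singular on $\dot{\Omega}_N$. Because $\Gamma$ is holomorphic on $\bbC\setminus\bbZ^{\leq 0}$, the factor $\Gamma(o_\calI+|\calI|-1+\gamma_{\min\calI,\max\calI,*})$ is holomorphic except on the set where $\gamma_{\min\calI,\max\calI,*}\in\bbZ^{\leq -(|\calI|-1)-o_\calI}$. Every $\calI$ occurring in some $\mathtt{I}\in\mathtt{T}(N)$ is a consecutive subset $\{j,\ldots,k\}$ of $\{0,\ldots,N+1\}$ with $2\leq|\calI|\leq N+1$, so $\min\calI=j$, $\max\calI=k$, and $|\calI|-1=k-j$. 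I would then translate each such locus into the $(\bmalpha,\bmbeta,\bmgamma)$-coordinates using the identifications $\gamma_{0,k,*}=\alpha_{k,*}$ and $\gamma_{N+1-j,N+1,*}=\beta_{j,*}$, together with the matching of orders of vanishing $o_{\{0,\ldots,k\}}=\delta_k$, $o_{\{j,\ldots,N+1\}}=\atled_{N+1-j}$, and $o_{\{j,\ldots,k\}}=d_{j,k}$.

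Carrying out this dictionary shows that the singular locus of each Gamma factor is exactly one of the three families of excluded sets in \cref{eq:dotOmegaN}: subsets anchored at $0$ give the $\{\alpha_{k,*}\in\bbZ^{\leq -k-\delta_k}\}$, subsets anchored at $N+1$ give the $\{\beta_{j,*}\in\bbZ^{\leq -j-\atled_j}\}$, and the remaining interior subsets give the $\{\gamma_{j,k,*}\in\bbZ^{\leq -(k-j)-d_{j,k}}\}$ (the full set $\{0,\ldots,N+1\}$, which would correspond to the empty facet $x_1=\cdots=x_N$ with $0=1$, is excluded by the constraint $|\calI|\leq N+1$ and so never appears). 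Consequently every Gamma factor in \cref{eq:misc_gen} is holomorphic on $\dot{\Omega}_N$, whence each summand and therefore $\dot{S}_N[F]$ is holomorphic there. The only genuine work is this bookkeeping --- in particular getting the index shift correct and identifying $o_\calI$ with the appropriate $\delta$, $\atled$, or $d$ for the boundary subsets touching $0$ and $N+1$; all of the analytic substance is already contained in \Cref{thm:generic}.
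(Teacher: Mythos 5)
Your proposal is correct and is essentially the paper's own (implicit) argument: the corollary is read off directly from the factorization \cref{eq:misc_gen} of \Cref{thm:generic}, observing that each factor $\Gamma(o_\calI+|\calI|-1+\gamma_{\min\calI,\max\calI,*})$ is holomorphic exactly away from the loci excluded in \cref{eq:dotOmegaN}, with the dictionary $\gamma_{0,k,*}=\alpha_{k,*}$, $\gamma_{j,N+1,*}=\beta_{N+1-j,*}$, $o_{\{0,\ldots,k\}}=\delta_k$, $o_{\{j,\ldots,N+1\}}=\atled_{N+1-j}$, $o_{\{j,\ldots,k\}}=d_{j,k}$ handling the bookkeeping (this is the same translation the paper performs in \S\ref{subsec:specifics} via the functionals $\rho_{j,k}$, where it notes the $m=N$ case of $\dot{\Omega}_{\ell,m,n}[\calD]$ agrees with \cref{eq:dotOmegaN}). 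Your index shifts and the exclusion of the full set $\{0,\ldots,N+1\}$ are all accurate, so there is no gap.
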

The set $\dot{\Omega}_N$ contains all elements of $\bbC^{2N+N(N-1)/2}$ lying outside of a locally finite arrangement of affine hyperplanes.

Consider $F\in \bbC[x_1,\ldots,x_N]$. Letting $[F]_{d_1,\ldots,d_N}$ denote the coefficient of $x_1^{d_1}\cdots x_N^{d_N}$ in $F$,
and letting $\mathsf{refl}:(x_1,\ldots,x_N)\mapsto (1-x_1,\ldots,1-x_N)$, we have 
\begin{align}
\delta_j[F] &= \min\{d_1+\cdots+d_j : [F]_{d_1,\ldots,d_j,d_{j+1},\ldots,d_N}\neq 0 \text{ for some }d_{j+1},\ldots,d_N\in \bbN\},\label{eq:misc_nu1}\\ 
\atled_j[F] &= \min\{d_N+\cdots+d_{N+1-j} : [F\circ \mathsf{refl}]_{d_1,\ldots,d_{N-j},d_{N+1-j},\ldots,d_N}\neq 0 \text{ for some }d_{1},\ldots,d_{N-j}\in \bbN\}. \label{eq:misc_nu2}
\end{align}

\begin{example}
	The simplest case is when $N=1$ and $F=1$ identically, when the integral is given by
	\begin{equation}
	S_1(\alpha,\beta,\gamma) = B(\alpha+1,\beta+1) = \int_0^1 x^\alpha(1-x)^\beta \dd x = \frac{\Gamma(1+\alpha)\Gamma(1+\beta)}{\Gamma(2+\alpha+\beta)},
	\end{equation}
	defined initially for $\Re\alpha,\Re\beta> -1$ via the definite integral and then extended meromorphically via the formula on the right-hand side above (or via another method). This is Euler's \emph{$\beta$-function}.
	One method of meromorphic continuation involves the Pochhammer contour (a.k.a.\ Pochhammer double loop)
	\begin{equation} 
	b^{-1}a^{-1} ba \in \pi_1(\bbC\backslash \{0,1\}),
	\end{equation} 
	where $a,b$ are the generators of $\pi_1(\bbC\backslash \{0,1\})$ corresponding to one (say, counterclockwise) circuit around each of $0,1$ respectively. 
	
	\begin{figure}[h]
		\begin{tikzpicture}[scale=3,  decoration={
			markings,
			mark=at position 0.5 with {\arrow[scale=1.5,>=latex]{>}}}]
		\coordinate (0) at (1.2,.2);
		\coordinate (1) at (1.2,-.35);
		\coordinate (2) at (-.2,-.35);
		\coordinate (3) at (-.2,+.2);
		\coordinate (4) at (.35,+.2);
		\coordinate (5) at (.35,-.25);
		\coordinate (6) at (1.1,-.25);
		\coordinate (7) at (1.1,.1);
		\coordinate (8) at (-.1,.1);
		\coordinate (9) at (-.1,-.2);
		\coordinate (10) at (.65,-.2);
		\coordinate (11) at (.65,.2);
		\draw[postaction={decorate}] (0) -- (1);
		\draw[postaction={decorate}] (1) -- (2);
		\draw[postaction={decorate}] (2) -- (3);
		\draw[postaction={decorate}] (3) -- (4);
		\draw[postaction={decorate}] (4) -- (5);
		\draw[postaction={decorate}] (5) -- (6);
		\draw[postaction={decorate}] (6) -- (7);
		\draw[postaction={decorate}] (7) -- (8);
		\draw[postaction={decorate}] (8) -- (9);
		\draw[postaction={decorate}] (9) -- (10);
		\draw[postaction={decorate}] (10) -- (11);
		\draw[postaction={decorate}] (11) -- (0);
		\filldraw[color=black] (0,0) circle (.5pt) node[below] {$0$};
		\filldraw[color=black] (1,0) circle (.5pt) node[below] {$1$};
		\end{tikzpicture}
		\caption{The Pochhammer contour in $\bbC\backslash \{0,1\}$, up to homotopy.}
	\end{figure}
	
	Then, $b^{-1}a^{-1} ba$ can be lifted to a closed contour $p$ in the cover $\calM$ of $\bbC\backslash \{0,1\}$ corresponding to the commutator subgroup of $\pi_1(\bbC\backslash \{0,1\})$. Then, choosing the basepoint of $p$ appropriately, 
	\begin{equation}
	B(\alpha+1,\beta+1) = \frac{1}{1-e^{-2\pi i \alpha}} \frac{1}{1-e^{-2\pi i \beta} } \int_{p} x^{\alpha} (1-x)^{\beta}  \dd x, 
	\end{equation}
	where we are now considering $x^\alpha(1-x)^\beta$ as an analytic function on $\calM$. 
	The theorem above tells us that there exist entire $S_{1;\mathrm{reg}, (\bullet\bullet)\bullet}$, $S_{1;\mathrm{reg}, \bullet(\bullet\bullet)}$ such that 
	\begin{equation}
	B(\alpha+1,\beta+1) = \Gamma(1+\alpha)S_{1;\mathrm{reg}, (\bullet\bullet)\bullet}(\alpha,\beta) + \Gamma(1+\beta)S_{1;\mathrm{reg}, \bullet(\bullet\bullet)}(\alpha,\beta).
	\end{equation}
	This splitting is not so obvious from the formula $B(\alpha+1,\beta+1)=\Gamma(1+\alpha)\Gamma(1+\beta)/\Gamma(2+\alpha+\beta)$. 
\end{example}
\begin{example}
	Now consider the case when $N=2$ and $F=1$.
	It can be computed that the Selberg-like integral is then 
	\begin{equation}
	S_2(\bmalpha,\bmbeta,\bmgamma) = \frac{\Gamma(1+\alpha_1)\Gamma(1+\beta_2)  \Gamma(2+2\gamma_{1,2} + \alpha_1+\alpha_2)\Gamma(1+2\gamma_{1,2})}{\Gamma(2+\alpha_1+2\gamma_{1,2})\Gamma(3+\alpha_1+\alpha_2+\beta_2+2\gamma_{1,2})} {}_3F_2(a,b;1),
	\label{eq:misc_vv1}
	\end{equation}
	where $a=(a_1,a_2,a_3)=(1+\alpha_1,-\beta_1,2+2\gamma_{1,2}+\alpha_1+\alpha_2)$ and $b=(b_1,b_2) = (2+\alpha_1+2\gamma_{1,2},3+\alpha_1+\alpha_2+\beta_2 + 2\gamma_{1,2})$, where ${}_p F_q$ denotes the generalized hypergeometric function. 
	For $N=2$, the theorem above reads 
	\begin{multline}
	S_2(\bmalpha,\bmbeta,\bmgamma) = \Gamma(1+\alpha_1)\Gamma(2+\alpha_1+\alpha_2+2\gamma_{1,2}) S_{2;\mathrm{reg},((\bullet\bullet)\bullet)\bullet}(\bmalpha,\bmbeta,\bmgamma) +\\  \Gamma(1+\alpha_1)\Gamma(1+\beta_2)S_{2;\mathrm{reg},(\bullet\bullet)(\bullet\bullet)}(\bmalpha,\bmbeta,\bmgamma)  + \Gamma(1+\beta_2)\Gamma(2+\beta_1+\beta_2+2\gamma_{1,2}) S_{2;\mathrm{reg},\bullet(\bullet(\bullet\bullet))}(\bmalpha,\bmbeta,\bmgamma) \\ + 
	\Gamma(1+2\gamma_{1,2}) \Gamma(2+\beta_1+\beta_2+2\gamma_{1,2}) S_{2;\mathrm{reg},\bullet((\bullet\bullet)\bullet)}(\bmalpha,\bmbeta,\bmgamma) \\ 
	+ 
	\Gamma(1+2\gamma_{1,2}) \Gamma(2+\alpha_1+\alpha_2+2\gamma_{1,2}) S_{2;\mathrm{reg},(\bullet(\bullet\bullet))\bullet}(\bmalpha,\bmbeta,\bmgamma),
	\end{multline}
	but once again this splitting is not so obvious from the exact formula \cref{eq:misc_vv1}.
	This example is explored more in the appendix. 
\end{example}

The proof below is lower-brow than the twisted homological constructions of \cite[\S5]{KT2}\cite{KT1}, Aomoto \cite{Ao}, and others  \cite{Varchenko2}\cite{Warnaar2}, as it is based on the method described in \cite[Chp.\ 10]{Varchenko}. This involves the geometric analysis of the singularities of the Selberg(-like) integrand. The key observation is that if the $N$-simplex is blown up to the $N$-dimensional \emph{associahedron} \cite{Stasheff}\cite[\S1.6]{Operads}\cite{Postnikov} (see \Cref{fig:geometricinterpretation}, \Cref{fig:K030}), then the Selberg integrand -- which is \emph{not} polyhomogeneous on $\triangle_N$ -- becomes one-step polyhomogeneous (a.k.a ``classical'') on the resolution.
See \S\ref{sec:geometry} for details.
This observation appears, in an essentially equivalent form (albeit with different terminology), already in \cite{KT2}\cite{KT1}\cite{Mimachi}, though the term ``associahedron'' does not appear there. Closely related observations have appeared in the physics literature \cite{Amplitudes17}\cite{Amplitudes18}\cite{Amplitudes19}\cite{Amplitudes20}.

The application of polyhomogeneity to the proof of the theorem above is given in \S\ref{sec:IR}. The classicality of the lift of the Selberg integrand on the associahedron allows us to reduce the problem to what is essentially a product of one-dimensional cases. 
The faces of the associahedron are in bijective correspondence with the quantities defined in \cref{eq:ab*}, \cref{eq:c*}. 
The correspondence is depicted in \Cref{fig:geometricinterpretation} in the case $N=3$.
\begin{figure}[h]
	\begin{center}
		\tdplotsetmaincoords{70}{115} 
		\begin{tikzpicture}[scale=2.5,tdplot_main_coords]
		\draw[opacity=0] (0,1,0) -- (1,-1.1,0); 
		\draw[dashed, color=mygreen, *-] (.05,.05,.05)  -- (-.5,-.1,-.1) -- (-.5,.8,-.1) node[right] {$\alpha_1+\alpha_2+\alpha_3 + 2\gamma_{1,2}+2\gamma_{1,3}+2\gamma_{2,3}$};
		\draw[dashed, color=darkgray, *-] (0.35,.35,0)  -- (0.35,.35,-.2) -- (0.75,.75,-.2) node[right] {$2\gamma_{2,3}$};
		\draw[dashed, color=darkgray, *-] (.1,.1,.5) -- (-.2,-.2,.5)  -- (-.2,-.2,.9) node[left] {$\alpha_1+\alpha_2+2\gamma_{1,2}$};
		\draw[dashed, color=darkgray, *-] (.55,.1,.1)  -- (.55,0,0) -- (.55,0,-.5) node[below] {$2\gamma_{1,2}+2\gamma_{1,3}+2\gamma_{2,3}$};
		\draw[dashed, color=darkgray, *-] (0,.35,.35) -- (-.2,.45,.35)  -- (-.2,0.45,.75) node[right] {$\alpha_1$};
		\draw[dashed, color=darkgray, *-] (.35,0,.35)  -- (.35,-.2,.35) -- (.35,-.2,.8) node[left] {$2\gamma_{1,2}$};
		\draw[dashed, color=darkgray, *-] (.9,.1,.1)  -- (.9,-.1,-.1) -- (.9,-.1,.2)  node[left] {$ \beta_1+\beta_2+\beta_3+2\gamma_{1,2}+2\gamma_{1,3}+2\gamma_{2,3}$};
		\draw[fill=mygreen,fill opacity = .3, dashed] (0,1/4,0) -- (0,1/8,1/4) -- (1/8,0,1/4) -- (1/4,0,1/8) -- (1/4,1/8,0) -- cycle;
		\draw[fill=gray,fill opacity = .1, dashed] (3/4,1/8,0) -- (3/4,0,1/8) -- (15/16,0,1/4) -- (1,0,1/8) -- (1,1/4,0) --	cycle;
		\draw[fill=gray,fill opacity = .1, draw=none] (3/4,1/8,0) -- (3/4,0,1/8) -- (1/4,0,1/8) -- (1/4,1/8,0) -- cycle;
		\draw[fill=gray,fill opacity = .1, draw=none] (0,1/8,1/4) -- (1/8,0,1/4) -- (1/8,0,7/8) -- (0,1/8,7/8) -- cycle;
		\draw[fill=gray,fill opacity = .1, dashed] (0,1/4,0) -- (0,1/8,1/4) -- (0,1/8,7/8) -- (-.15,7/8-.1,3/16) -- (0,3/4,0) -- cycle;
		\draw[fill=gray,fill opacity = .1] (1,0,1/8) -- (1,1/4,0) -- (0,3/4,0) -- (-.15,7/8-.1,3/16) -- cycle;
		\draw[fill=gray,fill opacity = .1, draw=none] (1,1/4,0) -- (0,3/4,0) -- (0,1/4,0) -- (1/4,1/8,0) -- (3/4,1/8,0) -- cycle;
		\draw[dashed]  (1/4,1/8,0) -- (3/4,1/8,0);
		\draw[fill=gray,fill opacity = .1, dashed] (1/4,0,1/8) -- (3/4,0,1/8) -- (15/16,0,1/4) -- (1/8,0,7/8) -- (1/8,0,1/4) -- cycle;
		\draw[fill=gray,fill opacity = .1] (0,1/8,7/8) -- (1/8,0,7/8) -- (15/16,0,1/4) -- (1,0,1/8) -- (-.15,7/8-.1,3/16) -- cycle;
		\draw[dashed, color=gray, *-] (.1,.2,.15) --   (.15,.8,.5) node[right] {$\beta_3$};
		\draw[dashed, color=gray, *-] (.5,.5,.15) --   (.85,1,.1) node[right] {$\beta_2+\beta_3+2\gamma_{2,3}$};
		\end{tikzpicture}
	\end{center} 
	\caption{The 3-dimensional associahedron, with its faces labeled by the associated functions in \cref{eq:ab*}, \cref{eq:c*}. The $C_4=14$ vertices are in correspondence with the 14 elements $\mathtt{T}(3)$.}
	\label{fig:geometricinterpretation}
\end{figure}
The quantities $\alpha_{j,*}, \beta_{j,*},\gamma_{j,k,*}$ are the orders of the Selberg integrand at the corresponding faces. 
Each $\mathtt{I}\in \mathtt{T}(N)$ is associated with a minimal facet of the associahedron, and the $\calI\in \mathtt{I}$ are associated with the faces containing that facet. 
Thus, we have a geometric interpretation of each of the terms in \cref{eq:misc_gen}.

The theorem cannot be sharpened while maintaining generality. Indeed, the proof of the theorem shows that if $F>0$ everywhere in $\triangle_N$ (including the boundary), then  
\begin{equation}
S_{N;\mathrm{reg}}[F](\bmalpha,\bmbeta,\bmgamma) \neq 0 
\end{equation}
for any $(\bmalpha,\bmbeta,\bmgamma)\in \bbR^{2N+N(N-1)/2}$ for which both of 
\begin{itemize}
	\item $\gamma_{j,k,*}\in \bbZ^{\leq -(k-j)}$ for precisely one pair of $j,k\in \{0,\ldots,N+1\}$ with $j<k$, 
	\item $\gamma_{j,k,*} > -(k-j)$ for all other $j,k$ 
\end{itemize}
hold, as for such $(\bmalpha,\bmbeta,\bmgamma)$ the quantity $S_{N;\mathrm{reg}}[F](\bmalpha,\bmbeta,\bmgamma)$ is proportional to a convergent integral of a positive integrand over the corresponding face of the associahedron. Consequently, $S_N[F]:\Omega_N\to \bbC$ cannot be analytically continued to the complement of any strictly smaller collection of hyperplanes than that in \cref{eq:dotOmegaN}. 

However, for the desired application, we do not need full generality. 
Of special importance is the case when $\bmalpha,\bmbeta,\bmgamma$ are each ``constant,'' meaning that, for some $\alpha,\beta,\gamma \in \bbC$, 
\begin{itemize}
	\item $\alpha_i=\alpha$ and $\beta_i=\beta$ for all $i\in \{1,\ldots,N\}$, and
	\item $\gamma_{j,k}=\gamma$ for all $j,k\in \{1,\ldots,N\}$ with $j<k$. 
\end{itemize}
In this case, we simply write
\begin{equation} 
S_N[F](\alpha,\beta,\gamma) = \int_{\triangle_N}  F(x_1,\ldots,x_N) \prod_{j=1}^N x_j^{\alpha}(1-x_j)^{\beta}   \prod_{1\leq j < k \leq N} (x_k-x_j)^{2\gamma}  \dd x_1\cdots \dd x_N. 
\label{eq:Selberg_def_symmetric}
\end{equation} 
We now consider $F \in \bbC[x_1,\ldots,x_N]^{\frakS_N}$, i.e.\ \textit{symmetric} polynomial $F$.
This case includes, of course, Selberg's original example, in which $F=1$, as well as the 3-point coefficients of the $(1,s)$- and $(r,1)$-primary fields and their descendants in the BPZ minimal models. It also includes certain Selberg-like integrals considered by Aomoto \cite{Ao}, Kadell \cite{Kadell, Kadell2}, and others \cite{AGT}.
The computation of such integrals is listed as an open problem in \cite{KT2}.

Below, we will introduce a more general notion of ``DF-symmetric'' Selberg-like integrals, this including the other 3-point coefficients. For the purposes of an introductory discussion we focus on the -- already interesting -- symmetric case. 

The integral \cref{eq:Selberg_def_symmetric} is defined initially on the subset $U_N[F]\subset \bbC^3_{\alpha,\beta,\gamma}$ given by 
\begin{multline}
U_N[F] = \Big\{(\alpha,\beta,\gamma)\in \bbC^3: \Re j(\alpha+(j-1)\gamma) >-1-\delta_j[F]\text{ and }  \\ \Re j(\beta+(j-1)\gamma) >-1-\atled_j[F]\text{ for all $j\in \{1,\ldots,N\}$, and } \Re \gamma > - \frac{1}{N-1} \Big\},
\end{multline}
which contains 
\begin{equation}
U_N = U_N[1] =  \Big\{(\alpha,\beta,\gamma)\in \bbC^3: \min\{\Re \alpha,\Re \beta\} + \min\{0,(N-1)\gamma\}>-1\text{ and }\Re\gamma> - \frac{1}{N-1} \Big\}  .
\end{equation}
An immediate corollary of the theorem above is that the function $S_N[F]:U_N[F]\to \bbC$ defined by \cref{eq:Selberg_def_symmetric} admits an analytic continuation $\dot{S}_N[F](\alpha,\beta,\gamma) : \dot{U}_N[F]\to \bbC$ to the domain $\dot{U}_N[F]\supsetneq U_N[F]$ given by 
\begin{multline}
\dot{U}_N[F] = \bbC^3_{\alpha,\beta,\gamma} \Big\backslash \Big[ 
\Big( \bigcup_{j=1}^N \{j(\alpha+(j-1)\gamma) \in \bbZ^{\leq -j-\delta_j}\}  \Big) \cup \Big( \bigcup_{j=1}^N \{ j(\beta+(j-1)\gamma)\in \bbZ^{\leq -j - \atled_j}\}  \Big) \\ \cup \Big( \bigcup_{j=1}^{N-1} \{ j(j+1) \gamma \in \bbZ^{-j} \}  \Big) \Big].
\label{eq:dotUN}
\end{multline}

\begin{example}
	Consider $F=1$, i.e.\ the Selberg integral. In this case, Selberg proved in \cite{Selberg} that $S_N(\alpha,\beta,\gamma) = S_N[1](\alpha,\beta,\gamma)$ is given by 
	\begin{equation}
	S_N(\alpha,\beta,\gamma) = \frac{1}{N!}\prod_{j=1}^{N} \frac{\Gamma(1+\alpha+(j-1)\gamma) \Gamma(1+\beta+(j-1)\gamma) \Gamma(1+j\gamma)}{\Gamma(2+\alpha+\beta+(N+j-2)\gamma ) \Gamma(1+\gamma)} .
	\label{eq:Selberg_formula}
	\end{equation}
	See \cite{FW} for a review of the history of this result. 
\end{example}

The example of the Selberg integral suggests that, in the symmetric case, \cref{eq:dotUN} is not the maximal domain of analyticity. Set 
\begin{equation} 
\deg_j[F] = \max\{d_1+\cdots+d_j : [F]_{d_1,\ldots,d_j,d_{j+1},\ldots,d_N} \neq 0\text{ for some }d_{j+1},\ldots,d_N\in \bbN\}. \label{eq:misc_nu3}
\end{equation} 
(Since $F$ is symmetric, $\deg_j[F]=\deg_j[F\circ \mathsf{refl}]$.)
Then:
\begin{theorem}
	For any $F \in \bbC[x_1,\ldots,x_N]^{\frakS_N}$, there exists an entire function $S_{N;\mathrm{Reg}}[F]:\bbC^3_{\alpha,\beta,\gamma}\to \bbC$ such that
	\begin{multline}
	S_N[F](\alpha,\beta,\gamma) = \Big[\prod_{j=1}^N \frac{\Gamma(1+\bar{\delta}_j +\alpha+(j-1)\gamma) \Gamma ( 1+\bar{\atled}_j  +\beta+(j-1)\gamma)\Gamma(1+j\gamma)}{\Gamma(2+\bar{d}_j+\alpha+\beta+(N+j-2)\gamma)\Gamma(1+\gamma)} \Big] \\ \times S_{N;\mathrm{Reg}}[F](\alpha,\beta,\gamma) \label{eq:SNF_goal_symmetric}
	\end{multline}
	for all $(\alpha,\beta,\gamma) \in U_N$, where $\bar{\delta}_j = \lceil j^{-1} \delta_j[F]\rceil$, $\bar{\atled}_j = \lceil j^{-1} \atled_j[F] \rceil$, and $\bar{d}_j =  \lfloor (N-j+1)^{-1} \deg_j [F] \rfloor$ for each $j\in \{1,\ldots,N\}$. 
	\label{thm:symmetric}
\end{theorem}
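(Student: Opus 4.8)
The plan is to deduce \Cref{thm:symmetric} from \Cref{thm:generic} together with a symmetry-driven analysis of the orders of the continuation $\dot S_N[F]$ along the hyperplanes bounding $\dot U_N[F]$. Writing $P(\alpha,\beta,\gamma)$ for the Gamma-function prefactor in \cref{eq:SNF_goal_symmetric}, it suffices to show that the meromorphic function $S_{N;\mathrm{Reg}}[F] := \dot S_N[F]/P$ is entire, i.e.\ that $\operatorname{ord}_p \dot S_N[F] \geq \operatorname{ord}_p P$ at every $p\in \bbC^3_{\alpha,\beta,\gamma}$. Since $\Gamma$ has no zeros and only simple poles, $1/P$ has poles exactly along the zero-loci of the numerator arguments $1+\bar{\delta}_j+\alpha+(j-1)\gamma$, $1+\bar{\atled}_j+\beta+(j-1)\gamma$, $1+j\gamma$, and zeros along the pole-loci of the denominator factors $\Gamma(2+\bar{d}_j+\alpha+\beta+(N+j-2)\gamma)$ and $\Gamma(1+\gamma)$. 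The continuation to $\dot U_N[F]$ is already supplied by specializing \Cref{cor:main} to constant parameters, so the entire content is the order comparison along these three families of hyperplanes.

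First I would make the symmetry manifest. Because $F$ is $\frakS_N$-symmetric and the exponents are constant, symmetrizing \cref{eq:Selberg_def_symmetric} over the $N!$ orderings gives $N!\, S_N[F] = \int_{[0,1]^N}\prod_j x_j^\alpha(1-x_j)^\beta \prod_{j<k}|x_k-x_j|^{2\gamma}\, F \dd x$. Specializing \cref{eq:ab*}, \cref{eq:c*} to constant parameters yields $\alpha_{j,*}=j(\alpha+(j-1)\gamma)$, $\beta_{j,*}=j(\beta+(j-1)\gamma)$, and, for an interior interval $\calI$ of cardinality $m$, $\gamma_{\min\calI,\max\calI,*}=\gamma\,m(m-1)$; these depend only on the size and type of $\calI$, which is what collapses the many hyperplanes of \cref{eq:dotOmegaN} onto the few in \cref{eq:dotUN}.

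The heart of the argument is the order comparison, carried out face-by-face on the associahedron via the residue analysis of \S\ref{sec:IR}. By classicality of the lifted integrand, the residue of $\dot S_N[F]$ along the hyperplane attached to a face $\mathrm{f}_\calI$ is proportional to a product of a lower-dimensional Selberg-like integral over the $|\calI|$ colliding coordinates with an integral over the complementary factor, and in the constant-symmetric case the collapsed factor is itself a \emph{symmetric} Selberg-like integral in $|\calI|-1$ effective variables. This has two consequences. \textbf{Removability:} along any hyperplane of $\dot U_N[F]$ on which $P$ is regular --- the ``fractional'' apparent hyperplanes, where e.g.\ $j(\alpha+(j-1)\gamma)$ is an integer not of the form $jk$ with $k\leq -1-\bar{\delta}_j$ --- the would-be residue is a symmetric local model whose own prefactor is nonsingular, forcing the residue to vanish. \textbf{Collapse of the prefactor:} the poles of the collapsed symmetric model are governed by $\Gamma(1+m\gamma)$-type factors rather than the $\Gamma((m-1)(1+m\gamma))$-type factors one sees generically, which is precisely the mechanism producing the numerator $\prod_j\Gamma(1+j\gamma)$ and the denominator $\prod_j\Gamma(2+\cdots)\Gamma(1+\gamma)$ in \cref{eq:SNF_goal_symmetric}. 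The shifts $\bar{\delta}_j=\lceil j^{-1}\delta_j[F]\rceil$, $\bar{\atled}_j=\lceil j^{-1}\atled_j[F]\rceil$, $\bar{d}_j=\lfloor(N-j+1)^{-1}\deg_j[F]\rfloor$ arise because the vanishing order (resp.\ degree) of the symmetric $F$ along a collapsing face is shared among the colliding variables, so it shifts the one-dimensional pole only after division by the number of colliders and rounding.

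I expect the main obstacle to be establishing the second consequence rigorously: identifying the local model at each face as a symmetric Selberg-like integral and tracking how its prefactor contributes, in particular accounting for the \emph{denominator} Gamma factors, which are invisible in the purely numerator-valued decomposition \cref{eq:misc_gen} and must emerge from the combination of the Catalan-many regular terms under symmetrization. This is cleanest to handle by induction on $N$, exploiting the operadic (product-of-associahedra) structure of the boundary faces so that the inductive hypothesis furnishes the prefactor of each collapsed model; the base case $N=1$ is the $\beta$-function computation. The remaining technical work is the bookkeeping of pole orders where several hyperplanes meet, and the verification that the floors and ceilings defining $\bar{\delta}_j,\bar{\atled}_j,\bar{d}_j$ are exactly sharp.
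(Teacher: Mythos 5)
Your reduction of \Cref{thm:symmetric} to the order comparison $\operatorname{ord}_p\dot{S}_N[F]\geq\operatorname{ord}_p P$, and your specialization of the functionals \cref{eq:ab*}, \cref{eq:c*} to constant exponents, are both fine; the gap is in the step you yourself call the heart of the argument, and it is fatal in two distinct ways. First, the removability mechanism you state is a non sequitur: from ``the would-be residue is a symmetric local model whose own prefactor is nonsingular'' nothing follows about that residue being \emph{zero} --- regularity of a function does not make it vanish. Concretely, the residue of $\dot{S}_N[F]$ along $\{j(\alpha+(j-1)\gamma)=-j-\delta_j-k\}$ is a finite sum of products of fiber integrals and complementary integrals (cf.\ \cref{eq:misc_nn3}), and its vanishing at the ``fractional'' integers is a genuine family of Gamma-function identities; the appendix works out the simplest instance ($N=2$, $F=x^d+y^d$, the sum \cref{eq:misc_z66}) and notes that even there a direct algebraic proof ``is not entirely trivial'' --- in the paper these identities are \emph{consequences} of \Cref{thm:symmetric}, never ingredients. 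Second, and more fundamentally, no residue/pole analysis can prove the theorem, because the prefactor $P$ in \cref{eq:SNF_goal_symmetric} contains the reciprocals $1/\Gamma(2+\bar{d}_j+\alpha+\beta+(N+j-2)\gamma)$, which have zeros: entirety of $S_N[F]/P$ therefore requires showing that $\dot{S}_N[F]$ itself \emph{vanishes} (to appropriate order) on the hyperplanes \cref{eq:misc_kkk}. A local analysis of poles at faces of $K_{0,N,0}$ can at best show poles are absent; it cannot produce zeros. Relatedly, your symmetrization identity with $|x_k-x_j|^{2\gamma}$ is the information-free statement that the cube integral is $N!$ times the simplex integral; it gives no handle whatsoever on the $\gamma$-hyperplanes, and your assertion that the numerator factors $\Gamma(1+j\gamma)$ ``arise from the collapsed symmetric model'' has no supporting mechanism.

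For comparison, the paper's proof (\S\ref{subsec:removing_singularities_1}) supplies exactly the two nonlocal inputs your outline lacks. (i) The fractional $\gamma$-singularities are removed by relating $S_N[F]$ to the \emph{phase-regularized} DF-like integral $I_N[F]$, whose integrand uses $(x_k-x_j+i0)^{2\gamma}$: symmetrization then produces the factors $\prod_k(1-e^{2\pi ik\gamma})/(1-e^{2\pi i\gamma})$ (\Cref{prop:permutation_relation}), whose zeros at $\{k\gamma\in\bbN,\ \gamma\notin\bbN\}$ combine with $\digamma_k(\gamma)$ and with the continuation of $I_N[F]$ (\Cref{prop:mero_sym}) to give \Cref{prop:three_final}, after a first application of Hartogs. (ii) The $\alpha$-, $\beta$-, and $\alpha+\beta$-hyperplanes --- including the required zeros --- are handled by the Aomoto--Dotsenko--Fateev relations (\Cref{lem:main_lemma}, \Cref{prop:main_lemma2}, \Cref{prop:Aomoto}), proved by rotating a contour through a half-plane: these express $\dot{S}_{0,N,0}[F]$ in terms of $\dot{S}_{N,0,0}[F]$ and $\dot{S}_{0,0,N}[F]$ (integrals over the simplices $\triangle_{N,0,0}$, $\triangle_{0,0,N}$, i.e.\ the images of $\triangle_N$ under M\"obius maps) with explicit sine factors; the numerator sines $\sin(\pi(\alpha+\beta+(N+m-1)\gamma))$ supply the zeros on \cref{eq:misc_kkk}, while regularity of $\dot{S}_{0,0,N}[F]$ off its own singular set removes the fractional $\alpha$-hyperplanes; Hartogs then fills in the codimension-two intersections (\Cref{prop:symmetric_final}). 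Your proposed induction on $N$ through the operadic boundary structure of $K_{0,N,0}$ never leaves the single compactification $K_{0,N,0}$, so it can see neither the sine factors nor the $\alpha+\beta$-functionals, which are face functionals of the \emph{other} associahedra $K_{N,0,0}$, $K_{0,0,N}$; to salvage your route you would have to prove, independently and for all $N$ and $F$, both the residue identities and the vanishing of $\dot{S}_N[F]$ on \cref{eq:misc_kkk}, which is precisely the direct computation the paper argues is unavailable when $\deg F>1$.
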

Thus, 
$S_N[F](\alpha,\beta,\gamma)$ admits an analytic continuation $\mathring{S}_N[F](\alpha,\beta,\gamma):\mathring{U}_N[F]\to \bbC$ to the domain $\mathring{U}_N[F]\supsetneq \dot{U}_N[F]$ defined by 
\begin{align}
\begin{split} 
\mathring{U}_N[F] = \bbC^3_{\alpha,\beta,\gamma} \Big\backslash \Big[ 
&\Big( \bigcup_{j=1}^N \{\alpha+\bar{\delta}_j+(j-1)\gamma \in \bbZ^{\leq -1}\}  \Big) \cup \\ &\Big( \bigcup_{j=1}^N \{ \beta + \bar{\atled}_j +(j-1)\gamma\in \bbZ^{\leq -1}\}  \Big) \cup \Big( \bigcup_{j=1}^{N-1} \{ (j+1) \gamma \in \bbZ^{\leq -1}, \gamma \notin \bbZ \}  \Big) \Big]. 
\end{split} 
\label{eq:ringUN}
\end{align}
Observe that \cref{eq:ringUN} allows $\gamma =-1$.

In the case of the original Selberg integral, \Cref{thm:symmetric} describes precisely the singularities and zeroes of the meromorphic continuation of the original integral, and $S_{N;\mathrm{Reg}} = S_{N;\mathrm{Reg}}[1]$ is just constant.  
The functions $S_{2}[F]$ and $S_{2;\mathrm{Reg}}[F]$ are explored in \S\ref{sec:example}.

The proof of the theorem above consists of several steps: 
\begin{enumerate}
	\item The first step is the removal of the fictitious singularities of $\dot{S}_N[F](\alpha,\beta,\gamma)$ only in $\gamma$ (as required e.g.\ in the Coulomb gas formalism with both kinds of screening charges). 
	
	The basic idea is to employ the relation -- which can be found in a heuristic form in \cite[Ap.\ A]{DF2} -- between the symmetrization of $S_N[F](\bmalpha,\bmbeta,\bmgamma)$ and the ``DF-like'' integral 
	\begin{multline} 
	\qquad\qquad I_N[F](\bmalpha,\bmbeta,\bmgamma) = \int_{\square_N} \Big[ \prod_{j=1}^N x_j^{\alpha_j}(1-x_j)^{\beta_j}  \Big] \\ \times\Big[ \prod_{1\leq j<k \leq N} (x_k-x_j+i0)^{2\gamma_{j,k}} \Big] F \dd x_1\cdots \dd x_N,  
	\label{eq:misc_iff}
	\end{multline} 
	where $\square_N = [0,1]^N$. We can analytically continue $I_N[F]$ via an argument similar to that used to prove \Cref{thm:generic}. Unlike that of $S_N[F](\bmalpha,\bmbeta,\bmgamma)$, this extension has no singularities associated with hyperplanes of constant $\gamma$. The true singularities of the extension of $S_N[F](\alpha,\beta,\gamma)$ associated with hyperplanes of constant $\gamma$ show up in the relation with the extension of $I_N[F](\alpha,\beta,\gamma)$.
	\item The second step removes the other unwanted singularities away from the loci of two or more unwanted singularities, via some identities proven via Aomoto \cite{Ao} in the $F=1$ case (and \cite[Ap.\ A]{DF2}, at a physicist's level of rigor). 
	The use of these identities for \emph{computing} the original Selberg integral is sketched in \cite{FW}. It seems there cannot be a similar computation of $S_N[F]$ in the $\deg F>1$ case, so a statement about the singularities is the best we can do.
	
	The simplex $\triangle_N\subset \bbR^N$ can be thought of as a subset of 
	\begin{equation} 
	(\bbC\backslash \{0,1\})^N=(\bbC P^1\backslash \{0,1,\infty\})^N 
	\end{equation}
	via the embedding $\bbR \hookrightarrow \bbC\hookrightarrow \bbC P^1$, and the rough idea of this step of the proof is to relate the integrals above to the result of replacing $\triangle_N$ with $L^{\boxtimes N}\triangle_N$ for $L$ one of the six linear  fractional transformations preserving $\bbC P^1\backslash \{0,1,\infty\}$. Only three of these are essentially different, and one of these three is just the identity and therefore uninteresting. The other two integrals each have meromorphic extensions with different manifest singularities. Using \Cref{prop:Aomoto}, these functions can be related to each other, and this can be used to remove most of the apparent singularities that are not present in all three functions.
	Some singularities are present in the relations between the integrals, and these cannot be removed.
	
	Once this has been done, the final step is the application of Hartog's theorem to remove the remaining removable singularities, which now lie on a codimension two subvariety of $\bbC^3$. 
\end{enumerate}
This argument is carried out in \S\ref{subsec:removing_singularities_1}. The version more relevant to \cite{DF2} (with the additional steps needed) is in \S\ref{subsec:removing_singularities_2}.  

We call $I_N[F]$ a ``DF-like'' integral because similar integrals appear, albeit at a somewhat formal level, in \cite{DF2}. A similar construction appears in \cite{Felder}.

Let $\Sigma\mathtt{T}(N)$ denote the collection of maximal collections $\mathtt{I}$ of pairs $(x_0,S)$ of $x_0\in \{0,1\}$ and nonempty subsets $S \subseteq \{1,\ldots,N\}$ such that, given $(x_0,S),(x_0,Q)\in \mathtt{I}$, either $S\subseteq Q$ or $Q\subseteq S$. 
\begin{theorem}
	There exist entire functions $I_{N;\mathrm{reg},\mathtt{I}}[F]:\bbC^{2N+N(N-1)/2}_{\bmalpha,\bmbeta,\bmgamma}\to \bbC$ associated to the $\mathtt{I}\in \Sigma \mathtt{T}(N)$ 
	such that 
	\begin{multline}
	I_N[F](\bmalpha,\bmbeta,\bmgamma) = \sum_{\mathtt{I}\in \Sigma\mathtt{T}(N)} \Big[ \prod_{(1,S) \in \mathtt{I}} \Gamma\Big(|S|+ \sum_{j\in S} \beta_j + 2 \sum_{j,k \in S, j<k} \gamma_{j,k}\Big) \Big] \\ \times  \Big[ \prod_{(0,S) \in \mathtt{I}} \Gamma\Big(|S|+ \sum_{j\in S} \alpha_j + 2 \sum_{j,k \in S, j<k} \gamma_{j,k}\Big) \Big]I_{N;\mathrm{reg},\mathtt{I}}[F](\bmalpha,\bmbeta,\bmgamma) 
	\end{multline}
	for all $(\bmalpha,\bmbeta,\bmgamma)$ for which the left-hand side is a well-defined  integral. 
	\label{thm:Imain}
\end{theorem}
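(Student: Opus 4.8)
The plan is to adapt the polyhomogeneity argument behind \Cref{thm:generic} from $\triangle_N$ to the cube $\square_N$, exploiting the simplification that the $i0$-prescription makes the diagonal factors entire. The first point is that, as a distribution in $t=x_k-x_j$, one has $(t+i0)^{\lambda}=t_+^\lambda+e^{i\pi\lambda}t_-^\lambda$, and the poles of the two summands at $\lambda\in\bbZ^{\leq-1}$ cancel; hence $(x_k-x_j+i0)^{2\gamma_{j,k}}$ is \emph{entire} in $\gamma_{j,k}$. Pairing the integrand with any cutoff supported in the interior of $\square_N$ thus gives an entire function of all parameters, so the only poles of $I_N[F]$ come from the collapse of clusters of variables onto the endpoints $x=0$ and $x=1$. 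This is precisely why---unlike for $S_N[F]$---no $\Gamma$-factor with constant-$\gamma$ argument appears, and why the continuation has no singularities along hyperplanes of constant $\gamma$.

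I would then construct the resolution. On $\triangle_N$ the interior diagonals must be blown up, producing the associahedral combinatorics $\mathtt{T}(N)$; here, by the preceding remark, one resolves only the endpoint-cluster faces $\{x_j=x_0:j\in S\}$ for $x_0\in\{0,1\}$ and $\varnothing\neq S\subseteq\{1,\dots,N\}$. The key difference is that at a single endpoint all collapsing variables approach the \emph{same} point, so the admissible cluster-systems there form \emph{chains} (totally ordered by inclusion) recording the order of approach, rather than laminar families---this is exactly the condition defining $\Sigma\mathtt{T}(N)$. Since the collapses at $0$ and at $1$ are cut out by disjoint defining functions, the two endpoints resolve independently, and a maximal chain at $0$ paired with a maximal chain at $1$ is precisely an element of $\Sigma\mathtt{T}(N)$. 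On the resolution the lifted integrand is one-step polyhomogeneous, vanishing at the hypersurface labeled $(0,S)$ to order $\sum_{j\in S}\alpha_j+2\sum_{j<k\in S}\gamma_{j,k}$ (with $\beta$ in place of $\alpha$ at $(1,S)$); the term $2\sum_{j<k\in S}\gamma_{j,k}$ is the weight of the entire diagonal factors under the rescaling $x_j\mapsto\rho\,x_j$, $j\in S$.

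With classicality in hand I would run the Mellin step as in \Cref{thm:generic}: integrating out the boundary defining functions one cluster at a time reduces the contribution associated to each $\mathtt{I}\in\Sigma\mathtt{T}(N)$ to a product of one-dimensional integrals $\int_0^\delta \rho^{s-1}a(\rho)\dd\rho$ with $a$ smooth, one for each pair $(x_0,S)\in\mathtt{I}$. Every such integral has only simple poles, located exactly at $s\in\bbZ^{\leq0}$, so division by $\Gamma(s)$ yields an entire function; for $(0,S)$ the exponent is $s=|S|+\sum_{j\in S}\alpha_j+2\sum_{j<k\in S}\gamma_{j,k}$---the shift by $|S|$ coming from the overall scale together with the $(|S|-1)$-dimensional relative-configuration volume form---matching the argument of the corresponding $\Gamma$ in the statement (and likewise with $\beta$ at $(1,S)$). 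Collecting the contributions indexed by $\mathtt{I}\in\Sigma\mathtt{T}(N)$ and absorbing every boundary pole into an explicit $\Gamma$-factor yields the stated formula, the remainder $I_{N;\mathrm{reg},\mathtt{I}}[F]$---the regularized product of the one-dimensional factors together with the entire interior contribution---being entire in $(\bmalpha,\bmbeta,\bmgamma)$.

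The hard part will be the polyhomogeneity analysis in the presence of the $i0$-regularized diagonals. I must verify that no blow-up along the interior diagonals is required: after extracting the overall scale of a cluster $S$, the factors $(x_k-x_j+i0)^{2\gamma}$ with $j,k\in S$ become entire conormal distributions in the relative variables, so the relative-configuration integral is entire in $\gamma$ apart from the explicit $\Gamma$-factors attached to the nested sub-clusters $S'\subsetneq S$---this is what forces the chain structure and reproduces the nested $\Gamma$-factors along each chain. I must also check that the collapses at the two endpoints genuinely decouple, so that the local data factor as a product over the pairs $(x_0,S)\in\mathtt{I}$ and the index set is $\Sigma\mathtt{T}(N)$ rather than the associahedral $\mathtt{T}(N)$. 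Once these polyhomogeneity and decoupling statements are established, the matching of hypersurfaces to $\Gamma$-factors and the identification of the index set are routine combinatorial bookkeeping.
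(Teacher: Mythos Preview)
Your proposal is correct and follows essentially the same route as the paper: the resolution you describe is the mwc $A_{0,N,0}$ of \S\ref{subsec:A}, and the Mellin step is \Cref{prop:optimized_mwc} applied via \Cref{prop:Ares} (the $m=N$ case of \Cref{prop:Icontinuation}). The ``hard part'' you flag---that the $i0$-regularized diagonals require no further blowup---is handled in the paper by showing (\Cref{prop:p-sub} and the proposition immediately after it) that the diagonals $H_{j,k}$ lift to a consistently orientable family of interior p-submanifolds of $A_{0,N,0}$, so that the product of $(y_{k,j}+i0)^{2\gamma_{j,k}}$ factors defines a classical distribution on the resolution and contributes no boundary $\Gamma$-factors.
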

In particular, $I_N[F](\bmalpha,\bmbeta,\bmgamma)$ admits an analytic extension $\dot{I}_N[F](\bmalpha,\bmbeta,\bmgamma)$ to an open, dense set 
\begin{equation}
\dot{V}_N = \bbC_{\bmalpha,\bmbeta,\bmgamma}^{2N+N(N-1)/2} \Big\backslash \Big[ \Big( \bigcup_{S\subseteq \{1,\ldots,N\}} \{\alpha_{S,*} \in \bbZ^{\leq - |S|} \} \Big) \cup \Big( \bigcup_{S\subseteq \{1,\ldots,N\}} \{\beta_{S,*} \in \bbZ^{\leq - |S|} \} \Big) \Big].
\label{eq:dotVn}
\end{equation}

\subsection{Some comments on the Coulomb gas formalism}
Here, we discuss a particular application to the Coulomb gas formalism (a.k.a.\ ``free field realization,'' ``Feigin--Fuchs representation,'' etcetera) of 2D CFT \cite{DF1}\cite{DF2}\cite{DF3}\cite{FS1}\cite[Chp.\ 9]{CFT}\cite{FW}. This approach of Dotsenko--Fateev to the construction of the ``minimal models'' of Belavin--Polyakov--Zamolodchikov (BPZ) \cite{BPZ} has been the subject of substantial interest, but it appears that it has not yet been placed on entirely rigorous mathematical footing. The construction in \cite{FS1}\cite{FS2} of the 3-point coefficients of the $(1,s)$- and $(r,1)$-primary fields and their descendants in the minimal models is satisfactorily rigorous, but it has remained an open problem to handle the rest of the primary fields at a similarly satisfactory degree of rigor. 
From our perspective, the issue is an insufficient treatment of the meromorphic continuation of Selberg-like integrals, which are instead treated somewhat formally in the original works. 

The issue is that Dotsenko \& Fateev (DF) take some of the $\gamma$'s to be $-1$ --- see e.g. \cite[Appendix A]{DF2}\cite[p. 27]{FS2}\cite[\S2]{FW} --- and then the integrand above is, say for $F=1$, no longer integrable over the integral's domain.
As a consequence, the integrals in \cite[Appendix A]{DF2} are formal. 
Dotsenko \& Fateev suggest making sense of them via meromorphic continuation in the exponents of the integrand, but they do not prove that a suitable meromorphic continuation exists, nor do they discuss the singularities of the extension in sufficient detail to justify their manipulations. 
Here, we have constructed a suitable extension and analyzed its singularities in detail. 

The reason why it is necessary to take some of the $\gamma$'s to be $-1$ is that, for fixed central charge, there are two sorts of vertex operators $V_{\alpha_\pm}$ used in screening operators. Both are necessary to produce all solutions of the BPZ equations. 
The relevant vertex operators are those of conformal weights $h_\pm =  1$. 
If the central charge is $c$, the two screening charges have conformal weights given in terms of $\alpha_\pm$ by
\begin{equation}
h_\pm = \alpha_\pm^2 - 2 \alpha_\pm \alpha_0, 
\end{equation}
according to the conventions in \cite[\S9.2.1]{CFT}, where $c=1-24\alpha_0^2$,
so, by Vieta, $\alpha_-\alpha_+=-1$. The correlation functions involving these screening charges are Dotsenko--Fateev integrals with $\gamma_{j,k} = \alpha_-\alpha_+ = -1$, as follows from the commutation properties of vertex operators. See \cite[\S9]{CFT} for further exposition.

A construction of Kanie--Tsuchiya \cite{KT2}\cite{KT1}, rediscovered by Mimachi--Yoshida \cite{Mimachi2,Mimachi}\cite{Masaaki}, yields the existence of some meromorphic continuation defined for almost all values of the exponents. This extension is not quite sufficient for our purposes: it has removable singularities that, while removable, are nontrivial to actually \textit{prove} removable.  
In particular, the Kanie--Tsuchiya construction has an apparent isolated singularity at $\gamma = -1$ (see \cite[\S5, above Thm. 5]{KT1}), along with at a few other problematic affine hyperplanes in the space of possible parameters. One of the advantageous features of the meromorphic continuation here is that it lacks these problematic apparent singularities and therefore applies to the cases considered in the physics literature.

Most of the rigorous work on the analysis of integrals of Dotsenko--Fateev type --- see e.g.\  \cite{Flores1}\cite{Flores2}\cite{Flores3}\cite{Flores4}\cite{Lenells} for some recent work --- focuses on screened multipoint correlation functions with at most one screening charge screening per insertion point. Such integrals are related to the $N=1$ case of $S_N(\bmalpha,\bmbeta,\bmgamma)$. Not much has been done about the $N>1$ case. 
Moreover, while a fair amount of work has gone into the study of general hypergeometric integrals associated to hyperplane arrangements --- the literature on this topic is large, so we just cite \cite{Varchenko}\cite{hypergeometric} --- it does not seem possible to deduce the specific, concrete results below from results in the current literature.

Note that $\dot{\Omega}_N=\dot{\Omega}_N[1]$, as defined in \cref{eq:dotOmegaN}, does \emph{not} contain $(\bmalpha,\bmbeta,\bmgamma)$ with $\gamma_{j,k} = -1$ for $|j-k|=1$, so \Cref{thm:generic} is insufficient for the construction of the BPZ minimal models. This is one of the motivations for proving the sharper theorems above.

The $\Gamma(2+\bar{d}_j+\alpha+\beta+(N+j-2)\gamma)$ term in the denominator of \cref{eq:SNF_goal_symmetric} implies that $\mathring{S}_N[F](\alpha,\beta,\gamma)=0$ for all 
\begin{equation}
(\alpha,\beta,\gamma) \in \mathring{U}_N[F] \cap  \{\alpha+\beta+(N+j-2)\gamma \in \bbZ^{\leq -2-\bar{d}_j}\text{ for some }j\in \{1,\ldots,N\}\}.
\label{eq:misc_kkk}
\end{equation}
When constructing the 3-point coefficients of the BPZ minimal models, this is one mechanism preventing the fusion of $(0,s)$-primary fields (which are not included in the model) with the primary fields that are included. In BPZ's terminology, this is the \emph{truncation} of the operator algebras, as originally argued for on the basis of the constraint of OPE associativity --- see \cite[\S6]{BPZ}\cite[Chp.\ 7.3.2]{CFT}.

For the full application to \cite{DF1, DF2}, we use the following notion of ``DF-symmetric'' polynomials.
Given $\lambda \in \bbC$ and $\mathtt{S}\subseteq \{1,\ldots,N\}$, let $\operatorname{DFSym}(N,\mathtt{S},\lambda)$ denote the set of $F\in \bbC[x_1,x_1^{-1},\ldots,x_N,x_N^{-1}]$ such that:
\begin{itemize}
	\item given any $\sigma \in \frakS_N$ such that $\sigma:\mathtt{S}\to \mathtt{S}$, i.e.\ in the Young subgroup associated to $\mathtt{S}$,  
	\begin{equation} 
	F=F\circ \sigma
	\end{equation} 
	where we are identifying $\sigma$ with the map $\bbC^N \ni \{x_i\}_{i=1}^N \to \{x_{\sigma(i)}\}_{i=1}^N \in \bbC^N$, and 
	\item for any $j\in \mathtt{S}$ and $k\in \{1,\ldots,N\}\backslash \mathtt{S}$, 
	\begin{equation}
	\lambda\Big(\frac{\partial}{\partial x_j} F\Big) \Big|_{x_j=x_k} =  \frac{\partial}{\partial x_k} \Big(F \Big|_{x_j=x_k} \Big) \in \bbC[x_1,x_1^{-1}\ldots,\hat{x}_j,\hat{x}_j^{-1}\ldots,x_N].
	\end{equation}
\end{itemize}
In particular, $\operatorname{DFSym}(N,\{1,\ldots,N\},\lambda) = \bbC[x_1,x_1^{-1},\ldots,x_N,x_N^{-1}]^{\frakS_N}$, so in this sense DF-symmetry is a generalization of ordinary symmetry. 
Our disallowal of Laurent polynomials $F$ in the symmetric case was without loss of generality, as, were $F$ Laurent, we could shuffle factors of $x_1\cdots x_N$ between the polynomial and the rest of the Selberg integrand. However, it is useful here to allow general Laurent polynomials.

For each $\lambda$ and $\mathtt{S}$, $\operatorname{DFSym}(N,\mathtt{S},\lambda)$ is a (unital) $\bbC$-subalgebra of $\bbC[x_1,x_1^{-1}\ldots,x_N,x_N^{-1}]$. It is nontrivial. If $\mathtt{S}$ is a proper subset of $\{1,\ldots,N\}$, then
\begin{align}
\begin{split} 
\lambda_- \sum_{j\in \mathtt{S} } x_j  + \lambda_+ \sum_{k\in \{1,\ldots,N\}\backslash \mathtt{S}} x_k &\in \operatorname{DFSym}(N,\mathtt{S},\lambda),  \\
\lambda_- \sum_{j\in \mathtt{S} } \frac{1}{x_j}  + \lambda_+ \sum_{k\in \{1,\ldots,N\}\backslash \mathtt{S}} \frac{1}{x_k} &\in \operatorname{DFSym}(N,\mathtt{S},\lambda) 
\end{split}
\end{align}
is a nonzero member for $\lambda_+$ defined by $\lambda_-^{-1} (\lambda_- + \lambda_+) = \lambda$, so $\operatorname{DFSym}(N,\mathtt{S},\lambda) $ contains polynomials of all degrees. 

The key method of constructing DF-symmetric Laurent polynomials is the following:
\begin{example}
	For any $M\in \bbN^+$ and matrix-valued polynomials $\varphi,\psi \in x\bbC^{M\times M}[x]$ such that the coefficients of $\varphi$ are strictly upper-triangular, the coefficients of $\psi$ are strictly lower-triangular.
	Suppose that the $A$'s all commute with each other and that the $B$'s all commute with each other. (We do not assume that the $A$'s commute with the $B$'s.)
	Then, the matrix elements of 
	\begin{equation}
	\exp \Big( 	\lambda_- \sum_{j\in \mathtt{S} } \psi(x_j^{-1})  + \lambda_+ \sum_{k\in \{1,\ldots,N\}\backslash \mathtt{S}} \psi(x_k^{-1}) \Big) \exp \Big( 	\lambda_- \sum_{j\in \mathtt{S} } \varphi(x_j)  + \lambda_+ \sum_{k\in \{1,\ldots,N\}\backslash \mathtt{S}} \varphi(x_k) \Big)  
	\label{eq:misc_b51}
	\end{equation}
	lie in $\operatorname{DFSym}(N,\mathtt{S},\lambda)$, where $\lambda=\lambda_-^{-1} (\lambda_- + \lambda_+)$.
	The vertex operators which Dotsenko and Fateev integrate to define the minimal model 3-point coefficients have this form up to some scalar factors which are part of the Selberg integrand. In this example, the coefficients of $\varphi$ are annihilation operations on some Fock space, and the coefficients of $\psi$ are creation operators, with all operators truncated to some finite dimensional subspace of the Fock space. That the creation operators in \cref{eq:misc_b51} are all to the left of the annihilation operators is normal ordering.
	
	See \cite{DF1}\cite{KT1}\cite{KT2}\cite{Felder}\cite[Chp.\ 9]{CFT}. 
\end{example}

For a set $\mathtt{S}\subseteq \{1,\ldots,N\}$ and $\alpha_\pm,\beta_\pm,\gamma_\pm,\gamma_0\in \bbC$, let $\bmalpha^{\mathrm{DF0},\mathtt{S}}$, $\bmbeta^{\mathrm{DF0},\mathtt{S}} \in \bbC^N$ be given by
\begin{align}
\begin{split} 
\alpha_j &= 
\begin{cases}
\alpha_+ & (j\in \mathtt{S}), \\
\alpha_- & (j\notin \mathtt{S}),  
\end{cases} \\
\beta_j &= 
\begin{cases}
\beta_+ & (j\in \mathtt{S}), \\
\beta_- & (j\notin \mathtt{S}),  
\end{cases}
\end{split}
\end{align}
and let $\bmgamma^{\mathrm{DF0},\mathtt{S}} \in \bbC^{N(N-1)/2}$ be given by 
\begin{align}
\gamma_{j,k} &= 
\begin{cases}
\gamma_+ &  (j,k\in \mathtt{S}), \\
\gamma_0 & (j\in \mathtt{S}, k \notin \mathtt{S} \text{ or vice versa}), \\
\gamma_- & (j,k\notin \mathtt{S}).
\end{cases}
\end{align} 
Let 
\begin{equation} 
\dot{W}_N^{\mathrm{DF0},\mathtt{S}}[F] = 
\{(\alpha_-,\alpha_+,\beta_-,\beta_+,\gamma_-,\gamma_0,\gamma_+)\in \bbC^7 : (\bmalpha^{\mathrm{DF0},\mathtt{S}},\bmbeta^{\mathrm{DF0},\mathtt{S}},\bmgamma^{\mathrm{DF0},\mathtt{S}} ) \in \dot{V}_N[F] \},
\end{equation} 
where $\dot{V}_N[F]$ is defined by \cref{eq:dotVn}.
Define, for $(\alpha_-,\alpha_+,\beta_-,\beta_+,\gamma_-,\gamma_0,\gamma_+) \in \dot{W}_N^{\mathrm{DF0},\mathtt{S}}[F]$,
\begin{equation}
\dot{I}_N^{\mathrm{DF0},\mathtt{S}}[F](\alpha_-,\alpha_+,\beta_-,\beta_+,\gamma_-,\gamma_0,\gamma_+) = \dot{I}_N[F](\bmalpha^{\mathrm{DF0},\mathtt{S}},\bmbeta^{\mathrm{DF0},\mathtt{S}},\bmgamma^{\mathrm{DF0},\mathtt{S}}). 
\end{equation}
Now let $\dot{W}_N^{\mathrm{DF},\mathtt{S}}[F]$ denote the set of $(\alpha_+,\beta_+,\gamma_+)\in \bbC^3$ such that $\gamma_+\neq 0$ and, setting 
\begin{equation}
\gamma_- = \gamma_+^{-1}, \quad \alpha_- = - \gamma_- \alpha_+, \quad \beta_- = - \gamma_- \beta_+ 
\label{eq:rels}
\end{equation}
--- cf.\ \cite[eq. A.2]{DF2} ---
it is the case that $(\alpha_-,\alpha_+,\beta_-,\beta_+,\gamma_-,-1,\gamma_+)\in \dot{W}_N^{\mathrm{DF0},\mathtt{S}}[F]$. This is an open and dense subset of $\bbC^3$. Let
\begin{equation}
\dot{I}_N^{\mathrm{DF},\mathtt{S}}[F](\alpha_+,\beta_+,\gamma_+) = \dot{I}_N^{\mathrm{DF0},\mathtt{S}}[F](-\gamma_+^{-1}\alpha_+,\alpha_+,-\gamma_+^{-1}\beta_+,\beta_+,\gamma_+^{-1},-1,\gamma_+).
\end{equation}
for $(\alpha_+,\beta_+,\gamma_+) \in \dot{W}_N^{\mathrm{DF},\mathtt{S}}[F]$. 
Set $N_+=|\mathtt{S}|$ and $N_- = N-N_+$.

\begin{theorem}
	Fix $\gamma_+\in \bbC\backslash \{0,1\}$ and $\mathtt{S}\subseteq \{1,\ldots,N\}$.  
	Suppose that 
	\begin{equation} 
	F\in \operatorname{DFSym}(N,\mathtt{S}, \gamma_+^{-1}(1-\gamma_+)).
	\end{equation} 
	Then, there exists an entire function $I_{N;\mathrm{Reg}}^{\mathrm{DF},\mathtt{S}}[F](\alpha_+,\beta_+,\gamma_+) : \bbC^2_{\alpha_+,\beta_+}\to \bbC$
	such that 
	\begin{multline}
	\dot{I}_N^{\mathrm{DF},\mathtt{S}}[F](\alpha_+,\beta_+,\gamma_+) =   \Big[\prod_\pm \prod_{j=1}^{N_\pm} \frac{\sin(\pi(\alpha_\pm+\beta_\pm+(N_\pm+j-2)\gamma_\pm))}{\sin(\pi(\alpha_\pm+(j-1)\gamma_\pm)) \sin(\pi(\beta_\pm+(j-1)\gamma_\pm))} \Big] \\ \times  I_{N;\mathrm{Reg}}^{\mathrm{DF},\mathtt{S}}[F](\alpha_+,\beta_+,\gamma_+)
	\end{multline}
	when $\alpha_-,\beta_-,\gamma_-$ are related to $\alpha_+,\beta_+,\gamma_+$ by \cref{eq:rels} and the left-hand side is well-defined.
	\label{thm:Ifinal}
\end{theorem}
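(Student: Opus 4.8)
The plan is to build on the meromorphic structure of $\dot{I}_N[F]$ furnished by \Cref{thm:Imain}, and to show that, once the duality relations \cref{eq:rels} (crucially $\gamma_0=-1$) and the DF-symmetry of $F$ are imposed, the only surviving singularities are those carried by the asserted trigonometric prefactor. Recall that the poles of $\dot{I}_N[F]$ lie only on the hyperplanes $\{\alpha_{S,*}\in\bbZ^{\leq -|S|}\}$ and $\{\beta_{S,*}\in\bbZ^{\leq -|S|}\}$ associated to collisions of a cluster $S$ at the endpoints $0$ and $1$; there are no $\gamma$-only singularities, owing to the $+i0$ prescription. Fixing $\gamma_+\in\bbC\setminus\{0,1\}$ and viewing everything as a function of $(\alpha_+,\beta_+)$, I would first substitute \cref{eq:rels} into $\alpha_{S,*}$ and $\beta_{S,*}$ and sort the candidate pole hyperplanes by the pair $(p,q)=(|S\cap\mathtt{S}|,|S\setminus\mathtt{S}|)$.

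The next step treats the \emph{pure} clusters ($q=0$ or $p=0$). As such a cluster collapses at an endpoint, the integrand is modelled in the blow-up coordinates on a lower-order Selberg integrand in the $p$ (respectively $q$) colliding variables, and the symmetric Selberg structure of this sub-integral — handled, for general symmetric $F$, by \Cref{thm:symmetric}, and in the $F=1$ case by \cref{eq:Selberg_formula} — refines the single ``bunched'' pole carried by $\Gamma(|S|+\alpha_{S,*})$ into a product of simple poles at $\alpha_++(i-1)\gamma_+\in\bbZ$, $i\in\{1,\ldots,p\}$ (respectively $\alpha_-+(i-1)\gamma_-\in\bbZ$). This is exactly the symmetry-driven cancellation of the fractional poles that already occurs in the passage from $\dot{U}_N$ to $\mathring{U}_N$ in \Cref{thm:symmetric}, and the surviving loci coincide with the zeros of the denominator sines $\sin(\pi(\alpha_\pm+(j-1)\gamma_\pm))$ and $\sin(\pi(\beta_\pm+(j-1)\gamma_\pm))$.

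The decisive — and hardest — step is the \emph{mixed} clusters, those with $p,q>0$, and this is where I expect the main obstacle to lie. Here the collapsing cluster carries the mutual factor $\prod (x_k-x_j+i0)^{2\gamma_0}=\prod (x_k-x_j+i0)^{-2}$, a double pole along each mixed diagonal $x_j=x_k$ with $j\in\mathtt{S}$, $k\notin\mathtt{S}$. I would show that the hypothesis $F\in\operatorname{DFSym}(N,\mathtt{S},\gamma_+^{-1}(1-\gamma_+))$ is precisely the condition making the residue obstruction generated by these double poles vanish, so that the mixed clusters contribute \emph{no} genuine poles. Concretely, pairing $(x_k-x_j+i0)^{-2}$ against one variable and integrating by parts produces a term proportional to $\lambda\,(\partial_{x_j}F)|_{x_j=x_k}-\partial_{x_k}(F|_{x_j=x_k})$ once the logarithmic derivatives of the remaining endpoint and interaction factors are collected; the value $\lambda=\gamma_+^{-1}(1-\gamma_+)$ is exactly the one that annihilates this combination, which is the defining relation of $\operatorname{DFSym}(N,\mathtt{S},\lambda)$. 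This is the DF ``decoupling'' of the two screening systems, and carrying it out rigorously — uniformly over the resolution and in the presence of the simultaneous endpoint collisions of a mixed cluster — is the technical heart of the argument.

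With the mixed singularities removed, the singularity structure decouples into independent ``$+$'' and ``$-$'' blocks, each a symmetric $+i0$ cube integral over $N_\pm$ variables. I would then invoke the relation between such a cube integral and the symmetrized simplex integral used in Step 1 of the proof of \Cref{thm:symmetric}: decomposing $\square_{N_\pm}$ into its $N_\pm!$ orderings and collecting the phases $e^{2\pi i\gamma_\pm}$ produced by the $+i0$ prescription across each inverted factor yields, together with \Cref{thm:symmetric} and the reflection formula $\Gamma(z)\Gamma(1-z)=\pi/\sin(\pi z)$, exactly the block factor $\prod_{j=1}^{N_\pm}\sin(\pi(\alpha_\pm+\beta_\pm+(N_\pm+j-2)\gamma_\pm))/[\sin(\pi(\alpha_\pm+(j-1)\gamma_\pm))\sin(\pi(\beta_\pm+(j-1)\gamma_\pm))]$. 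The numerator sines are the images under reflection of the denominator Gammas $\Gamma(2+\cdots+\alpha_\pm+\beta_\pm+(N_\pm+j-2)\gamma_\pm)^{-1}$ of the block Selberg formula, and they encode the truncation zeros discussed after \Cref{thm:symmetric}; I would verify directly that $\dot{I}_N^{\mathrm{DF},\mathtt{S}}[F]$ vanishes on their loci. The product of the two block factors is the claimed prefactor, and the quotient $I_{N;\mathrm{Reg}}^{\mathrm{DF},\mathtt{S}}[F]$ is then holomorphic away from a set of codimension $\geq 2$ (the pairwise intersections of the sine-zero hyperplanes, where matching the pole orders is delicate); a final application of Hartogs' theorem, as in Step 3 of the proof of \Cref{thm:symmetric}, shows it extends to an entire function of $(\alpha_+,\beta_+)$, completing the proof.
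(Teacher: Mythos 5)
Your proposal takes a genuinely different architecture from the paper's, and both of its load-bearing steps have gaps that the paper's actual proof is specifically engineered to avoid. The paper never computes residues along the mixed hyperplanes at all. Instead it first proves (\Cref{prop:switching}) that the DF-symmetric integral is invariant under $\frakS_{\ell,m,n}$, using the convergent contour representation of \S\ref{subsec:alternative} (\Cref{prop:DF_transposition}): there the circle-contour residue produces $G_0\,\partial G/\partial z_j|_{z_j=z_{j+1}}$, which DF-symmetry together with \cref{eq:rels} identifies with $\lambda^{-1}\partial H/\partial z_{j+1}$, a total derivative whose integral vanishes — i.e.\ your integration-by-parts mechanism, but applied to an honest, convergent integral. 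That invariance is then what licenses the DF-analogue of the Aomoto relations (\Cref{prop:DFAomoto}, built on \Cref{prop:main_lemma2}), which relate $\dot{I}^{\mathrm{DF},\mathtt{S}}_{0,N,0}$ to $\dot{I}^{\mathrm{DF},\mathtt{S}}_{N,0,0}$, $\dot{I}^{\mathrm{DF},\mathtt{S}}_{0,0,N}$, $\dot{I}^{\mathrm{DF},\mathtt{S}}_{N_\pm,N_\mp,0}$, $\dot{I}^{\mathrm{DF},\mathtt{S}}_{0,N_\pm,N_\mp}$ with the sine factors appearing as exact coefficients; the mixed and bunched singularities are removable because they are simply not manifest in the other representations, and Hartogs finishes on the union of the nine domains. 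Your plan instead calls for showing directly that the residues of the meromorphic continuation along the mixed hyperplanes vanish. That is a different and harder task than the one you describe: by the mechanism of \Cref{prop:basic_local}, the residue along a face $\mathrm{F}_{S,Q;0}$ of $A_{\ell,m,n}$ is an integral over that face of a Taylor coefficient of the lifted integrand, not of the integrand itself, while your total-derivative identity is stated for the integrand. You would need to show that the DFSym structure survives the expansion at the face and the remaining external integrations, uniformly in the expansion order; you flag this as ``the technical heart'' but do not supply it, and no representation of the continuation as a convergent integral near $\gamma_0=-1$ (the role played in the paper by \S\ref{subsec:alternative}) is provided to carry it out on.

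The pure-cluster step has a comparable gap. You propose to refine the bunched pole loci (of the form $j(1+\alpha_\pm+(j-1)\gamma_\pm)+\delta\in\bbZ^{\leq 0}$) into the simple loci $\alpha_\pm+(j-1)\gamma_\pm\in\bbZ$ by applying \Cref{thm:symmetric} to the sub-integral of a collapsing cluster. But \Cref{thm:symmetric} is a global statement about $S_N[F]$ for symmetric $F$, and its proof in the paper is itself global — \Cref{prop:Aomoto} plus Hartogs in \Cref{prop:symmetric_final} — so there is no local or parametrized version available to invoke; moreover the sub-integral at a face is not a symmetric Selberg integral, since its coefficients carry the non-collapsing variables and the cross-interaction factors. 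Similarly, the extraction of the explicit sine prefactor from ``block Selberg formulas plus the reflection formula'' is heuristic: the integrand does not factor into $+$ and $-$ blocks (the mixed $(x_k-x_j+i0)^{-2}$ factors couple them), so your block decoupling is at best a statement about singular loci, whereas the paper obtains the sine factors as exact coefficients in \Cref{prop:DFAomoto}. In short, you point at the right phenomena — DF-symmetry neutralizing the $\gamma_0=-1$ obstructions, the truncation zeros, a final Hartogs argument — but the two central steps are asserted rather than proven, and the results you want to cite do not localize in the way your argument requires.
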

If desired, it is possible to replace the sines with $\Gamma$-functions with appropriate integral shifts.
\begin{example}
	When $F=1$, Dotsenko and Fateev claim in \cite[Eqs.\ A.8, A.35]{DF2}\footnote{There seem to be a couple typos in \cite[Eq. A.35]{DF2}.  \Cref{eq:misc_x6x} has these fixed. The first few cases of \cref{eq:misc_x6x} have been numerically checked, so as to verify that the fixes are correct.}  that the integral above is given by 
	\begin{multline}
	\dot{I}_N^{\mathrm{DF},\mathtt{S}}[1](\alpha_+,\beta_+,\gamma_+) \propto \gamma_\mp^{2 N_- N_+} \Big[\prod_{j=1}^{N_\pm} e^{-i\pi(j-1)\gamma_\pm}\frac{\Gamma(j\gamma_\pm) \sin(\pi j \gamma_\pm)}{\Gamma(\gamma_\pm) \sin(\pi \gamma_\pm)} \Big] \\ \times \Big[\prod_{j=1}^{N_\mp} e^{-i\pi(j-1)\gamma_\mp} \frac{\Gamma(j\gamma_\mp - N_\pm) \sin(\pi j \gamma_\mp)}{\Gamma(\gamma_\mp)\sin(\pi \gamma_\mp)}\Big]  \Big[\prod_{j=1}^{N_\pm} \frac{\Gamma(1+\alpha_\pm + (j-1) \gamma_\pm)\Gamma(1+\beta_\pm + (j-1) \gamma_\pm)}{\Gamma(2-2N_\mp + \alpha_\pm + \beta_\pm + (N_\pm-2+j)\gamma_\pm)}\Big] \\ \times  \Big[\prod_{j=1}^{N_\mp} \frac{\Gamma(1 +\alpha_\mp + (j-1) \gamma_\mp- N_\pm)\Gamma(1+\beta_\mp + (j-1) \gamma_\mp - N_\pm)}{\Gamma(2-N_\pm + \alpha_\mp + \beta_\mp + (N_\mp-2+j)\gamma_\mp)}\Big] 
	\label{eq:misc_x6x}
	\end{multline}
	for each choice of sign. 
\end{example}

\section{Associahedra}
\label{sec:geometry}

We use the term `mwc' to mean \emph{manifold-with-corners} in the sense of Melrose -- e.g. \cite{MelroseCorners}\cite{MelroseMWC}, these possessing $C^\infty$-structure. Roughly, a mwc is locally diffeomorphic to an open neighborhood of $\smash{[0,\infty)^N}$, and there is an additional requirement that boundary hypersurfaces be embedded.
In this section, we define the mwcs that will be used to resolve the singularities of Selberg- and Dotsenko--Fateev-like integrands:
\begin{itemize}
	\item in \S\ref{subsec:K}, we define the associahedra $K_{\ell,m,n}$, used to meromorphically continue the Selberg-like integrals, and 
	\item in \S\ref{subsec:A} we define the mwcs $A_{\ell,m,n}$, used to meromorphically continue the DF-like integrals. 
\end{itemize}
Since $K_{0,N,0}$ is the usual $N$-dimensional associahedra, we refer to the mwcs defined below as associahedra as well, hence the title of this section. If $M$ is a mwc, we use $\calF(M)$ to denote the set of faces of $M$, where by \textit{faces} we mean only the boundary hypersurfaces. We use ``facet'' to refer to the higher codimension boundary components. 

It is worth comparing Melrose's notion of mwc to that of polyhedron. A mwc is locally a polyhedron, but the converse is not true, as the basic requirement of $M$ being locally diffeomorphic to a relatively open neighborhood of $\smash{[0,\infty)^N}$ means that every facet $\mathrm{f}\subsetneq M$ is the intersection of at most $N$ faces. While the (closed) ball, tetrahedron, cube, and dodecahedron are all mwcs, the octahedron and icosahedron are not. It is necessary for the argument in \S\ref{sec:IR} that the associahedra $A_{\ell,m,n}$ and $K_{\ell,m,n}$ are not just polyhedra, but rather mwcs. The reason is that, since $[0,\infty)^N$ is a product of half-closed intervals, any mwc is locally diffeomorphic to a product of open or half-closed intervals. This product structure is exploited in \S\ref{sec:IR}. In contrast, the octahedron is not, in any reasonable sense, a product of one-dimensional manifolds-with-boundary near its vertices.

To summarize, the notion of ``mwc'' used here plays a similar role in our analysis to that of ``polyhedra in general position'' in \cite[\S 10.7]{Varchenko}, but the notions are not equivalent.
For the purposes of this paper, we find it more natural (and technically simpler, as it avoids the need for polyhedral realizations) to use the language of mwcs. 

We keep track of the full $C^\infty$-structure of these mwcs below. Were it required, we could keep track of $C^\omega$- (i.e.\ real analytic) structure, but since this would require going somewhat beyond the existent literature on mwcs, and since this level of precision is not needed for the rest of the paper, we will restrict ourselves to the smooth category.  

If $\mathrm{f}$ is a facet of $M$, then the blowup $[M;\mathrm{f}]$ is a mwc, and the blowdown map 
\begin{equation}
\mathrm{bd}: [M;\mathrm{f}]\to M
\end{equation}
is smooth. For convenience, we can identify the interior $[M;\mathrm{f}]^\circ$ with $M^\circ$. (If $\mathrm{F}$ is a codimension $\leq 1$ facet of $M$, then we can identify $[M;\mathrm{F}]$ with $M$ itself.) Naturally, if $\mathrm{f}$ has codimension $\geq 2$, then 
\begin{equation} 
\calF([M;\mathrm{f}]) = \{[\mathrm{F};\mathrm{f}\cap \mathrm{F}] : \mathrm{F}\in \calF(M) \} \cup \{\mathrm{ff}\}, 
\end{equation}
where $\mathrm{ff} = \mathrm{bd}^{-1}(\mathrm{f})$ is the front face of the blowup. 
Then, given boundary-defining-functions (bdfs) $x_\mathrm{F} \in C^\infty(M;\bbR^+)$ of the faces $\mathrm{F}\in \calF(M)$, we can choose bdfs $x_{[\mathrm{F} ; \mathrm{f}\cap \mathrm{F}]}, x_{\mathrm{ff}}$ of the faces of $[M;\mathrm{f}]$ such that, for each $\mathrm{F}\in \calF(M)$, 
\begin{equation}
x_{\mathrm{F}}\circ \mathrm{bd} = 
\begin{cases}
x_{ [\mathrm{F};\mathrm{f}\cap \mathrm{F}]}x_{\mathrm{ff}} & (\mathrm{f}\subseteq \mathrm{F}),\\ 
x_{ [\mathrm{F};\mathrm{f}\cap \mathrm{F}]}  & (\text{otherwise}). 
\end{cases}
\end{equation}
(We identify polyhomogeneous -- in particular, smooth -- functions on $[M;\mathrm{f}]$ with their restrictions to the interior, so, going forwards, we can drop the ``$\circ\,\mathrm{bd}$.'')
Specifically, in addition to defining $x_{[\mathrm{F}; \mathrm{f} \cap \mathrm{F}]} = x_{\mathrm{F}}$ if $\mathrm{f} \not\subseteq \mathrm{F}$, we can take 
\begin{align}
x_{\mathrm{ff}} &= \sum_{\mathrm{F}\in \calF(M), \mathrm{f}\subseteq \mathrm{F}} x_{\mathrm{F}}, 
\intertext{and, if $\mathrm{f}\subseteq \mathrm{F}$, then}
x_{[\mathrm{F}; \mathrm{f}\cap \mathrm{F}]} &=  x_{\mathrm{F}} \Big( \sum_{\mathrm{F}\in \calF(M), \mathrm{f}\subseteq \mathrm{F}} x_{\mathrm{F}}\Big)^{-1}.
\end{align} 
This follows from the analogous result for blowing up a facet of $[0,\infty)^N$.
Note that because $M$ is a mwc and not just a polyhedron, if $\mathrm{F}_1,\ldots,\mathrm{F}_d\in \calF(M)$ are distinct faces with $\cap_{\delta}\mathrm{F}_\delta \neq \varnothing$, then the connected components of $\cap_{\delta}\mathrm{F}_\delta$ are codimension $d$ facets of $M$. (The 2D lens is an example of a mwc with two faces whose intersection is disconnected.)

If $U$ is an open subset of a mwc, then $U$ can be considered as a mwc in its own right. We will say that some function $x\in C^\infty(U;[0,\infty) )$ is a bdf in $U$ of $\mathrm{F}\in \calF(M)$ if it is a bdf of the face $\mathrm{F}\cap U$ of $U$, assuming that $\mathrm{F}\cap U\neq \varnothing$, in which case it is automatically a face of $U$. 
Let $\overline{\bbR}_t=\bbR_t\cup \{-\infty,+\infty\}$ denote the ``radial'' compactification of $\bbR$. This is a ($C^\infty$-)manifold-with-boundary, with $1/t$ serving as a bdf for $\{\infty\}$ in $\{t>0\}$ and $-1/t$ serving as a bdf for $\{-\infty\}$ in $\{t<0\}$. 

\subsection{The Associahedra $K_{\ell,m,n}$}
\label{subsec:K}

We now define the mwc $K_{\ell,m,n}$ for $\ell,m,n\in \bbN$ not all zero. The blowup procedure below is a generalization of that in \cite{KT2}.  We begin with the set 
\begin{multline}
\triangle_{\ell,m,n} =\{ (x_1,\ldots,x_N) \in \overline{\bbR}^N:  x_1 \leq \cdots \leq x_\ell\leq 0\\ \leq x_{\ell+1}\leq \cdots \leq x_{\ell+m}\leq 1 \leq x_{\ell+m+1}\leq \cdots \leq x_N\},
\label{eq:misc_tel}
\end{multline}
where $N=\ell+m+n$. 
This is a compact sub-mwc of $\overline{\bbR}^N$. Naturally, 
\begin{equation} 
\triangle_{\ell,m,n} \cong \triangle_{\ell,0,0}\times \triangle_{0,m,0}\times \triangle_{0,0,n}.
\end{equation} 
Also, $\triangle_{\ell,0,0}\cong \triangle_\ell$, $\triangle_{0,m,0}\cong \triangle_m$, and $\triangle_{0,0,n}\cong \triangle_n$. 

For example, in the case $N=2$,  we have six cases. These are $\triangle_{2,0,0},\triangle_{0,2,0},\triangle_{0,0,2}$, each of which is diffeomorphic to the triangle $\triangle_2$, and $\triangle_{1,1,0},\triangle_{1,0,1},\triangle_{0,1,1}$, each of which is diffeomorphic to the square $\square_2$. 

If $\ell,n=0$, in which case $m=N$, then $\triangle_{\ell,m,n}$ is just the standard $N$-simplex $\triangle_N$.

We call a subset $\calI\subseteq \bbZ/(N+3)\bbZ$ \emph{consecutive} if it is of the form $\{k \bmod (N+3),\cdots, k+\kappa \bmod (N+3)\bbZ\}$ for some $k\in \bbZ/(N+3)\bbZ$ and $\kappa\in \bbN$.  (Thus, the empty set will not be considered consecutive.)

We label the facets (of any codimension, possibly zero) of $\triangle_{\ell,m,n}$ using (unordered) partitions $\mathtt{I}$ of $\bbZ/(N+3)\bbZ$ into consecutive subsets $\calI$, with no two of $0,\ell+1,\ell+m+2 \in \bbZ/(N+3)\bbZ$ appearing together in any element $\calI\in \mathtt{I}$.
Specifically, 
\begin{equation}
\mathrm{f}_{0,\mathtt{I}} = \left\{(x_1,\ldots,x_N) \in \triangle_{\ell,m,n}: \left( \calI\in \mathtt{I}
\Rightarrow 
\begin{cases}
j\in \calI\Rightarrow t_j = \pm \infty & (0\in \calI) \\ 
j,k\in \calI \Rightarrow t_j = t_k & (0\notin \calI)
\end{cases}\right) 
\right\}, 
\end{equation}
where
\begin{itemize}
	\item $t_j = x_j$ for $j=1,\ldots,\ell$,
	\item $t_{\ell+1} = 0$, 
	\item $t_{\ell+1+j} = x_{\ell+j}$ for $j=1,\ldots,m$,
	\item $t_{\ell+m+2}=1$, and 
	\item $t_{\ell+m+2+j} = x_{\ell+m+j}$ for $j=1,\ldots,n$.
\end{itemize}
The dimension of $\mathrm{f}_{0,\mathtt{I}}$ is given by 
\begin{equation}
\operatorname{dim} \mathrm{f}_{0,\mathtt{I}} = |\mathtt{I}|-3.
\label{eq:misc_dim}
\end{equation}
For notational simplicity, if $\mathtt{I}_0 \subseteq \mathtt{I}$ is $\mathtt{I}$ with the singletons removed, then we define $\mathrm{f}_{\mathtt{I}_0} = \mathrm{f}_{0,\mathtt{I}}$. 
Thus, $\mathrm{f}_{\varnothing}$ denotes the ``bulk'' of $\triangle_{\ell,m,n}$, and the faces of $\triangle_{\ell,m,n}$ are of the form $\mathrm{f}_{\{\calI\}}$ for $\calI$ a consecutive pair. Rephrasing \cref{eq:misc_dim},
\begin{equation}
\operatorname{codim} \mathrm{f}_{\mathtt{I}} = \sum_{\calI\in \mathtt{I}} (|\calI| - 1).
\end{equation} 

As a bdf of $\mathrm{f}_{\{\calI\}}$ for $\calI=\{k\bmod \bbZ/(N+3)\bbZ,k+1\bmod \bbZ/(N+3) \}$ when $k\in \{1,\ldots,N+1\}$, we can take 
\begin{equation} 
x_{\mathrm{f}_{\{\calI\}}} = t_{k+1}-t_k. 
\end{equation} 
For the remaining two cases of $\mathrm{F}_{\{0,1\}}$ (which only exists if $\ell\geq 1$) and $\mathrm{f}_{\{N+2,N+3\}}$ (which only exists if $n\geq 1$), we can take $x_{\mathrm{f}_{\{0,1\}}} = -1/x_1$ and $x_{\mathrm{f}_{\{N+2,N+3\}}} = 1/x_N$. 

Let $\calF_{\ell,m,n} = \calF_{\ell,m,n}(\triangle)$ denote the family of facets $\mathrm{f}_{\mathtt{I}}$ of $\triangle_{\ell,m,n}$ such that $\mathtt{I} = \{\calI\}$ for some consecutive subset $\calI\subset \bbZ/(N+3)\bbZ$ of size $|\calI|\geq 2$ not containing any two of $0,\ell+1,\ell+m+2$. In other words, $\calF_{\ell,m,n}$ is the set of facets $\mathrm{f}_{\mathtt{I}}$ for
$\mathtt{I}$ defining a partition of $\bbZ/(N+3)\bbZ$ into a single interval of length at least two (not containing any two of $0,\ell+1,\ell+m+2$) and a number of singletons which are being omitted from the notation. 

For each $d\in \{0,\ldots,N\}$, let $\calF_{\ell,m,n;d}$ denote the set of elements of $\calF_{\ell,m,n}$ of dimension $d$. 
Then, the mwc $K_{\ell,m,n}$ is defined by the iterated blowup 
\begin{equation}
K_{\ell,m,n} = [\triangle_{\ell,m,n} ; \calF_{\ell,m,n,0};\cdots ; \calF_{\ell,m,n,N}] = [\cdots [ \triangle_{\ell,m,n}; \calF_{\ell,m,n;0}]\cdots ; \calF_{\ell,m,n;N}].
\label{eq:Kblowup}
\end{equation}
I.e.,\ we \textit{first} blow up the elements of the collection $\calF_{\ell,m,n;0}$ (which may be empty, namely if $\ell,m,n$ are all nonzero), and then,  proceeding from higher to lower codimension, iteratively blow up the lifts of the facets in $\calF_{\ell,m,n;d}$ (meaning the closures of the lifts of the interiors). 

We should check that the blowup \cref{eq:Kblowup} is well-defined, which concretely means that, for each $d$, the blow-ups in the step in which we blow up the lifts of the elements of $\calF_{\ell,m,n;d}$ commute. This can be done via a somewhat tedious inductive argument, which we only sketch. 

When the time has come to blow up the facets $\mathrm{f}\neq \mathrm{f}'$ in the lifted $\calF_{\ell,m,n;d}$, their intersection is -- if nonempty -- either a point (which we denote $K_{0,0,0}$) or else an associahedron $K_{\ell_\cap,m_\cap,n_\cap}$ (which will not change upon performing further blowups) of dimension $<N$, and a neighborhood thereof is diffeomorphic to 
\begin{equation}
[0,1)^{N-d}_{t} \times K_{\ell_\cap,m_\cap,n_\cap} \times [0,1)_{t'}^{N-d}, 
\label{eq:misc_n1t}
\end{equation} 
with $\mathrm{f}$ corresponding to $\{t=0\}$ and $\mathrm{f}'$ corresponding to $\{t'=0\}$; the blowups of these two faces in the product above commute, with the result being naturally diffeomorphic to 
\begin{equation}
[[0,1)^{N-d}_{t},\{0\}] \times K_{\ell_\cap,m_\cap,n_\cap} \times [[0,1)_{t'}^{N-d} ;\{0\}].
\end{equation}

In order to prove the claimed decomposition, \cref{eq:misc_n1t}, it is first useful to note when $\mathrm{f} \cap \mathrm{f}'=\varnothing$. If $\calI,\calI'$ satisfy $|\calI|=N-d+1=|\calI'|$ and $\calI\cap \calI'\neq \varnothing$, then the corresponding facets 
\begin{align}
\mathrm{f} &= \mathrm{cl}_{ [\triangle_{\ell,m,n} ; \calF_{\ell,m,n,0};\cdots ; \calF_{\ell,m,n,d-1}]} \mathrm{f}^\circ_{\{\calI\}} \\
\mathrm{f}' &= \mathrm{cl}_{ [\triangle_{\ell,m,n} ; \calF_{\ell,m,n,0};\cdots ; \calF_{\ell,m,n,d-1}]} \mathrm{f}^\circ_{\{\calI'\}} 
\end{align}
of $[\triangle_{\ell,m,n} ; \calF_{\ell,m,n,0};\cdots ; \calF_{\ell,m,n,d-1}]$ satisfy $\mathrm{f}\cap \mathrm{f}'=\varnothing$. Indeed, $\calI\cap \calI'\neq \varnothing$ implies \begin{equation} 
\mathrm{f}_{\{\calI\}}\cap \mathrm{f}_{\{\calI'\}} = \mathrm{f}_{\{\calI\cup \calI'\}} \in \calF_{\ell,m,n}(\triangle),
\end{equation} 
and since this is blown up in an earlier stage of the construction, $\mathrm{f}$ and $\mathrm{f}'$ cannot intersect. 

So, if our two facets $\mathrm{f},\mathrm{f}'$ to be blown up have nonempty intersection, then they must be the lifts of $\mathrm{f}_{\{\calI\}}$ and $\mathrm{f}_{\{\calI'\}}$ for $\calI,\calI'$ satisfying $\calI\cap \calI'=\varnothing$. The intersection $\mathrm{f} \cap \mathrm{f}'$ lies in the preimage of $\mathrm{f}_{\{\calI\}} \cap \mathrm{f}_{\{\calI'\}} = \mathrm{f}_{\{\calI, \calI'\}}$. This facet of $\triangle_{\ell,m,n}$ is of the form $\triangle_{\ell_\cap,m_\cap,n_\cap}$ for  $\ell_\cap+m_\cap+n_\cap = 2d-N \geq 0$. As seen inductively, the lift of this facet after performing the blow-ups so far is $K_{\ell_\cap,m_\cap,n_\cap}$, although this is not crucial for the proof that the construction is well-defined. Since this has dimension $2d-N$, a neighborhood of this facet in our partially blown-up manifold automatically has the form
\begin{equation}
L=[0,1)^{2N-2d} \times K_{\ell_\cap,m_\cap,n_{\cap}},
\end{equation}
so it just needs to be checked that $\mathrm{f},\mathrm{f}'$ sit inside of this in the expected way. The $d$-dimensional facets of $L$ containing $(0,\cdots,0) \times K_{\ell_\cap,m_\cap,n_{\cap}}$
all have the form $[0,\infty)^{N-d}\times  K_{\ell_\cap,m_\cap,n_{\cap}}$ for one of the $\binom{2N-2d}{N-d}$ divisors $[0,\infty)^{N-d}\subseteq [0,\infty)^{2N-2d}$. Thus, we can decompose
\begin{equation}
[0,1)^{2N-2d} = [0,1)^{\#}_t\times [0,1)^{\#'}_{t'} \times [0,\infty)^\delta_{t''}, 
\end{equation}
for some $\delta \in \bbN$,  such that $\mathrm{f}$ corresponds to $\{t=t''=0\}$ and $\mathrm{f}'$ corresponds to $\{t'=t''=0\}$.   But, if $\delta\neq 0$, then $\mathrm{f}\cap \mathrm{f}'$ is too big, so $\delta=0$. Thus, since $\mathrm{f},\mathrm{f}'$ both have dimension $d$, it must be the case that $\#=\#' = N-d$. This completes our sketch.

We now discuss the combinatorial structure of $K_{\ell,m,n}$. 
All of the faces of $\triangle_{\ell,m,n}$ are in $\calF_{\ell,m,n;N-1}$, so every face of $K_{\ell,m,n}$ is the front face of one of our blowups. 
So, the faces of $K_{\ell,m,n}$ are in bijection with the elements of $\calF_{\ell,m,n}$ and thus with $\calI$ as above. Such a subset is uniquely specified by its endpoints $j,k \in \bbZ/(N+3)\bbZ$, since only two consecutive subsets of $\bbZ/(N+3)\bbZ$ have the same endpoints as $\calI$, namely $\calI$ itself and $\smash{\calI^\complement} \cup \{j,k\}$, and the latter contains two of $0,\ell+1,\ell+m+2$. 
Let $\calJ_{\ell,m,n}$ denote the set of unordered pairs $\{j,k\}$ arising in this way. For $\{j,k\}\in \calJ_{\ell,m,n}$, let $\calI(j,k)=\calI(k,j)$ denote the unique consecutive subset of $\bbZ/(N+3)\bbZ$ having these endpoints and containing at most one member of $\{0,\ell+1,\ell+m+2\}$. 
For such $j,k$, let $\mathrm{F}_{j,k} = \mathrm{F}_{k,j}$ denote the corresponding face of $K_{\ell,m,n}$, and let $x_{\mathrm{F}_{j,k}} = x_{\mathrm{F}_{k,j}}$ denote a bdf of that face constructed inductively as in the introduction to this section. (Note that these bdfs may depend on the particular order in which the elements of the $\calF_{\ell,m,n;d}$ are blown up.)

There are $2^{-1} N(N+3)$ faces in $K_{\ell,m,n}$. 
\begin{example}
	Consider the case $N=2$. Then, up to essential equivalence, the cases to consider are $K_{1,1,0}$ and $K_{0,2,0}$. These are depicted in \Cref{fig:K2}. The mwc $K_{1,1,0}$ is identical to $A_{1,1,0}$; in \S\ref{subsec:A} we introduce notation for labeling the faces of the $A_{\ell,m,n}$,  and this notation appears in \Cref{fig:K111} alongside that used for the $K_{\ell,m,n}$. 
	\begin{figure}[t]
		\begin{center}
			\begin{tikzpicture}[scale=2.5]
			\draw[fill=gray,fill opacity = .2] (0,0) -- (.75,0) -- (1,.25) -- (1,1) -- (0,1) -- cycle;
			\draw[dashed, color=darkgray] (.5,1) node[above] {$ \mathrm{F}_{3,4}=\mathrm{F}_{\bullet\circ(\bullet\circ)}$};
			\draw[dashed, color=darkgray] (.35,0) node[below] {$ \mathrm{F}_{2,3}=\mathrm{F}_{\bullet(\circ\bullet)\circ}$};
			\draw[dashed, color=darkgray] (0,.5) node[left] {$ \mathrm{F}_{0,1}=\mathrm{F}_{(\circ\bullet)\circ\bullet\circ}$};
			\draw[dashed, color=darkgray] (1,.5) node[right] {$ \mathrm{F}_{1,2}=\mathrm{F}_{(\bullet\circ)\bullet\circ}$};
			\draw[dashed, color=darkgray] (.9,0) node[right] {$ \mathrm{F}_{1,3}=\mathrm{F}_{(\bullet\circ\bullet)\circ}$};
			\draw[dashed, color=darkgray, ->] (.05,.05) -- (.05,.4) node[right] {$x_2$};
			\draw[dashed, color=darkgray, ->] (.05,.05) -- (.4,.05) node[above] {$w_1$};
			\begin{scope}[shift={(3.25,-.2)}, scale=1.2]
			\draw[fill=gray,fill opacity = .2] (0, .25) -- (.125,.125) -- (.875,.875) -- (.75,1) -- (0,1) -- cycle;
			\draw[dashed, color=darkgray] (0,.65) node[left] {$ \mathrm{F}_{1,2}=\mathrm{F}_{(\circ\bullet)\bullet\circ}$};
			\draw[dashed, color=darkgray] (.35,1) node[above] {$ \mathrm{F}_{3,4}=\mathrm{F}_{\circ\bullet(\bullet\circ)}$};
			\draw[dashed, color=darkgray] (.5,.4) node[right] {$ \mathrm{F}_{2,3}=\mathrm{F}_{\circ(\bullet\bullet)\circ}$};
			\draw[dashed, color=darkgray] (.1,.15) node[left] {$ \mathrm{F}_{1,3}=\mathrm{F}_{(\circ\bullet\bullet)\circ}$};
			\draw[dashed, color=darkgray] (.85,.95) node[right] {$ \mathrm{F}_{2,4}=\mathrm{F}_{\circ(\bullet\bullet\circ)}$};
			\draw[dashed, color=darkgray, ->] (.05,.95) -- (.05,.6) node[right] {$1-x_2$};
			\draw[dashed, color=darkgray, ->] (.05,.95) -- (.4,.95) node[below] {$x_1$};
			\end{scope}
			\end{tikzpicture}
		\end{center}
		\caption{The associahedra $K_{1,1,0}$ (left) and $K_{0,2,0}$ (right), realized as polyhedra roughly in accordance with the blowup procedure. In the first figure, the horizontal axis is roughly $w_1 = 1/(1-x_1)$, increasing to the right. In the second figure, it is just (roughly) $x_1$. In both figures, the vertical axis is (roughly) $x_2$.}
		\label{fig:K2}
	\end{figure}
	We have introduced an additional notation for the faces of $K_{\ell,m,n}$, indicating $\calI$ in the subscript using the following conventions:
	\begin{itemize}
		\item The elements $0,\ell+1,\ell+m+2 \in \bbZ/5\bbZ$ are depicted using a `$\circ$,' and $0$ is omitted if not included in $\calI$. 
		\item The other elements of $\bbZ/5\bbZ$ are depicted using a `$\bullet$.'
		\item Except for $0$, the elements of $\bbZ/5\bbZ$ are depicted in order. If $0$ is to be depicted, it is listed either first or last.
	\end{itemize} 
	The elements included in $\calI$ are enclosed in parentheses.
\end{example}
\begin{example}
	Consider the case $N=3$. Then, up to essential equivalence, the cases to consider are $K_{1,1,1}$, $K_{1,2,0}$, and $K_{0,3,0}$. These are depicted in \Cref{fig:K111}, \Cref{fig:K121}, \Cref{fig:K030}. 
	The mwc $K_{1,1,1}$ is identical to $A_{1,1,1}$.
	
	We have modified the ``$\bullet$'' notation from the previous example and used it to label the faces in the figures, alongside the notation used in the rest of this section. 
	For instance, when considering $K_{0,3,0}$, ``$\circ (\bullet\bullet\bullet)\circ$'' denotes $\{2,3,4\}\subset \bbZ/6\bbZ$. When considering $K_{1,2,0}$, ``$\circ (\bullet \circ\bullet)\!\bullet\circ$'' denotes $\{1,2,3\}$. When considering $K_{1,1,1}$, ``$\bullet) \bullet\circ (\bullet \circ$'' denotes $\{0,1,5\}$.
\end{example}

\begin{figure}
	\begin{center}
		\tdplotsetmaincoords{70}{115}
		\begin{tikzpicture}[scale=2.5,tdplot_main_coords]
		\draw[opacity=0] (1.5,0,0) -- (0,1.5,0) -- (0,0,1.5); 
		\draw[dashed, color=darkgray, *-] (.4,.4,0) -- (.4,.4,-.3) -- (1,1,-.3) node[right] {$ \;\;\mathrm{F}_{4,5}=\mathrm{F}_{\bullet\circ\bullet (\circ\bullet)} = \mathrm{F}_{\{3\},\varnothing;1}$};
		\draw[dashed, color=darkgray, *-] (0,.45,.35) -- (-.3,.45,.35) -- (-.3,.45,1.05) node[right] {$ \mathrm{F}_{0,1}=\mathrm{F}_{(\circ \bullet)\circ\bullet\circ\bullet}=\mathrm{F}_{\{1\},\varnothing;\infty}$};
		\draw[dashed, color=darkgray, *-] (.1,.5,.875) --  (-.3,.5,.875)  --  (-.3,1.05,.875) -- (-.3,1.05,.7) node[right] {$ \mathrm{F}_{1,5}=\mathrm{F}_{\circ \bullet)\circ\bullet\circ(\bullet}=\mathrm{F}_{\{1\},\{3\};\infty}$};
		\draw[dashed, color=darkgray, *-] (.5,0,.5) -- (.5,-.25,.5) -- (.5,-.25,1) node[left] {$ \mathrm{F}_{2,3}=\mathrm{F}_{\bullet(\circ\bullet)\circ\bullet} = \mathrm{F}_{\{2\},\varnothing;0}$};
		\draw[fill=gray,fill opacity = .1] (1,.2,0) -- (1,.8,0) -- (1,1,.2) -- (1,1,1) -- (1,.2,1) -- cycle;
		\draw[fill=gray,fill opacity = .1] (1,1,.2) -- (1,1,1) -- (.2,1,1) -- (0,1,.8) -- (0,1,.2) -- cycle;
		\draw[fill=gray,fill opacity = .1] (1,1,1) -- (.2,1,1) -- (.2,0,1) -- (.8,0,1) -- (1,.2,1) -- cycle;
		\draw[fill=gray,fill opacity = .1] (1,1,.2) -- (0,1,.2) -- (0,.8,0) -- (1,.8,0) -- cycle;
		\draw[fill=gray,fill opacity = .1] (1,.2,0) -- (.8,0,0) -- (.8,0,1) -- (1,.2,1) -- cycle;
		\draw[fill=gray,fill opacity = .1, draw=none] (.2,1,1) -- (0,1,.8) -- (0,0,.8) -- (.2,0,1) -- cycle;
		\draw[fill=gray,fill opacity = .1, dashed] (0,0,.8) -- (.2,0,1) -- (.8,0,1) -- (.8,0,0) -- (0,0,0) -- cycle;
		\draw[fill=gray,fill opacity = .1, dashed] (0,0,0) -- (0,0,.8) -- (0,1,.8) -- (0,1,.2) -- (0,.8,0) -- cycle;
		\draw[fill=gray,fill opacity = .1, draw=none] (0,0,0) -- (.8,0,0) -- (1,.2,0) -- (1,.8,0) -- (0,.8,0) -- cycle;
		\draw[dotted, <-] (0,1.3,0) node[above] {$ x_2$} -- (0,0,0);
		\draw[dotted, <-] (0,0,1.1) node[left] {$y_3$} -- (0,0,0);
		\draw[dotted, <-] (1.5,0,0) node[left] {$w_1$} -- (0,0,0);
		\draw[dashed, color=gray, *-] (.875,.125,.5) -- (1.1,-.275,.5) node[left] {$\mathrm{F}_{1,3}=\mathrm{F}_{(\bullet\circ\bullet)\circ\bullet}=\mathrm{F}_{\{1\},\{2\};0}$};
		\draw[dashed, color=gray, *-] (.6,1,.6) -- (.5,1.5,.6) node[right] {$\mathrm{F}_{3,4}=\mathrm{F}_{\bullet\circ(\bullet\circ)\bullet}=\mathrm{F}_{\{2\},\{\varnothing\};1}$};
		\draw[dashed, color=gray, *-] (.6,.975,.2) -- (.6,1.2,0) -- (.6,1.3,0) node[right] {$\mathrm{F}_{3,5}=\mathrm{F}_{\bullet\circ(\bullet\circ\bullet)}=\mathrm{F}_{\{2\},\{3\};1}$};
		\draw[dashed, color=gray, *-] (1,.5,.5) -- (1.3,.5,.5) -- (1.3,.5,-.3) node[left] {$\mathrm{F}_{1,2}=\mathrm{F}_{(\bullet\circ)\bullet\circ\bullet}=\mathrm{F}_{\{1\},\{\varnothing\};0}$};
		\draw[dashed, color=gray, *-] (.75,.6,1) -- (.75,.6,1.55) node[above] {$\mathrm{F}_{0,5} = \mathrm{F}_{5,6}=\mathrm{F}_{\bullet\circ\bullet\circ(\bullet\circ)}=\mathrm{F}_{\varnothing,\{3\};\infty}$};
		\end{tikzpicture}
	\end{center}
	\caption{The mwc $K_{1,1,1}$, with labeled faces, realized as a polyhedron roughly in accordance with the blowup procedure. Here $w_1 = 1/(1-x_1)$ and $y_3 = (x_3-1)/x_3$. The faces in the line of sight are $\mathrm{F}_{1,2}=\mathrm{F}_{(\bullet\circ)\bullet\circ\bullet}$, $\mathrm{F}_{1,3} = \mathrm{F}_{(\bullet\circ\bullet)\circ\bullet}$, $\mathrm{F}_{3,4} = \mathrm{F}_{\bullet\circ(\bullet\circ)\bullet}$, $\mathrm{F}_{3,5} = \mathrm{F}_{\bullet\circ(\bullet\circ\bullet)}$, and $\mathrm{F}_{0,5} = \mathrm{F}_{\bullet\circ\bullet\circ(\bullet\circ)}$.}
	\label{fig:K111}
\end{figure}

\begin{figure}
	\floatbox[{\capbeside\thisfloatsetup{capbesideposition={left,top},capbesidewidth=8cm,capbesidesep=none}}]{figure}[\FBwidth]
	{\caption{ The mwc $K_{1,2,0}$, with labeled faces, realized as a polyhedron roughly in accordance with the blowup procedure. As above, $w_1 = 1/(1-x_1)$. The faces in the line of site are  $\mathrm{F}_{1,2} = F_{(\bullet\circ)\bullet\bullet\circ}$, $\mathrm{F}_{1,3} = \mathrm{F}_{(\bullet\circ\bullet)\bullet\circ}$, and $\mathrm{F}_{4,5} = \mathrm{F}_{\bullet\circ\bullet(\bullet\circ)}$. }
		\label{fig:K121}}
	{\tdplotsetmaincoords{70}{115} 
		\begin{tikzpicture}[scale=2.5,tdplot_main_coords]
		\draw[opacity=0] (0,1,0) -- (1,-1.1,0); 
		\coordinate (1) at (0,0,.1);
		\coordinate (2) at (0,.1,0);
		\coordinate (3) at (0,.7,0);
		\coordinate (3h) at (0,.9,.1);
		\coordinate (4) at (0,0,1);
		\coordinate (Alpha) at (.9,0,1);
		\coordinate (beta) at (.8,0,.1);
		\coordinate (Beta) at (.9,0,.2);
		\coordinate (gamma) at (.8,.1,0);
		\coordinate (z) at (1,.1,.9);
		\coordinate (y) at (1,.1,.1);
		\coordinate (x) at (1,.2,0);
		\coordinate (w) at (1,.7,0);
		\coordinate (wh) at (1,.9,.1);
		\draw[dashed, color=darkgray, *-] (.4,.4,0) -- (.4,.4,-.2) -- (1,1,-.2) node[right] {$ \;\;\mathrm{F}_{3,4}=\mathrm{F}_{\bullet\circ(\bullet\bullet)\circ}$};
		\draw[dashed, color=darkgray, *-] (.5,0,.5) -- (.5,-.25,.5) -- (.5,-.25,1) node[left] {$ \mathrm{F}_{2,3}=\mathrm{F}_{\bullet(\circ\bullet)\bullet\circ}$};
		\draw[dashed, color=darkgray, *-] (0,.25,.25) -- (-.2,.25,.25) -- (-.2,.25,.8) node[right] {$ \mathrm{F}_{0,1}=\mathrm{F}_{(\circ \bullet)\circ\bullet\bullet\circ}$};
		\draw[dashed, color=darkgray, *-] (.6,.875,.15) -- (.6,.875,-.1) node[right] {$ \mathrm{F}_{3,5}=\mathrm{F}_{ \bullet\circ(\bullet\bullet\circ)}$};
		\draw[dashed, color=darkgray, *-] (.9,.1,.1) -- (1.1,0,0) -- (1.1,0,.15) node[left] {$ \mathrm{F}_{1,4}=\mathrm{F}_{ (\bullet\circ\bullet\bullet)\circ}$};
		\draw[dashed, color=darkgray, *-] (.55,.1,.1)  -- (.55,0,0) -- (.55,-.2,.2)  -- (.55,-.4,.2) node[left] {$\mathrm{F}_{2,4} = \mathrm{F}_{\bullet(\circ\bullet\bullet)\circ}$};
		\draw[dashed] (1) -- (2) -- (3) -- (3h) -- (4) -- cycle;
		\draw[dashed] (1) -- (2) -- (gamma) -- (beta) -- cycle;
		\draw[dashed] (beta) -- (Beta) -- (y) -- (x) -- (gamma) -- cycle;
		\draw[dashed] (3) -- (3h) -- (wh) -- (w) -- cycle;
		\draw[draw=none, fill=gray, opacity=.1] (1) -- (2) -- (3) -- (3h) -- (4) -- cycle;
		\draw[draw=none, fill=gray, opacity=.1] (3) -- (3h) -- (wh) -- (w) -- cycle;
		\draw[draw=none, fill=gray, opacity=.1] (1) -- (2) -- (gamma) -- (beta) -- cycle;
		\draw[draw=none, fill=gray, opacity=.1] (1) -- (4)  -- (Alpha) -- (Beta) -- (beta) -- cycle;
		\draw[draw=none, fill=gray, opacity=.1] (beta) -- (Beta) -- (y) -- (x) -- (gamma) -- cycle;
		\draw[draw=none, fill=gray, opacity=.1] (2) -- (gamma) -- (x) -- (w) -- (3) -- cycle;
		\draw[draw=none, fill=gray, opacity=.1] (Alpha) -- (z) -- (y) -- (Beta) -- cycle;
		\draw[draw=none, fill=gray, opacity=.1] (4) -- (3h) -- (wh) -- (z) -- (Alpha) -- cycle;
		\draw[draw=none, fill=gray, opacity=.1] (z) -- (y) -- (x) -- (w) -- (wh) -- cycle;
		\draw (z) -- (y) -- (x) -- (w) -- (wh) -- cycle;
		\draw (4) -- (3h) -- (wh) -- (z) -- (Alpha) -- cycle;
		\draw (Alpha) -- (z) -- (y) -- (Beta) -- cycle;
		\draw[dotted, <-] (0,1.3,0) node[above] {$\quad\quad x_2$} -- (0,0,0);
		\draw[dotted, <-] (0,0,1.1) node[above] {$x_3-x_2$} -- (0,0,0);
		\draw[dotted, <-] (1.5,0,0) node[left] {$w_1$} -- (0,0,0);
		\draw[dashed, color=gray, *-] (.6,.5,.5) --  (.6,.9,.9)  node[right] {$\mathrm{F}_{4,5}=\mathrm{F}_{\bullet\circ\bullet(\bullet\circ)}$};
		\draw[dashed, color=gray, *-] (1,.35,.3) -- (1.2,.35,.3) -- (1.2,.35,-.2) node[below] {$\mathrm{F}_{1,2}=\mathrm{F}_{(\bullet\circ)\bullet\bullet\circ}$};
		\draw[dashed, color=gray, *-] (.925,.075,.5) -- (.9,-.2,.5) -- (.9,-.2,.75) node[left] {$\mathrm{F}_{1,3}=\mathrm{F}_{(\bullet\circ\bullet)\bullet\circ}$};
		\end{tikzpicture}}
\end{figure}

\begin{figure}[t]
	\floatbox[{\capbeside\thisfloatsetup{capbesideposition={left,bottom},capbesidewidth=8cm,capbesidesep=none}}]{figure}[\FBwidth]
	{\caption{ The mwc $K_{0,3,0}$, with labeled faces, realized as a polyhedron roughly in accordance with the blowup procedure. The faces in the line of sight are $\mathrm{F}_{4,5}=\mathrm{F}_{\circ\bullet\bullet(\bullet\circ)}$ and $\mathrm{F}_{3,5}=\mathrm{F}_{\circ\bullet(\bullet\bullet\circ)}$. Cf.\ \cite[Fig. 5.2]{KT2}, where the full blowup procedure is depicted.}
		\label{fig:K030}}
	{\tdplotsetmaincoords{70}{115} 
		\begin{tikzpicture}[scale=2.5,tdplot_main_coords]
		\draw[opacity=0] (0,1,0) -- (1,-1.1,0); 
		\draw[dashed, color=darkgray, *-] (.05,.05,.05)  -- (-.3,-.3,-.2) -- (.4,-.3,.31) node[left] {$ \mathrm{F}_{1,4}=\mathrm{F}_{(\circ\bullet\bullet\bullet)\circ}$};
		\draw[dashed, color=darkgray, *-] (0.35,.35,0)  -- (0.35,.35,-.2) -- (0.75,.75,-.2) node[right] {$ \mathrm{F}_{3,4}=\mathrm{F}_{\circ\bullet(\bullet\bullet)\circ}$};
		\draw[dashed, color=darkgray, *-] (.1,.1,.5) -- (-.2,-.2,.5)  -- (-.2,-.2,.9) node[left] {$\mathrm{F}_{1,3} = \mathrm{F}_{(\circ\bullet\bullet)\bullet\circ}$};
		\draw[dashed, color=darkgray, *-] (.55,.1,.1)  -- (.55,0,0) -- (.55,0,-.5) node[below] {$\mathrm{F}_{2,4} = \mathrm{F}_{\circ(\bullet\bullet\bullet)\circ}$};
		\draw[dashed, color=darkgray, *-] (0,.35,.35) -- (-.2,.45,.35)  -- (-.2,0.45,.75) node[right] {$\mathrm{F}_{1,2} = \mathrm{F}_{(\circ\bullet)\bullet\bullet\circ}$};
		\draw[dashed, color=darkgray, *-] (.35,0,.35)  -- (.35,-.2,.35) -- (.35,-.2,.8) node[left] {$\mathrm{F}_{2,3} = \mathrm{F}_{\circ(\bullet\bullet)\bullet\circ}$};
		\draw[dashed, color=darkgray, *-] (.9,.1,.1)  -- (.9,-.1,-.1) -- (.9,-.1,.2)  node[left] {$ \mathrm{F}_{2,5}=\mathrm{F}_{\circ(\bullet\bullet\bullet\circ)}$};
		\draw[fill=gray,fill opacity = .1, dashed] (0,1/4,0) -- (0,1/8,1/4) -- (1/8,0,1/4) -- (1/4,0,1/8) -- (1/4,1/8,0) -- cycle;
		\draw[fill=gray,fill opacity = .1, dashed] (3/4,1/8,0) -- (3/4,0,1/8) -- (15/16,0,1/4) -- (1,0,1/8) -- (1,1/4,0) --	cycle;
		\draw[fill=gray,fill opacity = .1, draw=none] (3/4,1/8,0) -- (3/4,0,1/8) -- (1/4,0,1/8) -- (1/4,1/8,0) -- cycle;
		\draw[fill=gray,fill opacity = .1, draw=none] (0,1/8,1/4) -- (1/8,0,1/4) -- (1/8,0,7/8) -- (0,1/8,7/8) -- cycle;
		\draw[fill=gray,fill opacity = .1, dashed] (0,1/4,0) -- (0,1/8,1/4) -- (0,1/8,7/8) -- (-.15,7/8-.1,3/16) -- (0,3/4,0) -- cycle;
		\draw[fill=gray,fill opacity = .1] (1,0,1/8) -- (1,1/4,0) -- (0,3/4,0) -- (-.15,7/8-.1,3/16) -- cycle;
		\draw[fill=gray,fill opacity = .1, draw=none] (1,1/4,0) -- (0,3/4,0) -- (0,1/4,0) -- (1/4,1/8,0) -- (3/4,1/8,0) -- cycle;
		\draw[dashed]  (1/4,1/8,0) -- (3/4,1/8,0);
		\draw[fill=gray,fill opacity = .1, dashed] (1/4,0,1/8) -- (3/4,0,1/8) -- (15/16,0,1/4) -- (1/8,0,7/8) -- (1/8,0,1/4) -- cycle;
		\draw[fill=gray,fill opacity = .1] (0,1/8,7/8) -- (1/8,0,7/8) -- (15/16,0,1/4) -- (1,0,1/8) -- (-.15,7/8-.1,3/16) -- cycle;
		\draw[dotted, <-] (0,1.2,0) node[above] {$\quad\quad x_2-x_1$} -- (0,0,0);
		\draw[dotted, <-] (0,0,1.1) node[above] {$x_3-x_2$} -- (0,0,0);
		\draw[dotted, <-] (1.5,0,0) node[left] {$x_1$} -- (0,0,0);
		\draw[dashed, color=gray, *-] (.1,.2,.15) --   (.15,.8,.5) node[right] {$ \mathrm{F}_{4,5}=\mathrm{F}_{\circ\bullet\bullet(\bullet\circ)}$};
		\draw[dashed, color=gray, *-] (.5,.5,.15) --   (.85,1,.1) node[right] {$\mathrm{F}_{3,5}=\mathrm{F}_{\circ\bullet(\bullet\bullet\circ)}$};
		\end{tikzpicture}}
\end{figure}

The $K_{\ell,m,n}$ satisfy the following ``universal property:'' 
\begin{itemize}
	\item For any subsets $S\subseteq \{1,\ldots,\ell\}$, $Q\subseteq \{\ell+1,\ldots,\ell+m\}$, $R\subseteq \{\ell+m+1,\ldots,N\}$ that are not all empty, let $\mathsf{forg}:\triangle_{\ell,m,n} \to \triangle_{|S|,|Q|,|R|}$ denote the forgetful map forgetting the variables $x_j$ for $j\notin S\cup Q\cup R$.  
	Then, $\mathsf{forg}$ lifts to a smooth $\mathrm{b}$-map \cite{MelroseCorners}
	\begin{equation} 
	\overline{\mathsf{forg}} : K_{\ell,m,n} \to K_{|S|,|Q|,|R|}.
	\end{equation}  
	Given any face $\mathrm{F}$ of $K_{|S|,|Q|,|R|}$, $\overline{\mathsf{forg}}^* x_{\mathrm{F}}$ vanishes to first order at each face in $\overline{\mathsf{forg}}^{-1}(\mathrm{F})$. 
\end{itemize}
This can be proven by inducting on the number of blowups.

\begin{proposition}
	Suppose that $\mu \in C^\infty(\triangle_{\ell,m,n};\Omega \triangle_{\ell,m,n})$ is a strictly positive smooth density on $\triangle_{\ell,m,n}$. Then, the lift of $\mu$ to $K_{\ell,m,n}$ has the form 
	\begin{equation}
	\Big[\prod_{\{j,k\}\in \calJ_{\ell,m,n}} x_{ \mathrm{F}_{j,k} }^{|j-k|-1} \Big]\overline{\mu} 
	\end{equation}
	for a strictly positive $\overline{\mu} \in C^\infty(K_{\ell,m,n};\Omega K_{\ell,m,n})$. Here, for $j,k \in \bbZ/(N+3)\bbZ$, we use the notation $|j-k| = \min\{ |j_0-k_0|, |k_0-j_0| :j_0,k_0\in \bbZ: j_0\equiv j\bmod (N+3), k_0 \equiv k \bmod (N+3)\}$. 
	\label{prop:density_lift_one}
\end{proposition}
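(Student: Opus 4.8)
The plan is to follow the lift of $\mu$ through the successive blowups in the iterated construction \cref{eq:Kblowup} of $K_{\ell,m,n}$, using two facts recorded above. The first is the transformation law for boundary-defining functions under a single blowup. The second is the elementary Jacobian identity: blowing up a facet $\mathrm{f}$ of codimension $c$ multiplies a strictly positive smooth density by $x_{\mathrm{ff}}^{c-1}$ times a strictly positive smooth density, where $\mathrm{ff}$ is the resulting front face. Like the bdf law, this is local, and it reduces to the model case of blowing up $\{y_1=\cdots=y_c=0\}$ in $[0,\infty)^N$: in projective coordinates $y_1=r$, $y_i=rs_i$ one has $\mathrm{d}y_1\cdots\mathrm{d}y_c=r^{c-1}\,\mathrm{d}r\,\mathrm{d}s_2\cdots\mathrm{d}s_c$ with $r$ a bdf of $\mathrm{ff}$. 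So each blowup contributes exactly one monomial factor to the lifted density.

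I would then induct on the blowup stage, showing that on the space obtained after stages $0,\dots,d$ the lift of $\mu$ equals $\big[\prod x_{\mathrm{F}_{j,k}}^{|j-k|-1}\big]$, the product ranging over those $\{j,k\}$ for which $\mathrm{f}_{\{\calI(j,k)\}}$ has already been blown up, times a strictly positive smooth density; the base case is the hypothesis on $\mu$. Recall that $\mathrm{f}_{\{\calI(j,k)\}}$ has codimension $|\calI(j,k)|-1=|j-k|$ (here $|j-k|$ is the length of the arc $\calI(j,k)$, i.e. its number of steps), and is blown up at stage $d=N-|\calI(j,k)|$, producing the face $\mathrm{F}_{j,k}$; by the Jacobian identity this contributes the factor $x_{\mathrm{F}_{j,k}}^{|j-k|-1}$.

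The key combinatorial observation, which shows that each face carries exactly its own Jacobian weight with no cross-contamination, is a size-monotonicity argument. A facet $\mathrm{f}_{\{\calI\}}$ is contained in a face $\mathrm{F}_{j',k'}$ precisely when $\mathrm{f}_{\{\calI\}}\subseteq\mathrm{f}_{\{\calI(j',k')\}}$, i.e. precisely when $\calI(j',k')\subseteq\calI$ --- note the reversal of inclusion. Since the stages blow up consecutive sets of strictly decreasing size, at the moment $\mathrm{f}_{\{\calI(j,k)\}}$ is blown up every front face created earlier comes from a strictly larger set $\calI(j',k')$, so $\calI(j',k')\subseteq\calI(j,k)$ is impossible and no earlier front face contains $\mathrm{f}_{\{\calI(j,k)\}}$; conversely the faces that do contain it correspond to subsets $\calI(j',k')\subseteq\calI(j,k)$, which are blown up only later (or are the original codimension-one faces, carrying exponent $0$). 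Hence, by the bdf law, blowing up $\mathrm{f}_{\{\calI(j,k)\}}$ neither transfers any previously accumulated factor onto $\mathrm{F}_{j,k}$ nor alters the exponents on the earlier faces, and $\mathrm{F}_{j,k}$ acquires precisely the weight $|j-k|-1$ while the remainder density stays smooth and strictly positive. (Within one stage all sets have equal size, so the same argument precludes any interaction there, consistent with the commutativity of same-stage blowups established above.) Iterating to the last stage --- at which the codimension-one blowups are trivial --- gives the asserted product with strictly positive $\overline{\mu}$.

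I expect the genuine difficulty to lie at the faces at infinity, where it is not obvious that a strictly positive smooth density is comparable to a coordinate volume form, since Lebesgue measure is singular there. The resolution is that in the radial coordinates $u_j=-1/x_j$ near $-\infty$ (and symmetrically $1/x_j$ near $+\infty$) each diagonal $\{x_j=x_k\}$ becomes the smooth hypersurface $\{u_j=u_k\}$, so that the facets of $\triangle_{\ell,m,n}$ meeting infinity are honest corners of the mwc; a strictly positive smooth density is then $\mathrm{d}u_1\cdots\mathrm{d}u_N$ up to a smooth positive factor, and the model Jacobian computation applies verbatim with the bdfs $x_{\mathrm{F}_{j,k}}$. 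This is exactly what makes the exponent at such a face equal $|\calI(j,k)|-2=|j-k|-1$, the codimension minus one, rather than the value a naive Euclidean computation on $\triangle_{\ell,m,n}\subset\bbR^N$ would suggest.
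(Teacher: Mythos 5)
Your proposal is correct and follows essentially the same route as the paper's proof: the single-blowup Jacobian lemma (the lift of a strictly positive smooth density across the blowup of a codimension-$c$ facet is $x_{\mathrm{ff}}^{c-1}$ times another such density), verified in model coordinates on $[0,\infty)^N$, followed by induction over the iterated blowup \cref{eq:Kblowup}, with your size-monotonicity/non-contamination bookkeeping simply making explicit what the paper compresses into ``an inductive application of the lemma, once we note that $|j-k|$ is the codimension of $\mathrm{F}_{j,k}$.'' Two harmless remarks: the stage at which $\mathrm{f}_{\{\calI(j,k)\}}$ is blown up is $d=N+1-|\calI(j,k)|$ rather than $N-|\calI(j,k)|$, and the worry about faces at infinity is moot for this statement, since $\mu$ is by hypothesis a strictly positive smooth density on the compact mwc $\triangle_{\ell,m,n}\subset\overline{\bbR}^N$ (the singular behavior of Lebesgue measure at infinity is the content of \Cref{lem:LebesgueLift}, not of this proposition).
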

In the product, each unordered pair is counted only once.
\begin{proof}
	We recall the following lemma: 
	\begin{itemize}
		\item Suppose that $M$ is a mwc and $\mu \in C^\infty(M;\Omega M)$ is a strictly positive smooth density on $M$. 
		Then, if $\mathrm{f}$ is a facet of $M$ of codimension $d\in \bbN^+$, the lift of $\mu$ to $[M;\mathrm{f}]$ has the form $x^{d-1}_{\mathrm{ff}} \nu$ and $\nu$ a strictly positive smooth density on $[M;\mathrm{f}]$. 
	\end{itemize}
	Working in local coordinates, this follows from the case of blowing up a facet in $[0,\infty)^N$. In this case, we can use cylindrical coordinates (that is, spherical coordinates if the facet we are blowing up is the corner). The result follows from the form of the Lebesgue measure in cylindrical coordinates. 
	
	The proposition follows from an inductive application of the lemma, once we note that $|j-k|$ is the codimension of $\mathrm{F}_{j,k}$.
\end{proof}

\begin{proposition}
	The Lebesgue measure on $\bbR^N$, which defines a strictly positive smooth density on $\triangle_{\ell,m,n}^\circ$, has the form
	\begin{equation}
	\Big[\prod_{j=1}^\ell (1-x_j)^2 \Big] \Big[ \prod_{j=\ell+m+1}^N  x_j^2 \Big]  \mu 
	\label{eq:misc_bbm}
	\end{equation}
	for $\mu \in C^\infty(\triangle_{\ell,m,n};\Omega \triangle_{\ell,m,n})$ a strictly positive smooth density on $\triangle_{\ell,m,n}$. 
	\label{lem:LebesgueLift}
\end{proposition}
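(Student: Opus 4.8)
The plan is to reduce everything to a one-variable computation in each of the three blocks of variables by exploiting the product structure, and then to perform an explicit change of coordinates in the two ``infinite'' blocks. First I would invoke the diffeomorphism $\triangle_{\ell,m,n}\cong \triangle_{\ell,0,0}\times\triangle_{0,m,0}\times\triangle_{0,0,n}$ recorded above. Since the Lebesgue measure on $\bbR^N=\bbR^\ell\times\bbR^m\times\bbR^n$ is the exterior product of the Lebesgue measures on the factors, and since a product of strictly positive smooth densities on the factors is a strictly positive smooth density on the product mwc, it suffices to establish the analogous factorization on each of $\triangle_{\ell,0,0}$, $\triangle_{0,m,0}$, $\triangle_{0,0,n}$ separately, with the weight in \cref{eq:misc_bbm} restricted to the corresponding variables. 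For the middle block $\triangle_{0,m,0}\cong\triangle_m$, the variables $x_{\ell+1},\ldots,x_{\ell+m}$ all range over a compact subset of $\bbR$, so the Lebesgue measure is already a strictly positive smooth density, and the proposition assigns no weight to these indices; thus there is nothing to prove there.

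For the last block I would introduce $v_j = 1/x_j$ for $j\in\{\ell+m+1,\ldots,N\}$, which maps $[1,\infty]$ onto $[0,1]$ (reversing order) and turns $\triangle_{0,0,n}$ into a compact simplex in the $v$'s. From $dx_j = -x_j^2\,dv_j$ one reads off the weight $\prod_{j=\ell+m+1}^N x_j^2$ multiplying the strictly positive smooth density $|dv_{\ell+m+1}\cdots dv_N|$. For the first block I would set $s_j = x_j/(x_j-1)$ for $j\in\{1,\ldots,\ell\}$, mapping $[-\infty,0]$ onto $[0,1]$ with $0\mapsto 0$ and $-\infty\mapsto 1$; here $ds_j/dx_j = -(1-x_j)^{-2}$, so $dx_j=(1-x_j)^2\,ds_j$ produces the weight $\prod_{j=1}^\ell (1-x_j)^2$ times the strictly positive smooth density $|ds_1\cdots ds_\ell|$. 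Multiplying the three contributions gives \cref{eq:misc_bbm} with $\mu$ the exterior product of the three smooth positive densities.

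The one point requiring genuine care --- and the step I expect to be the main (if mild) obstacle --- is verifying that the coordinates $s_j,v_j$ are compatible with the $C^\infty$-structure on $\triangle_{\ell,m,n}$ specified via the radial compactification $\overline{\bbR}$, so that the residual densities $|dv_\bullet|$ and $|ds_\bullet|$ actually extend smoothly and strictly positively up to the faces at infinity. For the last block this is immediate: $v_j=1/x_j$ is by definition the bdf of the face at $+\infty$ and is smooth and nonvanishing away from it, hence a global smooth coordinate. For the first block I would check that $1-s_j = 1/(1-x_j)$ is a bdf of the face at $-\infty$: writing $\rho=-1/x_j$ (the bdf from the definition of $\overline{\bbR}$) gives $1-s_j = \rho/(\rho+1)$, a strictly positive smooth multiple of $\rho$, so $s_j$ is a smooth coordinate and $|ds_j|$ extends smoothly and positively across that face. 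With these compatibility checks in hand, the factorization holds on the whole compact mwc, completing the argument.
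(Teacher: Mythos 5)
Your proposal is correct and follows essentially the same route as the paper: the paper's proof introduces the coordinates $w_j = 1/(1-x_j)$ and $y_j = (x_j-1)/x_j$, which are exactly your $s_j = 1-w_j$ and $v_j = 1-y_j$ up to the reflection $t\mapsto 1-t$, reads off the Jacobian factors $(1-x_j)^2$ and $x_j^2$, and takes $\mu = |\dd w_1\wedge\cdots\wedge \dd w_\ell\wedge \dd x_{\ell+1}\wedge\cdots\wedge \dd x_{\ell+m}\wedge \dd y_{\ell+m+1}\wedge\cdots\wedge \dd y_N|$, with smoothness and positivity up to the boundary justified by $\triangle_{\ell,m,n}$ being a submanifold of $\overline{\bbR}^N$ --- precisely the compatibility check you spell out via the bdfs of $\overline{\bbR}$. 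Your product-decomposition framing is only a cosmetic repackaging of the same computation.
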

\begin{proof}
	It is the case that the 1-form $\dd x_j \in \Omega^1 \triangle_{\ell,m,n}^\circ$ defines an extendable 1-form on $\triangle_{\ell,m,n}$ if $j \in \{\ell+1,\cdots,\ell+m\}$, and the extension is nonvanishing. The same holds for 
	\begin{itemize}
		\item  $\dd w_j = (1-x_j)^{-2} \dd x_j$ for $w_j = 1/(1-x_j)$ if $j\in \{1,\ldots,\ell\}$ and 
		\item $\dd y_j =  x_j^{-2} \dd x_j$ for $y_j = (x_j-1)/x_j$ if $j \in \{\ell+m+1,\cdots,N\}$,
	\end{itemize}
	since $\triangle_{\ell,m,n}$ is a submanifold of $\overline{\bbR}^N$. The $\mu$ in \cref{eq:misc_bbm} can therefore be taken to be $|\mathrm{d} w_1\wedge \cdots \wedge \dd w_\ell \wedge \dd x_{\ell+1}\wedge\cdots \wedge \dd x_{\ell+m}\wedge \dd y_{\ell+m+1}\wedge\cdots\wedge \dd y_N|$, which lies in $C^\infty(\triangle_{\ell,m,n};\Omega \triangle_{\ell,m,n})$ and is strictly positive. 
\end{proof}

We now record the results of lifting the factors $x_i,1-x_i$, and $x_j-x_k$ comprising the Selberg integrand to $K_{\ell,m,n}$. 
Beginning with the first two cases:
\begin{itemize}
	\item If $i\in \{1,\ldots,\ell\}$, then 
	\begin{align}
	-x_i &\in \Big[ \prod_{j=\ell+m+3}^{N+3} \prod_{k=i}^\ell x_{\mathrm{F}_{j,k}}^{-1}  \Big] \Big[ \prod_{j=1}^i \prod_{k=\ell+1}^{\ell+m+1} x_{\mathrm{F}_{j,k}} \Big] C^\infty(K_{\ell,m,n};\bbR^+), 
	\label{eq:misc_j2g} \\
	1-x_i &\in \Big[ \prod_{j=\ell+m+3}^{N+3} \prod_{k=i}^\ell x_{\mathrm{F}_{j,k}}^{-1}  \Big] C^\infty(K_{\ell,m,n};\bbR^+) .
	\end{align}
	\item If $i\in \{\ell+1,\ldots,\ell+m\}$, then 
	\begin{align}
	x_i &\in \Big[ \prod_{j=1}^{\ell+1} \prod_{k=i+1}^{\ell+m+1} x_{\mathrm{F}_{j,k}} \Big] C^\infty(K_{\ell,m,n};\bbR^+), \label{eq:misc_j3g} \\ 
	1-x_i &\in \Big[ \prod_{j=\ell+2}^{i+1} \prod_{k=\ell+m+2}^{N+2} x_{\mathrm{F}_{j,k}} \Big]C^\infty(K_{\ell,m,n};\bbR^+).
	\label{eq:misc_k11}
	\end{align}
	\item If $i\in \{\ell+m+1,\ldots,N\}$,   then 
	\begin{align}
	x_i &\in \Big[ \prod_{j=\ell+m+3}^{i+2} \prod_{k=0}^{\ell} x_{\mathrm{F}_{j,k}}^{-1} \Big]C^\infty(K_{\ell,m,n};\bbR^+), \label{eq:misc_j4g} \\
	-(1-x_i) &\in \Big[ \prod_{j=\ell+m+3}^{i+2} \prod_{k=0}^{\ell} x_{\mathrm{F}_{j,k}}^{-1} \Big] \Big[ \prod_{j=\ell+2}^{\ell+m+2} \prod_{k=i+2}^{N+2} x_{\mathrm{F}_{j,k}} \Big]C^\infty(K_{\ell,m,n};\bbR^+). 
	\end{align}
\end{itemize}
If $N=1$, then these are all trivial to prove. By applying the universal property of the associahedra, the $N\geq 2$ case follows from the $N=1$ case.

In a similar manner, by working out the case of $K_{0,2,0}$ in detail and applying the universal property, we get, for $k>j$: 
\begin{itemize}
	\item If $j,k \in \{\ell+1,\ldots,\ell+m\}$, then 
	\begin{equation}
	x_k-x_j \in \Big[ \prod_{j_0 = 1}^{\ell+1} \prod_{k_0=k+1}^{\ell+m+1} x_{\mathrm{F}_{j_0,k_0}} \Big]\Big[ \prod_{j_0=\ell+2}^{j+1} \prod_{k_0 =k+1}^{N+2} x_{\mathrm{F}_{j_0,k_0}}\Big]C^\infty(K_{\ell,m,n};\bbR^+).
	\end{equation}
\end{itemize} 
Indeed, in the case of $\ell,n=0$ and $m=2$, this says that $x_2-x_1 \in x_{\mathrm{F}_{1,3}} x_{\mathrm{F_{2,3}}}x_{\mathrm{F_{2,4}}}C^\infty(K_{0,2,0};\bbR^+)$. Indeed, if we construct $K_{0,2,0}$ by first blowing up $\mathrm{F}_{1,3}$ and then blowing up $\mathrm{F}_{\mathrm{2,4}}$, we get
\begin{equation}
x_{\mathrm{F}_{1,3}} = x_2, \qquad
x_{\mathrm{F}_{2,3}} = \frac{x_2-x_1}{2x_2-x_2^2-x_1}, \qquad
x_{\mathrm{F}_{2,4}} = \frac{2x_2-x_2^2-x_1}{x_2},
\end{equation}
so that $x_{\mathrm{F}_{1,3}} x_{\mathrm{F_{2,3}}}x_{\mathrm{F_{2,4}}} = x_2-x_1$, on the nose. On the other hand, if we reverse the order of the blowups, then we get 
\begin{equation}
x_{\mathrm{F}_{1,3}} = \frac{x_2-x_1^2}{1-x_1},\qquad 
x_{\mathrm{F}_{2,3},0} = \frac{x_2-x_1}{x_2-x_1^2},\qquad 
x_{\mathrm{F}_{2,4}} = 1-x_1,  
\end{equation}
so we still get $x_{\mathrm{F}_{1,3}} x_{\mathrm{F_{2,3}}}x_{\mathrm{F_{2,4}}} = x_2-x_1$.

From this, we can deduce the following. 
\begin{itemize} 
	\item If $j,k \in \{1,\ldots,\ell\}$, then, in terms of $w_i=-x_i/(1-x_i)$, $(x_k-x_j)=  (1-w_j)^{-1}(1-w_k)^{-1} (w_j-w_k)$, so,  
	\begin{equation}
	x_k-x_j \in \Big[ \prod_{j_0=j}^\ell \prod_{k_0=\ell+m+3}^{N+3} x_{\mathrm{F}_{j_0,k_0}}^{-1} \Big]\Big[\prod_{j_0=1}^j \prod_{k_0=k}^{\ell+m+1} x_{\mathrm{F}_{j_0,k_0}} \Big]C^\infty(K_{\ell,m,n};\bbR^+).
	\end{equation}
	\item If $j,k\in \{\ell+m+1,\ldots,N\}$, then, in terms of $y_i = 1/x_i$, $(x_k-x_j)= y_j^{-1}y_k^{-1}(y_j-y_k)$, so
	\begin{equation}
	x_k-x_j \in \Big[ \prod_{j_0=\ell+2}^{j+2} \prod_{k_0=k+2}^{N+2} x_{\mathrm{F}_{j_0,k_0}} \Big] \Big[\prod_{j_0=\ell+m+3}^{k+2} \prod_{k_0=0}^{\ell} x_{\mathrm{F}{j_0,k_0}}^{-1} \Big] C^\infty(K_{\ell,m,n};\bbR^+).
	\end{equation}
\end{itemize}

The next three follow from the $K_{1,1,0}$, $K_{1,0,1}$, and $K_{0,1,1}$ cases. We illustrate the $K_{1,1,0}$ case, and the others are similar. 
\begin{itemize}
	\item If $j\in \{1,\ldots,\ell\}$ and $k\in \{\ell+1,\ldots,\ell+m\}$, then $(x_k-x_j)=(1-w_j)^{-1}(w_j+x_k-x_kw_j)$, so 
	\begin{equation}
	x_k-x_j \in \Big[  \prod_{j_0=j}^\ell \prod_{k_0=\ell+m+3}^{N+3} x_{\mathrm{F}_{j_0,k_0}}^{-1}\Big] \Big[ \prod_{j_0=1}^j \prod_{k_0=k+1}^{\ell+m+1} x_{\mathrm{F}_{j_0,k_0}} \Big] C^\infty(K_{\ell,m,n};\bbR^+).
	\end{equation}
	
	In the case $\ell,m=1$, $n=0$, this says that $(x_2-x_1) \in x_{\mathrm{F}_{1,5}}^{-1} x_{\mathrm{F}_{1,3}}C^\infty(K_{1,1,0};\bbR^+)$.
	Indeed, the bdf $x_{\mathrm{F}_{1,3}}$ of $\mathrm{F}_{1,3}$ in $K_{1,1,0}$ is defined by 
	\begin{equation}
	x_{\mathrm{F}_{1,3}} = (1-w_1) + x_2 =  -\frac{x_1}{1-x_1} + x_2,
	\end{equation}
	and $x_{\mathrm{F}_{1,5}} = x_{\mathrm{F}_{0,1}} = w_1 = 1/(1-x_1)$. So, 
	\begin{equation} 
	x_{\mathrm{F}_{1,5}}^{-1} x_{\mathrm{F}_{1,3}} = x_2-x_1 - x_1x_2.
	\end{equation} 
	The supposed $C^\infty(K_{1,1,0};\bbR^+)$ term above is therefore $(x_2-x_1)(x_2-x_1 - x_1x_2)^{-1} = (1 - x_2x_1/(x_2-x_1))^{-1}$. One way (besides checking in a system of local coordinate charts) to see that this is smooth (and positive) on $K_{1,1,0}$ is the identity 
	\begin{equation}
	-\frac{x_2x_1}{x_2-x_1} = \frac{x_{\mathrm{F}_{1,2}}x_{\mathrm{F}_{1,3}}x_{\mathrm{F}_{2,3}}}{x_{\mathrm{F}_{1,2}} + x_{\mathrm{F}_{2,3}}x_{\mathrm{F}_{0,1}}}.
	\label{eq:misc_j66}
	\end{equation}
	The faces $\mathrm{F}_{0,1},\mathrm{F}_{2,3}$ are disjoint from $\mathrm{F}_{1,2}$ (see \Cref{fig:K2}), so the denominator on the right-hand side of \cref{eq:misc_j66} is nonvanishing, so the quotient is indeed smooth.
	
	Likewise: 
	\item If $j\in \{\ell+1,\ldots,\ell+m\}$ and $k\in\{\ell+m+1,\ldots,N\}$, then $(x_k-x_j) = y_k^{-1} (1-x_jy_k)$, so 
	\begin{equation}
	x_k-x_j \in \Big[ \prod_{j_0=\ell+2}^{j+1} \prod_{k_0=k+2}^{N+2} x_{\mathrm{F}_{j_0,k_0}} \Big] \Big[ \prod_{j_0=0}^\ell \prod_{k_0=\ell+m+3}^{k+2}  x_{\mathrm{F}_{j_0,k_0}}^{-1} \Big] C^\infty(K_{\ell,m,n};\bbR^+).
	\end{equation}
	\item If $j\in \{1,\ldots,\ell\}$ and $k\in \{\ell+m+1,\ldots,N\}$, then $(x_k-x_j) = y_k^{-1}(1-w_j)^{-1}(1-w_j+w_jy_k)$, so
	\begin{equation}
	x_k-x_j \in \Big[  \prod_{j_0=j}^{\ell}\prod_{k_0=k+3}^{N+3} x_{\mathrm{F}_{j_0,k_0}}^{-1} \Big] \Big[ \prod_{j_0=0}^{\ell}\prod_{k_0=\ell+m+3}^{k+2}  x_{\mathrm{F}_{j_0,k_0}}^{-1} \Big] C^\infty(K_{\ell,m,n};\bbR^+).
	\label{eq:misc_j2h}
	\end{equation}
\end{itemize}

We associate to each face $\mathrm{F}_\bullet\in \calF(K_{\ell,m,n})$ an affine functional 
\begin{equation} 
\rho_{\bullet}:\bbC^{2N+N(N-1)/2}\ni (\bmalpha,\bmbeta,\bmgamma)\mapsto \rho_{\bullet}(\bmalpha,\bmbeta,\bmgamma)\in \bbC.
\end{equation}
Suppose that we are given some $\bmalpha,\bmbeta \in \bbC^N$ and $\bmgamma = \{\gamma_{j,k} = \gamma_{k,j}\}_{1\leq j < k \leq N} \in \bbC^{N(N-1)/2}$.
If one of 
\begin{enumerate}[label=(\Roman*)]
	\item  $j,k\in \{1,\ldots,\ell\}$ 
	\item $j,k \in \{\ell+2,\ldots,\ell+m+1\}$, 
	\item  $j,k \in \{\ell+m+3,\ldots,N+2\}$
\end{enumerate}
holds, then, letting $k$ denote the larger of $\{j,k\}$, 
\begin{equation}
\rho_{j,k} = k-j+2\sum_{j'\leq j_0 < k_0 \leq k'}   \gamma_{j_0,k_0},
\label{eq:misc_rt1}
\end{equation}
where, for each $i\in \{j,k\}$, $i'=i$ if $i\in \{1,\ldots,\ell\}$, $i'=i-1$ if $i\in \{\ell+2,\ldots,\ell+m+1\}$, and $i'=i-2$ if $i\in \{\ell+m+3,\ldots,N+2\}$. 
The other cases are:
\begin{itemize}
	\item If $j\in \{1,\ldots,\ell+1\}$ and $k\in \{\ell+1,\ldots,\ell+m+1\}$ and $j\neq k$, then
	\begin{equation}
	\rho_{j,k} =k-j-1+ \sum_{i=j}^{\ell} \alpha_i + \sum_{i=\ell+2}^{k} \alpha_{i-1}  + 2\sum_{j \leq j_0 < k_0 \leq k-1} \gamma_{j_0,k_0}. 
	\label{eq:misc_rt2}
	\end{equation}
	\item If $j\in \{\ell+2,\ldots,\ell+m+2\}$ and $k\in\{\ell+m+2,\ldots,N+2\}$ and $j\neq k$, then 
	\begin{equation}
	\rho_{j,k} =k-j-1+ \sum_{i=j}^{\ell+m+1} \beta_{i-1} + \sum_{i=\ell+m+3}^{k} \beta_{i-2}  + 2\sum_{j-1 \leq j_0 < k_0 \leq k-2} \gamma_{j_0,k_0}. 
	\label{eq:misc_rt3}
	\end{equation}
	\item If $j\in \{0,\ldots,\ell\}$ and $k \in \{\ell+m+3,\ldots,N+3\}$ and at least one of $j\neq 0,k\neq N+3$ holds, then 
	\begin{multline}
	\rho_{j,k} = k  -j - N - 4  - \sum_{i=1}^j \alpha_{i}- \sum_{i=1}^j \beta_{i} - \sum_{i=k}^{N+2}\alpha_{i-2} - \sum_{i=k}^{N+2}\beta_{i-2} - 2 \sum_{j'=1}^j\sum_{i=1, i\neq j'}^N \gamma_{i,j'}\\ - 2\sum_{k'=k-2}^{N} \sum_{i=1, i\neq k'-2}^N \gamma_{i,k'} + 2 \sum_{1 \leq j' <k' \leq j} \gamma_{j',k'} + 2 \sum_{k-2 \leq j' <k' \leq N} \gamma_{j',k'} + 2\sum_{j'=1}^j \sum_{k'=k-2}^{N} \gamma_{j',k'}. 
	\label{eq:misc_rt4}
	\end{multline}
\end{itemize}

\begin{proposition}	
	Given any $\bmalpha,\bmbeta\in \bbC^N$ and $\bmgamma = \{\gamma_{j,k}=\gamma_{k,j}\}_{1\leq j < k \leq N} \in  \bbC^{N(N-1)/2}$, the Selberg-like integrand 
	\begin{equation} 
	\prod_{i=1}^N |x_i|^{\alpha_i}|1-x_i|^{\beta_i}\prod_{1\leq j < k \leq N} (x_k-x_j)^{2\gamma_{j,k}} |\mathrm{d} x_1\cdots \dd x_N| \in C^\infty(\triangle^\circ_{\ell,m,n};\Omega \triangle^\circ_{\ell,m,n} )
	\end{equation} 
	lifts, via the blowdown map $\mathrm{bd}:K_{\ell,m,n}\to \triangle_{\ell,m,n}$, to an extendable density of the form 
	\begin{equation}
	\Big[\prod_{\{j,k\}\in \calJ_{\ell,m,n} } x_{\mathrm{F}_{j,k}}^{\rho_{j,k}} \Big] \mu_{\ell,m,n}(\bmalpha,\bmbeta,\bmgamma),
	\end{equation}
	for some strictly positive smooth density $\mu_{\ell,m,n}(\bmalpha,\bmbeta,\bmgamma) \in  C^\infty(K_{\ell,m,n};\Omega K_{\ell,m,n})$, depending entirely on $\bmalpha,\bmbeta,\bmgamma$.
	\label{prop:Kres}
\end{proposition}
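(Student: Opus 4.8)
The plan is to read off the lift of each individual factor in the integrand from the monomial expansions already recorded in this section and then simply to multiply them, using the elementary fact that orders of vanishing at a hypersurface are additive across a product. Since $\triangle_{\ell,m,n}$ is cut out by the ordering $x_1\leq \cdots \leq x_\ell \leq 0\leq x_{\ell+1}\leq \cdots$, each of the quantities $-x_i$ (for $i\leq \ell$), $x_i$ (for $\ell<i\leq \ell+m$), $1-x_i$, and $x_k-x_j$ (for $k>j$) is nonnegative on $\triangle_{\ell,m,n}^\circ$, so the absolute values $|x_i|,|1-x_i|$ appearing in the integrand agree there with the sign-corrected factors appearing in the $\bbR^+$-valued lift formulas \cref{eq:misc_j2g}--\cref{eq:misc_j2h}. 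Consequently each of $|x_i|^{\alpha_i}$, $|1-x_i|^{\beta_i}$, $(x_k-x_j)^{2\gamma_{j,k}}$ lifts to $K_{\ell,m,n}$ as a product of powers of the boundary defining functions $x_{\mathrm{F}_{j,k}}$ (with the recorded exponents scaled by $\alpha_i$, $\beta_i$, or $2\gamma_{j,k}$ respectively) times a strictly positive smooth function.

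The density factor is handled in two steps. First, \cref{lem:LebesgueLift} rewrites $|\mathrm{d}x_1\cdots \mathrm{d}x_N|$ on $\triangle_{\ell,m,n}^\circ$ as $\big[\prod_{j=1}^\ell (1-x_j)^2\big]\big[\prod_{j=\ell+m+1}^N x_j^2\big]\mu$ for a strictly positive smooth density $\mu$ on $\triangle_{\ell,m,n}$; the compactification factors $(1-x_j)^2$ and $x_j^2$ are then lifted by the very same function formulas used above, with constant exponent $2$. Second, \cref{prop:density_lift_one} lifts the remaining strictly positive density $\mu$ to $\big[\prod_{\{j,k\}\in\calJ_{\ell,m,n}} x_{\mathrm{F}_{j,k}}^{|j-k|-1}\big]\overline{\mu}$ for $\overline{\mu}$ strictly positive and smooth on $K_{\ell,m,n}$. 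Multiplying all contributions, the lift of the full integrand density is a single monomial $\prod_{\{j,k\}} x_{\mathrm{F}_{j,k}}^{e_{j,k}}$ in the boundary defining functions times a strictly positive smooth density, the latter being the product of $\overline{\mu}$ with all of the strictly positive smooth functions produced above; this product is the sought $\mu_{\ell,m,n}(\bmalpha,\bmbeta,\bmgamma)$, which is strictly positive and smooth on $K_{\ell,m,n}$.

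It then remains to verify that the total exponent $e_{j,k}$ of each $x_{\mathrm{F}_{j,k}}$ coincides with the affine functional $\rho_{j,k}$ of \cref{eq:misc_rt1}--\cref{eq:misc_rt4}. By additivity, $e_{j,k}$ is the sum over all factors of the order of that factor at $\mathrm{F}_{j,k}$: the density contributes $|j-k|-1$ together with the constant contributions of the compactification factors, while each $|x_i|^{\alpha_i}$, $|1-x_i|^{\beta_i}$, $(x_k-x_j)^{2\gamma_{j,k}}$ contributes the relevant entry of its monomial expansion scaled by $\alpha_i$, $\beta_i$, or $2\gamma_{j,k}$. One checks this separately in each regime organizing \cref{eq:misc_rt1}--\cref{eq:misc_rt4}: when both endpoints lie in a single block (cases (I)--(III)), when they straddle two adjacent blocks, and when $\{j,k\}$ wraps around the cyclic position $0\equiv N+3$ (the point at infinity), which is the case \cref{eq:misc_rt4}.

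I expect this last combinatorial verification to be the main obstacle; it requires no new idea but is the only genuinely laborious part, because each face $\mathrm{F}_{j,k}$ receives contributions from many factors at once (e.g.\ \cref{eq:misc_j2g} shows a single factor $-x_i$ already touching a whole rectangular block of faces), and because the $\alpha$-, $\beta$-, and constant shifts in \cref{eq:misc_rt2}--\cref{eq:misc_rt4} arise precisely from the compactification factors of \cref{lem:LebesgueLift} interacting with the blocks flanking the cyclic positions $0,\ell+1,\ell+m+2$ (i.e.\ $\infty,0,1$). The key structural simplification, which also reduces the labor, is that the only \emph{parameter-dependent} contributions to the order at $\mathrm{F}_{j,k}$ come from factors internal to the cluster $\calI(j,k)$ and from the factors attached to the special points flanking it, since a difference $x_{k'}-x_{j'}$ straddling the cluster boundary evaluates to a nonzero constant at $\mathrm{F}_{j,k}$ and so contributes order zero. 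Invoking the universal property of the $K_{\ell,m,n}$, this localizes each case to the sub-associahedron indexed by $\calI(j,k)$ and hence back to the $N=1,2$ computations from which \cref{eq:misc_j2g}--\cref{eq:misc_j2h} were themselves derived, where the match with \cref{eq:misc_rt1}--\cref{eq:misc_rt4} can be confirmed directly.
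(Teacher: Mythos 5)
Your proposal is correct and follows essentially the same route as the paper: the constant (parameter-independent) part of each $\rho_{j,k}$ is supplied by \Cref{lem:LebesgueLift} together with \Cref{prop:density_lift_one} and the lifts of the compactification factors, while the parameter-dependent part is verified factor-by-factor from \cref{eq:misc_j2g}--\cref{eq:misc_j2h}, which is precisely the paper's reduction of $\rho_{j,k}^{(1)}$ to a basis of $\bbC^{2N+N(N-1)/2}$ phrased in your additivity-of-orders language. One caveat on your labor-saving remark: for a face $\mathrm{F}_{j,k}$ whose cluster $\calI(j,k)$ contains the cyclic position $0$ (the infinity case of \cref{eq:misc_rt4}), a difference $x_{k'}-x_{j'}$ straddling the cluster boundary does \emph{not} have order zero there but rather a pole (cf.\ \cref{eq:misc_j2h}), which is exactly why \cref{eq:misc_rt4} contains $\gamma$ cross-terms pairing indices inside and outside the cluster; this invalidates only the claimed localization shortcut for that case, not the direct verification, which goes through as in the paper.
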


\begin{proof}
	Each $\rho_{j,k}$ is an affine function of $\bmalpha,\bmbeta,\bmgamma$, so it suffices to check $2N+N(N-1)/2+1$ cases, the case when all three of $\bmalpha,\bmbeta,\bmgamma$ are zero and $2N+N(N-1)/2$ cases where the triple $(\bmalpha,\bmbeta,\bmgamma)$ ranges over a basis of $\bbC^{2N+N(N-1)/2}$. Write
	\begin{equation}
	\rho_{j,k}(\bmalpha,\bmbeta,\bmgamma) = \rho_{j,k}^{(0)} + \rho_{j,k}^{(1)}(\bmalpha,\bmbeta,\bmgamma), 
	\end{equation}
	where $\rho_{j,k}^{(0)} = \rho_{j,k}(\bm0,\bm0,\bm0)$ and $\rho_{j,k}^{(1)}(\bmalpha,\bmbeta,\bmgamma)=\rho_{j,k}(\bmalpha,\bmbeta,\bmgamma) - \rho_{j,k}^{(0)}$ is the linear part of $\rho_{j,k}$. Thus, we want to show that, upon lifting to $K_{\ell,m,n}$,
	\begin{align}
	|\mathrm{d}x_1\cdots \dd x_N| &\in \Big[\prod_{\{j,k\}\in \calJ_{\ell,m,n}}  x_{\mathrm{F}_{j,k}}^{\rho_{j,k}^{(0)}} \Big]C^\infty(K_{\ell,m,n};\Omega K_{\ell,m,n}), \label{eq:misc_hhh} \\
	\prod_{i=1}^N |x_i|^{\alpha_i}|1-x_i|^{\beta_i}\prod_{1\leq j < k \leq N} (x_k-x_j)^{2\gamma_{j,k}}  &\in  \Big[\prod_{\{j,k\} \in \calJ_{\ell,m,n}  } x_{\mathrm{F}_{j,k}}^{\rho_{j,k}^{(1)}(\bmalpha,\bmbeta,\bmgamma)} \Big]  C^\infty(K_{\ell,m,n}; \bbR^+) \label{eq:misc_h41}, 
	\end{align}
	with it sufficing to check \cref{eq:misc_h41} on a basis of $\bbC^{2N+N(N-1)/2}$.
	\begin{itemize}
		\item  \Cref{eq:misc_hhh} is simply a restatement of \Cref{lem:LebesgueLift}. 
		\item For a basis of $\bbC^{2N+N(N-1)/2}$, we look at $(\bmalpha,\bmbeta,\bmgamma)$ such that all of the components $\alpha_1,\ldots,\alpha_N$, $\beta_1,\ldots,\beta_N$, $\gamma_{1,2},\cdots$ of $\bmalpha,\bmbeta,\bmgamma$ are all $0$ except for one, which we set to $1$. The result then follows, via a bit of algebra, from \cref{eq:misc_j2g} through \cref{eq:misc_j2h}.
	\end{itemize}
\end{proof} 

Let $\mathtt{T}(\ell,m,n)$ denote the collection of maximal families $\mathtt{I}$ of consecutive subsets $\calI\subsetneq \bbZ/(N+3)\bbZ$ such that 
\begin{itemize}
	\item $2\leq |\calI| \leq N+1$ for all $\calI\in \mathtt{I}$, 
	\item no two of $0,\ell+1,\ell+m+2$ are in any $\calI\in \mathtt{I}$ together, and 
	\item if $\calI,\calI'\in \mathtt{I}$ satisfy $\calI\cap \calI'\neq \varnothing$, then either $\calI\subseteq \calI'$ or $\calI'\subseteq \calI$. 
\end{itemize}
The elements of $\mathtt{T}(\ell,m,n)$ can be thought of as specifying valid ways of adding parentheses to group together the elements of $\bbZ/(N+3)\bbZ$ without grouping any of $0,\ell+1,\ell+m+2$ together. 
The minimal facets of $K_{\ell,m,n}$ are in bijective correspondence with the elements of $\mathtt{T}(\ell,m,n)$, with 
\begin{equation}
\mathrm{f}_{\mathtt{I}} = \bigcap_{\calI(j,k) \in \mathtt{I}}  \mathrm{F}_{j,k}
\label{eq:misc_rmf}
\end{equation}
the facet corresponding to $\mathtt{I}$. 

\subsection{The Associahedra $A_{\ell,m,n}$}
\label{subsec:A}

We now define the mwc $A_{\ell,m,n}$ for $\ell,m,n\in \bbN$ not all zero. We begin with the $N=\ell+m+n$ hypercube $\square_{N}=[0,1]^{N}$. Parametrizing $\square_N$ by $(t_1,\ldots,t_N)$, the hypercube is identified with
\begin{equation}
[-\infty,0]^\ell_{x_1,\ldots,x_\ell}\times [0,1]^m_{x_{\ell+1},\ldots,x_{\ell+m}} \times [1,\infty]^n_{x_{\ell+m+1},\ldots,x_{N}}
\end{equation} 
via the coordinate changes $t_i =  1/(1-x_i)$ for $x_i\in [-\infty,0]$ and $i\in \{1,\ldots,\ell\}$ and $t_i = (x_i-1)/x_i$ for $x_i\in [1,\infty]$ and $i\in \{\ell+m+1,\ldots,N\}$. 

The facets of $\square_N$ we label by sextuples $(S,Q,S',Q',S'',Q'')$ consisting of (possibly empty) subsets $S,Q\subseteq \{1,\ldots,\ell\}$, $S',Q'\subseteq \{\ell+1,\ldots,\ell+m\}$, and $S'',Q''\subseteq \{\ell+m+1,\ldots,N\}$ such that $S\cap Q = S'\cap Q' = S''\cap Q'' = \varnothing$. Let
\begin{equation}
\mathrm{F}_{S,Q,S',Q',S'',Q''} = 
\left\{ 
(t_1,\ldots,t_N) \in \square_N : 
\begin{array}{c}
j\in S\cup S'\cup S'' \Rightarrow t_j = 0 \\ 
\;j\in Q\cup Q'\cup Q'' \Rightarrow t_j = 1
\end{array}
\right\} . 
\end{equation}
For instance, $\square_N=\mathrm{F}_{\varnothing,\varnothing,\varnothing,\varnothing,\varnothing,\varnothing}$. 

Now let $\calF_{\ell,m,n}=\calF_{\ell,m,n}(\square)$ denote the family of facets defined by 
\begin{equation}
\calF_{\ell,m,n} =( \{\mathrm{F}_{S,\varnothing,\varnothing,\varnothing,\varnothing,Q''} \}_{S,Q''} \cup \{\mathrm{F}_{\varnothing,Q,S',\varnothing,\varnothing,\varnothing} \}_{Q,S'} \cup \{\mathrm{F}_{\varnothing,\varnothing,\varnothing,Q',S'',\varnothing}\}_{Q',S''} )\backslash \{\square_N\}
\end{equation}
where $S,S',S'Q,Q',Q''$ range over all subsets as above. 
For each $d\in \{0,\ldots,N-1\}$, let $\calF_{\ell,m,n;d}$ denote the set of elements of $\calF_{\ell,m,n}$ of dimension $d$. 
Then, $A_{\ell,m,n}$ is defined by the iterated blowup 
\begin{equation}
A_{\ell,m,n} = [\square_N ; \calF_{\ell,m,n}] = [\square_N ; \calF_{\ell,m,n;0} ; \cdots ; \calF_{\ell,m,n;N-1}].
\label{eq:Adef}
\end{equation}

As in the previous section, we should check that, for each $d=1,\ldots,N$, having already blown up $\calF_{\ell,m,n;d_0}$ for $d_0<d$, the blowups of the closures of the lifts of the interiors of all of the $\mathrm{F}\in \calF_{\ell,m,n;d}$ all commute. 
One way to see this is to split 
\begin{equation}
\square_N =  \bigcup_{S ,S',S'' } \square_{\ell,m,n}(S,S',S''), 
\label{eq:C_decomp}
\end{equation}
where $S$ varies over all subsets of $\{1,\ldots,\ell\}$, $S'$ varies over all subsets of $\{\ell+1,\ldots,\ell+m\}$, and $S''$ varies over all subsets of $\{\ell+m+1,\ldots,N\}$, and
\begin{equation}
\square_{\ell,m,n}(S,S',S'') = \left\{ (t_1,\ldots,t_N) \in \square_N : 
\begin{array}{c}
i\in S\cup S' \cup S'' \Rightarrow t_i \in [0,2/3) \\ 
i\notin S\cup S' \cup S'' \Rightarrow t_i \in (1/3,1] 
\end{array}
\right\}. 
\end{equation}
Once we have established that blowing up $\calF_{\ell,m,n;0},\cdots,\calF_{\ell,m,n;d-1}$ is fine, then 
\begin{equation}
[\square_N ; \calF_{\ell,m,n;0} ; \cdots ; \calF_{\ell,m,n;d-1}] = \bigcup_{S ,S',S'' }  [\square_{\ell,m,n}(S,S',S'') ; \calF_{\ell,m,n;0} ; \cdots ; \calF_{\ell,m,n;d-1}] 
\end{equation}
naturally, with the left-hand side being well-defined if the right-hand side is. Thus, it suffices to check that the blowups $[\square_{\ell,m,n}(S,S',S'') ; \calF_{\ell,m,n;0} ; \cdots ; \calF_{\ell,m,n;d}]$ are all well-defined. To see this, identify
\begin{multline}
\square_{\ell,m,n}(S,S',S'') = \Big(\Big[0,\frac{2}{3}\Big)^{S}_{\{t_i\}_{i\in S}}\times \Big(\frac{1}{3},1\Big]^{(S'')^\complement}_{\{t_i\}_{i\in (S'')^\complement}} \Big)\times \Big(\Big[0,\frac{2}{3}\Big)^{S'}_{\{t_i\}_{i\in S'}} \times \Big(\frac{1}{3},1\Big]^{S^\complement}_{\{t_i\}_{i\in S^\complement}} \Big) \\ \times \Big(\Big[0,\frac{2}{3}\Big)^{S''}_{\{t_i\}_{i\in S''}}\times \Big(\frac{1}{3},1\Big]^{(S')^\complement}_{\{t_i\}_{i\in (S')^\complement}} \Big)
\end{multline}
and note that the blowup prescription is just that of performing the total boundary (tb) blowup \cite{MelroseMWC} on each of the three factors. (Note that this is not the same as the total boundary blowup of the product of the factors.) Here,
\begin{itemize}
	\item $S^\complement = \{1,\ldots,\ell\}\backslash S$, 
	\item $(S')^\complement = \{\ell+1,\ldots,\ell+m\}\backslash S'$,
	\item  and $(S'')^\complement = \{\ell+m+1,\ldots,N\}\backslash S''$.
\end{itemize}
Thus, 
\begin{multline}
A_{\ell,m,n} = \bigcup_{S ,S',S'' } \Big[ \\ 
\Big(\Big[0,\frac{2}{3}\Big)^{S}\times \Big(\frac{1}{3},1\Big]^{(S'')^\complement} \Big)_{\mathrm{tb}}\times \Big(\Big[0,\frac{2}{3}\Big)^{S'}\times \Big(\frac{1}{3},1\Big]^{S^\complement} \Big)_{\mathrm{tb}} \times \Big(\Big[0,\frac{2}{3}\Big)^{S''}\times \Big(\frac{1}{3},1\Big]^{(S')^\complement} \Big)_{\mathrm{tb}} \Big].
\label{eq:adecomp}
\end{multline}

\begin{figure}[t!]
	\begin{center} 
		\tdplotsetmaincoords{75}{110}
		\begin{tikzpicture}[scale=1.75,tdplot_main_coords]
		\draw[fill=gray, fill opacity = .1] (1,0,0) -- (1,0,1) -- (1,1,1) -- (1,1,0) -- cycle;
		\draw[fill=gray, fill opacity = .1] (0,0,1) -- (1,0,1) -- (1,1,1) -- (0,1,1) -- cycle;
		\draw[fill=gray, fill opacity = .1] (0,1,1) -- (1,1,1) -- (1,1,0) -- (0,1,0) -- cycle;
		\draw[fill=gray, fill opacity = .1, dashed] (0,0,0) -- (0,0,1) -- (0,1,1) -- (0,1,0) -- cycle;
		\draw[fill=gray, fill opacity = .1, dashed] (0,0,0) -- (1,0,0) -- (1,1,0) -- (0,1,0) -- cycle;
		\draw[fill=gray, fill opacity = .1,draw=none] (0,0,1) -- (1,0,1) -- (1,0,0) -- (0,0,0) -- cycle;
		\draw[fill=red, fill opacity = .1, draw=red] 
		(1,1,.66) -- (.33,1,.66) -- (.33,.33,.66) -- (1,.33,.66) --cycle;
		\draw[fill=red, fill opacity = .1, draw=red] 
		(1,1,0) -- (.33,1,0) -- (.33,.33,0) -- (1,.33,0) --cycle;
		\draw[fill=red, fill opacity = .1, draw=red] 
		(1,1,.66) -- (1,1,0) -- (1,.33,0) -- (1,.33,.66) --cycle;
		\draw[fill=red, fill opacity = .1, draw=red] 
		(.33,1,.66) -- (.33,1,0) -- (.33,.33,0) -- (.33,.33,.66) --cycle;
		\draw[fill=red, fill opacity = .1, draw=red] 
		(1,1,.66) -- (1,1,0) -- (.33,1,0) -- (.33,1,.66) --cycle;
		\draw[fill=red, fill opacity = .1, draw=red] 
		(1,.33,.66) -- (1,.33,0) -- (.33,.33,0) -- (.33,.33,.66) --cycle;
		\draw[dotted, <-] (0,1.35,0) node[above] {$t_2$} -- (0,0,0);
		\draw[dotted, <-] (0,0,1.35) node[left] {$t_3$} -- (0,0,0);
		\draw[dotted, <-] (1.5,0,0) node[above] {$t_1$} -- (0,0,0);
		\end{tikzpicture}
		\qquad
		\begin{tikzpicture}[scale=1.75,tdplot_main_coords]
		\draw[opacity=0] (1.5,0,0) -- (0,1.5,0) -- (0,0,1.5); 
		\draw[fill=gray,fill opacity = .1] (1,.2,0) -- (1,.8,0) -- (1,1,.2) -- (1,1,1) -- (1,.2,1) -- cycle;
		\draw[fill=gray,fill opacity = .1] (1,1,.2) -- (1,1,1) -- (.2,1,1) -- (0,1,.8) -- (0,1,.2) -- cycle;
		\draw[fill=gray,fill opacity = .1] (1,1,1) -- (.2,1,1) -- (.2,0,1) -- (.8,0,1) -- (1,.2,1) -- cycle;
		\draw[fill=gray,fill opacity = .1] (1,1,.2) -- (0,1,.2) -- (0,.8,0) -- (1,.8,0) -- cycle;
		\draw[fill=gray,fill opacity = .1] (1,.2,0) -- (.8,0,0) -- (.8,0,1) -- (1,.2,1) -- cycle;
		\draw[fill=gray,fill opacity = .1, draw=none] (.2,1,1) -- (0,1,.8) -- (0,0,.8) -- (.2,0,1) -- cycle;
		\draw[fill=gray,fill opacity = .1, dashed] (0,0,.8) -- (.2,0,1) -- (.8,0,1) -- (.8,0,0) -- (0,0,0) -- cycle;
		\draw[fill=gray,fill opacity = .1, dashed] (0,0,0) -- (0,0,.8) -- (0,1,.8) -- (0,1,.2) -- (0,.8,0) -- cycle;
		\draw[fill=gray,fill opacity = .1, draw=none] (0,0,0) -- (.8,0,0) -- (1,.2,0) -- (1,.8,0) -- (0,.8,0) -- cycle;
		\node (lb) at (0,.2,1.2) {$A_{1,1,1}$};
		\end{tikzpicture}
		\begin{tikzpicture}[scale=1.75,tdplot_main_coords]
		\draw[opacity=0] (1.5,0,0) -- (0,1.5,0) -- (0,0,1.5); 
		\draw[fill=gray,fill opacity = .1] (1,.2,.1) -- (1,.1,.2) -- (1,.1,1) -- (1,.8,1) -- (1,1,.8) -- (1,1,.1) -- cycle;
		\draw[fill=gray,fill opacity = .1] (1,1,.1) -- (.9,1,0) -- (0,1,0) -- (0,1,.8) -- (1,1,.8) -- cycle;
		\draw[fill=gray,fill opacity = .1] (1,.8,1) -- (0,.8,1) -- (0,0,1) -- (.9,0,1) -- (1,.1,1) -- cycle;
		\draw[fill=gray,fill opacity = .1] (1,.8,1) -- (0,.8,1) -- (0,1,.8) -- (1,1,.8) -- cycle;
		\draw[fill=gray,fill opacity = .1] (1,1,.1) -- (.9,1,0) -- (.9,.2,0) -- (1,.2,.1) -- cycle;
		\draw[fill=gray,fill opacity = .1] (1,.1,1) -- (.9,0,1) -- (.9,0,.2) -- (1,.1,.2) -- cycle;
		\draw[fill=gray,fill opacity = .1] (1,.2,.1) -- (1,.1,.2) -- (.9,0,.2) -- (.8,0,.1) -- (.8,.1,0) -- (.9,.2,0) -- cycle;
		\draw[fill=gray,fill opacity = .1, draw=none] (0,.1,0) -- (0,0,.1) -- (.8,0,.1) -- (.8,.1,0) -- cycle;
		\draw[fill=gray,fill opacity = .1, dashed] (.9,.2,0) -- (.9,1,0) -- (0,1,0) -- (0,.1,0) -- (.8,.1,0) -- cycle;
		\draw[fill=gray,fill opacity = .1, dashed] (.8,0,.1) -- (.9,0,.2) -- (.9,0,1) -- (0,0,1) -- (0,0,.1) -- cycle;
		\draw[fill=gray,fill opacity = .1, dashed] (0,0,.1) -- (0,.1,0) -- (0,1,0) -- (0,1,.8) -- (0,.8,1) -- (0,0,1) -- cycle;
		\node (lb) at (0,.2,1.2) {$A_{1,2,0}$};
		\end{tikzpicture}
		\begin{tikzpicture}[scale=1.75,tdplot_main_coords]
		\draw[opacity=0] (1.5,0,0) -- (0,1.5,0) -- (0,0,1.5); 
		\draw[fill=gray,fill opacity = .1] (1,.9,.8) -- (1,.8,.9) -- (.9,.8,1) -- (.8,.9,1) -- (.8,1,.9) -- (.9,1,.8) -- cycle;
		\draw[fill=gray,fill opacity = .1] (1,.9,.8) -- (.9,1,.8) -- (.9,1,0) -- (1,.9,0) -- cycle;
		\draw[fill=gray,fill opacity = .1] (1,.8,.9) -- (.9,.8,1) -- (.9,0,1) -- (1,0,.9) -- cycle;
		\draw[fill=gray,fill opacity = .1] (.8,.9,1) -- (.8,1,.9) -- (0,1,.9) -- (0,.9,1) -- cycle;
		\draw[fill=gray,fill opacity = .1] (.8,1,.9) -- (.9,1,.8) -- (.9,1,0) -- (.1,1,0) -- (0,1,.1) -- (0,1,.9) -- cycle;
		\draw[fill=gray,fill opacity = .1] (1,.9,.8) -- (1,.8,.9) -- (1,0,.9) -- (1,0,.1) -- (1,.1,0) -- (1,.9,0) -- cycle;
		\draw[fill=gray,fill opacity = .1] (.9,.8,1) -- (.8,.9,1) -- (0,.9,1) -- (0,.1,1) -- (.1,0,1) -- (.9,0,1) -- cycle;
		\draw[fill=gray,fill opacity = .1,dashed] (0,.1,.2) -- (0,.2,.1) -- (.1,.2,0) -- (.2,.1,0) -- (.2,0,.1) -- (.1,0,.2) -- cycle;
		\draw[fill=gray,fill opacity = .1,draw=none] (0,.1,.2) -- (.1,0,.2) -- (.1,0,1) -- (0,.1,1) -- cycle;
		\draw[fill=gray,fill opacity = .1,draw=none] (0,.2,.1) -- (.1,.2,0) -- (.1,1,0) -- (0,1,.1) -- cycle;
		\draw[fill=gray,fill opacity = .1,draw=none] (.2,.1,0) -- (.2,0,.1) -- (1,0,.1) -- (1,.1,0) -- cycle;
		\draw[fill=gray,fill opacity = .1,dashed] (.2,0,.1) -- (.1,0,.2) -- (.1,0,1) -- (.9,0,1) -- (1,0,.9) -- (1,0,.1) -- cycle;
		\draw[fill=gray,fill opacity = .1,dashed] (0,.1,.2) -- (0,.2,.1) -- (0,1,.1) -- (0,1,.9) -- (0,.9,1) -- (0,.1,1) -- cycle;
		\draw[fill=gray,fill opacity = .1,dashed] (.1,.2,0) -- (.2,.1,0) -- (1,.1,0) -- (1,.9,0) -- (.9,1,0) -- (.1,1,0) -- cycle;
		\node (lb) at (0,.2,1.2) {$A_{0,3,0}$};
		\end{tikzpicture}
	\end{center} 
	\caption{The $3$-cube $\square_3$ and the three blowups $A_{1,1,1}=K_{1,1,1}$, $A_{1,2,0}$, $A_{0,3,0}$ thereof. The $\square_{\ell,m,n}(S,S',S'')$ are eight subcubes corresponding to the eight vertices of $\square_3$. One such cube is depicted in red.}
	\label{fig:A}
\end{figure}

\begin{figure}[t]
	\begin{center}
		\tdplotsetmaincoords{70}{115} 
		\begin{tikzpicture}[scale=2.5,tdplot_main_coords]
		\draw[opacity=0] (1.5,0,0) -- (0,1.5,0) -- (0,0,1.5); 
		\draw[dashed, color=darkgray, *-] (.4,.4,0) -- (.4,.4,-.3) -- (1,1,-.3) node[right] {$ \mathrm{F}_{\varnothing,\{3\};0}$};
		\draw[dashed, color=darkgray, *-] (0,.45,.4) -- (-.3,.45,.4) -- (-.3,.45,1.05) node[right] {$\mathrm{F}_{\varnothing,\{1\};\infty}$};
		\draw[dashed, color=darkgray, *-] (.5,0,.5) -- (.5,-.25,.5) -- (.5,-.25,1) node[left] {$ \mathrm{F}_{\varnothing,\{2\};0}$};
		\draw[dashed, color=darkgray, *-] (.4,.05,.075)  -- (.4,-.05,-.025) -- (.4,-.5,0) -- (.4,-.5,.2) node[left] {$ \mathrm{F}_{\varnothing,\{2,3\};0}$};
		\draw[fill=gray,fill opacity = .1] (1,.2,.1) -- (1,.1,.2) -- (1,.1,1) -- (1,.8,1) -- (1,1,.8) -- (1,1,.1) -- cycle;
		\draw[fill=gray,fill opacity = .1] (1,1,.1) -- (.9,1,0) -- (0,1,0) -- (0,1,.8) -- (1,1,.8) -- cycle;
		\draw[fill=gray,fill opacity = .1] (1,.8,1) -- (0,.8,1) -- (0,0,1) -- (.9,0,1) -- (1,.1,1) -- cycle;
		\draw[fill=gray,fill opacity = .1] (1,.8,1) -- (0,.8,1) -- (0,1,.8) -- (1,1,.8) -- cycle;
		\draw[fill=gray,fill opacity = .1] (1,1,.1) -- (.9,1,0) -- (.9,.2,0) -- (1,.2,.1) -- cycle;
		\draw[fill=gray,fill opacity = .1] (1,.1,1) -- (.9,0,1) -- (.9,0,.2) -- (1,.1,.2) -- cycle;
		\draw[fill=gray,fill opacity = .1] (1,.2,.1) -- (1,.1,.2) -- (.9,0,.2) -- (.8,0,.1) -- (.8,.1,0) -- (.9,.2,0) -- cycle;
		\draw[fill=gray,fill opacity = .1, draw=none] (0,.1,0) -- (0,0,.1) -- (.8,0,.1) -- (.8,.1,0) -- cycle;
		\draw[fill=gray,fill opacity = .1, dashed] (.9,.2,0) -- (.9,1,0) -- (0,1,0) -- (0,.1,0) -- (.8,.1,0) -- cycle;
		\draw[fill=gray,fill opacity = .1, dashed] (.8,0,.1) -- (.9,0,.2) -- (.9,0,1) -- (0,0,1) -- (0,0,.1) -- cycle;
		\draw[fill=gray,fill opacity = .1, dashed] (0,0,.1) -- (0,.1,0) -- (0,1,0) -- (0,1,.8) -- (0,.8,1) -- (0,0,1) -- cycle;
		\draw[dashed, color=gray, *-] (1,.5,.5) -- (1.3,.5,.5) -- (1.3,.5,-.1) node[left] {$\mathrm{F}_{\varnothing,\{1\};0}$};
		\draw[dashed, color=gray, *-] (.6,1,.4) -- (.5,1.5,.4) node[right] {$\mathrm{F}_{\varnothing,\{2\};1}$};
		\draw[dashed, color=gray, *-] (.5,.4,1) -- (.5,.4,1.35) node[above] {$\mathrm{F}_{\varnothing,\{3\};1}$};
		\draw[dashed, color=gray, *-] (.5,.9,.9) -- (.5,1.15,1.1) -- (.5,1.4,1.1) node[right] {$\mathrm{F}_{\varnothing,\{2,3\};1}$};
		\draw[dashed, color=gray, *-] (.9,.15,.1)  -- (.9,-.1,-.1) -- (.9,-.1,.1)  node[left] {$ \mathrm{F}_{\{1\},\{2,3\};0}$};
		\draw[dashed, color=gray, *-] (.95,.6,.075)  -- (1,.6,0) -- (1,.6,-.35)  node[below] {$ \mathrm{F}_{\{1\},\{3\};0}$};
		\draw[dashed, color=gray, *-] (.95,.075,.6)  -- (1,0,.6) -- (1,-.35,.6) -- (1,-.35,.8) node[left] {$ \mathrm{F}_{\{1\},\{2\};0}$};
		\end{tikzpicture}
		\quad
		\begin{tikzpicture}[scale=2.5,tdplot_main_coords]
		\draw[opacity=0] (0,0,0) -- (1.5,.5,-.5); 
		\draw[dashed, color=darkgray, *-] (.5,.5,0) -- (.5,.5,-.3) -- (1,1,-.3) node[right] {$ \mathrm{F}_{\varnothing,\{3\};0}$};
		\draw[dashed, color=darkgray, *-] (0,.55,.4) -- (-.3,.55,.4) -- (-.3,.55,1.05) node[right] {$\mathrm{F}_{\varnothing,\{1\};0}$};
		\draw[dashed, color=darkgray, *-] (.5,0,.5) -- (.5,-.3,.5) -- (.5,-.3,1) node[left] {$ \mathrm{F}_{\varnothing,\{2\};0}$};
		\draw[dashed, color=darkgray, *-] (.1,.1,.5) -- (-.2,-.2,.5)  -- (-.2,-.2,1.1) node[left] {$ \mathrm{F}_{\varnothing,\{1,2\};0}$};
		\draw[dashed, color=darkgray, *-] (.4,.05,.06)  -- (.4,-.05,-.025) -- (.4,-.5,0) -- (.4,-.5,.2) node[left] {$ \mathrm{F}_{\varnothing,\{2,3\};0}$};
		\draw[dashed, color=darkgray, *-] (.05,.6,.06)  -- (-.05,1.2,0) -- (-.05,1.2,.3) node[right] {$ \mathrm{F}_{\varnothing,\{1,3\};0}$};
		\draw[dashed, color=darkgray, *-] (.1,.1,.1)  -- (-.1,-.1,-.1) -- (1.8,.7,-.1) -- (1.8,.7,.05) node[left] {$\mathrm{F}_{\varnothing,\{1,2,3\};0}$};
		\draw[fill=gray,fill opacity = .1] (1,.9,.8) -- (1,.8,.9) -- (.9,.8,1) -- (.8,.9,1) -- (.8,1,.9) -- (.9,1,.8) -- cycle;
		\draw[fill=gray,fill opacity = .1] (1,.9,.8) -- (.9,1,.8) -- (.9,1,0) -- (1,.9,0) -- cycle;
		\draw[fill=gray,fill opacity = .1] (1,.8,.9) -- (.9,.8,1) -- (.9,0,1) -- (1,0,.9) -- cycle;
		\draw[fill=gray,fill opacity = .1] (.8,.9,1) -- (.8,1,.9) -- (0,1,.9) -- (0,.9,1) -- cycle;
		\draw[fill=gray,fill opacity = .1] (.8,1,.9) -- (.9,1,.8) -- (.9,1,0) -- (.1,1,0) -- (0,1,.1) -- (0,1,.9) -- cycle;
		\draw[fill=gray,fill opacity = .1] (1,.9,.8) -- (1,.8,.9) -- (1,0,.9) -- (1,0,.1) -- (1,.1,0) -- (1,.9,0) -- cycle;
		\draw[fill=gray,fill opacity = .1] (.9,.8,1) -- (.8,.9,1) -- (0,.9,1) -- (0,.1,1) -- (.1,0,1) -- (.9,0,1) -- cycle;
		\draw[fill=gray,fill opacity = .1,dashed] (0,.1,.2) -- (0,.2,.1) -- (.1,.2,0) -- (.2,.1,0) -- (.2,0,.1) -- (.1,0,.2) -- cycle;
		\draw[fill=gray,fill opacity = .1,draw=none] (0,.1,.2) -- (.1,0,.2) -- (.1,0,1) -- (0,.1,1) -- cycle;
		\draw[fill=gray,fill opacity = .1,draw=none] (0,.2,.1) -- (.1,.2,0) -- (.1,1,0) -- (0,1,.1) -- cycle;
		\draw[fill=gray,fill opacity = .1,draw=none] (.2,.1,0) -- (.2,0,.1) -- (1,0,.1) -- (1,.1,0) -- cycle;
		\draw[fill=gray,fill opacity = .1,dashed] (.2,0,.1) -- (.1,0,.2) -- (.1,0,1) -- (.9,0,1) -- (1,0,.9) -- (1,0,.1) -- cycle;
		\draw[fill=gray,fill opacity = .1,dashed] (0,.1,.2) -- (0,.2,.1) -- (0,1,.1) -- (0,1,.9) -- (0,.9,1) -- (0,.1,1) -- cycle;
		\draw[fill=gray,fill opacity = .1,dashed] (.1,.2,0) -- (.2,.1,0) -- (1,.1,0) -- (1,.9,0) -- (.9,1,0) -- (.1,1,0) -- cycle;
		\draw[dashed, color=gray, *-] (.5,.5,1) -- (.5,.5,1.35) node[above] {$\mathrm{F}_{\varnothing,\{3\};1}$};
		\draw[dashed, color=gray, *-] (.9,.9,.9)  -- (.9,1.5,.9) node[right] {$\mathrm{F}_{\varnothing,\{1,2,3\};1}$};
		\draw[dashed, color=gray, *-] (.9,.9,.3)  -- (.9,1.5,.3) -- (.9,1.5,.1) node[below] {$\mathrm{F}_{\varnothing,\{1,2\};1}$};
		\draw[dashed, color=gray, *-] (.5,1,.4)  -- (.5,1.5,.4) -- (.5,1.5,.1) node[right] {$\mathrm{F}_{\varnothing,\{2\};1}$};
		\draw[dashed, color=gray, *-] (1,.5,.5) -- (2.5,.5,.5) node[left] {$\mathrm{F}_{\varnothing,\{1\};1}$};
		\draw[dashed, color=gray, *-] (.95,.5,.95) -- (1.3,.5,.95) -- (1.3,-.1,.95) node[left] {$\mathrm{F}_{\varnothing,\{1,3\};1}$};
		\draw[dashed, color=gray, *-] (.5,.95,.95) -- (.5,1.05,1.05) -- (.5,1.4,1.05) node[right] {$\mathrm{F}_{\varnothing,\{2,3\};1}$};
		\end{tikzpicture}
	\end{center}
	\caption{The eleven faces of $A_{1,2,0}$ and the fourteen faces of $A_{0,3,0}$.}
	\label{fig:Alabels}
\end{figure}

The faces of $A_{\ell,m,n}$ are in bijection with the elements of $\calF_{\ell,m,n}$. We label the faces of $A_{\ell,m,n}$ as follows:
\begin{itemize}
	\item for $S\subseteq\{1,\ldots,\ell\}$ and $Q\subseteq \{\ell+m+1,\ldots,N\}$, the face corresponding to $\mathrm{F}_{S,\varnothing,\varnothing,\varnothing,\varnothing,Q}$ is labeled as $\mathrm{F}_{S,Q;\infty} = \mathrm{F}_{Q,S;\infty}$,
	\item for $Q\subseteq \{1,\ldots,\ell\}$ and $S\subseteq \{\ell+1,\ldots,\ell+m\}$, the face corresponding to $\mathrm{F}_{\varnothing,Q,S,\varnothing,\varnothing,\varnothing}$ is labeled as as $\mathrm{F}_{S,Q;0}=\mathrm{F}_{Q,S;0}$, and
	\item for $S\subseteq \{\ell+m+1,\ldots,N\}$ and $Q\subseteq \{\ell+1,\ldots,\ell+m\}$, the face corresponding to $\mathrm{F}_{\varnothing,\varnothing,\varnothing,Q,S,\varnothing}$ is labeled as $\mathrm{F}_{S,Q;1}=\mathrm{F}_{Q,S;1}$. 
\end{itemize}
Here, $S,Q$ are not allowed to both be empty.

For any subsets $S\subseteq \{1,\ldots,\ell\}$, $Q\subseteq \{\ell+1,\ldots,\ell+m\}$, $R\subseteq \{\ell+m+1,\ldots,N\}$ that are not all empty, let $\mathsf{forg}:\square_{\ell,m,n} \to \square_{|S|,|Q|,|R|}$ denote the forgetful map forgetting the coordinates $x_i$ for $i\notin S\cup Q\cup R$, Then, $\mathsf{forg}$ lifts to a smooth b-map 
\begin{equation} 
\overline{\mathsf{forg}}: A_{\ell,m,n} \to A_{|S|,|Q|,|R|},
\label{eq:forg}
\end{equation} 
and given any face $\mathrm{F}$ of $\smash{A_{|S|,|Q|,|R|}}$, the pullback $\overline{\mathsf{forg}}^* x_{\mathrm{F}}$ vanishes to first order at each face $\mathrm{F}_0$ satisfying 
\begin{equation}
F_0\subseteq \overline{\mathsf{forg}}^{-1}(\mathrm{F}).
\end{equation}  
This is the ``universal property'' of the $A_{\ell,m,n}$. 
Via the decomposition in \cref{eq:adecomp}, it follows from the corresponding universal property of the total boundary blowup of a product, which is essentially given by \Cref{prop:forgetful_explicit}. 
\begin{proposition}
	Suppose that $\mu$ is a strictly positive smooth density on $\square_{\ell,m,n}$. Then, the lift of $\mu$ to $A_{\ell,m,n}$ has the form 
	\begin{equation}
	\Big[ \prod_{\substack{S \subseteq \{1,\ldots,\ell\} \\ Q \subseteq \{\ell+m+1,\ldots,N\}}} x_{\mathrm{F}_{S,Q;\infty}}^{|S\cup Q|-1}\Big]\Big[ \prod_{\substack{S\subseteq \{1,\ldots,\ell\} \\ Q\subseteq \{\ell+1,\ldots,\ell+m\}}} x_{\mathrm{F}_{S,Q;0}}^{|S \cup Q|-1}\Big]\Big[ \prod_{\substack{S\subseteq \{\ell+1,\ldots,\ell+m\} \\ Q \subseteq \{\ell+m+1,\ldots,N\} }} x_{\mathrm{F}_{S,Q;1}}^{|S\cup Q|-1}\Big]\overline{\mu} 
	\end{equation}
	for a strictly positive smooth density $\overline{\mu} \in C^\infty(A_{\ell,m,n};\Omega A_{\ell,m,n})$ on $A_{\ell,m,n}$. 
	\label{prop:density_lift_two}
\end{proposition}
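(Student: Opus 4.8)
The plan is to mirror the proof of \Cref{prop:density_lift_one}: I would apply the same local blowup lemma inductively along the iterated blowup \cref{eq:Adef} that defines $A_{\ell,m,n}$. That lemma states that if $\mathrm{f}$ is a codimension-$d$ facet ($d\in\bbN^+$) of a mwc $M$ and $\mu\in C^\infty(M;\Omega M)$ is strictly positive, then the lift of $\mu$ to $[M;\mathrm{f}]$ equals $x_{\mathrm{ff}}^{d-1}\nu$ for a strictly positive $\nu\in C^\infty([M;\mathrm{f}];\Omega[M;\mathrm{f}])$, and it follows from the cylindrical-coordinate form of Lebesgue measure. The only arithmetic input specific to the present statement is that the facet $\mathrm{F}_{S,Q;\bullet}\in\calF_{\ell,m,n}$ has codimension $|S\cup Q|=|S|+|Q|$ in $\square_N$, since it is cut out by exactly $|S|+|Q|$ of the equations $t_i=0$, $t_i=1$; thus the front-face factor it contributes is $x_{\mathrm{F}_{S,Q;\bullet}}^{|S\cup Q|-1}$, matching the exponents in the claimed formula.

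First I would note that every face of $A_{\ell,m,n}$ is the front face of exactly one blowup in \cref{eq:Adef}: the codimension-one facets of $\square_N$ already belong to $\calF_{\ell,m,n;N-1}$, so the faces of $A_{\ell,m,n}$ are in bijection with $\calF_{\ell,m,n}$, hence with the labels $(S,Q;\bullet)$. The singleton facets, for which $|S\cup Q|=1$, are blown up trivially and contribute the exponent $0$, consistent with a smooth density acquiring no factor at a codimension-one boundary. I would then induct along the blowups, which run from highest codimension (dimension $0$) to lowest. When the turn comes to blow up the lift of $\mathrm{F}_{S,Q;\bullet}$, all facets blown up previously have strictly larger codimension, so the codimension of the lift of $\mathrm{F}_{S,Q;\bullet}$ is still $|S\cup Q|$; applying the lemma peels off the factor $x_{\mathrm{F}_{S,Q;\bullet}}^{|S\cup Q|-1}$ and leaves a strictly positive smooth density. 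Iterating over all the blowups assembles the full product and a final strictly positive $\overline{\mu}\in C^\infty(A_{\ell,m,n};\Omega A_{\ell,m,n})$.

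The point needing care — and the main obstacle — is the boundary-defining-function bookkeeping: at each blowup the bdfs of the faces already produced transform by the rules recalled in \S\ref{sec:geometry}, acquiring a factor of the new front-face bdf whenever the facet being blown up is contained in them, and one must verify that the accumulated monomial is altered only by a strictly positive smooth factor (absorbable into $\overline{\mu}$) and that no previously produced front face changes its exponent. The clean way to control this is to work in the local model \cref{eq:adecomp}: a neighborhood of any point of $A_{\ell,m,n}$ is a product of total boundary blowups \cite{MelroseMWC} of cube factors, and for the total boundary blowup of a cube the lift of a strictly positive density is, by the iterated lemma, the product of the front-face bdfs raised to (codimension $-1$) times a strictly positive smooth density. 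Since $\mathrm{F}_{S,Q;\bullet}$ corresponds in this local model to the merging of exactly $|S\cup Q|$ coordinate boundary hyperplanes, its front-face exponent is $|S\cup Q|-1$, and the product-density rule applied to the three factors reproduces the stated formula directly, with $\overline{\mu}$ strictly positive and smooth.
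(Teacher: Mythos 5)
Your proposal is correct and follows essentially the same route as the paper: the paper's proof is precisely an induction on the blowups in \cref{eq:Adef}, applying the cylindrical-coordinates lemma from the proof of \Cref{prop:density_lift_one}, with the exponent $|S\cup Q|-1$ arising because $\mathrm{F}_{S,Q;\bullet}$ has codimension $|S\cup Q|$ in $\square_N$. Your additional care with the bdf bookkeeping via the local model \cref{eq:adecomp} is a sound way to make the induction fully explicit, but it is not a departure from the paper's argument.
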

As a notational convenience, we are setting $x_{\mathrm{F}_{\varnothing,\varnothing;x_0}}=1$ for each $x_0\in \{0,1,\infty\}$. 
\begin{proof}
	Follows via induction on the number of blowups, as in the proof of \Cref{prop:density_lift_one}. 
\end{proof}

\begin{proposition}
	The Lebesgue measure on $\bbR^N$, which defines a strictly positive smooth density on $\square_{\ell,m,n}^\circ$, has the form
	\begin{equation}
	\Big[\prod_{j=1}^\ell (1-x_j)^2 \Big] \Big[ \prod_{j=\ell+m+1}^N  x_j^2 \Big]  \mu 
	\label{eq:misc_bzm}
	\end{equation}
	for some strictly positive smooth density $\mu \in C^\infty(\square_{\ell,m,n};\Omega\square_{\ell,m,n})$ on $\square_{\ell,m,n}$. 
	\label{lem:LebesgueLift_two}
\end{proposition}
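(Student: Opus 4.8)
The plan is to observe that \Cref{lem:LebesgueLift_two} is, up to a change of domain, literally the same statement as \Cref{lem:LebesgueLift}, and that its proof is a near-verbatim repetition of the earlier one. The key point to exploit is that the coordinate changes used to \emph{define} $\square_{\ell,m,n}$ are precisely the coordinates $w_i,x_i,y_i$ that were used in the proof of \Cref{lem:LebesgueLift}. Indeed, by construction $\square_{\ell,m,n}=[0,1]^N_t$, where for $i\in\{1,\ldots,\ell\}$ we have $t_i=1/(1-x_i)$, for $i\in\{\ell+1,\ldots,\ell+m\}$ we have $t_i=x_i$, and for $i\in\{\ell+m+1,\ldots,N\}$ we have $t_i=(x_i-1)/x_i$. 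Setting $w_i=1/(1-x_i)$ and $y_i=(x_i-1)/x_i$ as before, we see that $t_i=w_i$, $t_i=x_i$, and $t_i=y_i$ on the three respective blocks. Thus $(t_1,\ldots,t_N)$ are nothing but global smooth coordinates exhibiting $\square_{\ell,m,n}$ as an mwc diffeomorphic to the cube $[0,1]^N$, and $|\dd t_1\wedge\cdots\wedge\dd t_N|$ is automatically a strictly positive smooth density on $\square_{\ell,m,n}$.

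The remaining step is a one-line Jacobian computation, identical in spirit to the one in \Cref{lem:LebesgueLift}. Differentiating the defining relations gives $\dd x_i=(1-x_i)^2\dd t_i$ for $i\in\{1,\ldots,\ell\}$, $\dd x_i=\dd t_i$ for $i\in\{\ell+1,\ldots,\ell+m\}$, and $\dd x_i=x_i^2\dd t_i$ for $i\in\{\ell+m+1,\ldots,N\}$. Multiplying these together yields
\begin{equation}
	|\dd x_1\cdots\dd x_N| = \Big[\prod_{j=1}^\ell (1-x_j)^2\Big]\Big[\prod_{j=\ell+m+1}^N x_j^2\Big]\,|\dd t_1\wedge\cdots\wedge\dd t_N|,
\end{equation}
which is exactly the factorization \cref{eq:misc_bzm} with $\mu=|\dd t_1\wedge\cdots\wedge\dd t_N|$.

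I do not expect any genuine obstacle here. The only thing worth flagging is that the argument is in fact \emph{cleaner} than that of \Cref{lem:LebesgueLift}: because $\square_{\ell,m,n}$ is by definition the cube in the $t$-coordinates, one does not even need to invoke the submanifold-of-$\overline{\bbR}^N$ reasoning to check that the $t_i$ extend to smooth, nonvanishing coordinates across the corners — this is immediate from the definition. Consequently the density $\mu$ is manifestly smooth and strictly positive on all of $\square_{\ell,m,n}$, including its boundary hypersurfaces and corners, and the proof reduces to the Jacobian identity above.
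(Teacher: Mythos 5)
Your proposal is correct and is essentially the paper's own argument: the paper's proof defers to the computation in \Cref{lem:LebesgueLift}, which is exactly the Jacobian identity $\dd x_i=(1-x_i)^2\dd w_i$, $\dd x_i = x_i^2 \dd y_i$ you use, with $\mu$ taken to be the wedge of the extendable coordinate $1$-forms. Your observation that the cube case is even cleaner — since $t_i=w_i$, $x_i$, $y_i$ are by definition the global coordinates exhibiting $\square_{\ell,m,n}$ as $[0,1]^N$, so smoothness and positivity of $\mu$ up to the corners is immediate — is accurate and consistent with the paper.
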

\begin{proof}
	Follows from the same computation as in \Cref{lem:LebesgueLift}.
\end{proof}

\begin{proposition}
	For each pair of distinct $i,j\in \{1,\ldots,N\}$ such that either $i,j\in \{ 1,\ldots,\ell\}$, $i,j\in \{\ell+1,\ldots,\ell+m\}$, or $i,j\in \{\ell+m+1,\ldots,N\}$, the set $H_{j,k} = \mathrm{cl}_{A_{\ell,m,n}} \{p\in \square_{\ell,m,n}^\circ: x_j=x_k\}$ is a p-submanifold of $A_{\ell,m,n}$. 
	\label{prop:p-sub}
\end{proposition}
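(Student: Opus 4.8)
The plan is to reduce to a single model computation — that the diagonal of two coordinates becomes a p-submanifold under the total boundary blowup — and to propagate this through the product decomposition \eqref{eq:adecomp}. Writing $j,k$ for the two indices in the pair, I would first pass to the cube coordinates $t_1,\ldots,t_N$. Since $j,k$ lie in a common block, the coordinate change relating the $x$'s to the $t$'s restricts on that block to a monotone diffeomorphism of the relevant interval onto $[0,1]$, so $\{x_j=x_k\}=\{t_j=t_k\}$ on $\square_{\ell,m,n}^\circ$ and $H_{j,k}$ is the closure in $A_{\ell,m,n}$ of the diagonal $D=\{t_j=t_k\}\cap \square_{\ell,m,n}^\circ$. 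As being a p-submanifold is a local condition, I would then work on the open cover of $A_{\ell,m,n}$ by the blown-up subcubes of \eqref{eq:adecomp}, on each of which the blowup is a product of three total boundary blowups.

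On a given subcube I would split into cases according to how $t_j,t_k$ sit in this product. If $j$ and $k$ land in different factors — one of $t_j,t_k$ a ``near-$0$'' and the other a ``near-$1$'' coordinate of the subcube — then along $D$ the common value $t_j=t_k$ is forced into $(1/3,2/3)$, bounded away from $\{0,1\}$; since the blowups only modify neighborhoods of $\{t_i\in\{0,1\}\}$, near $\overline{D}$ the coordinates $t_j,t_k$ are untouched and $D$ is a smooth interior hypersurface transverse to all the remaining (other-variable) centers, hence an interior p-submanifold. The substantive case is when $t_j,t_k$ lie in the same factor and approach the same boundary value; after replacing each near-$1$ coordinate $t_c$ by $1-t_c$, that factor is a total boundary blowup of some $[0,\infty)^P$ and $D$ is the diagonal $\{u_j=u_k\}$ of two of its coordinates.

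Everything thus reduces to the Key Lemma: in the total boundary blowup of $[0,\infty)^P$ the closure of $\{u_j=u_k\}$ is a p-submanifold, which I would prove by induction on $P$. The base case $P=2$ is the explicit polar computation: blowing up the corner $\{0\}$ sends $\{u_1=u_2\}$ to $\{\theta=1/2\}$ in an angular coordinate $\theta=u_2/(u_1+u_2)\in[0,1]$, an interior p-submanifold meeting the front face transversally and missing the side faces. For the inductive step I would blow up the deepest corner $\{0\}$ first; in the projective chart where $u_i$ is a front-face defining function and $s_l=u_l/u_i$ are the lifts of the other boundary defining functions, $D$ becomes $\{s_j=s_k\}$ if $i\notin\{j,k\}$ and the interior hypersurface $\{s_k=1\}$ (or $\{s_j=1\}$) if $i\in\{j,k\}$. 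In this chart the remaining stages of the total boundary blowup restrict to the total boundary blowup of the $s$-space $[0,\infty)^{P-1}$, with $u_i$ a spectator — this is the functoriality of the total boundary blowup underlying \Cref{prop:forgetful_explicit} — so the inductive hypothesis applies to $\{s_j=s_k\}$, whereas $\{s_k=1\}$ is transverse to the entire corner structure and remains a p-submanifold throughout.

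Assembling the pieces, on each subcube $H_{j,k}$ is the product of the lift of a coordinate diagonal (a p-submanifold by the Key Lemma, or a transverse interior hypersurface in the easy case) with the spectator factors, hence a p-submanifold of $A_{\ell,m,n}$; locality then gives the claim. The main obstacle is the inductive step of the Key Lemma — verifying that after blowing up the origin the remaining total boundary blowup genuinely restricts, chart by chart, to the total boundary blowup of the projective coordinates, so that $D$ transforms to the lower-dimensional diagonal — which is exactly where the structural properties of the total boundary blowup must be invoked with care. As a consistency check, the two same-block collision loci always sit at the corners $(0,0)$ and $(1,1)$ of the two-variable model $A_{2,0,0}\cong A_{0,2,0}\cong A_{0,0,2}$ (the square blown up at two opposite corners), through both of which the diagonal passes and on which it is manifestly resolved.
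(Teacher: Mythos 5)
Your proof is correct and follows essentially the same strategy as the paper's: localize using the decomposition \eqref{eq:adecomp} into blown-up subcubes, dispatch the case where $t_j,t_k$ fall in different factors by observing the diagonal there stays away from every blowup center involving $t_j$ or $t_k$ (so it is locally a product of the blown-up spectator variables with an interior hypersurface), and reduce the same-factor case to the statement that the closure of a coordinate diagonal in a total boundary blowup is a p-submanifold. The only divergence is at that last step: the paper cites \cite{MelroseSCConfig} (noting it also follows from the explicit coordinates of \Cref{prop:coordinates}), whereas you prove it by induction on dimension via projective charts after blowing up the deepest corner — a correct, self-contained substitute whose inductive step (that the remaining blowups restrict, chart by chart, to the total boundary blowup of the projective coordinates with the front-face variable as a spectator) is precisely the content of the iterated coordinates $\varrho,\hat{x}_{\sigma(2)},\ldots,\hat{x}_{\sigma(N)}$ in \Cref{prop:coordinates}.
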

See \cite[\S1.2]{MelroseSCConfig} for the definition of ``p-submanifold.'' 
\begin{proof}
	Consider the neighborhood $\mathrm{bd}^{-1}(\square_{\ell,m,n}(S,S',S''))\subseteq A_{\ell,m,n}$. If one of $j,k$ is in $S\cup S'\cup S''$ and the other is not, then the intersection of $H_{j,k}$ with $\mathrm{bd}^{-1}(\square_{\ell,m,n}(S,S',S''))$ is a submanifold disjoint from the boundary and therefore a p-submanifold. It therefore suffices to consider the case when $j,k \in S\cup S'\cup S''$ (and the case when neither are in $S\cup S' \cup S''$ is similar). For notational simplicity, we only consider the case when $j,k\in S'$. Then, 
	\begin{multline}
	H_{j,k}\cap \mathrm{bd}^{-1}(\square_{\ell,m,n}(S,S',S'')) = \\  \Big(\Big[0,\frac{2}{3}\Big)^{S}\times \Big(\frac{1}{3},1\Big]^{(S'')^\complement} \Big)_{\mathrm{tb}}\times \Big(\tilde{H}_{j,k}\cap \Big(\Big[0,\frac{2}{3}\Big)^{S'} \times \Big(\frac{1}{3},1\Big]^{S^\complement} \Big)_{\mathrm{tb}}\Big) \times \Big(\Big[0,\frac{2}{3}\Big)^{S''}\times \Big(\frac{1}{3},1\Big]^{(S')^\complement} \Big)_{\mathrm{tb}},
	\end{multline}
	where $\tilde{H}_{j,k}$ is the closure of $\{x_j=x_k\}$ in $([0,2/3)^{S'}\times (1/3,1]^{S^\complement})_{\mathrm{tb}}$, which is a p-submanifold \cite{MelroseSCConfig} (this also follows from \Cref{prop:coordinates}).
	Thus, $H_{j,k}\cap \mathrm{bd}^{-1}(\square_{\ell,m,n}(S,S',S''))$ is a p-submanifold of $\mathrm{bd}^{-1}(\square_{\ell,m,n}(S,S',S''))$.
	As the neighborhoods $\mathrm{bd}^{-1}(\square_{\ell,m,n}(S,S',S''))$ cover $A_{\ell,m,n}$, the conclusion follows. 
\end{proof}

This result is illustrated in \Cref{fig:p}.

We now record the results of lifting $x_i$ and $1-x_i$ to $A_{\ell,m,n}$, these being derivable via the universal property. 
\begin{itemize}
	\item If $i\in \{1,\ldots,\ell\}$, then 
	\begin{align}
	-x_i &\in \Big[  \prod_{\substack{i \in Q \subseteq \{1,\ldots,\ell\} \\ S\subseteq \{\ell+m+1,\ldots,N\}}} x_{\mathrm{F}_{S,Q;\infty}}^{-1}  \Big] \Big[ \prod_{\substack{i \in S \subseteq \{1,\ldots,\ell\} \\ Q\subseteq \{\ell+1,\ldots,\ell+m\}}} x_{\mathrm{F}_{S,Q;0}} \Big] C^\infty(A_{\ell,m,n};\bbR^+), \\
	(1-x_i) &\in \Big[  \prod_{\substack{i \in Q \subseteq \{1,\ldots,\ell\} \\ S\subseteq \{\ell+m+1,\ldots,N\}}} x_{\mathrm{F}_{S,Q;\infty}}^{-1}  \Big]  C^\infty(A_{\ell,m,n};\bbR^+).
	\end{align}
	\item If $i \in \{\ell+1,\ldots,\ell+m\}$, then
	\begin{align}
	x_i &\in \Big[ \prod_{\substack{ S \subseteq \{1,\ldots,\ell\} \\ i\in Q\subseteq \{\ell+1,\ldots,\ell+m\}}} x_{\mathrm{F}_{S,Q;0}} \Big] C^\infty(A_{\ell,m,n};\bbR^+), \\
	(1-x_i) &\in \Big[ \prod_{\substack{i \in S \subseteq \{\ell+1,\ldots,\ell+m\}\\ Q\subseteq \{\ell+m+1,\ldots,N\}}} x_{\mathrm{F}_{S,Q;1}} \Big] C^\infty(A_{\ell,m,n};\bbR^+).
	\end{align}
	\item If $i\in \{\ell+m+1,\ldots,N\}$, then 
	\begin{align}
	x_i &\in \Big[ \prod_{\substack{ i \in S \subseteq \{\ell+m+1,\ldots,N\}\\Q\subseteq \{1,\ldots,\ell\} }} x_{\mathrm{F}_{S,Q;\infty}}^{-1} \Big] C^\infty(A_{\ell,m,n};\bbR^+), \\
	-(1-x_i) &\in \Big[ \prod_{\substack{S \subseteq \{\ell+1,\ldots,\ell+m\}\\ i\in Q\subseteq \{\ell+m+1,\ldots,N\}}} x_{\mathrm{F}_{S,Q;1}} \Big] \Big[ \prod_{\substack{ i \in S \subseteq \{\ell+m+1,\ldots,N\}\\Q\subseteq \{1,\ldots,\ell\} }} x_{\mathrm{F}_{S,Q;\infty}}^{-1} \Big]C^\infty(A_{\ell,m,n};\bbR^+).
	\end{align}
\end{itemize}

Let $\calI_1 = \{1,\ldots,\ell\}$, $\calI_2 = \{\ell+1,\ldots,\ell+m\}$, and $\calI_3 = \{\ell+m+1,\ldots,N\}$.
For $j,k \in \calI_\bullet$ for the same $\bullet \in \{1,2,3\}$, let $y_{j,k}$ denote a defining function of $H_{j,k}$, with the sign chosen so as to have the same sign as $x_j-x_k$. Then, for all distinct $j,k \in \{1,\ldots,N\}$, 
\begin{equation}
(x_j-x_k) \in Y_{j,k} X_{j,k} C^\infty(A_{\ell,m,n};\bbR^+),
\end{equation}
where $X_{j,k}=X_{k,j}$ is given by 
\begin{equation}
X_{j,k} = 
\begin{cases}
\Big[ \prod_{\substack{ S\subseteq \calI_3\\ j\in Q\subseteq \calI_1\text{ or }k\in Q\subseteq \calI_1}} x_{\mathrm{F}_{S,Q;\infty}}^{-1} \Big] \Big[ \prod_{\substack{j,k \in S\subseteq \calI_1 \\ Q \subseteq \calI_2}} x_{\mathrm{F}_{S,Q;0}}  \Big] & (j,k \in \calI_1), \\ 
\Big[ \prod_{\substack{S\subseteq \calI_1\\ j,k \in Q \subseteq \calI_2}} x_{\mathrm{F}_{S,Q;0}} \Big] \Big[  \prod_{\substack{j,k\in S\subseteq \calI_2\\ Q \subseteq \calI_3}} x_{\mathrm{F}_{S,Q;1}}\Big] & (j,k \in \calI_2 ), \\ 
\Big[ \prod_{\substack{ S\subseteq\calI_2\\ j,k\in Q\subseteq \calI_3}} x_{\mathrm{F}_{S,Q;1}} \Big] \Big[ \prod_{\substack{j \in S\subseteq\calI_3\text{ or }k \in S\subseteq\calI_3 \\ Q \subseteq\calI_1}} x_{\mathrm{F}_{S,Q;\infty}}^{-1}  \Big] & (j,k\in \calI_3), \\
\Big[ \prod_{\substack{j \in S \subseteq \calI_1 \\ Q\subseteq \calI_3}} x_{\mathrm{F}_{S,Q;\infty}}^{-1} \Big] \Big[ \prod_{\substack{j \in S \subseteq \calI_1\\ k\in Q \subseteq \calI_2}} x_{\mathrm{F}_{S,Q;0}} \Big] & (j\in \calI_1, k\in \calI_2), \\
\Big[ \prod_{\substack{ S \subseteq \calI_1 \\ k\in Q\subseteq \calI_3}} x_{\mathrm{F}_{S,Q;\infty}}^{-1} \Big]\Big[ \prod_{\substack{j \in S \subseteq \calI_2\\ k\in Q \subseteq \calI_3}} x_{\mathrm{F}_{S,Q;1}} \Big] & (j\in \calI_2, k\in \calI_3), \\
\Big[ \prod_{\substack{S \subseteq \calI_1, Q\subseteq \calI_3 \\ \{j,k\}\cap S\cup Q\neq \varnothing} } x_{\mathrm{F}_{S,Q;\infty}}^{-1} \Big] & (j\in \calI_3,k\in \calI_1),
\end{cases}
\end{equation}
and $Y_{j,k}=y_{j,k}$ if, for some $\bullet\in \{1,2,3\}$, we have $j,k \in \calI_\bullet$, and $Y_{j,k}=\pm 1$ otherwise. 

We associate to each face $\mathrm{F}_\bullet\in \calF(A_{\ell,m,n})$ an affine functional 
\begin{equation} 
\varrho_{\bullet}:\bbC^{2N+N(N-1)/2}\ni (\bmalpha,\bmbeta,\bmgamma)\mapsto \varrho_{\mathrm{F}_\bullet}(\bmalpha,\bmbeta,\bmgamma)\in \bbC.
\end{equation}
Suppose that we are given some $\bmalpha,\bmbeta \in \bbC^N$ and $\bmgamma = \{\gamma_{j,k} = \gamma_{k,j}\}_{1\leq j < k \leq N} \in \bbC^{N(N-1)/2}$. Then, $\varrho_{\bullet}(\bmalpha,\bmbeta,\bmgamma)$ is defined as follows:
\begin{itemize}
	\item For $S\subseteq \{1,\ldots,\ell\}$ and $Q\subseteq \{\ell+1,\ldots,\ell+m\}$, 
	\begin{equation}
	\varrho_{S,Q;0} = |S|+|Q|-1 + \sum_{j\in S\cup Q} \alpha_j + 2 \sum_{\substack{j,k\in S\cup Q \\ j> k}} \gamma_{j,k}. 
	\label{eq:varrho0}
	\end{equation}
	\item For $S\subseteq \{\ell+1,\ldots,\ell+m\}$ and $Q\subseteq \{\ell+m+1,\ldots,N\}$, 
	\begin{equation}
	\varrho_{S,Q;1} = |S|+|Q|-1 + \sum_{j\in S\cup Q} \beta_j + 2 \sum_{\substack{j,k\in S\cup Q \\ j> k}} \gamma_{j,k}. 
	\label{eq:varrho1}
	\end{equation}
	\item For $S\subseteq \{\ell+m+1,\ldots,N\}$ and $Q\subseteq \{1,\ldots,\ell\}$, 
	\begin{equation}
	\varrho_{S,Q;\infty} = -|S|-|Q|-1  - \sum_{j\in S\cup Q} \alpha_j- \sum_{j\in S\cup Q} \beta_j - 2 \sum_{\substack{j> k \\ j\in S\cup Q\text{ or }k\in S\cup Q}} \gamma_{j,k}.
	\label{eq:varrhoinfty} 
	\end{equation}
\end{itemize}
Then, letting $\Delta\subset \square_{\ell,m,n}$ be defined by $\Delta=\cup_{\bullet=1}^3\cup_{j\neq k,j,k\in \calI_\bullet}\{x_j=x_k\}$:
\begin{proposition}
	Given any $\bmalpha,\bmbeta\in \bbC^N$ and $\bmgamma = \{\gamma_{j,k} = \gamma_{k,j}\}_{1\leq j < k \leq N} \in  \bbC^{N(N-1)/2}$, 
	\begin{equation} 
	\prod_{i=1}^N |x_i|^{\alpha_i}|1-x_i|^{\beta_i}\prod_{1\leq j < k \leq N} (x_k-x_j+i0)^{2\gamma_{j,k}} |\mathrm{d}x_1\cdots \dd x_N| \in C^\infty(\square_{\ell,m,n}^\circ\backslash \Delta; \bbC\otimes \Omega (\square^\circ_{\ell,m,n}\backslash \Delta))
	\end{equation} 
	lifts, via the blowdown map $\mathrm{bd}:A_{\ell,m,n}\to \square_{\ell,m,n}$, to 
	\begin{multline}
	\Big[\prod_{\substack{S\subseteq \{1,\ldots,\ell\} \\ Q\subseteq \{\ell+1,\ldots,\ell+m\}}} x_{\mathrm{F}_{S,Q;0}}^{\varrho_{S,Q;0}} \Big]\Big[\prod_{\substack{S\subseteq \{\ell+1,\ldots,\ell+m\} \\ Q\subseteq \{\ell+m+1,\ldots,N\}}} x_{\mathrm{F}_{S,Q;1}}^{\varrho_{S,Q;1}} \Big]\Big[\prod_{\substack{S\subseteq \{\ell+m+1,\ldots,N\} \\ Q\subseteq \{1,\ldots,\ell\}}} x_{\mathrm{F}_{S,Q;\infty}}^{\varrho_{S,Q;\infty}} \Big] 
	\Big[\prod_{1\leq j<k \leq \ell } (y_{k,j}+i0)^{2\gamma_{j,k}} \Big]  \\\times \Big[\prod_{\ell+1\leq j<k \leq \ell+m } (y_{k,j}+i0)^{2\gamma_{j,k}} \Big] \Big[\prod_{\ell+m+1\leq j<k \leq N } (y_{k,j}+i0)^{2\gamma_{j,k}} \Big] \mu_{\ell,m,n} 
	\end{multline}
	for some strictly positive smooth density $\mu_{\ell,m,n} \in C^\infty(A_{\ell,m,n};\Omega A_{\ell,m,n})$ on $A_{\ell,m,n}$, depending entirely on $\bmalpha,\bmbeta,\bmgamma$. 
	\label{prop:Ares}
\end{proposition}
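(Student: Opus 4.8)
The plan is to mimic the proof of \Cref{prop:Kres}. Since each of $\varrho_{S,Q;0}$, $\varrho_{S,Q;1}$, $\varrho_{S,Q;\infty}$ is an affine function of $(\bmalpha,\bmbeta,\bmgamma)$, I would split each exponent as $\varrho_\bullet = \varrho_\bullet^{(0)} + \varrho_\bullet^{(1)}(\bmalpha,\bmbeta,\bmgamma)$, where $\varrho_\bullet^{(0)}=\varrho_\bullet(\bm0,\bm0,\bm0)$ is the constant term and $\varrho_\bullet^{(1)}$ the linear part. Because the claimed lift is multiplicative in the factors of the integrand and each exponent is affine, it suffices to verify the statement in $2N+N(N-1)/2+1$ cases: the case $\bmalpha=\bmbeta=\bmgamma=\bm0$, which handles the density and the constant exponents $\varrho^{(0)}$, and the $2N+N(N-1)/2$ cases in which exactly one component of $(\bmalpha,\bmbeta,\bmgamma)$ equals $1$ and the rest vanish, which handle the linear part $\varrho^{(1)}$ one face-contribution at a time.

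For the constant case I would combine \Cref{lem:LebesgueLift_two}, which extracts the Jacobian factors $\prod_{j=1}^\ell(1-x_j)^2\prod_{j=\ell+m+1}^N x_j^2$ and leaves a strictly positive smooth density on $\square_{\ell,m,n}$, with \Cref{prop:density_lift_two}, whose output supplies the powers $x_{\mathrm{F}_{S,Q;\ast}}^{|S\cup Q|-1}$ of the boundary-defining functions. Lifting the Jacobian factors via the recorded expressions for $x_i$ and $1-x_i$ and adding the two contributions face-by-face, one checks the total exponent of $x_{\mathrm{F}_{S,Q;\ast}}$ equals the constant term of the corresponding $\varrho$ in \cref{eq:varrho0}, \cref{eq:varrho1}, \cref{eq:varrhoinfty}. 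Concretely, both $(1-x_j)^2$ and $x_j^2$ lift to powers of defining functions of the type-$\infty$ faces only, contributing $-2(|S|+|Q|)$ to $x_{\mathrm{F}_{S,Q;\infty}}$ and nothing to the type-$0$ and type-$1$ faces, so that $(|S|+|Q|-1)$ is left unchanged for the latter and becomes $-|S|-|Q|-1$ for the former, exactly matching $\varrho^{(0)}$.

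For the linear cases I would use the lifts of $x_i$, $1-x_i$, and $x_j-x_k$ recorded just above the proposition. Setting $\alpha_i=1$ (resp.\ $\beta_i=1$) and reading off the exponents in the lift of $|x_i|$ (resp.\ $|1-x_i|$) reproduces the linear-in-$\bmalpha$ (resp.\ linear-in-$\bmbeta$) parts of the $\varrho$'s. The genuinely new point, absent in \Cref{prop:Kres}, occurs for $\gamma_{j,k}=1$: on $\square_N$ the sign of $x_k-x_j$ is \emph{not} fixed when $j,k$ lie in a common block $\calI_\bullet$, so $(x_k-x_j)^{2\gamma_{j,k}}$ is genuinely singular along the interior hypersurface $H_{j,k}$ and cannot be absorbed into a positive density. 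Here I would invoke \Cref{prop:p-sub}: since $H_{j,k}$ is a p-submanifold of $A_{\ell,m,n}$, the function $y_{k,j}$ is a bona fide defining function for it, and the factorization $(x_k-x_j)\in Y_{j,k}X_{j,k}\,C^\infty(A_{\ell,m,n};\bbR^+)$ recorded above lets me write the lift of $(x_k-x_j+i0)^{2\gamma_{j,k}}$ as $X_{j,k}^{2\gamma_{j,k}}$ (a product of powers of boundary-defining functions, feeding the $\varrho$'s) times $(y_{k,j}+i0)^{2\gamma_{j,k}}$ times a positive smooth factor. For $j,k$ in distinct blocks, $Y_{j,k}=1$, so no interior factor survives and the whole contribution is a boundary-face power.

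The main obstacle is precisely this last point: one must check that the $+i0$ prescription is transported coherently through the blowdown, i.e.\ that $(x_k-x_j+i0)$ lifts to (positive smooth)$\,\times(y_{k,j}+i0)$ rather than to $(y_{k,j}-i0)$, which is what makes the retained factors $(y_{k,j}+i0)^{2\gamma_{j,k}}$ well-defined and correctly oriented away from $H_{j,k}$. This is guaranteed by the sign convention defining $y_{j,k}$ (chosen to have the same sign as $x_j-x_k$) together with the strict positivity of the $C^\infty(A_{\ell,m,n};\bbR^+)$ cofactor, so the branch of the power function is unambiguous on $A_{\ell,m,n}\setminus H_{j,k}$. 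Collecting all exponents then yields exactly $\varrho_{S,Q;0}$, $\varrho_{S,Q;1}$, $\varrho_{S,Q;\infty}$, and the leftover density $\mu_{\ell,m,n}$ is strictly positive and smooth, being a product of the strictly positive smooth densities produced at each step.
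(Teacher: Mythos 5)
Your proposal is correct and is essentially the paper's own argument: the paper's proof is the one-line citation of the preceding computations (the recorded lifts of $x_i$, $1-x_i$, and the factorization $(x_j-x_k)\in Y_{j,k}X_{j,k}\,C^\infty(A_{\ell,m,n};\bbR^+)$, together with \Cref{prop:density_lift_two}) plus \Cref{lem:LebesgueLift_two}, which is exactly the argument you spell out via the affine-decomposition-and-basis-check template of \Cref{prop:Kres}. Your extra remarks on the interior hypersurfaces $H_{j,k}$ and the coherent transport of the $+i0$ prescription through $(x_k-x_j)=y_{k,j}\cdot(\text{positive})$ are accurate elaborations of what the paper leaves implicit.
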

\begin{proof}
	Follows from the preceding computations, along with \Cref{lem:LebesgueLift_two}.
\end{proof}

If $M$ is an orientable mwc, we say that a collection  $\calP$ of interior p-submanifolds each of codimension one is \emph{consistently orientable} if we can choose an orientation on each such that, for any $p\in M$, the subset
\begin{equation}
\sum_{P\in \calP,p\in P} {}^{++}N^*_p P \subset T^*_p M
\label{eq:misc_ne1}
\end{equation}
does not contain zero, where  ${}^{++}N^* P \subset {}^{+}N^* P \subset T^* M$ is the induced positively oriented conormal bundle, sans the zero section, and $T^* M$ is the extendable cotangent bundle of $M$. Whether or not this holds does not depend on the choice of orientation of $M$.
Choosing defining functions $\{y_{P}\}_{P\in \calP}\subset C^\infty(M;\bbR)$ for the $P\in \calP$ such that 
\begin{equation} 
\mathrm{d}y_P(p) \in {}^+N^*_p P
\label{eq:misc_ne2}
\end{equation} 
for each $p\in P$, we say that the $\{y_P\}_{P\in \calP}$ are consistently oriented defining functions. 

\begin{example}
	In $\square_{0,3,0}^\circ = (0,1)^3$, consider $\calP = \{H_{1,2}^\circ,H_{2,3}^\circ,H_{3,1}^\circ\}$. 
	The functions $x_2-x_1$, $x_3-x_2$, $x_1-x_3$ are \textit{not} consistently oriented defining functions, as 
	\begin{equation}
	0 = \dd (x_2-x_1)+\dd (x_3-x_2) + \dd (x_1-x_3), 
	\end{equation}
	but $x_2-x_1$, $x_3-x_2$, and $x_3-x_1$ are.
\end{example}

Let 
\begin{equation} 
\calP = \{H_{j,k}\}_{j,k\in \calI_1,j\neq k}\cup \{H_{j,k}\}_{j,k\in \calI_2,j\neq k} \cup \{H_{j,k}\}_{j,k\in \calI_3,j\neq k}.
\label{eq:PH}
\end{equation} 

\begin{figure}[t]
	\begin{center} 
		\tdplotsetmaincoords{75}{115}
		\begin{tikzpicture}[scale=2.75,tdplot_main_coords]
		\draw[opacity=0] (1.5,0,0) -- (0,1.5,0) -- (0,0,1.5); 
		\draw[fill=mygreen, opacity=.3, dashed] (0,.05,.05) -- (0,.9,.9) -- (1,.9,.9) -- (1,.15,.15) -- (.8,.05,.05) -- cycle;
		\draw[fill=gray,fill opacity = .1] (1,.2,.1) -- (1,.1,.2) -- (1,.1,1) -- (1,.8,1) -- (1,1,.8) -- (1,1,.1) -- cycle;
		\draw[fill=gray,fill opacity = .1] (1,1,.1) -- (.9,1,0) -- (0,1,0) -- (0,1,.8) -- (1,1,.8) -- cycle;
		\draw[fill=gray,fill opacity = .1] (1,.8,1) -- (0,.8,1) -- (0,0,1) -- (.9,0,1) -- (1,.1,1) -- cycle;
		\draw[fill=gray,fill opacity = .1] (1,.8,1) -- (0,.8,1) -- (0,1,.8) -- (1,1,.8) -- cycle;
		\draw[fill=gray,fill opacity = .1] (1,1,.1) -- (.9,1,0) -- (.9,.2,0) -- (1,.2,.1) -- cycle;
		\draw[fill=gray,fill opacity = .1] (1,.1,1) -- (.9,0,1) -- (.9,0,.2) -- (1,.1,.2) -- cycle;
		\draw[fill=gray,fill opacity = .1] (1,.2,.1) -- (1,.1,.2) -- (.9,0,.2) -- (.8,0,.1) -- (.8,.1,0) -- (.9,.2,0) -- cycle;
		\draw[fill=gray,fill opacity = .1, draw=none] (0,.1,0) -- (0,0,.1) -- (.8,0,.1) -- (.8,.1,0) -- cycle;
		\draw[fill=gray,fill opacity = .1, dashed] (.9,.2,0) -- (.9,1,0) -- (0,1,0) -- (0,.1,0) -- (.8,.1,0) -- cycle;
		\draw[fill=gray,fill opacity = .1, dashed] (.8,0,.1) -- (.9,0,.2) -- (.9,0,1) -- (0,0,1) -- (0,0,.1) -- cycle;
		\draw[fill=gray,fill opacity = .1, dashed] (0,0,.1) -- (0,.1,0) -- (0,1,0) -- (0,1,.8) -- (0,.8,1) -- (0,0,1) -- cycle;
		\node (lb) at (0,.2,1.2) {$A_{1,2,0}$};
		\end{tikzpicture}
		\begin{tikzpicture}[scale=2.75,tdplot_main_coords]
		\draw[opacity=0] (1.5,0,0) -- (0,1.5,0) -- (0,0,1.5); 
		\draw[black, dashed] (.1,.1,.1) -- (.9,.9,.9);
		\filldraw[black]  (.1,.1,.1) circle (.5pt);
		\filldraw[black]  (.9,.9,.9) circle (.5pt);
		\draw[fill=mygreen, opacity=.3, dashed] (1,.85,.85) -- (.8,.95,.95) -- (0,.95,.95) -- (0,.15,.15) -- (.2,.05,.05)-- (1,.05,.05) -- cycle;
		\draw[fill=darkcandyapp, opacity=.3, dashed] (.85,.85,1) -- (.95,.95,.8) -- (.95,.95,0) -- (.15,.15,0) -- (.05,.05,.2)-- (.05,.05,1) -- cycle;
		\draw[fill=darkblue, opacity=.3, dashed] (.85,1,.85) -- (.95,.8,.95) -- (.95,0,.95) -- (.15,0,.15) -- (.05,.2,.05)-- (.05,1,.05) -- cycle;
		\draw[fill=gray,fill opacity = .1] (1,.9,.8) -- (1,.8,.9) -- (.9,.8,1) -- (.8,.9,1) -- (.8,1,.9) -- (.9,1,.8) -- cycle;
		\draw[fill=gray,fill opacity = .1] (1,.9,.8) -- (.9,1,.8) -- (.9,1,0) -- (1,.9,0) -- cycle;
		\draw[fill=gray,fill opacity = .1] (1,.8,.9) -- (.9,.8,1) -- (.9,0,1) -- (1,0,.9) -- cycle;
		\draw[fill=gray,fill opacity = .1] (.8,.9,1) -- (.8,1,.9) -- (0,1,.9) -- (0,.9,1) -- cycle;
		\draw[fill=gray,fill opacity = .1] (.8,1,.9) -- (.9,1,.8) -- (.9,1,0) -- (.1,1,0) -- (0,1,.1) -- (0,1,.9) -- cycle;
		\draw[fill=gray,fill opacity = .1] (1,.9,.8) -- (1,.8,.9) -- (1,0,.9) -- (1,0,.1) -- (1,.1,0) -- (1,.9,0) -- cycle;
		\draw[fill=gray,fill opacity = .1] (.9,.8,1) -- (.8,.9,1) -- (0,.9,1) -- (0,.1,1) -- (.1,0,1) -- (.9,0,1) -- cycle;
		\draw[fill=gray,fill opacity = .1,dashed] (0,.1,.2) -- (0,.2,.1) -- (.1,.2,0) -- (.2,.1,0) -- (.2,0,.1) -- (.1,0,.2) -- cycle;
		\draw[fill=gray,fill opacity = .1,draw=none] (0,.1,.2) -- (.1,0,.2) -- (.1,0,1) -- (0,.1,1) -- cycle;
		\draw[fill=gray,fill opacity = .1,draw=none] (0,.2,.1) -- (.1,.2,0) -- (.1,1,0) -- (0,1,.1) -- cycle;
		\draw[fill=gray,fill opacity = .1,draw=none] (.2,.1,0) -- (.2,0,.1) -- (1,0,.1) -- (1,.1,0) -- cycle;
		\draw[fill=gray,fill opacity = .1,dashed] (.2,0,.1) -- (.1,0,.2) -- (.1,0,1) -- (.9,0,1) -- (1,0,.9) -- (1,0,.1) -- cycle;
		\draw[fill=gray,fill opacity = .1,dashed] (0,.1,.2) -- (0,.2,.1) -- (0,1,.1) -- (0,1,.9) -- (0,.9,1) -- (0,.1,1) -- cycle;
		\draw[fill=gray,fill opacity = .1,dashed] (.1,.2,0) -- (.2,.1,0) -- (1,.1,0) -- (1,.9,0) -- (.9,1,0) -- (.1,1,0) -- cycle;
		\node (lb) at (0,.2,1.2) {$A_{0,3,0}$};
		\end{tikzpicture}
	\end{center}
	\caption{The sets $H_{j,k} \in \calP$ in $A_{1,2,0}$ and $A_{0,3,0}$. Pictured are $\color{darkcandyapp} H_{1,2}$, $\color{darkblue} H_{1,3}$, and $\color{mygreen} H_{2,3}$, where the axes are oriented as in \Cref{fig:A}. In $A_{0,3,0}$, the intersection $H_{1,2}\cap H_{1,3} \cap H_{2,3}$ has been indicated with an extra dashed line.}
	\label{fig:p}
\end{figure}

\begin{proposition}
	The collection $\calP$  defined by \cref{eq:PH} is consistently orientable, and $\{y_{k,j}\}_{j<k}$ is a set of consistently oriented defining functions.
\end{proposition}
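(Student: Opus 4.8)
The plan is to verify the pointwise criterion directly. By the definition of consistent orientability it suffices to show that, at each $p \in A_{\ell,m,n}$, the positively-oriented conormals of the $H_{j,k}\in \calP$ passing through $p$ have the property that their positive span, with strictly positive coefficients, avoids the origin of $T_p^* A_{\ell,m,n}$; since $A_{\ell,m,n}$ is an iterated blowup of the cube and hence orientable, and since the criterion is orientation-independent, this is well-posed. First I would observe that the sign convention built into $y_{k,j}$—that $y_{k,j}$ carries the same sign as $x_k-x_j$—is exactly the assertion $\mathrm{d}y_{k,j}(p)\in {}^+N^*_p H_{k,j}$, so the $\{y_{k,j}\}_{j<k}$ are candidate consistently oriented defining functions and the whole content reduces to showing that at every $p$ there is no relation $\sum_{p\in H_{k,j}} \lambda_{k,j}\,\mathrm{d}y_{k,j}(p)=0$ with all $\lambda_{k,j}>0$; equivalently, that the active conormals lie in an open half-space, i.e. there is a single $v\in T_p A_{\ell,m,n}$ with $\langle \mathrm{d}y_{k,j}(p),v\rangle>0$ for all active pairs $j<k$.

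The heart of the matter is a combinatorial acyclicity statement. Fix $p$. Because ``$x_j=x_k$'' is an equivalence relation on the indices of each block, the members of $\calP$ through $p$ (all p-submanifolds by \Cref{prop:p-sub}) are indexed by the unordered pairs drawn from the collision clusters $C\subseteq \calI_\bullet$ of mutually coincident indices at $p$. Within a cluster $C=\{i_1<\cdots<i_r\}$ the active defining functions are, in suitable local coordinates, strictly positive smooth multiples of order-preserving differences $u_{i_b}-u_{i_a}$ for $a<b$, where the $u_i$ increase with $x_i$; taking $v=\sum_i c_i\,\partial_{u_i}$ with $c_i$ strictly increasing within each block then gives $\langle \mathrm{d}y_{i_b,i_a},v\rangle = c_{i_b}-c_{i_a}>0$ simultaneously for every active pair (the distinct clusters use independent resolved coordinates, so a single $v$ serves them all at once). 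The elementary fact underlying this is that orienting every edge of a clique from its smaller to its larger index produces a totally ordered, hence acyclic, directed graph: the least vertex of each cluster has only outgoing edges and so cannot carry a strictly positive circulation. This is precisely the mechanism visible in the $A_{0,3,0}$ example, where $x_2-x_1,\ x_3-x_2,\ x_3-x_1$ are consistently oriented while $x_2-x_1,\ x_3-x_2,\ x_1-x_3$ are not.

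To execute this at every $p$, including boundary points of $A_{\ell,m,n}$, I would pass from the interior to the boundary using the product decomposition \cref{eq:adecomp} together with the local coordinates of \Cref{prop:coordinates}. On the interior $\mathrm{bd}$ is a diffeomorphism, each $y_{k,j}$ is a positive multiple of $t_k-t_j$ in the coordinates $t_i$ of \S\ref{subsec:A} (which are increasing in $x_i$ on each block), and the conclusion is immediate from the acyclicity above. For a boundary point I would work in one total-boundary-blowup factor at a time via \cref{eq:adecomp}: each active diagonal is either resolved within a single factor near $p$, where \Cref{prop:coordinates} exhibits it as cut out by an order-preserving coordinate difference, or else it meets the chart as a smooth interior hypersurface transverse to the blown-up corners, in which case its conormal is already a difference $\mathrm{d}t_j-\mathrm{d}t_k$ of genuine coordinates near $p$. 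The main obstacle will be exactly this boundary bookkeeping—checking that the local normal form for the lifted diagonals near each collision face still expresses every active defining function as a positive multiple of an order-respecting difference. This is the formal counterpart of the geometric fact that two points on a line keep their relative order as they collide, at every scale of the resolution, so no order reversal can be introduced by the blowdown. Once this normal form is in place, the half-space condition, and hence consistent orientability with the defining functions $\{y_{k,j}\}_{j<k}$, holds at every point of $A_{\ell,m,n}$.
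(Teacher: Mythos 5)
Your strategy is viable, and it is in fact the route the paper itself sketches as a ``more direct argument'' immediately after its official proof: decompose via \cref{eq:adecomp}, pass to the charts of \Cref{prop:coordinates} on each total-boundary-blowup factor, and run the half-space/acyclicity argument there. (The paper's official proof handles the boundary by a different mechanism: a homogeneity argument produces a tubular neighborhood of the smallest facet through $p$ in which every $H_{j,k}\in\calP$ is vertical, so a nonnegative combination of the $\dd y_{k,j}$ vanishing at $p$ would also vanish at a nearby interior point, reducing everything to the interior case you dispose of correctly.) However, as written your proposal has a genuine gap, and it is exactly the step you flag as ``the main obstacle'': you never establish the local normal form for the lifted diagonals near a boundary collision point. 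That step is the entire content of the proposition, since the interior case is immediate from $y_{k,j}=(\text{positive smooth})\cdot(t_k-t_j)$. Note also that \Cref{prop:coordinates} by itself says nothing about the $H_{j,k}$, and the normal form you posit is not quite right as stated: in the blown-up charts the lifted diagonal is cut out not by a coordinate difference but by a multiplicative expression, and the ``order-respecting difference'' structure only reappears after passing to logarithmic coordinates, which are available only on the window of coordinates between the colliding indices.

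Concretely, the missing computation is the following. In the chart of \Cref{prop:coordinates} attached to a permutation $\sigma$ consistent with the facet through $p$ (relabeling indices along $\sigma$), one has locally
\begin{equation}
	H_{j,k}=\{\hat{x}_{j+1}\cdots\hat{x}_k=1\},\qquad y_{k,j}=-1+\hat{x}_{j+1}\cdots\hat{x}_k
\end{equation}
up to a positive smooth factor, and therefore, on $H_{j,k}$,
\begin{equation}
	\dd y_{k,j}=\sum_{i=j+1}^{k}\frac{\dd\hat{x}_i}{\hat{x}_i}.
\end{equation}
All $\hat{x}_i$ appearing here are strictly positive near $p$, since their product equals $1$ on $H_{j,k}$ while each is bounded above in the chart. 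Hence every active conormal is a sum with nonnegative coefficients of the pointwise independent forms $\dd\hat{x}_i/\hat{x}_i$, so a vanishing nonnegative combination of the $\dd y_{k,j}$ forces every coefficient to vanish; equivalently, $v=\sum_i\hat{x}_i\,\partial_{\hat{x}_i}$ pairs with each active $\dd y_{k,j}$ to the window length $k-j>0$, which is your half-space criterion (this is the correct multiplicative substitute for your $v=\sum_i c_i\,\partial_{u_i}$). With this inserted, your argument closes; without it, it does not go through, because the factor relating $y_{k,j}$ to $t_k-t_j$ degenerates precisely at the boundary faces where the collision is resolved, which is why the interior argument cannot simply be quoted at $\partial A_{\ell,m,n}$.
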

\begin{proof}
	We will show that, for any $p\in A_{\ell,m,n}$ and $\{\lambda_{j,k}\}_{p\in H_{j,k} \in \calP} \in [0,\infty)$, if the 1-form 
	\begin{equation} 
	\sum_{H_{j,k}\in \calP \text{ s.t. } p\in H_{j,k}} \lambda_{j,k} \mathrm{d} y_{k,j} \in \Omega^1(A_{\ell,m,n})
	\end{equation} 
	vanishes at $p$, then $\lambda_{j,k} = 0$ for all $H_{j,k} \in \calP$ such that $p\in H_{j,k}$.
	Put differently, we want to show that if $\mathtt{P}$ is any partition of $\{1,\ldots,N\}$ into nonempty subsets $S \subset \calI_1,\calI_2,\calI_3$, then, given any $\{\lambda_{j,k}\}_{j,k\in S\in \mathtt{P}, j<k} \subset \bbR^{\geq 0}$ not all zero, then 
	\begin{equation}
	\sum_{\substack{j,k\in S\in \mathtt{P} \\ j<k} } \lambda_{j,k} \dd y_{k,j}  
	\end{equation}
	is nonvanishing on $\cap_{j,k\in S\in \mathtt{P}, j<k} H_{j,k}$. If $\mathtt{P}$ consists only of singletons, then this is vacuously true, so it suffices to consider the case when at least one member of $\mathtt{P}$ has cardinality $>1$. 
	
	This is certainly true for $p\in \square_{\ell,m,n}^\circ$, as $\mathrm{d} y_{k,j} \propto \dd x_k - \dd x_j$ on $\square_{\ell,m,n}^\circ\cap H_{j,k}$, where the coefficient of proportionality is positive. Indeed, by the results above,
	\begin{equation} 
	x_k-x_j=f_{j,k} y_{k,j}
	\end{equation} 
	for some $f_{j,k}\in C^\infty(A_{\ell,m,n};\bbR^{\geq 0})$ that is nonvanishing in the interior, so 
	\begin{equation}
	\dd y_{k,j} = f_{j,k}^{-1} (\mathrm{d} x_k - \dd x_j) - f_{j,k}^{-1} y_{k,j} \dd f_{j,k}
	\label{eq:misc_v6v}
	\end{equation}
	in $\square_{\ell,m,n}^\circ$, which is equal to $f_{j,k}^{-1}(\mathrm{d} x_k - \dd x_j)$ on $H_{j,k} \cap  \square_{\ell,m,n}^\circ$, as claimed. This argument does not work for $p\in \partial A_{\ell,m,n}$, as $f_{j,k}$ may vanish there.

	A homogeneity argument can be used to show that, for any $p\in \smash{\partial A_{\ell,m,n}}$, there exists a tubular neighborhood $T:U\to U_0$ of a neighborhood  $U_0\subset F_0$ of $p$ in $F_0$, where $F_0$ is the smallest facet containing $p$, such that the intersections $U\cap P$ of this neighborhood with the $P\in \calP$ are all vertical subsets, meaning of the form $T^{-1}(B)$ for some $B\subset U_0$. This implies that if the 1-form above vanishes at $p$, then it also vanishes on the fiber of the tubular neighborhood over $p$ and hence somewhere in $\square^\circ_{\ell,m,n} \cap(\bigcap_{H_{j,k}\ni p} H_{j,k})$. 
	
\end{proof}

We illustrate the preceding argument with an example. Consider the case when the only one of $\ell,m,n$ that is nonzero is $m$, and consider $p \in \cap_{H_{j,k}\in \calP} H_{j,k}$. The set $\cap_{H_{j,k}\in \calP} H_{j,k} \subset A_{0,N,0}$ (the ``small diagonal'') is a p-submanifold located away from all but the very first two blowups involved in the construction of $A_{0,N,0}$. Near this p-submanifold, $A_{0,N,0}$ is canonically diffeomorphic to $[\square_{0,N,0},\{\bf0\},\{\bf1\}]$, the result of blowing up two opposite corners of the $N$-cube. We consider the situation near the blowup of 
\begin{equation} 
\{{\bf0}\}=\mathrm{F}_{\varnothing,\varnothing,\{1,\ldots,N\},\varnothing,\varnothing,\varnothing},
\end{equation} 
and the situation near the opposite corner is similar. In the interior of the front face of that blowup, we can use $\varrho = x_1$ as a bdf and coordinates $\hat{x}_j = x_j/x_1$ for $j=2,\ldots,N$ as parametrizing the face itself. In terms of these coordinates, 
\begin{equation} 
\cap_{H_{j,k}\in \calP} H_{j,k} = \{\hat{x}_2,\cdots,\hat{x}_N=1\}
\end{equation} 
locally, and, for $1\leq j<k \leq N$, we can write $y_{k,j} = \tilde{y}_{k,j} C^\infty(A_{0,N,0};\bbR^+)$ for $\tilde{y}_{k,j}$ given locally by $\tilde{y}_{k,j} = \varrho^{-1}(x_k-x_j) = \hat{x}_k-\hat{x}_j$, where $\hat{x}_1=1$. This satisfies
\begin{equation}
\dd \tilde{y}_{k,j} = 
\begin{cases}
\mathrm{d} \hat{x}_k &(j=1), \\ 
\mathrm{d} \hat{x}_k - \dd \hat{x}_j  &(j\neq 1). 
\end{cases}
\end{equation}
So, if $\lambda_{k,j}\geq 0$, then $\sum_{1\leq j<k\leq N} \lambda_{j,k}\mathrm{d}\tilde{y}_{k,j} = 0 \Rightarrow \lambda_{j,k}=0$ for all $k,j$. Since the $y_{k,j}$ differ from the $\tilde{y}_{k,j}$ by a (smooth) positive factor, the $y_{k,j}$ have the same property on $\cap_{H_{j,k}\in \calP} H_{j,k} $. 

There is a more direct argument using the coordinates in \Cref{prop:coordinates} (with the decomposition \cref{eq:adecomp}). Namely, using \cref{eq:adecomp}, the result follows from the analogous result for $[0,1)_{\mathrm{tb}}^N$. Given any $\sigma\in \frakS_N$, consider the coordinates $\varrho,\hat{x}_{\sigma(2)},\cdots,\hat{x}_{\sigma(N)}$ defined in \Cref{prop:coordinates}, these giving a $C^\infty$-atlas as $\sigma$ varies over all permutations. In these coordinate systems, the relevant p-submanifolds are, locally,  
\begin{equation}
H_{j,k} =\{ \hat{x}_{j+1}\cdots \hat{x}_k=1\} \subset [0,1)_{\mathrm{tb}}^N,
\end{equation}
so have defining functions $y_{k,j}=-1+\hat{x}_{j+1}\cdots \hat{x}_k$. This satisfies 
\begin{equation}
\dd y_{k,j} =  \sum_{i=j+1}^k \frac{\dd \hat{x}_i}{\hat{x}_i}
\label{eq:misc_885}
\end{equation}
on $H_{j,k}$. The 1-forms, $\omega_{j,k} = \sum_{i=j+1}^k \hat{x}_i^{-1} \dd \hat{x}_i$, defined by the right-hand side of \cref{eq:misc_885} satisfy 
\begin{multline} 
\{\lambda_{j,k}\}_{1\leq j<k \leq N\text{ and }j,k\in S\in \mathtt{P}}\subset [0,\infty)\text{ and } \sum_{1\leq j<k\leq N \text{ and }j,k\in S\in \mathtt{P}} \lambda_{j,k} \omega_{j,k} = 0 \\ \Rightarrow \lambda_{j,k}=0\text{ for all $j<k$ in $S\in \mathtt{P}$},
\end{multline} 
from which the result follows.

Let $\Sigma\mathtt{T}(\ell,m,n)$ denote the collection of maximal families $\mathtt{I}$ of pairs $(x_0,S)$ of $x_0\in \{0,1,\infty\}$ and nonempty $S\subseteq \{1,\ldots,N\}$ such that 
\begin{itemize}
	\item if $(x_0,S),(x_0,Q)\in \mathtt{I}$, either $S\subseteq Q$ or $Q\subseteq S$, 
	\item 
	\begin{equation}
	(x_0,S)\in \mathtt{I} \Rightarrow 
	\begin{cases}
	S\cap \calI_3 = \varnothing & (x_0=0), \\
	S\cap \calI_1 = \varnothing & (x_0=1), \\
	S\cap \calI_2 = \varnothing & (x_0=\infty). 
	\end{cases}
	\end{equation}
\end{itemize}
The minimal facets of $A_{\ell,m,n}$ are in bijective correspondence with the elements of $\Sigma\mathtt{T}(\ell,m,n)$, with 
\begin{equation}
\mathrm{f}_{\mathtt{I}} = \bigcap_{(x_0,\calS) \in \mathtt{I},\; S\cup Q \subseteq \calS} \mathrm{F}_{S,Q;x_0}
\end{equation}
the facet corresponding to $\mathtt{I}$. 

\section{Meromorphic continuation}
\label{sec:IR}

We now turn to the analytic extension of Selberg-like integrals to dense, open subsets of the space of possible exponents. As discussed in the introduction, the results in this section are apparently sharp for generic Selberg-like integrals, but for e.g.\ symmetric Selberg-like integrals they are only preliminary.
Nevertheless, the results we prove here will be useful in establishing the sharp results later.
For our discussion of the symmetric and DF-symmetric cases, it is useful to consider somewhat more general integrals than \cref{eq:Selberg_def}. Let $\ell,m,n\in \bbN$ satisfy $\ell+m+n=N\in \bbN^+$. Fix a \emph{finite} collection $\calD$ of indexed sets 
\begin{equation}
\{d_{\mathrm{F}} \}_{\mathrm{F} \in \calF(K_{\ell,m,n}) }  \subseteq \bbR. 
\end{equation}
Define   
\begin{equation}
S_{\ell,m,n}[F](\bmalpha,\bmbeta,\bmgamma) = \int_{\triangle_{\ell,m,n}}\Big[ \prod_{i=1}^N |x_i|^{\alpha_i}|1-x_i|^{\beta_i}  \Big] \Big[ \prod_{1\leq j<k \leq N} (x_k-x_j)^{2\gamma_{j,k}} \Big] F \dd x_1\cdots \dd x_N,
\label{eq:misc_snf}
\end{equation}
for $(\bmalpha,\bmbeta,\bmgamma)\in \Omega_{\ell,m,n}[\calD]$, where 
\begin{itemize}
	\item $\Omega_{\ell,m,n}[\calD]$ denotes the set of $(\bmalpha,\bmbeta,\bmgamma)\in \bbC^N_{\bmalpha}\times \bbC^N_{\bmbeta}\times \bbC^{N(N-1)/2}_{\bmgamma}$ such that 
	\begin{equation}
	\Big[ \prod_{i=1}^N |x_i|^{\alpha_i}|1-x_i|^{\beta_i}  \Big] \Big[ \prod_{1\leq j<k \leq N} (x_k-x_j)^{2\gamma_{j,k}} \Big] \Big[\prod_{\mathrm{F} \in \calF (K_{\ell,m,n}) } x_{\mathrm{F}}^{d_{\mathrm{F}}}\Big]  \in L^1(\triangle_{\ell,m,n},\dd x_1\cdots \dd x_N )
	\label{eq:misc_b31}
	\end{equation}
	for all $\{d_{\mathrm{F}} \}_{\mathrm{F} \in \calF(K_{\ell,m,n}) } \in \calD$, and  
	\item  $F$ has the form 
	\begin{equation} 
	F= \sum_{\{d_{\mathrm{F}} \}_{\mathrm{F} \in \calF(K_{\ell,m,n}) }} \Big[ \prod_{\mathrm{F} \in \calF (K_{\ell,m,n}) } x_{\mathrm{F}}^{d_{\mathrm{F}}} \Big] F_{\{d_{\mathrm{F}} \}_{\mathrm{F}\in \calF(K_{\ell,m,n})} }
	\label{eq:misc_uyu}
	\end{equation} 
	for some $F_{\{d_{\mathrm{F}} \}_{\mathrm{F}\in \calF(K_{\ell,m,n})} } \in C^\infty(K_{\ell,m,n})$. 
\end{itemize}
We denote the set of such $F$ by $\calA^{\calD}(K_{\ell,m,n})$. From the definition \cref{eq:misc_tel} of $\triangle_{\ell,m,n}$, the integrand is nonvanishing there, so the the absolute values in \cref{eq:misc_snf} amount to a choice of branch.

Observe that $\Omega_{\ell,m,n}[\calD]$ is a nonempty, open, and connected subset of $\bbC^{2N+N(N-1)/2}$.  In the case $N=1$, we consider $\Omega_{\ell,m,n}[\calD]$ as a subset of $\bbC^2_{\alpha,\beta}$. 

We write $\Omega[F]$ to denote $\Omega_{\ell,m,n}[\calD]$ for arbitrary $\calD$ such that $F\in \calA^{\calD}(K_{\ell,m,n})$. 
Let 
\begin{equation} 
\Omega_{\ell,m,n}=\Omega_{\ell,m,n}[\{\{0 \}_{\mathrm{F}  \in \calF(K_{\ell,m,n})}\} ].
\end{equation}  
If $\varphi\in C^\infty(\triangle_{\ell,m,n})$, then we can consider $\varphi$ as an element of $C^\infty(K_{\ell,m,n})$, so $S_{\ell,m,n}[\varphi]:\Omega_{\ell,m,n}\to \bbC$ is well-defined, and $\Omega_{\ell,m,n}[\varphi]\supseteq \Omega_{\ell,m,n}$. 
If $f\in \bbC[x_1,\ldots,x_N]$, then the lift of $f\varphi$ is also a classical symbol on $K_{\ell,m,n}$ (it is smooth if $\ell,n=0$, but not necessarily otherwise), so 
\begin{equation} 
S_{\ell,m,n}[f\varphi]:\Omega_{\ell,m,n}[f]\to \bbC
\end{equation} 
is well-defined, except now we may have $\Omega_{\ell,m,n}[f\varphi] \not\subseteq \Omega_{\ell,m,n}$ if $\ell\neq 0$ or $n\neq 0$.

In the special case when $\ell,n=0$ and $m=N$, we use the abbreviations $\Omega_{0,N,0}=\Omega_N$, $\Omega_{0,N,0}[\bullet] = \Omega_N[\bullet]$, and 
\begin{equation} 
S_{0,N,0}[F](\bmalpha,\bmbeta,\bmgamma) = S_{N}[F](\bmalpha,\bmbeta,\bmgamma),
\end{equation} 
this being consistent with our earlier notation. 

As in the introduction, when $\bmalpha,\bmbeta,\bmgamma$ are constant, we just write `$\alpha$' in place of `$\bmalpha$,' `$\beta$' in place of `$\bmbeta$,' and `$\gamma$' in place of `$\bmgamma$.' 
Let $U_{\ell,m,n}[\bullet]$ denote the set of $(\alpha,\beta,\gamma)\in \bbC^3$ such that $(\bmalpha,\bmbeta,\bmgamma) \in \Omega_{\ell,m,n}[\bullet]$ holds when $\bmalpha=\alpha$, $\bmbeta=\beta$, and $\bmgamma=\gamma$.

Similar abbreviations will be used throughout the rest of this paper. 

In addition to the general Selberg-like integral above, we have the following general integral of Dotsenko--Fateev type:  
\begin{equation}
I_{\ell,m,n}[F](\bmalpha,\bmbeta,\bmgamma) = \int_{\square_{\ell,m,n}} \Big[ \prod_{i=1}^N |x_i|^{\alpha_i}|1-x_i|^{\beta_i}  \Big] \Big[ \prod_{1\leq j<k \leq N} (x_k-x_j+i0)^{2\gamma_{j,k} } \Big] F  \dd x_1\cdots \dd x_N
\label{eq:misc_inf}
\end{equation}
for $(\bmalpha,\bmbeta,\bmgamma) \in V_{\ell,m,n}[\calD]$, 
where now $\calD$ denotes a finite collection of indexed sets $\{d_{\mathrm{F}}\}_{\mathrm{F} \in \calF(A_{\ell,m,n})}\subseteq \bbC$, 
\begin{itemize}
	\item $V_{\ell,m,n}[\calD]$ denotes the set of $(\bmalpha,\bmbeta,\bmgamma) \in \bbC^{2N+N(N-1)/2}$ for which the integrand in \cref{eq:misc_inf}  lies in $L^1(\square_{\ell,m,n}, \dd x_1\cdots \dd x_N)$ -- that is the set of $(\bmalpha,\bmbeta,\bmgamma)$ such that 
	\begin{equation}
	\Big[ \prod_{i=1}^N |x_i|^{\alpha_i}|1-x_i|^{\beta_i}  \Big] \Big[ \prod_{1\leq j<k \leq N} |x_k-x_j|^{2\gamma_{j,k}} \Big] \Big[\prod_{\mathrm{F} \in \calF (K_{\ell,m,n}) } x_{\mathrm{F}}^{d_{\mathrm{F}}}\Big]  \in L^1(\square_{\ell,m,n},\dd x_1\cdots \dd x_N )
	\end{equation}
	for all $\{d_{\mathrm{F}} \}_{\mathrm{F} \in \calF(A_{\ell,m,n}) } \in \calD$, and 
	\item $F$ has the form \cref{eq:misc_uyu} for $F_{\{d_{\mathrm{F}}\}_{\mathrm{F} \in \calF(A_{\ell,m,n}) }}\in C^\infty(A_{\ell,m,n})$. 
\end{itemize}
In \cref{eq:misc_inf}, $x^\gamma = e^{\pi i \gamma} e^{\gamma \log |x|}$ if $x<0$ and $x^{\gamma} = e^{\gamma \log x}$ if $x>0$. 
We apply abbreviations for Dotsenko--Fateev-like integrals that are analogous to those used for Selberg-like integrals.  

Let $W_{\ell,m,n}[\bullet]$ denote the set of $(\alpha,\beta,\gamma)\in \bbC^3$ such that $(\bmalpha,\bmbeta,\bmgamma) \in V_{\ell,m,n}[\bullet]$ holds when $\bmalpha=\alpha$, $\bmbeta=\beta$, and $\bmgamma=\gamma$. 
Let $W_{\ell,m,n}^{\mathrm{DF0}}[F]$ denote the set of $(\alpha_-,\alpha_+,\beta_-,\beta_+,\gamma_-,\gamma_0,\gamma_+) \in \bbC^7$ such that $(\bmalpha^{\mathrm{DF0}},\bmbeta^{\mathrm{DF0}},\bmgamma^{\mathrm{DF0}}) \in V_{\ell,m,n}[F]$. 
Let 
\begin{equation}
I^{\mathrm{DF0};\mathtt{S}}_{\ell,m,n}[F](\alpha_-,\alpha_+,\beta_-,\beta_+,\gamma_-,\gamma_0,\gamma_+) = I_{\ell,m,n}[F](\bmalpha^{\mathrm{DF0}},\bmbeta^{\mathrm{DF0}},\bmgamma^{\mathrm{DF0}}).
\label{eq:misc_xb1}
\end{equation}

This section is split into many short subsections. 
The general analytic framework in which the extension is performed is discussed in \S\ref{subsec:generalities}, and the specific application to Selberg-like integrals is contained in \S\ref{subsec:specifics}. We prove a family of identities relating $I_{\ell,m,n},I_{\ell,n,m},I_{n,\ell,m},\cdots$ in \S\ref{subsec:simple}.
As preparation for our discussion of singularity removal in the DF-symmetric case, we discuss in \S\ref{subsec:alternative} an alternative regularization procedure suggested by Dotsenko--Fateev that works for some suboptimal range of parameters (in particular allowing $\gamma_0=-1$, but not allowing the real parts of $\alpha_-,\alpha_+,\beta_-,\beta_+$ to be too negative). 
It should be remarked that this regularization technique can be combined with that in \S\ref{subsec:generalities} to yield proofs of the main theorems without the technicalities associated with needing to understand the analyticity of products of distributions like $(y \pm i0)^\lambda$ in $\lambda$. As this lacks the purely analytic flavor of the proof in \S\ref{subsec:generalities}, it is not the approach we follow here.
The $I_{\ell,m,n}$ are related to the Selberg-like integrals $S_{\ell,m,n}$ in \S\ref{subsec:symmetrization}.
A key lemma used in the removal of singularities is in \S\ref{subsec:relations}. This lemma is a generalization of a result proven by Aomoto \cite{Ao} and discussed heuristically by Dotsenko--Fateev \cite{DF2}. For completeness and later convenience, we record in \S\ref{subsec:continuation_symmetric} the symmetric and DF-symmetric cases of the results in \S\ref{subsec:specifics} regarding the Dotsenko--Fateev integrals.

Let $\frakS_{\ell,m,n}=\frakS_\ell\times \frakS_m\times \frakS_n$, which we consider as the subgroup of $\frakS_N$ leaving each of $\calI_1,\calI_2,\calI_3$ invariant, where $\calI_1,\calI_2,\calI_3$ are as in the previous section, a.k.a.\ the Young subgroup associated with the partition $\{1,\ldots,n\}= \calI_1\sqcup \calI_2 \sqcup \calI_3$. 
Given a permutation $\sigma \in \frakS_{\ell,m,n}$, let 
\begin{equation}
I_{\ell,m,n}[F](\bmalpha,\bmbeta,\bmgamma)^\sigma = \int_{\square_{\ell,m,n}} \Big[ \prod_{i=1}^N |x_i|^{\alpha_i}|1-x_i|^{\beta_i}  \Big] \Big[ \prod_{1\leq j<k \leq N} (x_{\sigma(k)}-x_{\sigma(j)}+i0)^{2\gamma_{\sigma(j),\sigma(k)} } \Big] F  \dd x_1\cdots \dd x_N, 
\end{equation}
defined for $(\bmalpha,\bmbeta,\bmgamma) \in V_{\ell,m,n}[F]$.
If we define $\bmalpha^\sigma,\bmbeta^\sigma,\bmgamma^\sigma$ by $\alpha_j^{\sigma} = \alpha_{\sigma(j)}$, $\beta_j^\sigma = \beta_{\sigma(j)}$, and $\gamma_{j,k}^{\sigma} = \gamma_{\sigma(j),\sigma(k)}$, and 
\begin{equation}
F^\sigma(y_1,\ldots,y_N) = F(y_{\sigma^{-1}(1)},\ldots,y_{\sigma^{-1}(N)} ),
\end{equation}
then $I_{\ell,m,n}[F](\bmalpha,\bmbeta,\bmgamma)^\sigma = I_{\ell,m,n}[F^\sigma](\bmalpha^\sigma,\bmbeta^\sigma,\bmgamma^\sigma)$. This relation will be very useful below. More generally, for any $\sigma \in \frakS_N$, let  
\begin{align}
I_{\ell,m,n}[F](\bmalpha,\bmbeta,\bmgamma)^\sigma &=
I_{\ell,m,n}[F^\sigma](\bmalpha^\sigma,\bmbeta^\sigma,\bmgamma^\sigma) \\
S_{\ell,m,n}[F](\bmalpha,\bmbeta,\bmgamma)^\sigma &=
S_{\ell,m,n}[F^\sigma](\bmalpha^\sigma,\bmbeta^\sigma,\bmgamma^\sigma),
\end{align}
defined for $(\bmalpha,\bmbeta,\bmgamma) \in V_{\ell,m,n}[F]$ in the former case or for 
\begin{equation} 
(\bmalpha,\bmbeta,\bmgamma)\in \Omega_{\ell,m,n}[F]^\sigma = \{(\bmalpha,\bmbeta,\bmgamma)\in \bbC^{2N+N(N-1)/2} : (\bmalpha^\sigma,\bmbeta^\sigma,\bmgamma^\sigma) \in \Omega_{\ell,m,n}[F^\sigma] \}
\end{equation}  
in the latter case. We will use similar notation for other subsets of $\bbC^{2N+N(N-1)/2}$ below, as well as for the meromorphic extensions of $S_{\ell,m,n}[F]$ and $I_{\ell,m,n}[F]$.

\subsection{Some generalities}
\label{subsec:generalities}
Let $N\in \bbN$ be arbitrary. 
For a Fr\'echet space $\calX$, let $ \scrO(\bbC^{N};\calX)$ denote the Fr\'echet space  of entire $\calX$-valued functions on $\bbC^N$, where the topology is that of uniform convergence in compact subsets, as measured with respect to each Fr\'echet seminorm on $\calX$, and similarly for $\calX$ an LF-space. 
Let $\calE'(\bbR^N)$ denote the LCTVS of compactly supported distributions on $\bbR^N$. By the Schwartz representation theorem, 
\begin{equation}
\calE'(\bbR^N)=\cup_{m\in \bbR} H_{\mathrm{c}}^{m,s}(\bbR^N),
\end{equation} 
where $H_{\mathrm{c}}^{m}(\bbR^N)$ is the set of compactly supported elements of $H^{m}(\bbR^N)$.

Let $N\in \bbN^+$, $k\in \{0,\ldots,N\}$, and $\kappa \in \bbN$. 
For any 
\begin{equation} 
\psi \in C_{\mathrm{c}}^\infty(\bbR^k_{t_1,\cdots,t_k} ; \calE'(\bbR^{N-k}_{t_{k+1},\cdots,t_N})) = \bigcup_{m,s\in \bbR} C_{\mathrm{c}}^\infty(\bbR^k_{t_1,\cdots,t_k} ; H_{\mathrm{sc,c}}^{m,s}(\bbR^{N-k}) )
\label{eq:misc_psi}
\end{equation} 
let, for $\bmrho = (\rho_1,\ldots,\rho_k)$,  
\begin{equation}
I_{N,k,\kappa}[\psi](\bmrho) = \int_0^\infty \cdots \int_0^\infty t_1^{\rho_1}\cdots t_k^{\rho_k} \langle 1,\psi(t_1,\ldots,t_k,-) \rangle \dd t_1\cdots \dd t_k, 
\label{eq:misc_ink}
\end{equation}
which we abbreviate as 
\begin{equation}
I_{N,k,\kappa}[\psi](\bmrho) = \int_{\bbR^N_k} t_1^{\rho_1}\cdots t_k^{\rho_k}  \psi(t) \dd^N t.
\end{equation} 
Here, $\bbR^N_k = [0,\infty)^k_{t_1,\cdots,t_k} \times \bbR^{n-k}_{t_{k+1},\cdots,t_N}$, and $I_{N,k,\kappa}[\psi](\bmrho)$ is defined initially for $\Re \rho_1, \cdots ,\Re \rho_k> -1$, for which the right-hand side of \cref{eq:misc_ink} is a well-defined integral. 

Let $H_{\mathrm{sc,c}}^{m,s}(\bbR^N)$ denote the set of compactly supported elements of $H_{\mathrm{sc}}^{m,s}(\bbR^N) = \langle r \rangle^{-s} H^m(\bbR^N)$.  
Let 
\begin{equation}
\scrO(\bbC^k\times \bbC^\kappa; C_{\mathrm{c}}^\infty(\bbR^k_{t_1,\cdots,t_k} ; \calE'(\bbR^{N-k}_{t_{k+1},\cdots,t_N}))) = \bigcap_{\Omega} \bigcup_{m,s\in \bbR} 	\scrO(\Omega;C_{\mathrm{c}}^\infty(\bbR^k_{t_1,\cdots,t_k} ; H_{\mathrm{sc,c}}^{m,s}(\bbR^{N-k}) )),
\label{eq:misc_162} 
\end{equation}
endowed with the strongest topology such that the inclusions 
\begin{equation} 
\bigcap_{\Omega} 
\scrO(\Omega;C_{\mathrm{c}}^\infty(\bbR^k_{t_1,\cdots,t_k} ; H_{\mathrm{sc,c}}^{m,s}(\bbR^{N-k}) )) \hookrightarrow 	\scrO(\bbC^k\times \bbC^\kappa; C_{\mathrm{c}}^\infty(\bbR^k_{t_1,\cdots,t_k} ; \calE'(\bbR^{N-k}_{t_{k+1},\cdots,t_N})))
\end{equation} 
are all continuous, where the left-hand side is an LF space. 
Here, $\Omega$ is varying over bounded domains in $\bbC^k\times \bbC^\kappa$. We are identifying functions on $\bbC^{k}\times \bbC^\kappa$ with their restrictions to subdomains. 
In other words, an element of the space defined by \cref{eq:misc_162} is locally an analytic family of elements of  $C_{\mathrm{c}}^\infty(\bbR^k_{t_1,\cdots,t_k} ; H_{\mathrm{sc,c}}^{m,s}(\bbR^{N-k}) )$ for some $m,s\in \bbR$ which are allowed to depend on $\Omega$.

\begin{proposition}
	Suppose that, for each $\bmrho \in \bbC^k$ and $\bmdelta \in \bbC^\kappa$, we are given some $\psi(-;\bmrho,\bmdelta)$ as in \cref{eq:misc_psi}, 
	depending entirely on $\bmrho,\bmdelta$ in the sense that the map 
	\begin{equation} 
	\bbC^k\times \bbC^\kappa \ni (\bmrho,\bmdelta) \mapsto \psi \in C_{\mathrm{c}}^\infty(\bbR^k;\calE'(\bbR^{N-k}))
	\end{equation} 
	is entire, i.e.\ lies in $\scrO(\bbC^k\times \bbC^\kappa; C_{\mathrm{c}}^\infty(\bbR^k_{t_1,\cdots,t_k} ; \calE'(\bbR^{N-k}_{t_{k+1},\cdots,t_N})))$. 
	Define 
	\begin{equation} 
	I_{N,k,\kappa}[\psi](\bmrho,\bmdelta) = I_{N,k,\kappa}[\psi(\bmrho,\bmdelta)](\bmrho).
	\end{equation}  
	Then, the function $J_{N,k,\kappa}[\psi]$ defined by 
	\begin{equation}
	I_{N,k,\kappa}[\psi](\bmrho,\bmdelta) = \Big[ \prod_{j=1}^k\Gamma(\rho_j+1) \Big] J_{N,k,\kappa}[\psi](\bmrho,\bmdelta)
	\label{eq:misc_n31}
	\end{equation}
	extends to an entire function on $\bbC_{\bmrho}^k\times \bbC_{\bmdelta}^\kappa$. 
	Moreover, the function
	\begin{equation} 
	J_{N,k,\kappa}[-]:\scrO(\bbC^{k}\times \bbC^\kappa ;C^\infty_{\mathrm{c}}( \bbR^k_{t_1,\ldots,t_k}; \calE'(\bbR^{N-k}_{t_{k+1},\ldots,t_N} ))) \ni 
	\psi
	\mapsto J_{N,k,\kappa}[\psi ]  \in \scrO(\bbC^{k}\times \bbC^\kappa) 
	\end{equation}
	is continuous. 
	\label{prop:basic_local}
\end{proposition}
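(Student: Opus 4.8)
The plan is to reduce everything to a one-dimensional Mellin-type continuation performed by repeated integration by parts, which produces the $\Gamma$-factors automatically; the real difficulty lies not in the analysis but in verifying that the construction respects the Fr\'echet/LF-space and holomorphic-family structure, so that $J_{N,k,\kappa}[\psi]$ lands continuously in $\scrO(\bbC^k\times\bbC^\kappa)$.

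First I would dispose of the distributional variables. Since $\langle 1,-\rangle$ is a continuous linear functional on $\calE'(\bbR^{N-k})$ (a compactly supported distribution may be paired with the smooth function $1$), composing it with $\psi$ yields a scalar function $g(t;\bmrho,\bmdelta)=\langle 1,\psi(t,-;\bmrho,\bmdelta)\rangle$ that is smooth and compactly supported in $t=(t_1,\ldots,t_k)\in\bbR^k$ and depends holomorphically on $(\bmrho,\bmdelta)$. As $\langle 1,-\rangle$ is continuous on each $H^{m,s}_{\mathrm{sc,c}}$ factor, $\psi\mapsto g$ is a continuous linear map into $\scrO(\bbC^k\times\bbC^\kappa;C_{\mathrm{c}}^\infty(\bbR^k))$. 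It then remains to treat $I_{N,k,\kappa}[\psi](\bmrho)=\int_{[0,\infty)^k}\prod_j t_j^{\rho_j}\,g(t)\dd t$ for such scalar $g$, the parameters $\bmdelta$ being now passive.

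The analytic core is the single-variable identity: for $g\in C_{\mathrm{c}}^\infty([0,\infty))$ and $\Re\rho>-1$, integrating by parts $M$ times — the boundary contributions at the origin vanishing on the domain of validity of each step and those at $+\infty$ vanishing since $g$ is compactly supported — gives
\begin{equation}
	\int_0^\infty t^\rho g(t)\dd t = \frac{(-1)^M\,\Gamma(\rho+1)}{\Gamma(\rho+M+1)}\int_0^\infty t^{\rho+M}\,g^{(M)}(t)\dd t,
\end{equation}
using $(\rho+1)\cdots(\rho+M)=\Gamma(\rho+M+1)/\Gamma(\rho+1)$. Dividing by $\Gamma(\rho+1)$, the right-hand side is holomorphic for $\Re\rho>-M-1$, and since $M$ is arbitrary the quotient extends to an entire function. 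Applying this in each of the $k$ variables separately (the integrations commute by Fubini, as $g$ is smooth and compactly supported) yields, for any tuple $(M_1,\ldots,M_k)$,
\begin{equation}
	J_{N,k,\kappa}[\psi](\bmrho,\bmdelta) = \Big[\prod_{j=1}^k \frac{(-1)^{M_j}}{\Gamma(\rho_j+M_j+1)}\Big]\int_{[0,\infty)^k}\Big[\prod_{j=1}^k t_j^{\rho_j+M_j}\Big]\big(\partial_{t_1}^{M_1}\cdots\partial_{t_k}^{M_k}g\big)(t;\bmrho,\bmdelta)\dd t.
\end{equation}
The prefactor is entire and the integral is holomorphic on $\{\Re\rho_j>-M_j-1:j\}$ — joint holomorphy in $(\bmrho,\bmdelta)$ following by Morera's theorem together with Fubini — while the representations for different $(M_j)$ agree on their common domain by the identity above. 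Letting each $M_j\to\infty$ shows $J_{N,k,\kappa}[\psi]$ is entire on $\bbC^k\times\bbC^\kappa$.

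The last and most delicate point, and the \emph{main obstacle}, is continuity of $\psi\mapsto J_{N,k,\kappa}[\psi]$. It suffices to bound $J$ on an arbitrary compact $K\subset\bbC^k\times\bbC^\kappa$: choosing $M_j$ so that $K\subset\{\Re\rho_j>-M_j-1\}$, the displayed formula bounds $\sup_K|J_{N,k,\kappa}[\psi]|$ by a $K$-dependent constant times the $L^1$-norm, over the fixed compact support, of $\partial_{t_1}^{M_1}\cdots\partial_{t_k}^{M_k}g$, which is in turn controlled by finitely many defining seminorms of $g$, hence of $\psi$. Since each operation involved — pairing with $1$, $t$-differentiation, multiplication by $t^{\rho+M}$, integration over the compact support, and the fixed entire multiplier $1/\Gamma(\rho_j+M_j+1)$ — is continuous between the relevant spaces and preserves the holomorphic dependence on $(\bmrho,\bmdelta)$, the map factors as a composition of continuous maps, continuous on each $H^{m,s}_{\mathrm{sc,c}}$-level of the inductive limit, and therefore continuous into $\scrO(\bbC^k\times\bbC^\kappa)$. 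Thus the whole difficulty is this functional-analytic bookkeeping — tracking the LF-space inductive limit and the holomorphic-family structure — rather than the elementary $\Gamma$-absorption at the heart of the argument.
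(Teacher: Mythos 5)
Your proof is correct, but it takes a genuinely different route from the paper's. The paper argues by induction on $k$: it Taylor-expands $\psi$ in $t_1$ about $t_1=0$, integrates the finitely many Taylor terms explicitly to produce simple poles $1/(\rho_1+j+1)$ multiplying lower-dimensional integrals $I_{N-1,k-1,\kappa}[\psi^{(j)}]$, observes that these poles cancel against those of $\Gamma(\rho_1+1)$, and treats the Taylor remainder as a holomorphic term on the enlarged domain, carrying the distribution-valued structure of $\psi$ through the induction. You instead pair with $1$ at the outset to reduce to a scalar $g\in C_{\mathrm{c}}^\infty(\bbR^k)$ and then perform iterated integration by parts in all $k$ variables at once, converting $\bigl[\prod_j\Gamma(\rho_j+1)^{-1}\bigr]I_{N,k,\kappa}[\psi]$ into $\bigl[\prod_j(-1)^{M_j}\Gamma(\rho_j+M_j+1)^{-1}\bigr]$ times an integral convergent and holomorphic on $\{\Re\rho_j>-M_j-1\}$; no induction and no pole-cancellation bookkeeping is needed, since the $\Gamma$-factors are absorbed algebraically, and the different choices of $(M_j)$ agree by the identity theorem. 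What the paper's method buys is explicit access to the pole coefficients (the integrals of the Taylor coefficients $\psi^{(j)}$), which is what underlies the sharpness discussion in the introduction --- the nonvanishing of $S_{N;\mathrm{reg}}[F]$ on certain real loci when $F>0$ comes from the residue being a convergent integral of a positive integrand; your formula obscures this, since the sign of $\int t^{\rho+M}\partial^M g\,\dd t$ is not evident. What your method buys is brevity and a single closed formula exhibiting the entire extension. Two points to tighten: (i) with $M_j$ chosen only so that $\Re\rho_j>-M_j-1$ on the compact set $K$, the weight $t_j^{\Re\rho_j+M_j}$ may blow up (integrably) at $t_j=0$, so your continuity estimate should pair the $L^\infty$-norm of $\partial^M g$ with the $L^1$-norm of the weight rather than the reverse, or simply take each $M_j$ one unit larger; (ii) the continuity claim should be stated as continuity on each step $\scrO(\bbC^{k}\times\bbC^\kappa;C_{\mathrm{c}}^\infty(\bbR^k;H_{\mathrm{sc,c}}^{m,s}(\bbR^{N-k})))$ of the inductive limit, which gives continuity on the union by its universal property --- the same (lightly sketched) functional-analytic bookkeeping the paper dispatches in its final sentence.
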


Cf.\ \cite{GS}\cite[Lemma 10.7.9]{Varchenko}.

\begin{proof}
	
	The $k=0$ case is essentially tautologous.
	
	We now proceed inductively on $k$. Let $k\geq 1$, and assume that we have proven the result for smaller $k$. Expanding $\psi$ in Taylor series around $t_1=0$,  there exist
	\begin{align}
	\psi^{(j)} &\in \scrO\left(\bbC^{k}\times \bbC^\kappa ;C^\infty_{\mathrm{c}}\big( \bbR^{k-1}_{t_2,\ldots,t_k}; \calE'(\bbR^{N-k}_{t_{k+1},\ldots,t_N} )\big)\right) \\
	E^{(j)} &\in \scrO\left(\bbC^{k}\times \bbC^\kappa ;C^\infty\big(\bbR_{t_1};C^\infty_{\mathrm{c}}( \bbR^{k-1}_{t_2,\ldots,t_k}; \calE'(\bbR^{N-k}_{t_{k+1},\ldots,t_N}))\big)\right),
	\end{align}	
	which can be regarded as smooth functions (or generalized functions) of $t_1,\ldots,t_N$, depending analytically on parameters $\bmrho \in \bbC^k$ and $\bmdelta \in \bbC^\kappa$. 
	such that 
	\begin{equation}
	\psi(t_1,\cdots,t_N;\bmrho,\bmdelta) = \sum_{j=0}^J  t_1^j  \psi^{(j)}(t_2,\cdots,t_{N};\bmrho,\bmdelta) + t_1^{J+1} E^{(J+1)}(t_1,\cdots,t_N;\bmrho,\bmdelta)
	\label{eq:misc_taz}
	\end{equation}
	for all $J\in \bbN$. 
	Let $K \subset \bbC^{k+\kappa}$ be an arbitrary nonempty compact set. There exists some $T>0$ such that $\operatorname{supp} \psi(-;\bmrho,\bmdelta)\subseteq \{-T\leq t_1 \leq T\}$ for all $(\bmrho,\bmdelta)\in K$. Then, if $\Re \rho_1,\cdots,\Re \rho_k>-1$ and $(\bmrho,\bmdelta)\in K$, 
	\begin{equation}
	I_{N,k,\kappa}[\psi](\bmrho,\bmdelta) = \sum_{j=0}^J \frac{ I_{N-1,k-1,\kappa}[\psi^{(j)}](\hat{\bmrho}, \bmdelta)}{\rho_1+j+1} T^{\rho_1+j+1} 
	+  \int_0^T t_1^{\rho_1+J+1} I_{N-1,k-1}[E^{(J+1)}(t_1,-)](\hat{\bmrho},\bmdelta) \dd t_1,
	\label{eq:misc_861}
	\end{equation}
	where $\hat{\bmrho} = (\rho_2,\cdots,\rho_k)$.
	We now \emph{define} $J_{N,k,\kappa}[\psi](\bmrho,\bmdelta): \{\Re \rho_1 >-2-J\}\times \bbC^\kappa_{\bmdelta}\to \bbC $ by 
	\begin{multline}
	J_{N,k,\kappa}[\psi](\bmrho,\bmdelta) = \frac{1}{\Gamma(\rho_1+1)}\sum_{j=0}^J \frac{ J_{N-1,k-1,\kappa}[\psi^{(j)}](\hat{\bmrho}, \bmdelta)}{\rho_1+j+1} T^{\rho_1+j+1} 
	\\ +  \frac{1}{\Gamma(\rho_1+1)}\int_0^T t_1^{\rho_1+J+1} J_{N-1,k-1}[E^{(J+1)}(t_1,-)](\hat{\bmrho},\bmdelta) \dd t_1.
	\label{eq:misc_862}
	\end{multline}
	By construction, \cref{eq:misc_n31} holds when $\Re \rho_1,\cdots,\Re \rho_k>-1$. By the continuity clause of the inductive hypothesis, the integral in \cref{eq:misc_861} is a well-defined Bochner integral, for each individual $(\bmrho,\bmdelta) \in  \{\Re \rho_1 >-2-J\}\times \bbC^\kappa$. 
	Moreover, the right-hand side of \cref{eq:misc_862} depends analytically on $(\bmrho,\bmdelta) \in \{\Re \rho_1 >-1-J\}\times \bbC^\kappa$. By the inductive hypothesis, this is true for the sum on the first line (multiplied by $\Gamma(\rho_1+1)^{-1}$), as the simple poles due to the factors of $1/(\rho_1+j+1)$ cancel with those of $\Gamma(\rho_1+1)$. So, in order to show that the whole right-hand side of \cref{eq:misc_862} depends analytically on $(\bmrho,\bmdelta)$ in this domain, we can show it for 
	\begin{equation}
	\int_0^T t_1^{\rho_1+J+1} J_{N-1,k-1}[E^{(J+1)}(t_1,-)](\hat{\bmrho},\bmdelta) \dd t_1.
	\label{eq:misc_543}
	\end{equation}
	Justifying differentiation under the integral sign, this is a $C^1$-function of $(\Re \rho_1,\Im \rho_1) \in \{(u,v)\in \bbR^2,u> -1-J\}$, and it satisfies the Cauchy--Riemann equations, so it follows that the integral in \cref{eq:misc_543} is analytic as a function of $\rho_1 \in \{\Re \rho_1 >-1-J\}$, for each fixed $\hat{\bmrho}\in \bbC^{k-1}$ and $\bmdelta \in \bbC^\kappa$. Adding $\hat{\bmrho},\bmdelta$-dependence does not change the argument.
	
	So, the formula \cref{eq:misc_861} yields an analytic extension of $I_{N,k,\kappa}$, and we can take a union over all $J \in \bbN$, the various partial extensions agreeing with each other via analyticity. 
	The continuity clause is evident from the formula \cref{eq:misc_862} and the inductive hypothesis.
\end{proof}

Consequently, $I_{N,k,\kappa}[\psi]$ admits an analytic continuation $\dot{I}_{N,k,\kappa}[\psi]:\Omega\to \bbC$ to the set $\Omega=(\bbC^k_{\bmrho} \backslash \bigcup_{j \in \{1,\ldots,k\}} \{\rho_j \in \bbZ^{\leq -1}\})\times \bbC^\kappa_{\bmdelta}$, and the map
\begin{equation} 
\dot{I}_{N,k,\kappa}[-]:\scrO(\bbC^{k}\times \bbC^\kappa ;C^\infty_{\mathrm{c}}( \bbR^k_{t_1,\ldots,t_k}; \calE'(\bbR^{N-k}_{t_{k+1},\ldots,t_N} ))) \ni 
\psi
\mapsto \dot{I}_{N,k,\kappa}[\psi ]  \in \scrO(\Omega) 
\end{equation} 
is continuous.

If $\calP$ is a consistently orientable collection of codimension-1 interior p-submanifolds on a mwc $M$, then, letting $x_{\mathrm{F}}$ for $\mathrm{F}\in \calF(M)$ denote a bdf of the face $\mathrm{F}$, it is the case that, for any $\bmdelta \in \bbC^\calP$ and $\bmrho\in \bbC^{\calF(M)}$, the product 
\begin{equation}
\omega(\bmrho,\bmdelta)=	\prod_{\mathrm{F}\in \calF(M)} x_{\mathrm{F}}^{\rho_{\mathrm{F}}}\prod_{P\in \calP} (y_P + i0)^{\delta_P}:  \dot{C}^\infty_{\mathrm{c}}(M;\Omega M) \ni \mu \mapsto  \lim_{\varepsilon \to 0^+} \int_M  \prod_{\mathrm{F}\in \calF(M)}\prod_{P\in \calP} x_{\mathrm{F}}^{\rho_{\mathrm{F}}}  (y_P + i\varepsilon)^{\delta_P}  \mu 
\label{eq:misc_j78}
\end{equation}
is a well-defined classical distribution on $M$, where $\{y_P\}_{P\in \calP}$ are consistently oriented defining functions. 
(Here, $\dot{C}^\infty_{\mathrm{c}}(M;\Omega M)$ is the set of compactly supported smooth densities on $M$ that are Schwartz at each boundary hypersurface.)
That is, $\omega$ is an extendable distribution on $M$ and defines, for small $\epsilon>0$, an element of $C^\infty([0,\epsilon)_{x_{\mathrm{F}}} ; \calD'(\mathrm{F}))$ for each face $\mathrm{F}$.
We write the right-hand side of \cref{eq:misc_j78} as $\int_M \omega(\bmrho,\bmdelta) \mu$. 
More generally, if $\mu \in C_{\mathrm{c}}^\infty(M;\Omega M)$, then 
\begin{equation}
\lim_{\varepsilon \to 0^+ } \int_M  \prod_{\mathrm{F}\in \calF(M)}\prod_{P\in \calP} x_{\mathrm{F}}^{\rho_{\mathrm{F}}}  (y_P + i\varepsilon)^{\delta_P}  \mu = \int_M \omega(\bmrho,\bmdelta) \mu 
\end{equation}
exists whenever $\rho_{\mathrm{F}}>-1$ for all $\mathrm{F}\in \calF(M)$.  

Let $\varkappa\in \bbN$.
Suppose that we are given some entire family 
\begin{equation} 
\mu: \bbC^{\calF(M)}\times \bbC^\calP \times \bbC^{\varkappa}\to  C^\infty_{\mathrm{c}}(M;\Omega M)
\end{equation} 
of compactly supported smooth densities $\mu(\bmrho,\bmdelta,\bmlambda) \in C^\infty_{\mathrm{c}}(M;\Omega M)$ on $M$.  Consider the function 
\begin{equation} 
I[M,\mu](\bmrho,\bmdelta,\bmlambda):\{ (\bmrho,\bmdelta,\bmlambda) \in \bbC^{\calF(M)}\times \bbC^\calP\times \bbC^\varkappa : \rho_{\mathrm{F}}>-1 \text{  for all } \mathrm{F}\in \calF(M)\} \to \bbC
\end{equation} 
defined by 
\begin{equation} 
I[M,\mu](\bmrho,\bmdelta,\bmlambda)  = \int_M \omega(\bmrho,\bmdelta) \mu(\bmrho,\bmdelta,\bmlambda). \label{eq:misc_k99}
\end{equation}

\begin{proposition}
	Suppose that, for some $N_0\in \bbN^+$, we are given an affine map $L = (L_1,L_2,L_3):\smash{\bbC^{N_0}_{\bmvarrho}}\to \smash{\bbC^{\calF(M)}_{\bmrho}}\times\bbC^{\calP}_{\bmdelta}\times \bbC^\varkappa_{\bmlambda}$ such that, for each $\mathrm{F}\in \calF(M)$, the affine functional
	\begin{equation} 
	(L\bullet)_{\mathrm{F}}: \bbC^{N_0}\ni \bmvarrho \mapsto (L_1\bmvarrho)_{\mathrm{F}} \in \bbC
	\end{equation} 
	is nonconstant. 
	Then, 
	there exist entire functions $I_{\mathrm{reg},\mathrm{f}}[M,\mu](L\bullet): \bbC^{N_0}_{\bmvarrho} \to \bbC$ associated to the minimal facets $\mathrm{f}$  of $M$ such that 
	\begin{equation}
	I[M,\mu](L\bmvarrho) = \sum_{\mathrm{f}} \Big[ \prod_{\mathrm{F}\in \calF(M), \mathrm{F}\supseteq \mathrm{f}} \Gamma(1+ (L\varrho)_{\mathrm{F}} ) \Big] I_{\mathrm{reg},\mathrm{f}}[M,\mu](L\bmvarrho) 
	\label{eq:misc_imu}
	\end{equation}
	for all $\bmvarrho\in \bbC^{N_0}$ for which the left-hand side is defined by \cref{eq:misc_k99}.
	\label{prop:optimized_mwc}
\end{proposition}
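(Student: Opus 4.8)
The plan is to deduce this global statement on $M$ from the purely local \Cref{prop:basic_local} by a partition of unity, and then to reorganize the resulting products of $\Gamma$-factors so that they become indexed by the minimal facets. It suffices to treat the case where $L$ is the identity, i.e.\ to produce entire functions $I_{\mathrm{reg},\mathrm{f}}[M,\mu]:\bbC^{\calF(M)}\times\bbC^{\calP}\times\bbC^{\varkappa}\to\bbC$ with
\begin{equation}
	I[M,\mu](\bmrho,\bmdelta,\bmlambda)=\sum_{\mathrm{f}}\Big[\prod_{\mathrm{F}\supseteq\mathrm{f}}\Gamma(1+\rho_{\mathrm{F}})\Big]I_{\mathrm{reg},\mathrm{f}}[M,\mu](\bmrho,\bmdelta,\bmlambda)
	\label{eq:plan_direct}
\end{equation}
whenever $\Re\rho_{\mathrm{F}}>-1$ for all $\mathrm{F}\in\calF(M)$, since \cref{eq:misc_imu} then follows by precomposing with $L$ and restricting to the open set where $\Re(L\bmvarrho)_{\mathrm{F}}>-1$ for all $\mathrm{F}$, on which the left-hand side of \cref{eq:misc_imu} is the convergent integral \cref{eq:misc_k99}; the nonconstancy of each $(L\bullet)_{\mathrm{F}}$ guarantees that the pulled-back factors $\Gamma(1+(L\varrho)_{\mathrm{F}})$ are genuine (nonconstant) meromorphic functions of $\bmvarrho$, so that no degeneracy occurs. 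Since $M$ is compact — as are the associahedra to which the proposition is applied — and $\mu$ is valued in $C^\infty_{\mathrm{c}}(M;\Omega M)$, I would fix a finite open cover $\{U_a\}$ of $M$ and a subordinate partition of unity $\{\chi_a\}$, $\sum_a\chi_a\equiv 1$, arranged so that each $U_a$ is a coordinate chart meeting only the faces $\mathrm{F}\supseteq G_a$ and only the $P\in\calP$ through a fixed point $p_a$, where $G_a$ is the smallest facet through $p_a$; this is possible because each face and each (closed) p-submanifold not through $p_a$ stays away from a small enough neighborhood. Writing $I[M,\mu]=\sum_a\int_M\omega\,\chi_a\mu$ then reduces everything to a single chart.

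The heart of the matter is the local reduction of $\int_M\omega\,\chi_a\mu$ to \Cref{prop:basic_local}. Let $k=\operatorname{codim}G_a$, so the bdfs $x_{\mathrm{F}}$ for $\mathrm{F}\supseteq G_a$ may be taken to be the first $k$ coordinates $t_1,\ldots,t_k\in[0,\infty)$ of a chart $[0,\infty)^k_{t}\times\bbR^{N-k}_{u}$, with $u$ coordinatizing a neighborhood of $p_a$ in $G_a$. The factors $x_{\mathrm{F}}^{\rho_{\mathrm{F}}}$ with $\mathrm{F}\not\supseteq G_a$ and $(y_P+i0)^{\delta_P}$ with $P\not\ni p_a$ are strictly positive smooth functions on $U_a$, entire in the parameters. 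The essential point is that the \emph{interior} conormal factors can be decoupled from the boundary directions: using the vertical tubular-neighborhood structure of the $P\in\calP$ near $p_a$ (the homogeneity argument and \Cref{prop:coordinates} already invoked in the consistent-orientability discussion), one chooses the $u$-coordinates so that for each $P\ni p_a$ one has $y_P=g_P\,\tilde{y}_P$ with $g_P>0$ smooth and $\tilde{y}_P=\tilde{y}_P(u)$ a consistently oriented defining function depending only on $u$; then $(y_P+i0)^{\delta_P}=g_P^{\delta_P}(\tilde{y}_P+i0)^{\delta_P}$, and consistent orientability guarantees — exactly as in the discussion preceding the statement — that $\prod_{P\ni p_a}(\tilde{y}_P+i0)^{\delta_P}$ is a well-defined element of $\calE'(\bbR^{N-k}_u)$, entire in $\bmdelta$. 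Collecting the smooth positive factors, the density coefficient of $\mu$, and $\chi_a$ into a single
\begin{equation}
	\psi_a\in\scrO\big(\bbC^{\calF(M)}\times\bbC^{\calP}\times\bbC^{\varkappa};\,C^\infty_{\mathrm{c}}(\bbR^k_{t};\calE'(\bbR^{N-k}_{u}))\big),
\end{equation}
and regarding $(\rho_{\mathrm{F}})_{\mathrm{F}\supseteq G_a}\in\bbC^k$ as the $\bmrho$-variables and all remaining parameters as the $\bmdelta$-variables of \Cref{prop:basic_local}, that proposition produces an entire $J_a$ with
\begin{equation}
	\int_M\omega\,\chi_a\mu=\Big[\prod_{\mathrm{F}\supseteq G_a}\Gamma(1+\rho_{\mathrm{F}})\Big]J_a(\bmrho,\bmdelta,\bmlambda).
	\label{eq:plan_chart}
\end{equation}
I expect this step — verifying that the conormal singularities live entirely in the interior $u$-variables while the boundary weights occupy the $t$-variables, so that $\psi_a$ genuinely lands in the space required by \Cref{prop:basic_local} — to be the main obstacle; everything downstream is formal.

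It remains to reorganize \cref{eq:plan_chart}. Summing over $a$ expresses $I[M,\mu]$ as a sum, over the facets $G$ occurring as some $G_a$, of $\big[\prod_{\mathrm{F}\supseteq G}\Gamma(1+\rho_{\mathrm{F}})\big]$ times an entire function. For each such $G$ I would fix a minimal facet $\mathrm{f}_G\subseteq G$ (which exists since every facet of a compact mwc contains a minimal one), and use $\{\mathrm{F}:\mathrm{F}\supseteq G\}\subseteq\{\mathrm{F}:\mathrm{F}\supseteq\mathrm{f}_G\}$ together with the entirety of $z\mapsto 1/\Gamma(1+z)$ to write
\begin{equation}
	\prod_{\mathrm{F}\supseteq G}\Gamma(1+\rho_{\mathrm{F}})=\Big[\prod_{\mathrm{F}\supseteq\mathrm{f}_G}\Gamma(1+\rho_{\mathrm{F}})\Big]\prod_{\substack{\mathrm{F}\supseteq\mathrm{f}_G\\ \mathrm{F}\not\supseteq G}}\frac{1}{\Gamma(1+\rho_{\mathrm{F}})}.
\end{equation}
Absorbing the entire factor $\prod_{\mathrm{F}\supseteq\mathrm{f}_G,\,\mathrm{F}\not\supseteq G}\Gamma(1+\rho_{\mathrm{F}})^{-1}$ into the entire functions and collecting all contributions with a common $\mathrm{f}_G=\mathrm{f}$ defines $I_{\mathrm{reg},\mathrm{f}}[M,\mu]$ and yields \cref{eq:plan_direct}; continuity of $\mu\mapsto I_{\mathrm{reg},\mathrm{f}}[M,\mu]$, if desired, follows from the continuity clause of \Cref{prop:basic_local}. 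Precomposing with $L$ and setting $I_{\mathrm{reg},\mathrm{f}}[M,\mu](L\bullet)=I_{\mathrm{reg},\mathrm{f}}[M,\mu]\circ L$ then gives \cref{eq:misc_imu}.
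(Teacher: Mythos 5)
Your proposal is the paper's own proof---which consists of the single sentence ``pass to a partition of unity subordinate to a system of coordinate charts on $M$ and apply \Cref{prop:basic_local} locally''---carried out in detail, and the formal skeleton is correct: the reduction to $L=\mathrm{id}$, the choice of charts meeting only the faces containing $p_a$ and the $P\in\calP$ through $p_a$, the identification of $(\rho_{\mathrm{F}})_{\mathrm{F}\supseteq G_a}$ with the $\bmrho$-slot and of all remaining parameters with the $\bmdelta$-slot of \Cref{prop:basic_local}, and the re-indexing of the chart-by-chart $\Gamma$-products by minimal facets (absorbing the entire factors $1/\Gamma(1+\rho_{\mathrm{F}})$) are all exactly what the paper intends.

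The step you rightly single out as ``the main obstacle'' is, however, where both your write-up and the paper are thinner than the stated generality requires. The vertical tubular-neighborhood structure you invoke is proved in the paper only for $M=A_{\ell,m,n}$ and $\calP=\{H_{j,k}\}$ (the homogeneity argument in \S\ref{subsec:A}); it is not a consequence of consistent orientability for a general pair $(M,\calP)$, and neither is the conclusion of the proposition. For example, on $M=[0,2]_t\times[-1,1]_u$ take $P_1=\{u=t\}$, $P_2=\{u=-t\}$ with defining functions $y_1=t-u$, $y_2=t+u$: positive combinations $\lambda_1(\dd t-\dd u)+\lambda_2(\dd t+\dd u)$ never vanish, so this collection is consistently oriented in the paper's sense, yet $P_1,P_2$ cannot be made simultaneously vertical (their intersection is a single boundary point, not a union of fibers), the wavefront set of $(t-u+i0)^{\delta_1}(t+u+i0)^{\delta_2}$ at the origin contains purely-$\dd t$ covectors (take $\lambda_1=\lambda_2$), and a scaling computation shows that, for $\mu$ a bump density near the origin,
\begin{equation}
	\int_0^\infty\!\!\int_{\bbR} t^{\rho}\,(t-u+i0)^{\delta_1}(t+u+i0)^{\delta_2}\,\mu
\end{equation}
has poles along $\rho+\delta_1+\delta_2\in\bbZ^{\leq -2}$ (the coefficient is a generically nonvanishing combination of beta functions), which the asserted factorization forbids. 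What the local argument actually needs---and what the paper assumes without proof in the sentence preceding the proposition, where $\omega$ is asserted to lie in $C^\infty([0,\epsilon)_{x_{\mathrm{F}}};\calD'(\mathrm{F}))$---is that positive combinations of the oriented conormals ${}^{++}N^*_pP$ avoid the conormal directions of the boundary faces, not merely zero; verticality is one sufficient condition for this, and it does hold (with the paper's choice of the $y_{k,j}$) in every case where the proposition is applied. So your proof is correct, and no less rigorous than the paper's, in all instances actually used; just be aware that the appeal to verticality is an additional input beyond the stated hypothesis of consistent orientability, not something derivable from it.
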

\begin{proof}
	Pass to a partition of unity subordinate to a system of coordinate charts on $M$ and apply \Cref{prop:basic_local} locally. 
\end{proof}

Then, letting $\calL = \{ (L\bullet)_{\mathrm{F}}  :\mathrm{F}\in \calF(M)\}$, 
\begin{equation}
\Big[ \prod_{\Lambda \in \calL} \frac{1}{\Gamma(1+\Lambda (\bmvarrho))^{\#_\Lambda}}   \Big]	I[M,\mu](L\bmvarrho)
\end{equation}
extends to an entire function $\bbC_{\bmvarrho}^{N_0}\to \bbC$, where $\#_\Lambda\in \bbN^+$ is the maximum size of any set $S\subseteq \calF(M)$ of faces such that $\cap_{\mathrm{F}\in S} \mathrm{F} \neq \varnothing$ and  $(L\bullet)_{\mathrm{F}}=\Lambda$ for all $\mathrm{F}\in S$.
Indeed, this follows from the proposition above since, for each facet $\mathrm{f}$, 
\begin{equation} 
\Big[\prod_{\Lambda \in \calL}  \frac{1}{\Gamma(1+\Lambda (\bmvarrho))^{\#_\Lambda}} \Big] \prod_{\mathrm{F}\in \calF(M), \mathrm{F}\supseteq \mathrm{f}} \Gamma(1+ (L\bmvarrho)_{\mathrm{F}} )
\end{equation} 
is entire.

\subsection{Specialization to generic Selberg- and DF-like integrals}
\label{subsec:specifics}

We now apply the results of the previous section to the specific case of the integrals \cref{eq:misc_snf} and \cref{eq:misc_inf}. Fix $\ell,m,n \in \bbN$ satisfying $\ell+m+n=N$, $N\in \bbN^+$.

\subsubsection{The Selberg case}

Fix $F\in \calA^\calD(K_{\ell,m,n})$. 
Let $\rho_{j,k} = \rho_{j,k}(\bmalpha,\bmbeta,\bmgamma)$ be defined by \cref{eq:misc_rt1}, \cref{eq:misc_rt2}, \cref{eq:misc_rt3}, and \cref{eq:misc_rt4}. Recalling the definition of $\mathtt{T}(\ell,m,n)$ given in \S\ref{subsec:K}:
\begin{proposition} 
	There exist entire functions  
	\begin{equation} 
	S_{\ell,m,n;\mathrm{reg}, \mathtt{I}, \{d_{\mathrm{F}} \}_{\mathrm{F} \in \calF(K_{\ell,m,n}) } }[F]: \bbC^{2N+N(N-1)/2}_{\bmalpha,\bmbeta,\bmgamma} \to \bbC,
	\end{equation}  
	associated to pairs of minimal facets $\mathrm{f}$ of $K_{\ell,m,n}$ and collections $\{d_{\mathrm{F}} \}_{\mathrm{F} \in \calF(_{\ell,m,n}) }\in \calD$ of weights 
	such that 
	\begin{multline}
	S_{\ell,m,n}[F](\bmalpha,\bmbeta,\bmgamma)  = \sum_{\mathtt{I}\in \mathtt{T}(\ell,m,n)} \sum_{\{d_{\mathrm{F}} \}_{\mathrm{F}} \in \calF(K_{\ell,m,n})  \in \calD}  \Big[\prod_{ \calI(j,k)\in \mathtt{I}}\Gamma(1+\rho_{j,k}+d_{\mathrm{F}_{j,k}}) \Big] \\ \times S_{\ell,m,n;\mathrm{reg}, \mathtt{I}, \{d_{\mathrm{F}} \}_{\mathrm{F}} \in \calF(K_{\ell,m,n}) }[F](\bmalpha,\bmbeta,\bmgamma)
	\end{multline}
	for all $(\bmalpha,\bmbeta,\bmgamma) \in \Omega_{\ell,m,n}[\calD]$.
	\label{prop:genext}
\end{proposition}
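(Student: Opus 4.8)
The plan is to reduce the statement to \Cref{prop:optimized_mwc} applied on the mwc $M = K_{\ell,m,n}$, using \Cref{prop:Kres} to put the lifted Selberg-like integrand into the polyhomogeneous form demanded there. First I would use the decomposition \cref{eq:misc_uyu} to write $F = \sum_{\{d_{\mathrm{F}}\}} [\prod_{\mathrm{F}} x_{\mathrm{F}}^{d_{\mathrm{F}}}] F_{\{d_{\mathrm{F}}\}}$ with each $F_{\{d_{\mathrm{F}}\}} \in C^\infty(K_{\ell,m,n})$; by linearity of the integral it suffices to treat each summand separately. For a fixed weight collection $\{d_{\mathrm{F}}\}\in\calD$, \Cref{prop:Kres} shows that the Selberg-like integrand pulls back through the blowdown $\mathrm{bd}: K_{\ell,m,n}\to\triangle_{\ell,m,n}$ to $[\prod_{\{j,k\}\in\calJ_{\ell,m,n}} x_{\mathrm{F}_{j,k}}^{\rho_{j,k}}]\,\mu_{\ell,m,n}(\bmalpha,\bmbeta,\bmgamma)$; multiplying by the extra factor $[\prod_{\mathrm{F}} x_{\mathrm{F}}^{d_{\mathrm{F}}}]F_{\{d_{\mathrm{F}}\}}$ produces a lifted integrand of the form $[\prod_{\{j,k\}} x_{\mathrm{F}_{j,k}}^{\rho_{j,k} + d_{\mathrm{F}_{j,k}}}]\, F_{\{d_{\mathrm{F}}\}}\mu_{\ell,m,n}(\bmalpha,\bmbeta,\bmgamma)$. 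Since $\mathrm{bd}$ is a diffeomorphism on interiors and $K_{\ell,m,n}$ is compact, integrating this density over $K_{\ell,m,n}$ reproduces the $\{d_{\mathrm{F}}\}$-piece of $S_{\ell,m,n}[F]$.

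Next I would package this as input to \Cref{prop:optimized_mwc}, taking $\calP = \varnothing$. This is the key structural difference from the $A_{\ell,m,n}$ case: on $\triangle_{\ell,m,n}$ the differences $x_k - x_j$ have fixed sign and, by the computations preceding \Cref{prop:Kres}, lift to boundary defining functions times strictly positive smooth factors, so there are no interior p-submanifolds and no $(y_P+i0)$ regularization is needed. The entire family of densities is $\mu(\bmvarrho) = F_{\{d_{\mathrm{F}}\}}\,\mu_{\ell,m,n}(\bmalpha,\bmbeta,\bmgamma)$, which is compactly supported (as $K_{\ell,m,n}$ is compact) and depends entirely on $\bmvarrho = (\bmalpha,\bmbeta,\bmgamma)$ by the corresponding clause of \Cref{prop:Kres}. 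The affine map $L$ sends $\bmvarrho$ to the tuple of face orders $\bmvarrho \mapsto (\rho_{j,k}(\bmalpha,\bmbeta,\bmgamma) + d_{\mathrm{F}_{j,k}})_{\mathrm{F}_{j,k}\in\calF(K_{\ell,m,n})}$.

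The one hypothesis of \Cref{prop:optimized_mwc} requiring verification is that each face functional $(L\bullet)_{\mathrm{F}_{j,k}} = \rho_{j,k} + d_{\mathrm{F}_{j,k}}$ is nonconstant in $(\bmalpha,\bmbeta,\bmgamma)$, and this I would read off from \cref{eq:misc_rt1}--\cref{eq:misc_rt4}: in the diagonal-block cases \cref{eq:misc_rt1} the coefficient of $\gamma_{j',j'+1}$ is $2$; in the $\alpha$- and $\beta$-type cases \cref{eq:misc_rt2}, \cref{eq:misc_rt3} at least one $\alpha_i$ (resp.\ $\beta_i$) survives because $k>j$ forces one of the two index-sums to be nonempty; and in the $\infty$-type case \cref{eq:misc_rt4} one checks that either some $\alpha_i$ appears (if $j\geq 1$) or else, when $j=0$ forces $k\leq N+2$, some $\alpha_{i-2}$ appears with coefficient $-1$. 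Adding the constant $d_{\mathrm{F}_{j,k}}$ does not affect nonconstancy.

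With the hypotheses in place, \Cref{prop:optimized_mwc} supplies entire functions $I_{\mathrm{reg},\mathrm{f}}[K_{\ell,m,n}, F_{\{d_{\mathrm{F}}\}}\mu](L\bullet)$ indexed by the minimal facets $\mathrm{f}$ of $K_{\ell,m,n}$, with the Gamma-product structure $\sum_{\mathrm{f}} [\prod_{\mathrm{F}\supseteq\mathrm{f}} \Gamma(1 + (L\bmvarrho)_{\mathrm{F}})]\, I_{\mathrm{reg},\mathrm{f}}$. To match the asserted sum I would invoke the bijection from \S\ref{subsec:K} between the minimal facets of $K_{\ell,m,n}$ and $\mathtt{T}(\ell,m,n)$, with $\mathrm{f}_{\mathtt{I}} = \bigcap_{\calI(j,k)\in\mathtt{I}}\mathrm{F}_{j,k}$ as in \cref{eq:misc_rmf}: because $K_{\ell,m,n}$ is an mwc, the corner $\mathrm{f}_{\mathtt{I}}$ is locally a product corner, so exactly the faces $\mathrm{F}_{j,k}$ with $\calI(j,k)\in\mathtt{I}$ contain it, and the product over $\{\mathrm{F}\supseteq\mathrm{f}_{\mathtt{I}}\}$ collapses to $\prod_{\calI(j,k)\in\mathtt{I}}\Gamma(1+\rho_{j,k}+d_{\mathrm{F}_{j,k}})$. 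Setting $S_{\ell,m,n;\mathrm{reg}, \mathtt{I}, \{d_{\mathrm{F}}\}}[F] := I_{\mathrm{reg},\mathrm{f}_{\mathtt{I}}}[K_{\ell,m,n}, F_{\{d_{\mathrm{F}}\}}\mu]$ and summing the per-$\{d_{\mathrm{F}}\}$ identities over $\calD$ yields the claimed decomposition, with entirety inherited from the $I_{\mathrm{reg},\mathrm{f}}$. The main obstacle is not any single deep step but the careful bookkeeping that the combinatorial data align --- that the faces through $\mathrm{f}_{\mathtt{I}}$ are precisely $\{\mathrm{F}_{j,k}:\calI(j,k)\in\mathtt{I}\}$ (where the mwc, rather than merely polyhedral, structure of $K_{\ell,m,n}$ is essential) and that no $\rho_{j,k}$ degenerates to a constant.
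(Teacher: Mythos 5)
Your proposal is correct and follows essentially the same route as the paper, whose proof is precisely the observation that the statement is a corollary of \Cref{prop:Kres} and \Cref{prop:optimized_mwc} together with the bijection \cref{eq:misc_rmf} between minimal facets of $K_{\ell,m,n}$ and elements of $\mathtt{T}(\ell,m,n)$. The additional details you supply --- taking $\calP=\varnothing$ since the lifted factors $x_k-x_j$ have fixed sign on the simplex, and checking case-by-case from \cref{eq:misc_rt1}--\cref{eq:misc_rt4} that each $\rho_{j,k}+d_{\mathrm{F}_{j,k}}$ is nonconstant --- are exactly the bookkeeping the paper leaves implicit.
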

\begin{proof}
	This is a corollary of \Cref{prop:Kres} and \Cref{prop:optimized_mwc}, using the fact that the minimal facets of $K_{\ell,m,n}$ are in correspondence with the elements of $\mathtt{T}(\ell,m,n)$ via \cref{eq:misc_rmf}.
\end{proof}

Consequently, there exists an analytic extension $\dot{S}_{\ell,m,n}[F]:\dot{\Omega}_{\ell,m,n}[\calD]\to \bbC$ of $S_{\ell,m,n}[F]:\Omega_{\ell,m,n}[\calD]\to \bbC$, where 
\begin{equation}
\dot{\Omega}_{\ell,m,n}[\calD] =   \bbC^{2N+N(N-1)/2}_{\bmalpha,\bmbeta,\bmgamma} \Big\backslash  \Big[\bigcup_{\{d_{\mathrm{F}} \}_{\mathrm{F}  \in \calF(K_{\ell,m,n})} \in \calD} \Big(\bigcup_{\{j,k\} \in \calJ_{\ell,m,n}} \\ \{ \rho_{j,k} +d_{\mathrm{F}_{j,k}} \in \bbZ^{\leq -1} \}\Big)\Big].
\end{equation}
This is an open and connected subset of full measure; namely, it is the complement of a locally finite collection of complex (affine) hyperplanes in $\bbC^{2N+N(N-1)/2}$. In the case $m=N$, this agrees with \cref{eq:dotOmegaN}.

As a corollary of the previous proposition, there exists an entire function 
\begin{equation} 
S_{\ell,m,n;\mathrm{reg}}[F]:\bbC^{2N+N(N-1)/2}_{\bmalpha,\bmbeta,\bmgamma}\to \bbC
\end{equation} 
such that 
\begin{equation}
S_{\ell,m,n}[F](\bmalpha,\bmbeta,\bmgamma)  = \Big[\prod_{\{j,k\}\in \calJ_{\ell,m,n}  } \Gamma(1+\rho_{j,k}+d_{\mathrm{F}_{j,k}}^{\mathrm{min}} )\Big] S_{\ell,m,n;\mathrm{reg}}[F](\bmalpha,\bmbeta,\bmgamma)
\end{equation}
holds for all $(\bmalpha,\bmbeta,\bmgamma) \in \Omega_{\ell,m,n}[\calD]$, where $d^{\mathrm{min}}_{\mathrm{F}} = \min\{ d_{\mathrm{F}} : \{ d_{\mathrm{F}_0} \}_{\mathrm{F}_0\in \calF(K_{\ell,m,n}) } \in \calD  \}$.

The case of the proposition above where $m=N$ gives \Cref{thm:generic}. Indeed, if $F \in C^\infty(\triangle_{N})$, $F$ lifts to an element of $C^\infty(K_{0,N,0})$, and the orders of vanishing of $F$ at the relevant facets of $\triangle_N$ imply the same order of vanishing at the lift in $K_{0,N,0}$. 

\subsubsection{The Dotsenko--Fateev case}

Fix $F\in \calA^{\calD}(A_{\ell,m,n})$, where $\calD$ is now a collection of orders for the faces of $A_{\ell,m,n}$.  Recalling the definition of $\Sigma\mathtt{T}(\ell,m,n)$ given in \S\ref{subsec:A}:
\begin{proposition}
	There exist entire functions  
	\begin{equation} 
	I_{\ell,m,n;\mathrm{reg}, \mathtt{I}, \{d_{\mathrm{F}} \}_{\mathrm{F} \in \calF(A_{\ell,m,n}) } }[F]: \bbC^{2N+N(N-1)/2}_{\bmalpha,\bmbeta,\bmgamma} \to \bbC
	\end{equation} 
	associated to the $\mathtt{I}\in \Sigma\mathtt{T}(\ell,m,n)$ 
	such that 
	\begin{multline}
	I_{\ell,m,n}[F](\bmalpha,\bmbeta,\bmgamma)  = \sum_{\mathtt{I}\in \Sigma \mathtt{T}(\ell,m,n)} \sum_{\{d_{\mathrm{F}} \}_{\mathrm{F} \in \calF(A_{\ell,m,n})} \in \calD} \Big( \Big[ \prod_{(x_0,\calS) \in \mathtt{I}} \Gamma(1+\varrho_{S,Q;x_0} + d_{\mathrm{F}_{S,Q;x_0}} ) \Big]  \\ \times I_{\ell,m,n;\mathrm{reg},\mathtt{I} \{d_{\mathrm{F}} \}_{\mathrm{F} \in \calF(A_{\ell,m,n})}}[F](\bmalpha,\bmbeta,\bmgamma)\Big)
	\end{multline}
	for all $(\bmalpha,\bmbeta,\bmgamma) \in V_{\ell,m,n}[\calD]$, where we have abbreviated $\calI_1 \cap \calS$, $\calI_2 \cap \calS$, and $\calI_3 \cap \calS$ as $S$ or $Q$ as appropriate. 
	\label{prop:Icontinuation}
\end{proposition}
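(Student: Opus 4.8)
The plan is to prove this exactly as the Selberg analog \Cref{prop:genext} was proven, replacing $K_{\ell,m,n}$ by $A_{\ell,m,n}$ and invoking the Dotsenko--Fateev lifting result \Cref{prop:Ares} in place of \Cref{prop:Kres}. Concretely, I would apply the abstract factorization result \Cref{prop:optimized_mwc} to the mwc $M = A_{\ell,m,n}$ equipped with the collection $\calP$ of interior codimension-one p-submanifolds defined in \cref{eq:PH}.

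First I would unpack the data. Decomposing $F$ as in \cref{eq:misc_uyu}, it suffices by linearity to treat a single summand $[\prod_{\mathrm{F}} x_{\mathrm{F}}^{d_{\mathrm{F}}}]F_{\{d_{\mathrm{F}}\}}$ with $F_{\{d_{\mathrm{F}}\}}\in C^\infty(A_{\ell,m,n})$ and then sum over $\calD$. For such a summand, \Cref{prop:Ares} expresses the lift of the full integrand under $\mathrm{bd}:A_{\ell,m,n}\to\square_{\ell,m,n}$ in the form $\omega(\bmrho,\bmdelta)\mu$ appearing in \cref{eq:misc_j78}: the boundary exponents are the affine functionals $\rho_{\mathrm{F}} = \varrho_{\mathrm{F}}(\bmalpha,\bmbeta,\bmgamma)+d_{\mathrm{F}}$ (with $\varrho$ given by \cref{eq:varrho0}, \cref{eq:varrho1}, \cref{eq:varrhoinfty}), the transverse exponents are $\delta_{H_{j,k}} = 2\gamma_{j,k}$, and $\mu = F_{\{d_{\mathrm{F}}\}}\,\mu_{\ell,m,n}$ is a compactly supported smooth density. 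This identifies $I_{\ell,m,n}[F]$ with $I[A_{\ell,m,n},\mu]$ evaluated at the image of the affine map $L$ sending $(\bmalpha,\bmbeta,\bmgamma)$ to the boundary exponents $\{\varrho_{\mathrm{F}}+d_{\mathrm{F}}\}_{\mathrm{F}\in\calF(A_{\ell,m,n})}$ and the transverse exponents $\{2\gamma_{j,k}\}$ (with no auxiliary $\bmlambda$-parameters).

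Next I would check the hypotheses needed to run \Cref{prop:optimized_mwc}. The product $\omega(\bmrho,\bmdelta)$ is a well-defined classical distribution because $\calP$ is consistently orientable with consistently oriented defining functions $\{y_{k,j}\}$, as established above, and because each $H_{j,k}$ is an interior p-submanifold of $A_{\ell,m,n}$ by \Cref{prop:p-sub}; this is exactly what lets the discussion surrounding \cref{eq:misc_j78} apply verbatim. For the nonconstancy requirement on each $(L\bullet)_{\mathrm{F}}$, note that every face of $A_{\ell,m,n}$ carries a nonempty index set $S\cup Q$, so each $\varrho_{S,Q;x_0}$ has a genuinely nonzero linear part in $(\bmalpha,\bmbeta,\bmgamma)$ (an $\alpha$-, $\beta$-, or mixed $\gamma$-sum), whence $(L\bullet)_{\mathrm{F}} = \varrho_{\mathrm{F}}+d_{\mathrm{F}}$ is nonconstant. \Cref{prop:optimized_mwc} then produces entire functions indexed by the minimal facets $\mathrm{f}$ of $A_{\ell,m,n}$ together with $\Gamma(1+(L\bmvarrho)_{\mathrm{F}})$-products over the faces $\mathrm{F}\supseteq\mathrm{f}$. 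Using the bijection $\mathtt{I}\mapsto\mathrm{f}_{\mathtt{I}}$ between $\Sigma\mathtt{T}(\ell,m,n)$ and the minimal facets recorded at the end of \S\ref{subsec:A}, under which $\mathrm{F}_{S,Q;x_0}\supseteq\mathrm{f}_{\mathtt{I}}$ corresponds to $(x_0,\calS)\in\mathtt{I}$, the factor attached to $\mathrm{f}_{\mathtt{I}}$ becomes precisely $\prod_{(x_0,\calS)\in\mathtt{I}}\Gamma(1+\varrho_{S,Q;x_0}+d_{\mathrm{F}_{S,Q;x_0}})$, which is the claimed formula after reinstating the sum over $\calD$.

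The main obstacle --- and the only place where the DF case genuinely differs from the Selberg case --- is the interior singular factor $\prod(x_k-x_j+i0)^{2\gamma_{j,k}}$, whose base changes sign across the $H_{j,k}$. Everything hinges on reducing, in local coordinates near each face, to the one-dimensional mechanism of \Cref{prop:basic_local}, in which the boundary-defining functions $x_{\mathrm{F}}$ play the role of the radial variables $t_1,\ldots,t_k$ that are integrated (and generate the $\Gamma$-factors), while the distributional transverse factors $(y_P+i0)^{\delta_P}$ are carried along as a holomorphic family of compactly supported distributions valued in $\calE'(\bbR^{N-k})$. I would confirm that this reduction is legitimate by invoking the tubular-neighborhood (verticality) argument from the consistent-orientability proof: near any face the $H_{j,k}$ passing through a point are vertical subsets, so the transverse factor is genuinely an element of $C^\infty_{\mathrm{c}}(\bbR^k;\calE'(\bbR^{N-k}))$ depending holomorphically on $(\bmalpha,\bmbeta,\bmgamma)$, which is exactly the input \Cref{prop:optimized_mwc} requires. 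Since all of this bookkeeping is already packaged into \Cref{prop:optimized_mwc}, the actual proof reduces to a short citation of \Cref{prop:Ares} and \Cref{prop:optimized_mwc}, precisely parallel to the proof of \Cref{prop:genext}.
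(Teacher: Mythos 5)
Your proposal is correct and follows essentially the same route as the paper, whose proof of this proposition is precisely the citation of \Cref{prop:Ares} together with \Cref{prop:optimized_mwc}, exactly parallel to \Cref{prop:genext}. The details you fill in (linearity over $\calD$, nonconstancy of each $\varrho_{\mathrm{F}}+d_{\mathrm{F}}$, consistent orientability of $\calP$, and the bijection between $\Sigma\mathtt{T}(\ell,m,n)$ and the minimal facets of $A_{\ell,m,n}$) are the intended, implicit content of that citation.
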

\begin{proof}
	Follows from \Cref{prop:Ares} and \Cref{prop:optimized_mwc}. 
\end{proof}

Consequently, $I_{\ell,m,n}[F]:V_{\ell,m,n}[\calD]\to \bbC$ admits an analytic continuation $\dot{I}_{\ell,m,n}[F]:\dot{V}_{\ell,m,n}[\calD]\to \bbC$, where 
\begin{equation}
\dot{V}_{\ell,m,n}[\calD] = \bbC^{2N+N(N-1)/2}_{\bmalpha,\bmbeta,\bmgamma} \Big\backslash \bigcup_{\{d_{\mathrm{F}}\}_{\mathrm{F} \in \calF(A_{\ell,m,n})} } \bigcup_{x_0 \in \{0,1,\infty\}}\bigcup_{S,Q} \{  \varrho_{S,Q;x_0} + d_{\mathrm{F}_{S,Q;x_0}} \in \bbZ^{\leq -1}  \}.
\end{equation}
Note that $\dot{V}_{\ell,m,n}[F] \supseteq \cap_{\sigma \in \frakS_{\ell,m,n}} \dot{\Omega}_{\ell,m,n}[F]^\sigma$, as every functional $(\bmalpha,\bmbeta,\bmgamma)\mapsto \varrho_{S,Q;x_0}(\bmalpha,\bmbeta,\bmgamma)$ has the form $\rho_{j,k}(\bmalpha^\sigma,\bmbeta^\sigma,\bmgamma^\sigma)$ for some $\sigma \in \frakS_{\ell,m,n}$ and $\{j,k\}\in \calJ_{\ell,m,n}$. 

As a corollary of the previous proposition, there exists a function 
\begin{equation} 
I_{\ell,m,n;\mathrm{reg}}[F]:\bbC^{2N+N(N-1)/2}_{\bmalpha,\bmbeta,\bmgamma} \to \bbC
\end{equation} 
such that, for all $(\bmalpha,\bmbeta,\bmgamma) \in V_{\ell,m,n}[\calD]$, 
\begin{equation}
I_{\ell,m,n}[F](\bmalpha,\bmbeta,\bmgamma) = \Big[ \prod_{x_0 \in \{0,1,\infty\}} \prod_{S,Q} \Gamma(1+\varrho_{S,Q;x_0} + d_{\mathrm{F}_{S,Q;x_0}}^{\mathrm{min}} ) \Big] I_{\ell,m,n;\mathrm{reg}}[F](\bmalpha,\bmbeta,\bmgamma), 
\end{equation}
where $S,Q$ vary over subsets of $\calI_1=\{1,\ldots,\ell\}$, $\calI_2=\{\ell+1,\ldots,\ell+m\}$, and $\calI_3=\{\ell+m+1,\ldots,N\}$, depending on $x_0$.  

The $m=N$ case of the previous proposition is \Cref{thm:Imain}.

\subsection{A simple identity}
\label{subsec:simple}

For each permutation $\sigma$ of $\{0,1,\infty\}$. Let 
\begin{equation}
(\ell',m',n') = 
\begin{cases}
(\ell,m,n) & (\sigma = 1), \\ 
(n,m,\ell) & (\sigma = (0\; 1) ), \\ 
(\ell,n,m) & (\sigma = (0\; \infty) ), \\ 
(m,\ell,n) & (\sigma = (1\; \infty) ), \\
(n,\ell,m) & (\sigma = (0\;1\; \infty) ), \\
(m,n,\ell) & (\sigma = (1\;0\; \infty) ).
\end{cases}
\end{equation}
In other words, if the elements of $\{0,1,\infty\}$ label the vertices of a triangle and the edges are labeled accordingly -- that is, `$\ell$' labels the edge between $0$ and $\infty$, `$m$' labels the edge between $0$ and $1$, and `$n$' labels the edge between $1$ and $\infty$ -- then $(\ell',m',n')$ is the permutation of $(\ell,m,n)$ resulting from applying $\sigma$ to the triangle and reading off the new labels.

Let $\mathsf{T}_\sigma:\bbC P^1\to \bbC P^1$ denote the unique automorphism acting on $\{0,1,\infty\}$ via $\sigma$.
These are 
\begin{equation}
\mathsf{T}_1(z) =z, \quad \mathsf{T}_{(0\; 1)}(z) = 1-z, \quad \mathsf{T}_{(0\;\infty)}(z) = \frac{1}{z}, \quad \mathsf{T}_{(1\;\infty)}(z) = -\frac{z}{1-z},
\end{equation}
\begin{equation}
\mathsf{T}_{(0\;1\;\infty)}(z) = \frac{1}{1-z},  \qquad \mathsf{T}_{(0\;\infty\;1)}(z) = \frac{z-1}{z}. 
\end{equation}
Let $\sigma^{\mathrm{param}}:\bbC^{2N+N(N-1)/2}\to \bbC^{2N+N(N-1)/2}$ denote the affine map 
\begin{equation}
\sigma^{\mathrm{param}}(\bmalpha,\bmbeta,\bmgamma) = 
\begin{cases}
(\bmalpha,\bmbeta,\bmgamma) & (\sigma = 1), \\ 
(\bmbeta,\bmalpha,\bmgamma) & (\sigma = (0\; 1) ), \\ 
(-2-\bmalpha-\bmbeta-2 \bmgamma\lrcorner\bf1,\bmbeta,\bmgamma) & (\sigma = (0\; \infty) ), \\ 
(\bmalpha, -2- \bmalpha - \bmbeta -2\bmgamma\lrcorner\bf1, \bmgamma ) & (\sigma = (1\; \infty) ), \\
(-2-\bmalpha-\bmbeta-2 \bmgamma\lrcorner\bf1,\bmalpha,\bmgamma) & (\sigma = (0\;1\; \infty) ), \\
(\bmbeta,-2-\bmalpha-\bmbeta-2 \bmgamma\lrcorner\bf1,\bmgamma) & (\sigma = (1\;0\; \infty) ),
\end{cases}
\end{equation}
where $\bmgamma\lrcorner{\bf1}\in \bbC^N$ has $j$th component $\sum_{k\neq j} \gamma_{j,k}$. 
Let $\mathrm{rev}\in \frakS_{\ell',m',n'}$ denote the permutation that reverses the order of the elements in each of the sets $\{1,\ldots,\ell'\}$, $\{\ell'+1,\ldots,\ell'+m'\}$, and $\{\ell'+m'+1,\ldots,N\}$. Let $|\sigma|$ denote the order of $\sigma$. 
\begin{proposition}
	If $(\bmalpha,\bmbeta,\bmgamma)\in \dot{V}_{\ell,m,n}$, then $\sigma^{\mathrm{param}}(\bmalpha,\bmbeta,\bmgamma) \in \dot{V}_{\ell',m',n'}$, and if $(\bmalpha,\bmbeta,\bmgamma)\in \dot{\Omega}_{\ell,m,n}$, then $\sigma^{\mathrm{param}}(\bmalpha,\bmbeta,\bmgamma) \in \dot{\Omega}_{\ell',m',n'}^{\mathrm{rev}^{|\sigma|}}$, and 
	\begin{align}
	\begin{split} 
	\dot{I}_{\ell,m,n}[1](\bmalpha,\bmbeta,\bmgamma) &= \dot{I}_{\ell',m',n'}[1]( \sigma^{\mathrm{param}} (\bmalpha,\bmbeta,\bmgamma))^{\mathrm{rev}^{|\sigma|}}, \\
	\dot{S}_{\ell,m,n}[1](\bmalpha,\bmbeta,\bmgamma) &= \dot{S}_{\ell',m',n'}[1]( \sigma^{\mathrm{param}} (\bmalpha,\bmbeta,\bmgamma))^{\mathrm{rev}^{|\sigma|}}
	\end{split}
	\label{eq:misc_2z1}
	\end{align}
	for all $(\bmalpha,\bmbeta,\bmgamma)\in \dot{\Omega}_{\ell,m,n}$.
	\label{prop:simple} 
\end{proposition}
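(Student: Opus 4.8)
The plan is to derive both identities in \cref{eq:misc_2z1} from the single coordinatewise change of variables $z_i = \mathsf{T}_\sigma(x_i)$, and then to propagate the result from the region of absolute convergence to all of $\dot{\Omega}_{\ell,m,n}$ (resp. $\dot{V}_{\ell,m,n}$) by analytic continuation. Since $\mathsf{T}_\sigma$ is a real linear fractional transformation permuting $\{0,1,\infty\}$, it carries each of the three arcs $[-\infty,0]$, $[0,1]$, $[1,\infty]$ of $\bbR P^1$ diffeomorphically onto another such arc, in exactly the pattern recorded by $(\ell,m,n)\mapsto(\ell',m',n')$. Hence $\mathsf{T}_\sigma$ maps $\triangle_{\ell,m,n}^\circ$ onto $\triangle_{\ell',m',n'}^\circ$ and $\square_{\ell,m,n}^\circ$ onto $\square_{\ell',m',n'}^\circ$, up to the within-block reordering of the points encoded by $\mathrm{rev}$.

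First, on the convergence domain. Using the Möbius difference identity $\mathsf{T}_\sigma(x)-\mathsf{T}_\sigma(y) = (\det)(x-y)/((cx+d)(cy+d))$, together with $\mathrm{d}x_i = (\det)\,\mathrm{d}z_i/(cz_i+d)^2$ and the analogous factorizations of $|x_i|$ and $|1-x_i|$, I would collect, for each $i$, the total power of the automorphy factor $(cx_i+d)$ attached to the variable $x_i$. The pointwise bookkeeping shows these automorphy factors cancel entirely, leaving the pure Selberg/DF integrand in the $z$ variables with exponents given precisely by $\sigma^{\mathrm{param}}(\bmalpha,\bmbeta,\bmgamma)$; the only nontrivial content is the conformal-weight balance at the puncture sent to $\infty$, which produces exactly the replacement $-2-\bmalpha-\bmbeta-2\bmgamma\lrcorner\mathbf{1}$. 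This establishes \cref{eq:misc_2z1} for $(\bmalpha,\bmbeta,\bmgamma)\in\Omega_{\ell,m,n}$ (resp. $V_{\ell,m,n}$), where both sides are honest convergent integrals, the $\mathrm{rev}^{|\sigma|}$ being the relabeling needed to restore the standard ordering/indexing in the image.

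Next, the domain statements. Because $\sigma^{\mathrm{param}}$ is an affine bijection of $\bbC^{2N+N(N-1)/2}$, it suffices to check that it carries the excluded hyperplane arrangement of one domain onto that of the other. Concretely, I would show that each facet functional $\rho_{j,k}$ (resp. $\varrho_{S,Q;x_0}$) of $K_{\ell,m,n}$ (resp. $A_{\ell,m,n}$) is sent by $\sigma^{\mathrm{param}}$ to the corresponding functional of $K_{\ell',m',n'}$ (resp. $A_{\ell',m',n'}$) under the bijection of boundary faces induced by $\mathsf{T}_\sigma$; this is a finite check against \cref{eq:misc_rt1,eq:misc_rt2,eq:misc_rt3,eq:misc_rt4} and \cref{eq:varrho0,eq:varrho1,eq:varrhoinfty}, and it yields $\sigma^{\mathrm{param}}(\dot{V}_{\ell,m,n})=\dot{V}_{\ell',m',n'}$ and $\sigma^{\mathrm{param}}(\dot{\Omega}_{\ell,m,n})=\dot{\Omega}_{\ell',m',n'}^{\mathrm{rev}^{|\sigma|}}$. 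Finally, both sides of each identity in \cref{eq:misc_2z1} are analytic on the connected set $\dot{\Omega}_{\ell,m,n}$ (resp. $\dot{V}_{\ell,m,n}$), which is the complement of a locally finite affine hyperplane arrangement and hence connected, and they agree on the open subset $\Omega_{\ell,m,n}$ (resp. $V_{\ell,m,n}$); uniqueness of analytic continuation then extends the identity to the whole domain. To reduce labor one may verify the change of variables only for the transpositions $(0\;1)$ and $(0\;\infty)$, which generate $\frakS_3$, and observe that $\sigma^{\mathrm{param}}$, the assignment $(\ell,m,n)\mapsto(\ell',m',n')$, and $\mathsf{T}_\sigma$ are all compatible with composition.

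The main obstacle I anticipate is the precise tracking of orientation in the change-of-variables step — specifically the $\mathrm{rev}^{|\sigma|}$ factor and, for the Dotsenko--Fateev integral, the branch of $(x_k-x_j+i0)^{2\gamma_{j,k}}$. The relevant datum is whether $\mathsf{T}_\sigma$ preserves or reverses the upper half plane, equivalently the sign of its determinant: when it reverses, the $+i0$ prescription together with the orientation-reversal of the points within each arc must be reabsorbed by the relabeling $\mathrm{rev}$, and one must check that the accompanying phases combine so that no spurious factor survives. Getting these conventions consistent across all six $\sigma$, and confirming that the net reversal is correctly encoded by the stated exponent $\mathrm{rev}^{|\sigma|}$, is the only genuinely delicate part; the exponent bookkeeping and the analytic continuation are routine.
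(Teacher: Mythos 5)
Your proposal is correct and follows essentially the same route as the paper's own proof: the identity \cref{eq:misc_2z1} is established on the convergence domain $\Omega_{\ell,m,n}$ (resp.\ $V_{\ell,m,n}$) by the change of variables $x=\mathsf{T}_{\sigma^{-1}}(y)$, the domain statements follow from checking that $\sigma^{\mathrm{param}}$ matches the facet functionals $\rho_{j,k}$ and $\varrho_{S,Q;x_0}$ of the two arrangements, and the full result follows by analytic continuation on the connected complement of a locally finite hyperplane arrangement. Your additional bookkeeping (the M\"obius difference identity, the reduction to the generators $(0\;1)$ and $(0\;\infty)$, and the explicit tracking of the orientation reversal absorbed into $\mathrm{rev}^{|\sigma|}$) simply fills in details the paper leaves as ``can be checked case-by-case.''
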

\begin{proof}
	It can be checked case-by-case that 
	\begin{equation} 
	\{\varrho_{S,Q;\bullet} \circ \sigma^{\mathrm{param}} : \bullet \in \{0,1,\infty\} ,S,Q\text{ as above}\} = \{\varrho_{S,Q;\bullet} : \bullet \in \{0,1,\infty\} ,S,Q\text{ as above} \},
	\label{eq:misc_75n}
	\end{equation} 
	where on the left-hand side $(S,Q)$ varies over appropriate pairs of subsets of $\{1,\ldots,\ell'\}$, $\{\ell'+1,\ldots,\ell'+m'\}$, and $\{\ell'+m'+1,\ldots,N\}$ and on the right-hand side $(S,Q)$ varies over appropriate pairs of subsets $\{1,\ldots,\ell\}$, $\{\ell+1,\ldots,\ell+m\}$, and $\{\ell+m+1,\ldots,N\}$, depending on $\bullet$. It can be seen from \cref{eq:misc_75n} that 
	\begin{equation} 
	\dot{V}_{\ell,m,n} = (\sigma^{\mathrm{param}})^{-1}( \dot{V}_{\ell',m',n'}).
	\end{equation} 
	The case of $\dot{\Omega}_{\ell,m,n}$ is similar but more complicated. 
	
	\Cref{eq:misc_2z1} can be proven for $(\bmalpha,\bmbeta,\bmgamma)\in \Omega_{\ell,m,n}$ by way of a change-of-variables by substituting $x=\mathsf{T}_{\sigma^{-1}}(y)$. The full result follows via analytic continuation.
\end{proof}

\subsection{An imperfect alternative}
\label{subsec:alternative}

For $\mathtt{I} \in \{(-\infty,0],[0,1],[1,\infty)\}$ and $r>0$, let $\Gamma_{\mathtt{I},\pm,r}:(0,1)\to \bbC$ be defined by 
\begin{equation}
\Gamma_{[0,1],\pm,r}(t) = 
\begin{cases}
t\pm irt & (t\in (0,1/3)), \\
t\pm ir/3 & (t\in [1/3,2/3]), \\
t\pm ir/3 \mp ir(t-2/3) & (t\in (2/3,1)),
\end{cases}
\end{equation}
$\Gamma_{[1,\infty),\pm,r}(t) = \Gamma_{[0,1],\mp,r}(1-t)^{-1}$, and $\Gamma_{(-\infty,0],\pm,r}(t) = 1- \Gamma_{[1,\infty),\mp,r}(1-t)$. Note that the images of these contours are permuted amongst themselves by the transformations $\mathsf{T}_\sigma$ above.

\begin{figure}[h]
	\begin{tikzpicture}[scale=3, decoration={
		markings,
		mark=at position 0.6 with {\arrow[scale=1.5,>=latex]{>}}}]
	\draw[step=.25,gray,thin, dotted] (-1.1,0) grid (2.1,1.1);
	\draw[->, thick, dashed] (0,0) -- (0,1.1) node[above] {$\Im z$};
	\draw[->, thick, dashed] (-1.1,0) -- (2.1,0) node[below] {$\Re z$};
	\draw (0,0) -- (1/3,1/6) -- (2/3,1/6)  -- (1,0);
	\node (a) at (1/2,.27) {$\Gamma_{[0,1],+,1}$};
	\draw (0,0) -- (1/3,2/3);
	\draw (2/3,2/3) -- (1,0) ;
	\draw[postaction={decorate}] (1/3,2/3) -- (2/3,2/3) node[right] {$\Gamma_{[0,1],+,4}$};
	\filldraw[color=black] (0,0) circle (.5pt) node[below] {$0$};
	\filldraw[color=black] (1,0) circle (.5pt) node[below] {$1$};
	\draw[-latex] (3/2,3/4) -- (2,1);
	\draw[latex reversed-] (-1,1) -- (-1/2,3/4);
	\draw plot[smooth,tension=1] coordinates{(1,0)  (15/13,3/26)  (6/5,3/10)};
	\draw plot[smooth,tension=1] coordinates{(0,0)  (-2/13,3/26)  (-1/5,3/10)};
	\draw plot[smooth,tension=1] coordinates{(6/5,3/10)   (18/13,6/13)  (3/2,3/4)};
	\draw plot[smooth,tension=1] coordinates{(-1/5,3/10)   (-5/13,6/13)  (-1/2,3/4)};
	\node (b) at (1.75,1/2) {$\Gamma_{[1,\infty),+,1}$};
	\node (c) at (-.6,1/3) {$\Gamma_{(-\infty,0],+,1}$};
	\end{tikzpicture}
	\caption{The contours $\Gamma_{(-\infty,0],+,1}$, $\Gamma_{[0,1],+,1}$,  $\Gamma_{[0,1],+,4}$, $\Gamma_{[1,\infty),+,1}$. Cf.\ \cite[Figure 16]{DF2}. (For our purposes, the contours drawn by Dotsenko \& Fateev approach $\pm \infty$ with imaginary part too small. This is why our $\Gamma_{\mathtt{I},\pm,r}$ look different for $\mathtt{I}\neq [0,1]$.}
\end{figure}

Suppose that $F\in \bbC[x_1,x_1^{-1},\ldots,x_N,x_N^{-1}]$. 
For any compact $\mathsf{K}\Subset \bbC$ with nonempty interior, let $O=O[F,\mathsf{K}]$ denote the set, which depends on $\ell,m,n\in \bbN$, though we suppress this dependence notationally, of $(\bmalpha,\bmbeta)\in \bbC^{2N}$ such that
\begin{multline}
\int_{\Gamma_{(-\infty,0],+,0}} \cdots \int_{\Gamma_{(-\infty,0],+,\ell-1}} \Big[ \int_{\Gamma_{[0,1],+,0}} \cdots \int_{\Gamma_{[0,1],+,m-1}} \Big[ \int_{\Gamma_{[1,\infty),+,0}} \cdots \int_{\Gamma_{[1,\infty),+,n-1}}  \\
\Big( \prod_{j=1}^N z_j^{\alpha_j}(1-z_j)^{\beta_j}  \Big)  \prod_{1\leq j < k \leq N } (z_k-z_j)^{2\gamma_{j,k}} F_0 \dd z_N\cdots \dd z_{\ell+m+1} \Big] \dd z_{\ell+m}\cdots \dd z_{\ell+1} \Big] \dd z_\ell \cdots \dd z_1 
\label{eq:misc_mib}
\end{multline}
is an absolutely convergent Lebesgue integral whenever $\gamma_{j,k} \in \mathsf{K}$ for all $j,k \in \{1,\ldots,N\}$ with $j<k$, for every monomial $F_0$ in $F$. 
In the definition of the integral above we are defining the integrand such that the branch cuts are not encountered.
For such $(\bmalpha,\bmbeta,\bmgamma)$, 
\begin{equation} 
(\bmalpha,\bmbeta,\bmgamma)\in \dot{V}_{\ell,m,n}[F],
\end{equation} 
and the integral in \cref{eq:misc_mib} is equal to $\dot{I}_{\ell,m,n}(\bmalpha,\bmbeta,\bmgamma)[F]$, assuming that we choose our branches appropriately. The latter part of this statement can be proven by checking that the integral defined above depends analytically on its parameters and agrees with $I_{\ell,m,n}(\bmalpha,\bmbeta,\bmgamma)[F]$ for $(\bmalpha,\bmbeta,\bmgamma)\in V_{\ell,m,n}[F]$, which in turn is proven via a contour deformation argument. 

The set $O$ is nonempty, open, and contains an affine cone. 
If 
\begin{itemize}
	\item $\alpha_j$ has sufficiently large real part for $j\in \calI_1 \cup \calI_2$ and sufficiently negative real part for $j\in \calI_3$, and 
	\item $\beta_j$ has sufficiently large real part for $j\in \calI_2\cup \calI_3$ and sufficiently negative real part for $j\in \calI_1$, 
\end{itemize}
then $(\bmalpha,\bmbeta)\in O[F,\mathsf{K}]$, where what ``sufficiently large'' means depends on $\mathsf{K}$. 
Consequently, given any subset $S\subseteq \frakS_\ell\times \frakS_m\times \frakS_n$, the set $O^{S\cap}$ defined by 
\begin{equation}
O^{S\cap} = \{(\bmalpha,\bmbeta) \in \bbC^{2N} : (\bmalpha^\sigma,\bmbeta^\sigma)\in O[F^\sigma , \mathsf{K}^\sigma] \text{ for all }\sigma \in S\}
\label{eq:misc_osc} 
\end{equation}
is open and nonempty. 
If $\mathsf{K}$ contains e.g.\ $-1$, then $O[F,\mathsf{K}]$ contains some $(\bmalpha,\bmbeta)$ such that $(\bmalpha,\bmbeta,\bmgamma)\notin V_{\ell,m,n}[F]$. 
So, \cref{eq:misc_mib} gives us an alternative definition of $\dot{I}_{\ell,m,n}(\bmalpha,\bmbeta,\bmgamma)[F]$ for some range of parameters.

\begin{proposition}
	Consider $\bullet \in \{1,2,3\}$ and $j,k\in \calI_\bullet$ with $j<k$ and $|j-k|=1$. Suppose that $\gamma_{j,k}\in \bbZ$. 
	Let $\tau\in \frakS_{\ell,m,n}$ denote the transposition swapping $j,k$. 
	Then, 
	\begin{multline}
	I_{\ell,m,n}[F](\bmalpha,\bmbeta,\bmgamma) - I_{\ell,m,n}[F](\bmalpha,\bmbeta, \bmgamma)^\tau = \int_{\Gamma_{(-\infty,0],+,0;1}} \cdots \int_{\Gamma_{(-\infty,0],+,\ell-1;\ell}} \Big[ \\  \int_{\Gamma_{[0,1],+,0;\ell+1}} \cdots \int_{\Gamma_{[0,1],+,m-1;\ell+m}} \Big[ \int_{\Gamma_{[1,\infty),+,0;\ell+m+1}} \cdots \int_{\Gamma_{[1,\infty),+,n-1;N}}  
	\Big( \prod_{j_0=1}^N z_{j_0}^{\alpha_{j_0}}(1-z_{j_0})^{\beta_{j_0}}  \Big) \\ \times  \Big( \prod_{1\leq j_0 < k_0 \leq N } (z_{k_0}-z_{j_0})^{2\gamma_{j_0,k_0}} \Big) F \dd z_N\cdots \dd z_{\ell+m+1} \Big]   \dd z_{\ell+m}\cdots \dd z_{\ell+1} \Big] \dd z_\ell \cdots \dd z_1, 
	\label{eq:misc_dff}
	\end{multline}
	whenever  $(\bmalpha,\bmbeta,\bmgamma) \in O^{\cap\{1,\tau\}}$, 
	where $\Gamma_{\mathtt{I},+,r;i}=\Gamma_{\mathtt{I},+,r;i}$ unless $i=j$, in which case $\Gamma_{\mathtt{I},+,r;i} = \Gamma_{\mathtt{I},+,r;i}(\{z_{i_0}\}_{i_0\neq j})$ is a small counterclockwise circle around $z_k$ not winding around any of the other $z$'s or $0,1$. 
	\label{prop:DF_transposition}
\end{proposition}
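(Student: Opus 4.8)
The plan is to work entirely with the contour representation of \S\ref{subsec:alternative} rather than with the honest integral over $\square_{\ell,m,n}$. Note first that for $(\bmalpha,\bmbeta,\bmgamma)\in O^{\cap\{1,\tau\}}$ the symbol $I_{\ell,m,n}[F](\bmalpha,\bmbeta,\bmgamma)$ must be read as the continuation $\dot I_{\ell,m,n}[F]$, since these parameters need not lie in $V_{\ell,m,n}[F]$; indeed, when $\gamma_{j,k}$ is a negative integer the factor $(x_k-x_j)^{2\gamma_{j,k}}$ fails to be locally integrable across $\{x_j=x_k\}$, so the honest integral over $\square_{\ell,m,n}$ does not apply. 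The defining condition $(\bmalpha,\bmbeta)\in O[F,\mathsf K]$ together with $(\bmalpha^\tau,\bmbeta^\tau)\in O[F^\tau,\mathsf K^\tau]$ is precisely what guarantees that both $\dot I_{\ell,m,n}[F](\bmalpha,\bmbeta,\bmgamma)$ and $\dot I_{\ell,m,n}[F](\bmalpha,\bmbeta,\bmgamma)^\tau=\dot I_{\ell,m,n}[F^\tau](\bmalpha^\tau,\bmbeta^\tau,\bmgamma^\tau)$ are represented by absolutely convergent contour integrals of the form \cref{eq:misc_mib}, with the variable $z_i$ running over the nested contour $\Gamma_{\mathtt I,+,r}$ whose ``height'' $r$ is the position of $i$ within its block $\calI_\bullet$.

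First I would show that these two contour integrals have the \emph{same} integrand and differ only by interchanging the contours carrying $z_j$ and $z_k$. Applying the change of variables $z_i\mapsto z_{\tau(i)}$ to the representation of $\dot I_{\ell,m,n}[F^\tau](\bmalpha^\tau,\bmbeta^\tau,\bmgamma^\tau)$ rewrites it as the integral of $\prod_i z_i^{\alpha_i}(1-z_i)^{\beta_i}\prod_{p<q}(z_{\tau(q)}-z_{\tau(p)})^{2\gamma_{\tau(p),\tau(q)}}F$ over the contours $z_i\in C_{\tau(i)}$, where $C_i=\Gamma_{\mathtt I,+,r(i)}$. Because $\tau$ transposes the \emph{consecutive} indices $j,k$, a reindexing of the cross terms shows that $\prod_{p<q}(z_{\tau(q)}-z_{\tau(p)})^{2\gamma_{\tau(p),\tau(q)}}$ agrees with $\prod_{p<q}(z_q-z_p)^{2\gamma_{p,q}}$ factor by factor, except in the single factor attached to the pair $\{j,k\}$, which becomes $(z_j-z_k)^{2\gamma_{j,k}}$ in place of $(z_k-z_j)^{2\gamma_{j,k}}$. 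Here the hypothesis $\gamma_{j,k}\in\bbZ$ is essential: the exponent $2\gamma_{j,k}$ is an even integer, so this factor is single-valued and $(z_j-z_k)^{2\gamma_{j,k}}=(z_k-z_j)^{2\gamma_{j,k}}$, whence the two integrands coincide as meromorphic functions. Thus $\dot I_{\ell,m,n}[F](\bmalpha,\bmbeta,\bmgamma)$ and its $\tau$-image are integrals of one and the same integrand, the former with $z_j$ on the lower contour $C_j$ and $z_k$ on the adjacent higher contour $C_k$, the latter with these two contours exchanged.

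The heart of the argument is then a one-dimensional residue computation, carried out with all $z_i$, $i\neq j,k$, held on their contours. Since $|j-k|=1$, the contours $C_j$ and $C_k$ are adjacent in height and no other contour — nor any of the branch loci $z_j\in\{0,1\}\cup\{z_i:i\neq k\}$ — lies between them, so one may deform $z_j$ from $C_j$ up toward $C_k$ sweeping across the single locus $z_j=z_k$. Bringing both variables to the common contour $C_k$ and comparing the two orderings, the difference of ``$z_j$ below $z_k$'' and ``$z_j$ above $z_k$'' is exactly the integral over a small loop of $z_j$ about $z_k$, chosen small enough to enclose no other singularity; tracking that the loop runs below $z_k$ rightward and above it leftward fixes the counterclockwise orientation. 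Integrating the remaining variables over their original contours (and $z_k$ over $C_k$) then yields \cref{eq:misc_dff}, in which the $j$-th contour is replaced by this circle and all others are unchanged. When $\gamma_{j,k}\ge 0$ there is no pole, the circle integral vanishes, and $\dot I=\dot I^\tau$, consistently with the fact that the honest integral is then convergent and symmetric in $z_j,z_k$.

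I expect the main obstacle to be the rigorous justification of the contour deformation: one must check that, throughout the homotopy moving $C_j$ past $C_k$, the integrand stays integrable and that no singularity other than $z_j=z_k$ is crossed, uniformly in the remaining contour-valued variables, and that the limiting circle integral converges for $(\bmalpha,\bmbeta,\bmgamma)\in O^{\cap\{1,\tau\}}$. The restriction $|j-k|=1$ and membership in the $O$-sets are exactly the hypotheses that make this bookkeeping go through. The three cases $\bullet\in\{1,2,3\}$ are handled uniformly, the contours over $(-\infty,0]$ and $[1,\infty)$ being images of those over $[0,1]$ under the Möbius maps $\mathsf T_\sigma$; alternatively one may reduce the $\bullet=1,3$ cases to $\bullet=2$ via \Cref{prop:simple}.
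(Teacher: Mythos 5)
Your proposal is correct and follows essentially the same route as the paper: the paper's proof also interprets the statement via the contour representation of \S\ref{subsec:alternative}, disposes of the middle block $\bullet=2$ by exactly this contour-deformation/residue argument (which it calls ``straightforward''), and reduces $\bullet\in\{1,3\}$ to that case via \Cref{prop:simple}, just as you suggest. The only cosmetic difference is that the paper first reduces to $F=1$ by absorbing monomials of the Laurent polynomial $F$ into $\bmalpha$ and using linearity, whereas you carry general $F$ through the deformation directly; both versions hinge on the same two ingredients you identified, namely adjacency $|j-k|=1$ (no intervening contours or branch loci) and $\gamma_{j,k}\in\bbZ$ (single-valuedness of $(z_k-z_j)^{2\gamma_{j,k}}$).
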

\begin{proof}
	It suffices to consider the case $F=1$. Indeed, if $F$ is a monomial, then we can simply absorb it into a redefinition of $\bmalpha$. The set $O^{\cap\{1,\tau\}}$ is decreasing with the set of monomials in $F$, so once the result has been proven for monomials, it follows for all Laurent polynomials.
	
	For $\bullet=2$, the proposition follows via a straightforward countour deformation argument.  The case $\bullet \in \{1,3\}$ can be reduced to $\bullet=3$ via \Cref{prop:simple}.
\end{proof}

\subsection{Symmetrization}
\label{subsec:symmetrization} 
Let $F\in \calA^\calD(A_{\ell,m,n})$.
\begin{proposition}
	For any $(\bmalpha,\bmbeta,\bmgamma) \in \cap_{\sigma \in \frakS_{\ell,m,n}} \dot{\Omega}_{\ell,m,n}[F]^\sigma$,  
	\begin{equation}
	\dot{I}_{\ell,m,n}[F](\bmalpha,\bmbeta,\bmgamma) = \sum_{\sigma \in \frakS_{\ell,m,n}} e^{ i \Theta(\sigma^{-1})} \dot{S}_{\ell,m,n}[F](\bmalpha,\bmbeta,\bmgamma)^\sigma ,
	\label{eq:misc_n66}
	\end{equation}
	where $\Theta(\sigma)= 2\pi \sum_{1\leq j < k \leq N} 1_{\sigma(j)>\sigma(k)} \gamma_{j,k}$. 
	\label{prop:symmetrization}
\end{proposition}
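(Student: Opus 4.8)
The plan is to prove \cref{eq:misc_n66} first on the open set where every integral in sight converges absolutely, and then to propagate it by analytic continuation. The geometric input is the decomposition of the cube into permuted simplices: up to sets of Lebesgue measure zero,
\begin{equation}
	\square_{\ell,m,n} = \bigcup_{\sigma \in \frakS_{\ell,m,n}} \sigma\cdot \triangle_{\ell,m,n},
\end{equation}
where $\sigma$ acts by permuting coordinates within each block $\calI_1,\calI_2,\calI_3$ and the interiors of the pieces are pairwise disjoint. Absolute integrability of the (modulus of the) Selberg-type integrand over $\square_{\ell,m,n}$ is equivalent to absolute integrability over each piece, so that $V_{\ell,m,n}[F] = \bigcap_{\sigma \in \frakS_{\ell,m,n}} \Omega_{\ell,m,n}[F]^\sigma$, a nonempty convex open set. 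On it, the left side of \cref{eq:misc_n66} is the honest integral $I_{\ell,m,n}[F]$ and each summand on the right is $e^{i\Theta(\sigma^{-1})}S_{\ell,m,n}[F]^\sigma$.

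First I would fix $\sigma$ and perform the change of variables $y_i = x_{\sigma(i)}$, carrying $\sigma\cdot\triangle_{\ell,m,n}$ onto $\triangle_{\ell,m,n}$ with unit Jacobian. The prefactor $\prod_i |x_i|^{\alpha_i}|1-x_i|^{\beta_i}$ becomes $\prod_i |y_i|^{\alpha^\sigma_i}|1-y_i|^{\beta^\sigma_i}$, the factor $F$ becomes $F^\sigma$, and $(x_k - x_j + i0)^{2\gamma_{j,k}}$ becomes $(y_{\sigma^{-1}(k)} - y_{\sigma^{-1}(j)} + i0)^{2\gamma_{j,k}}$. On $\triangle_{\ell,m,n}$ the coordinates increase within blocks and across blocks, so $y_{\sigma^{-1}(k)} - y_{\sigma^{-1}(j)}$ is nonnegative except precisely when $j,k$ lie in a common block with $j<k$ but $\sigma^{-1}(j) > \sigma^{-1}(k)$; cross-block pairs never reverse (consistent with $1_{\sigma^{-1}(j)>\sigma^{-1}(k)}=0$ there). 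For a reversed pair the branch convention $(x+i0)^{2\gamma} = e^{2\pi i\gamma}|x|^{2\gamma}$ for $x<0$ produces a factor $e^{2\pi i \gamma_{j,k}}$, so the total phase on the $\sigma$-piece is $\exp(2\pi i \sum_{j<k} 1_{\sigma^{-1}(j)>\sigma^{-1}(k)}\gamma_{j,k}) = e^{i\Theta(\sigma^{-1})}$. Reindexing the difference product by $\gamma_{j,k} = \gamma^\sigma_{\sigma^{-1}(j),\sigma^{-1}(k)}$ (and using symmetry of $\bmgamma$) recognizes the remaining integral as $S_{\ell,m,n}[F^\sigma](\bmalpha^\sigma,\bmbeta^\sigma,\bmgamma^\sigma) = S_{\ell,m,n}[F](\bmalpha,\bmbeta,\bmgamma)^\sigma$. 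Summing over $\sigma$ establishes \cref{eq:misc_n66} on $V_{\ell,m,n}[F]$.

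To finish I would continue analytically. Each $e^{i\Theta(\sigma^{-1})}$ is entire in $\bmgamma$, each $\dot{S}_{\ell,m,n}[F](\cdot)^\sigma$ is holomorphic on $\dot{\Omega}_{\ell,m,n}[F]^\sigma$, and $\dot{I}_{\ell,m,n}[F]$ is holomorphic on $\dot{V}_{\ell,m,n}[F] \supseteq \bigcap_\sigma \dot{\Omega}_{\ell,m,n}[F]^\sigma$; hence both sides of \cref{eq:misc_n66} are holomorphic on $\bigcap_\sigma \dot{\Omega}_{\ell,m,n}[F]^\sigma$, which is connected as the complement of a locally finite arrangement of complex hyperplanes. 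Agreement on the open subset $V_{\ell,m,n}[F]$ then forces equality everywhere by the identity theorem. I expect the only genuinely delicate step to be the phase bookkeeping of the middle paragraph: combining the $+i0$ prescription, the block structure, and the direction of the permutation so that the accumulated phase is exactly $\Theta(\sigma^{-1})$ and not $\Theta(\sigma)$ or its conjugate. The convergence-region identification and the continuation are routine given the results already in place.
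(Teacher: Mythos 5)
Your proposal is correct and takes essentially the same route as the paper's proof: decompose $\square_{\ell,m,n}$ into the $\ell!\,m!\,n!$ permuted copies of $\triangle_{\ell,m,n}$, change variables on each piece, extract the phase $e^{i\Theta(\sigma^{-1})}$ from the $(\cdot+i0)^{2\gamma_{j,k}}$ factors on reversed same-block pairs, and conclude by analyticity on the connected set $\bigcap_{\sigma}\dot{\Omega}_{\ell,m,n}[F]^{\sigma}$. Your phase bookkeeping (cross-block pairs contributing no phase, and the convention $x^{\gamma}=e^{\pi i\gamma}|x|^{\gamma}$ for $x<0$ giving $e^{2\pi i\gamma_{j,k}}$ per reversed pair) is exactly the paper's computation, yielding $e^{i\Theta(\sigma^{-1})}$ as required.
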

\begin{proof}
	By analyticity, it suffices to prove the result when the quantities above are well-defined Lebesgue integrals. 
	Decomposing $\square_{\ell,m,n}$ into $\ell!m!n!$ copies of $\triangle_{\ell,m,n}$, 
	\begin{multline}
	I_{\ell,m,n}[F](\bmalpha,\bmbeta,\bmgamma) = \sum_{\sigma \in \frakS_{\ell,m,n}} \int_{\triangle_{\ell,m,n}} \prod_{j=1}^N |x_j|^{\alpha_{\sigma(j)}}|1-x_j|^{\beta_{\sigma(j)}}  \prod_{1\leq j<k \leq N} (x_{\sigma^{-1}(k)}-x_{\sigma^{-1}(j)}+i0)^{2\gamma_{j,k} } \\ 
	\times F(x_{\sigma^{-1}(1)},\cdots,x_{\sigma^{-1}(N)} ) \dd^N x.
	\end{multline}
	The right-hand side is 
	\begin{equation}
	\sum_{\sigma \in \frakS_{\ell,m,n}} e^{i\Theta(\sigma^{-1})} \int_{\triangle_{\ell,m,n}} \prod_{j=1}^N |x_j|^{\alpha_{\sigma(j)}}|1-x_j|^{\beta_{\sigma(j)}}  \prod_{1\leq j<k \leq N} (x_{k}-x_{j})^{2\gamma_{\sigma(j),\sigma(k)} } (F^\sigma) \dd^N x, 
	\end{equation}
	which is the right-hand side of \cref{eq:misc_n66}. 
\end{proof}

\begin{proposition}
	Suppose that $\bmalpha,\bmbeta,\bmgamma$ are invariant under all $\sigma \in \frakS_{\ell,m,n}$, and suppose now that $F\in \bbC[x_1,\ldots,x_N]^{\frakS_{\ell,m,n}}$. 
	Then, for all $(\bmalpha,\bmbeta,\bmgamma)\in \dot{\Omega}_{\ell,m,n}[F]$,
	\begin{equation}
	\dot{I}_{\ell,m,n}[F](\bmalpha,\bmbeta,\bmgamma) = \Big[  \prod_{k=1}^\ell \frac{1-e^{2\pi i k \gamma_1}}{1-e^{2\pi i \gamma_1}} \Big]\Big[  \prod_{k=1}^m \frac{1-e^{2\pi i k \gamma_2}}{1-e^{2\pi i \gamma_2}} \Big]\Big[  \prod_{k=1}^n \frac{1-e^{2\pi i k \gamma_3}}{1-e^{2\pi i \gamma_3}} \Big] \dot{S}_{\ell,m,n}[F](\bmalpha,\bmbeta,\bmgamma),
	\label{eq:misc_ju3}
	\end{equation}
	where, for each $\bullet \in \{1,2,3\}$, $\gamma_\bullet=\gamma_{j,k}$ for all distinct $j,k\in \calI_\bullet$.  
	\label{prop:permutation_relation}
\end{proposition}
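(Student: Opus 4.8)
The plan is to derive this as a specialization of \Cref{prop:symmetrization} followed by a purely combinatorial evaluation of the resulting group sum. First I would check that the hypotheses let me apply \Cref{prop:symmetrization} on all of $\dot{\Omega}_{\ell,m,n}[F]$ (intersected with the invariant locus), not merely on the smaller set $\cap_\sigma \dot{\Omega}_{\ell,m,n}[F]^\sigma$: since $F\in \bbC[x_1,\ldots,x_N]^{\frakS_{\ell,m,n}}$ we have $F^\sigma=F$, and since $\bmalpha,\bmbeta,\bmgamma$ are $\frakS_{\ell,m,n}$-invariant we have $(\bmalpha^\sigma,\bmbeta^\sigma,\bmgamma^\sigma)=(\bmalpha,\bmbeta,\bmgamma)$, so the defining condition of $\dot{\Omega}_{\ell,m,n}[F]^\sigma$ coincides with that of $\dot{\Omega}_{\ell,m,n}[F]$ at every invariant point. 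Thus any invariant $(\bmalpha,\bmbeta,\bmgamma)\in\dot{\Omega}_{\ell,m,n}[F]$ automatically lies in $\cap_\sigma\dot{\Omega}_{\ell,m,n}[F]^\sigma$. The same identities also collapse the summands: by the definition of the superscript-$\sigma$ notation, $\dot{S}_{\ell,m,n}[F](\bmalpha,\bmbeta,\bmgamma)^\sigma=\dot{S}_{\ell,m,n}[F^\sigma](\bmalpha^\sigma,\bmbeta^\sigma,\bmgamma^\sigma)=\dot{S}_{\ell,m,n}[F](\bmalpha,\bmbeta,\bmgamma)$. Hence \cref{eq:misc_n66} reduces to $\dot{I}_{\ell,m,n}[F]=\big(\sum_{\sigma\in\frakS_{\ell,m,n}}e^{i\Theta(\sigma^{-1})}\big)\dot{S}_{\ell,m,n}[F]$, and it remains only to show the scalar prefactor equals the triple product in \cref{eq:misc_ju3}.

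The next step is to simplify $\Theta$. Because $\sigma\mapsto\sigma^{-1}$ is a bijection of $\frakS_{\ell,m,n}$, I may replace $\sum_\sigma e^{i\Theta(\sigma^{-1})}$ by $\sum_\sigma e^{i\Theta(\sigma)}$. Then I would observe that cross-block pairs contribute nothing: if $j<k$ with $j\in\calI_a$, $k\in\calI_b$, $a<b$, then every $\sigma\in\frakS_{\ell,m,n}$ preserves the blocks $\calI_1,\calI_2,\calI_3$, so $\sigma(j)<\sigma(k)$ and the indicator $1_{\sigma(j)>\sigma(k)}$ vanishes. Only intra-block pairs survive, and on block $\calI_\bullet$ the exponent $\gamma_{j,k}=\gamma_\bullet$ is constant, so $\Theta(\sigma)=2\pi\sum_{\bullet=1}^3\gamma_\bullet\operatorname{inv}_\bullet(\sigma)$, where $\operatorname{inv}_\bullet(\sigma)$ counts inversions of the restriction $\sigma|_{\calI_\bullet}$. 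Since $\frakS_{\ell,m,n}=\frakS_\ell\times\frakS_m\times\frakS_n$ acts independently on the three blocks and $\operatorname{inv}_\bullet(\sigma)$ depends only on the corresponding factor of $\sigma$, the sum factorizes as a product of three sums $\sum_{\tau}e^{2\pi i\gamma_\bullet\operatorname{inv}(\tau)}$ over $\frakS_\ell$, $\frakS_m$, $\frakS_n$ respectively.

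The combinatorial heart is then the classical $q$-factorial identity $\sum_{\tau\in\frakS_p}q^{\operatorname{inv}(\tau)}=\prod_{k=1}^p(1+q+\cdots+q^{k-1})=\prod_{k=1}^p\frac{1-q^k}{1-q}$, which I would prove (or cite) by the insertion recursion: placing the largest symbol into a permutation of $p-1$ letters in one of $p$ slots creates $0,1,\ldots,p-1$ new inversions, giving $\sum_{\tau\in\frakS_p}q^{\operatorname{inv}(\tau)}=(1+q+\cdots+q^{p-1})\sum_{\tau\in\frakS_{p-1}}q^{\operatorname{inv}(\tau)}$. Specializing $q=e^{2\pi i\gamma_\bullet}$ for each block turns the three factors into exactly $\prod_{k=1}^\ell\frac{1-e^{2\pi i k\gamma_1}}{1-e^{2\pi i\gamma_1}}$, $\prod_{k=1}^m\frac{1-e^{2\pi i k\gamma_2}}{1-e^{2\pi i\gamma_2}}$, and $\prod_{k=1}^n\frac{1-e^{2\pi i k\gamma_3}}{1-e^{2\pi i\gamma_3}}$, which is the prefactor of \cref{eq:misc_ju3}. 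When $\gamma_\bullet\in\bbZ$ the written ratio is formally $0/0$ and is to be read as the polynomial $1+q+\cdots+q^{k-1}$ at $q=1$ (value $k$); as both sides are analytic in $\gamma_\bullet$, the identity persists there by continuity.

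No step here is genuinely difficult; the two places demanding care are the domain bookkeeping that upgrades \Cref{prop:symmetrization} from $\cap_\sigma\dot{\Omega}_{\ell,m,n}[F]^\sigma$ to $\dot{\Omega}_{\ell,m,n}[F]$—resolved entirely by the $\frakS_{\ell,m,n}$-invariance of the data and the symmetry of $F$—and the clean identification of $\Theta$ with a sum of block inversion counts, which is what makes the group sum factor into three $q$-factorials. I expect the latter bookkeeping of exactly which pairs $(j,k)$ contribute to be the main, albeit mild, obstacle, since it is where the specific structure of $\frakS_{\ell,m,n}$ and the block-constancy of $\bmgamma$ must be used simultaneously.
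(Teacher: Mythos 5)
Your proof is correct and follows essentially the same route as the paper's: specialize \Cref{prop:symmetrization}, use the $\frakS_{\ell,m,n}$-invariance of $(\bmalpha,\bmbeta,\bmgamma)$ and $F$ to collapse all summands to a single $\dot{S}_{\ell,m,n}[F]$, and evaluate the resulting group sum block-by-block via the inversion generating function $\sum_{\tau\in\frakS_p}q^{\operatorname{inv}(\tau)}=\prod_{k=1}^p(1-q^k)/(1-q)$. The only differences are cosmetic: you prove the $q$-factorial identity by the insertion recursion where the paper cites it (as the bubble-sort bijection), and you spell out the domain bookkeeping upgrading $\cap_\sigma\dot{\Omega}_{\ell,m,n}[F]^\sigma$ to $\dot{\Omega}_{\ell,m,n}[F]$ on the invariant locus, as well as the removable singularity at $\gamma_\bullet\in\bbZ$, both of which the paper treats implicitly.
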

Here, we are treating $(1-e^{2\pi i \gamma})^{-1}(1-e^{2\pi i k\gamma})$ as an entire function.
\begin{proof}
	Applying the previous proposition,
	\begin{equation}
	I_{\ell,m,n}[F](\bmalpha,\bmbeta,\bmgamma) = \Big[\sum_{\sigma \in \frakS_\ell} e^{\pi i \operatorname{o.o.}(\sigma) \gamma }\Big]\Big[\sum_{\sigma \in \frakS_m} e^{\pi i \operatorname{o.o.}(\sigma) \gamma }\Big]\Big[\sum_{\sigma \in \frakS_n} e^{\pi i \operatorname{o.o.}(\sigma) \gamma }\Big] S_{\ell,m,n}[F](\bmalpha,\bmbeta,\bmgamma) , 
	\label{eq:misc_k21}
	\end{equation}
	where $\operatorname{o.o.}(\sigma)$ is the number of out-of-order pairs in $\sigma$. We appeal to the algebraic identity 
	\begin{equation}
	\bbZ[\zeta]\ni \sum_{\sigma \in \frakS_N} \zeta^{\operatorname{o.o.}(\sigma)} = \prod_{n=0}^{N-1} \sum_{m=0}^n \zeta^m = \prod_{n=1}^{N} \frac{1-\zeta^{n}}{1-\zeta},
	\end{equation}
	which holds for all $N\in \bbN$ and encodes the bijection between $\frakS_N$ and the set of possible runs of the bubble sort algorithm. Plugging in $\zeta=e^{\pi i \gamma}$, \cref{eq:misc_k21} becomes \cref{eq:misc_ju3}. 
\end{proof}

\subsection{The Aomoto-Dotsenko--Fateev relations}
\label{subsec:relations}

Fix $N\in \bbN^+$ and $F\in \bbC[x_1,x_1^{-1},\ldots,x_N,x_N^{-1}]$. For each $j\in \{1,\ldots,N\}$, let $\sigma_j \in \frakS_N$ be the permutation that takes $1$ and inserts it in the $j$th position while maintaining the relative order of the other terms. That is, $\sigma_j = (1\;j\; j-1\;\cdots \;2)$. 

For any $\ell\in \bbN^+$ and $m,n \in \bbN$ with $\ell+m+n=N$, let 
\begin{align}
\begin{split} 
\dot{\Lambda}_{\ell,m,n}[F] &= \dot{V}_{\ell,m,n}[F] \cap \dot{V}_{\ell-1,m+1,n}[F]^{\sigma_\ell} \cap \dot{V}_{\ell-1,m,n+1}[F]^{\sigma_{\ell+m}} \\
&= \dot{V}_{\ell,m,n}[F] \cap \dot{V}_{\ell-1,m+1,n}[F]^{\sigma_{\ell+m}} \cap \dot{V}_{\ell-1,m,n+1}[F]^{\sigma_{N}}, 
\end{split} 
\end{align}
\begin{equation} 
\dot{\mho}_{\ell,m,n}[F] = (\cap_{j=1}^\ell \dot{\Omega}_{\ell,m,n}[F]^{\sigma_j}) \cap (\cap_{j=\ell}^{\ell+m} \dot{\Omega}_{\ell-1,m+1,n}[F]^{\sigma_j}) \cap (\cap_{j=\ell+m}^{N} \dot{\Omega}_{\ell-1,m,n+1}[F]^{\sigma_j}).
\end{equation} 
Note that $\dot{\mho}_{\ell,m,n}[F],\dot{\Lambda}_{\ell,m,n}[F]$  are open, dense, and connected subsets of $\bbC^{2N+N(N-1)/2}$, being the complements of locally finite unions of complex affine hyperplanes.
\begin{proposition}
	For any $(\bmalpha,\bmbeta,\bmgamma)\in \dot{\mho}_{\ell,m,n}[F]$, 
	\begin{multline}
	0=\sum_{j=1}^\ell e^{\pm  i  \theta_j } \dot{S}_{\ell,m,n}[F]( \bmalpha,\bmbeta,\bmgamma)^{\sigma_j} + \sum_{j=\ell}^{\ell+m} e^{\pm i\vartheta_j } \dot{S}_{\ell-1,m+1,n}[F](\bmalpha,\bmbeta,\bmgamma)^{\sigma_j} \\ +  \sum_{j=\ell+m}^{N} e^{\pm i\varphi_j} \dot{S}_{\ell-1,m,n+1}[F](\bmalpha,\bmbeta,\bmgamma)^{\sigma_j}
	\label{eq:misc_z3z}
	\end{multline}
	holds for each choice of sign, 
	where $\theta_j = 2\pi \sum_{2\leq j_0 \leq j} \gamma_{1,j_0}$, $\vartheta_j = \pi \alpha + 2\pi  \sum_{2\leq j_0 \leq j} \gamma_{1,j_0}$, and $\varphi_j = \pi \alpha+\pi \beta+2 \pi \sum_{2\leq j_0 \leq j} \gamma_{1,j_0}$.
	\label{lem:main_lemma}
\end{proposition}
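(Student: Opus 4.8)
The plan is to single out the first variable $z=x_1$, integrate it over all of $\overline{\bbR}$ while the remaining $N-1$ variables are held in their ordered configuration, and to recognize \cref{eq:misc_z3z} as the statement that the resulting one-variable contour integral closes up at infinity. Two preliminary reductions make this clean. First, both sides of \cref{eq:misc_z3z} are linear in $F$ and the phases $\theta_j,\vartheta_j,\varphi_j$ are independent of $F$, so it suffices to treat a single Laurent monomial $F=c\prod_i x_i^{a_i}$, $a_i\in\bbZ$; the integrality of the $a_i$ will be crucial, since an integer power of $z$ is single-valued and so contributes no monodromy. Second, since $\dot{\mho}_{\ell,m,n}[F]$ is open, dense, and connected, and each $\dot{S}_\bullet[F]^{\sigma_j}$ is analytic on it, it is enough to prove the identity on a nonempty open subset $U\subseteq\dot{\mho}_{\ell,m,n}[F]$ and then invoke the identity theorem.

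For the one-dimensional core, fix generic ordered real values $x_2,\ldots,x_N$ and view the integrand as a function of $z$ alone,
\[
	\Psi(z) = z^{\alpha_1}(1-z)^{\beta_1}\Big[\prod_{k=2}^N (z-x_k)^{2\gamma_{1,k}}\Big] F(z,x_2,\ldots,x_N),
\]
the factors not involving $z$ being held fixed. All branch points $0,1,x_2,\ldots,x_N$ lie on $\bbR$, and the monomial $F$ (an integer power of $z$) is holomorphic on $\bbC^*$, so $\Psi$ extends holomorphically to the upper half-plane $\bbH$. Applying Cauchy's theorem on $\bbH\cap\{|z|<R\}$ and letting $R\to\infty$, the semicircular arc contributes $o(1)$ precisely when $\Re\big(\alpha_1+a_1+\beta_1+2\sum_{k=2}^{N}\gamma_{1,k}\big)<-1$, whence $\int_{\bbR}\Psi_+(x)\,dx=0$ for the $+i0$ boundary value $\Psi_+$; the lower half-plane gives the $-i0$ version. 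For $U$ I would take $\Re(\alpha_1+a_1)$ and $\Re\beta_1$ slightly above $-1$ and each $\Re\gamma_{1,k}$ slightly above $-1/2$: then the endpoint and interior singularities of the boundary value are integrable, all relevant Selberg integrals converge absolutely, the decay condition holds (its left side is then near $-(N+1)<-1$), and $U$ is nonempty. This is exactly where the reduction to monomials pays off, since tuning $a_1$ lets us control the effective exponent at $0$ and at $\infty$ independently of the degree of $F$.

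Next I would identify the boundary values with the summands of \cref{eq:misc_z3z}. Splitting $\bbR$ at $0,1,x_2,\ldots,x_N$ yields exactly $N+2$ intervals, matching the $N+2$ terms. On each interval the boundary value $\Psi_{\mp}$ equals the positive real Selberg integrand times an accumulated monodromy phase: moving rightward, crossing a point $x_k$ multiplies this phase by $e^{\pm 2\pi i\gamma_{1,k}}$, crossing $0$ by $e^{\pm i\pi\alpha_1}$, and crossing $1$ by $e^{\pm i\pi\beta_1}$, while the single-valued monomial $F$ contributes nothing. Relative to the leftmost interval these assemble to $e^{\pm i\theta_j}$, $e^{\pm i\vartheta_j}$, $e^{\pm i\varphi_j}$, the two half-planes producing the two sign choices, and the residual global phase factors out of the homogeneous relation. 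Integrating the pointwise identity $\int_{\bbR}\Psi_{\mp}=0$ against the ordered configuration of $x_2,\ldots,x_N$ and noting that the locus $\{z\in I_j\}$ together with the ordered others is the $\sigma_j$-image of $\triangle_{\ell,m,n}$, $\triangle_{\ell-1,m+1,n}$, or $\triangle_{\ell-1,m,n+1}$ according as $I_j\subset(-\infty,0)$, $(0,1)$, or $(1,\infty)$, identifies the $j$th interval integral with $\dot{S}_\bullet[F]^{\sigma_j}$. This gives \cref{eq:misc_z3z} on $U$, hence on all of $\dot{\mho}_{\ell,m,n}[F]$.

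The analytic heart (holomorphy in $\bbH$ plus vanishing of the arc) is routine once $U$ is seen to be nonempty; the main obstacle is the combinatorial bookkeeping of the last step. Concretely, I must verify that the phases accumulated along the real axis match $\theta_j,\vartheta_j,\varphi_j$ exactly after a global phase is discarded, and that the reindexing $\sigma_j=(1\,j\,j-1\,\cdots\,2)$ carries each interval integral precisely to $\dot{S}_\bullet[F]^{\sigma_j}$ over the correct domain, with the exponents and $F^{\sigma_j}$ permuted as prescribed and the convention $(x_k-x_j+i0)$ respected; the boundary indices $j=\ell$ and $j=\ell+m$, which occur in two sums at once, correspond to the two intervals adjacent to $0$ and to $1$ and must be accounted on the correct sides. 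This recovers, and extends to nontrivial $F$, the relation of Aomoto \cite{Ao}.
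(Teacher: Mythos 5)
Your proposal is correct and follows essentially the same route as the paper's proof: restrict to a nonempty open parameter set where all the integrals converge absolutely, isolate $x_1$ via Fubini, close the $x_1$-contour in the upper (resp.\ lower) half-plane so that the vanishing arc at infinity forces the integral of the $\pm i0$ boundary values over $\bbR$ to vanish, match the $N+2$ intervals of $\bbR\setminus\{0,1,x_2,\ldots,x_N\}$ with the $N+2$ phased terms, and finish by analytic continuation. The only cosmetic differences are that you reduce to Laurent monomials (exploiting single-valuedness of integer powers) where the paper simply takes $F=1$ without loss of generality, and that your choice of the open set $U$ pins down $\alpha_1,\beta_1,\gamma_{1,k}$ while leaving the remaining parameters implicit, whereas the paper constructs its set $\mho_{\digamma,\underline{\gamma},\overline{\gamma}}$ explicitly.
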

\begin{proof}
	Without loss of generality, we may assume $F=1$. 
	Let $\mho_{\ell,m,n}$ denote the subset of $(\bmalpha,\bmbeta,\bmgamma)\in \bbC^{2N+N(N-1)/2}$ defined by
	\begin{equation}
	\mho_{\ell,m,n} = \left\{(\bmalpha,\bmbeta,\bmgamma)  : (\bmalpha^{\sigma_j},\bmbeta^{\sigma_j}, \bmgamma^{\sigma_j}) \in 
	\begin{cases}
	\Omega_{\ell,m,n} & (j\in \{1,\ldots,\ell-1\}) \\ 
	\Omega_{\ell-1,m+1,n} & (j\in \{\ell+1,\ldots,\ell+m-1\}) \\ 
	\Omega_{\ell-1,m,n+1} & (j\in \{\ell+m+1,\ldots,N\}) \\
	\Omega_{\ell,m,n} \cap \Omega_{\ell-1,m+1,n} & (j=\ell) \\
	\Omega_{\ell-1,m+1,n}\cap \Omega_{\ell-1,m,n+1} & (j=\ell+m)
	\end{cases}
	\right\}.
	\label{eq:misc_0s0}
	\end{equation}
	
	Let $\varepsilon>0$. 
	For each $\digamma_1,\digamma_2,\digamma_3>0$ and $\underline{\gamma},\overline{\gamma}\in (-(N-1)^{-1},0)$ with $\underline{\gamma}<\overline{\gamma}$, let $\mho_{0,\digamma,\underline{\gamma},\overline{\gamma}}$ (suppressing the $\ell,m,n$ dependence for brevity) denote the set of $(\bmalpha,\bmbeta,\bmgamma)\in \bbC^{2N+N(N-1)/2}$ such that 
	\begin{itemize}
		\item $\underline{\gamma}<\Re \gamma_{j,k}<\overline{\gamma}$ for all $j,k\in \{1,\ldots,N\}$ with $j\neq k$, 
		\item $\digamma_1<\Re \alpha_{j} <\digamma_2$ for each $j\in \{2,\ldots,\ell\}$, $\Re \alpha_j >\digamma_1$ for each $j\in \{\ell+1,\ldots,\ell+m\}$, and $\Re \alpha_{j} < - \digamma_3$ for each $j\in \{\ell+m+1,\ldots,N\}$, 
		\item $\digamma_1<\Re \beta_{j} < \digamma_2$ for each $j\in \{\ell+m+1,\ldots,N\}$, $\Re \beta_j > \digamma_1$ for each $j\in \{\ell+1,\ldots,\ell+m\}$, and $\Re \beta_{j} <- \digamma_3$ for $j\in \{2,\ldots,\ell\}$,
	\end{itemize}
	where $\digamma = (\digamma_1,\digamma_2,\digamma_3)$. 
	The set $\mho_{0,\digamma,\underline{\gamma},\overline{\gamma}}$ is open and nonempty. By \cref{eq:Omega_N_definition} and the analogue of \cref{eq:Omega_N_definition} for the $m<N$ case, there exist $\digamma_{00},\digamma_0,\digamma_{01}>0$ (depending on $\ell,m,n,\underline{\gamma},\overline{\gamma}$) such that 
	\begin{equation}
	\mho_{\digamma,\underline{\gamma},\overline{\gamma}} \overset{\mathrm{def}}{=}	\{ (\bmalpha,\bmbeta,\bmgamma) \in  \mho_{0,\digamma,\underline{\gamma},\overline{\gamma}} \text{ and }(\alpha_1,\beta_1)\in \Omega_{1,0,0}\cap \Omega_{0,1,0}\cap \Omega_{0,0,1}  \}\subset \mho_{\ell,m,n} 
	\end{equation}
	whenever $\digamma_2>\digamma_1>\digamma_0$ and $\digamma_3>\digamma_{01}\digamma_2+\digamma_{00}$. 
	Observe that $\Omega_{1,0,0}\cap \Omega_{0,1,0}\cap \Omega_{0,0,1}$ is the subset of $\bbC^2_{\alpha,\beta}$ defined by the inequalities $-1 < \Re \alpha, \Re \beta $ and $\Re \alpha + \Re \beta < -1$. The set 
	\begin{equation} 
	\{(r_1,r_2) \in \bbR^2 : -1 < r_1,r_2 \text{ and }r_1+r_2 < -1 \}
	\end{equation} 
	is a nonempty triangle. So, $\mho_{\digamma,\underline{\gamma},\overline{\gamma}}$ is an open and nonempty subset of $\bbC^{2N+N(N-1)/2}$ and moreover of $\dot{\mho}_{\ell,m,n}$. 
	
	For such $\digamma$ and $(\bmalpha,\bmbeta,\bmgamma) \in \mho_{\digamma,\underline{\gamma},\overline{\gamma}}$, \cref{eq:misc_z3z} (with $F=1$) 
	just reads 
	\begin{multline}
	0=\sum_{j=1}^\ell e^{\pm  i  \theta_j } S_{\ell,m,n}[1]( \bmalpha,\bmbeta,\bmgamma)^{\sigma_j} + \sum_{j=\ell}^{\ell+m} e^{\pm i\vartheta_j } S_{\ell-1,m+1,n}[1](\bmalpha,\bmbeta,\bmgamma)^{\sigma_j} \\ +  \sum_{j=\ell+m}^{N} e^{\pm i\varphi_j} S_{\ell-1,m,n+1}[1](\bmalpha,\bmbeta,\bmgamma)^{\sigma_j}
	\label{eq:misc_z4z}
	\end{multline}
	(note the absence of the dots over the $S$'s). 
	By the analyticity of all of the functions in \cref{eq:misc_z3z} on $\dot{\mho}_{\ell,m,n}$, it suffices to prove that \cref{eq:misc_z4z} holds for such $(\bmalpha,\bmbeta,\bmgamma)$. 
	
	By Fubini's theorem, the right-hand side of \cref{eq:misc_z4z} is 
	\begin{equation}
	\int_{\triangle_{\ell-1,m,n}} \omega(x_2,\ldots,x_N) \Big[ \int_{-\infty}^{+\infty} (-x_1 \pm i0)^{\alpha_1} (1-x_1\pm  i0)^{\beta_1} \Big(\prod_{j=2}^{N} (x_j-x \pm  i0)^{2\gamma_{1,j}}\Big)
	\dd x \Big] \dd x_2\cdots \dd x_{N},
	\end{equation}
	where $\omega(x_2,\ldots,x_N) = [\prod_{j=2}^N|x_j|^{\alpha_j} |1-x_j|^{\beta_j}] \prod_{2\leq j < k \leq N} (x_k-x_j)^{2\gamma_{j,k}}$. 
	The claim then follows from 
	\begin{equation}
	0=\int_{-\infty}^{+\infty} (-x \pm i0)^{\alpha} (1-x \pm i0)^{\beta} \Big( \prod_{j=2}^{N} (x_j-x \pm i0)^{2\gamma_{j}}
	\Big) \dd x, 
	\label{eq:misc_h12}
	\end{equation}
	which holds for every $(x_2,\ldots,x_N)\in (\bbR\backslash \{0,1\})^{N-1}$ such that $x_2,\ldots,x_N$ are pairwise distinct and all $\alpha,\beta,\gamma_2,\ldots,\gamma_N\in \bbC$ for which 
	\begin{itemize}
		\item the integrand of \cref{eq:misc_h12} lies in $L^1(\bbR)$ and 
		\item $\Re \gamma_{j}\in (-1,0)$ for all $j\in \{2,\ldots,N\}$. 
	\end{itemize}
	Denote the right-hand side of \cref{eq:misc_h12} by $\calI_\pm =\calI_\pm(x_2,\ldots,x_{N};\alpha,\beta,\gamma_2,\ldots,\gamma_N)$. 
	For $R > \max\{|x_1|,\ldots,|x_{N-1}|\}$, 
	\begin{equation}
	0=\int_{\Gamma_\mp(R)} (-z\pm i0)^{\alpha} (1-z \pm i0)^{\beta} \prod_{j=2}^{N} (x_j-z\pm i0)^{2\gamma_j} \dd z, 
	\label{eq:misc_112}
	\end{equation}
	where $\Gamma_\pm(R) =\Gamma_\pm(R)(x_2,\ldots,x_{N})\subset \bbC$ is the semicircular contour (with $N+1$ semicircular insets placed so that the contour avoids $x_2,\ldots,x_N$) connecting $-R$ and $+R$, with the arc and semicircular insets in the half-plane $\{z\in \bbC: \pm \Im z\geq 0\}$. See \Cref{fig:Aomoto_contour}. In \cref{eq:misc_112}, the integrand is defined taking the branch cut along the negative real axis, so 
	\begin{equation}
	(x-z\pm i0)^{2\gamma_j} = 
	\begin{cases}
	\exp (2\gamma_j (\log |x-z|+ i \operatorname{arg}(x-z) ) )  & (\text{$+$ case}, \Im z \leq 0), \\
	\exp (2\gamma_j (\log |x-z| -2\pi i+ i \operatorname{arg}(x-z) ) ) & (\text{$-$ case}, \Im z \geq 0),
	\end{cases}
	\end{equation}
	for any $x\in \bbR$, where $\operatorname{arg}(x-z)\in [0,2\pi)$. We orient $\Gamma_+$ counter-clockwise and $\Gamma_-$ clockwise.
	
	\begin{figure}
		\begin{center}
			\begin{tikzpicture}[scale=2.5,  decoration={
				markings,
				mark=at position 0.6 with {\arrow[scale=1.5,>=latex]{>}}}]
			\draw[step=.25,gray,thin, dotted] (-1.5,0) grid (1.5,1.3);
			\draw[->, thick, dashed] (0,0) -- (0,1.3) node[above] {$\Im z$};
			\draw[->, thick, dashed] (-1.5,0) -- (1.5,0) node[right] {$\Re z$};
			\filldraw[color=black] (0,0) circle (.5pt) node[below] {$0$};
			\filldraw[color=black] (1,0) circle (.5pt) node[below] {$1$};
			\filldraw[color=black] (-.33,0) circle (.5pt) node[below] {$x_2$};
			\filldraw[color=black] (-.8,0) circle (.5pt) node[below] {$x_1$};
			\filldraw[color=black] (.5,0) circle (.5pt) node[below] {$x_3$};
			\filldraw[color=black] (-1.2,0) circle (.5pt) node[below] {$-R$};
			\filldraw[color=black] (+1.2,0) circle (.5pt) node[below] {$R$};
			\draw[postaction={decorate}] (1.2,0) arc(0:180:1.2);
			\draw[postaction={decorate}] (-1.2,0)  -- (-.9,0);
			\draw (-.9,0) arc(0:-180:-.1);
			\draw[postaction={decorate}] (-.7,0) -- (-.4,0);
			\draw (-.4,0) arc(0:-180:-.066);
			\draw[postaction={decorate}] (-.266,0) -- (-.1,0);
			\draw (-.1,0) arc(0:-180:-.1);
			\draw[postaction={decorate}] (.1,0) -- (.4,0);
			\draw (.4,0) arc(0:-180:-.1);
			\draw[postaction={decorate}] (.6,0) -- (.9,0);
			\draw (.9,0) arc(0:-180:-.1);
			\draw (1.1,0) -- (1.2,0);
			\end{tikzpicture} 
		\end{center}
		\caption{The contour $\Gamma_+(R)$ in the case $\ell=2,m=1,n=0$.}
		\label{fig:Aomoto_contour}
	\end{figure}
	
	Let $\Gamma_{++}(R)$ denote the large arc of $\Gamma_+(R)$ and $\Gamma_{+0}(R)$ denote the rest, and likewise let $\Gamma_{--}(R)$ denote the large arc of $\Gamma_-(R)$ and $\Gamma_{-0}(R)$ denote the rest. Then, 
	\begin{equation}
	\calI_\pm = \lim_{R\to\infty} \int_{\Gamma_{\mp 0}(R)} (-z\pm i0)^{\alpha} (1-z \pm i0)^{\beta}  \prod_{j=2}^{N} (x_j-z\pm i0)^{2\gamma_j} \dd z. 
	\label{eq:misc_113}
	\end{equation}
	On the other hand, for $R$ sufficiently large, 
	\begin{equation} 
	\Big|\int_{\Gamma_{\mp\mp}(R)} (-z\pm i0)^\alpha (1-z \pm i0)^\beta  \prod_{j=2}^{N} (x_j-z \pm i0)^{2\gamma_j} \dd x \Big| \leq \pi (2R)^{1+\Re \alpha+\Re \beta} = O(R^{-\varepsilon}) 
	\label{eq:misc_114}
	\end{equation} 
	for some $\varepsilon>0$ depending on $(\alpha,\beta)\in \Omega_{1,0,0}\cap \Omega_{0,1,0}\cap \Omega_{0,0,1}$. Combining \cref{eq:misc_112}, \cref{eq:misc_113}, and \cref{eq:misc_114}, we get $\calI_\pm = 0$.
\end{proof}

\begin{proposition}
	For any $F\in \bbC[x_1,x_1^{-1},\ldots,x_N,x_N^{-1}]$, 
	\begin{multline}
	0= \dot{I}_{\ell,m,n}[F]( \bmalpha,\bmbeta, \bmgamma ) + e^{+\pi i(\alpha+2 \sum_{j=2}^\ell \gamma_{1,j} ) } \dot{I}_{\ell-1,m+1,n}[F](\bmalpha,\bmbeta,\bmgamma)^{\sigma_\ell} \\ +   e^{+ \pi i(\alpha+\beta+2 \sum_{j=2}^{\ell+m} \gamma_{1,j} )} \dot{I}_{\ell-1,m,n+1}[F](\bmalpha,\bmbeta,\bmgamma)^{\sigma_{\ell+m}}
	\label{eq:misc_z5z}
	\end{multline}
	\begin{multline}
	0= \dot{I}_{\ell,m,n}[F]( \bmalpha,\bmbeta, \bmgamma )^{\sigma_\ell} + e^{- \pi i(\alpha+2 \sum_{j=2}^\ell \gamma_{1,j} ) } \dot{I}_{\ell-1,m+1,n}[F](\bmalpha,\bmbeta,\bmgamma)^{\sigma_{\ell+m}} \\ +   e^{-\pi i(\alpha+\beta+2 \sum_{j=2}^{\ell+m} \gamma_{1,j} )} \dot{I}_{\ell-1,m,n+1}[F](\bmalpha,\bmbeta,\bmgamma)^{\sigma_{N}}
	\label{eq:misc_z6z}
	\end{multline}
	both hold, for all $(\bmalpha,\bmbeta,\bmgamma) \in \dot{\Lambda}_{\ell,m,n}[F]$. 
	\label{prop:main_lemma2}
\end{proposition}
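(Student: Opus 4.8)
The plan is to prove \cref{eq:misc_z5z} and \cref{eq:misc_z6z} by the one-dimensional contour argument already used to establish \Cref{lem:main_lemma}, but keeping the $N-1$ ``spectator'' variables unsymmetrized; this is the un-symmetrized analogue of that lemma. (Alternatively, one could deduce the identities from \Cref{lem:main_lemma} and \Cref{prop:symmetrization} by summing over the spectator permutations and collecting phases, but re-running the contour argument directly seems cleaner.) As in the proofs of \Cref{lem:main_lemma} and \Cref{prop:DF_transposition}, I would first reduce to a monomial $F$ by linearity, writing $F$ as a polynomial in $x_1$ with coefficients monomial in $x_2,\ldots,x_N$: each power $x_1^{d_1}$ is absorbed into the redefinition $\alpha_1 \mapsto \alpha_1 + d_1$ of the exponent of the distinguished variable (the sign of $x_1$ being constant on each fundamental interval), and the remaining factors into the spectator integrand. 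Since $F^{\sigma_\ell}$, $F^{\sigma_{\ell+m}}$, $F^{\sigma_N}$ are the same monomials with the corresponding parameter relabelings, this absorption is consistent across the three terms of each identity.

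Next I would identify the three integrals in each identity as the contributions of the distinguished variable $x_1$ (the one carrying $\alpha=\alpha_1$, $\beta=\beta_1$, and the couplings $\gamma_{1,j}$) lying in the three intervals $(-\infty,0]$, $[0,1]$, $[1,\infty)$, while the remaining $N-1$ variables range in all three cases over a common box $\square_{\ell-1,m,n}$. Concretely, $\dot{I}_{\ell,m,n}[F]$ is the piece with $x_1\in(-\infty,0]$, and $\dot{I}_{\ell-1,m+1,n}[F]^{\sigma_\ell}$ and $\dot{I}_{\ell-1,m,n+1}[F]^{\sigma_{\ell+m}}$ are the pieces with $x_1\in[0,1]$ and $x_1\in[1,\infty)$ (resp.\ the uniformly $\sigma$-shifted integrals of \cref{eq:misc_z6z}); the superscripts $\sigma_\ell,\sigma_{\ell+m},\sigma_N$ merely record in which block the index $1$ has been reinserted. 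On a nonempty open subset of $\dot{\Lambda}_{\ell,m,n}[F]$ where all integrands are absolutely integrable—real parts of the exponents chosen large in the directions dictated by the three domains, exactly as in the set $\mho_{\digamma,\underline{\gamma},\overline{\gamma}}$ of \Cref{lem:main_lemma}—Fubini's theorem factors out the spectator integral over $\square_{\ell-1,m,n}$ and reduces each identity to the vanishing of the inner integral
\begin{equation}
	\calI_\pm = \int_{-\infty}^{+\infty} (-x\pm i0)^{\alpha}(1-x\pm i0)^{\beta} \prod_{j=2}^{N}(x_j-x\pm i0)^{2\gamma_{1,j}}\dd x = 0,
\end{equation}
which is precisely \cref{eq:misc_h12}; the $+$ case yields \cref{eq:misc_z5z} and the $-$ case yields \cref{eq:misc_z6z}. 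Because all three terms of each identity are analytic on the connected set $\dot{\Lambda}_{\ell,m,n}[F]$ (by the corollary of \Cref{prop:Icontinuation}) and agree with convergent Lebesgue integrals on the open subset above, the identities then extend to all of $\dot{\Lambda}_{\ell,m,n}[F]$ by analytic continuation.

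The main obstacle is the phase bookkeeping, which is exactly where the factors $e^{\pm\pi i(\alpha+2\sum_{j=2}^{\ell}\gamma_{1,j})}$ and $e^{\pm\pi i(\alpha+\beta+2\sum_{j=2}^{\ell+m}\gamma_{1,j})}$ originate. The three Dotsenko--Fateev pieces use the modulus factors $|x_1|^{\alpha}|1-x_1|^{\beta}$ and the $+i0$-regularized couplings $(x_j-x_1+i0)^{2\gamma_{1,j}}$, whereas $\calI_\pm$ carries a single fixed analytic branch along the contour. Matching the two as $x_1$ passes from $(-\infty,0]$ into $[0,1]$ and then into $[1,\infty)$ produces a phase from $(-x_1+i0)^{\alpha}$ versus $|x_1|^{\alpha}$ upon crossing $0$, a phase from $(1-x_1+i0)^{\beta}$ versus $|1-x_1|^{\beta}$ upon crossing $1$, and a phase $e^{\pm 2\pi i\gamma_{1,j}}$ from each coupling as $x_1$ crosses the corresponding spectator variable. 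I expect combining these consistently into the stated exponents, with the $\sigma$-superscripts correctly placed, to be the most delicate part, together with checking that after the monomial absorption the large-arc estimate \cref{eq:misc_114} still forces the arc contribution to vanish, so that the vanishing of $\calI_\pm$ remains applicable.
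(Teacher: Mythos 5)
Your strategy---linearity plus monomial absorption into $\alpha_1$, Fubini over the common spectator box $\square_{\ell-1,m,n}$, and gluing the three distinguished-variable integrals into the single line integral \cref{eq:misc_h12}---is sound, and it is a genuinely different organization from the paper's proof: the paper instead invokes \Cref{lem:main_lemma} at permuted parameters and resums over the spectator permutations $\sigma\in\frakS_{\ell,m,n}$ with $\sigma(1)=1$, using \Cref{prop:symmetrization} and the inversion identity $\Theta(\sigma_0^{-1})=\Theta(\sigma^{-1})+\theta_j^\sigma$ to reassemble Selberg integrals into DF integrals (i.e.\ precisely the ``alternative'' you set aside). For \cref{eq:misc_z5z} your route does close: the boundary value $(-x+i0)^{\alpha}(1-x+i0)^{\beta}\prod_{k\geq2}(x_k-x+i0)^{2\gamma_{1,k}}$ (analytic in the lower half-plane in $x$), restricted to $(-\infty,0]$, $[0,1]$, $[1,\infty)$, reproduces the three integrands exactly with the stated coefficients $1$, $e^{\pi i(\alpha+2\sum_{j=2}^{\ell}\gamma_{1,j})}$, $e^{\pi i(\alpha+\beta+2\sum_{j=2}^{\ell+m}\gamma_{1,j})}$, and \cref{eq:misc_h12} finishes.

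The gap is the sentence ``the $-$ case yields \cref{eq:misc_z6z}.'' The two identities are not $\pm$ mirror images of one gluing: in \cref{eq:misc_z5z} the distinguished index occupies the \emph{first} slot of its block in every term (superscripts $1$, $\sigma_\ell$, $\sigma_{\ell+m}$), while in \cref{eq:misc_z6z} it occupies the \emph{last} slot (superscripts $\sigma_\ell$, $\sigma_{\ell+m}$, $\sigma_N$), and all six integrals carry the same $+i0$ prescription. The gluing function for the second identity is therefore a boundary value from the \emph{other} half-plane, and the factors with mismatched branches are now those attached to spectators in the blocks \emph{after} the distinguished variable; doing the bookkeeping produces the coefficients $1$, $e^{-\pi i(\alpha+2\sum_{j=\ell+1}^{\ell+m}\gamma_{1,j})}$, $e^{-\pi i(\alpha+\beta+2\sum_{j=\ell+1}^{N}\gamma_{1,j})}$, not the stated ones. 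The discrepancy is already visible at $(\ell,m,n)=(2,0,0)$, where every convention is unambiguous: the upper-half-plane contour argument gives
\begin{equation}
	0=\dot{I}_{2,0,0}[F]^{\sigma_2}+e^{-\pi i\alpha}\,\dot{I}_{1,1,0}[F]^{\sigma_2}+e^{-\pi i(\alpha+\beta)}\,\dot{I}_{1,0,1}[F]^{\sigma_2},
\end{equation}
with no $\gamma$ in the phases, whereas \cref{eq:misc_z6z} asserts $e^{-\pi i(\alpha+2\gamma_{1,2})}$ and $e^{-\pi i(\alpha+\beta+2\gamma_{1,2})}$. The same conclusion follows from the paper's own ingredients: the $-$ branch of \cref{eq:Aomoto_relation} combined with \Cref{prop:permutation_relation} (which gives $\dot{I}_{2,0,0}=(1+e^{2\pi i\gamma})\dot{S}_{2,0,0}$, $\dot{I}_{1,1,0}=\dot{S}_{1,1,0}$, $\dot{I}_{1,0,1}=\dot{S}_{1,0,1}$ for symmetric data, where the superscripts act trivially) reduces to exactly the display above and contradicts the stated phases unless $e^{4\pi i\gamma}=1$. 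The stated exponents are in fact those one obtains for the $-i0$ integrals with the distinguished variable at the \emph{start} of each block; converting those into the $+i0$, end-of-block integrals appearing in \cref{eq:misc_z6z} is a change of variables that introduces precisely the missing phases. So the step you defer as ``delicate'' is where the proposal fails: executed honestly, your method proves \cref{eq:misc_z5z} and a \emph{corrected} version of \cref{eq:misc_z6z}, but not \cref{eq:misc_z6z} as stated; you would need either to re-derive the second identity's exponents (and flag the mismatch with the statement), or to supply the reversal relations connecting the two forms. (The paper's own treatment shares this weakness: its resummation weights $e^{i\Theta(\sigma^{-1})}$ telescope only in the $+$ case, and the $-$ case is dismissed with ``similar computations.'')
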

\begin{proof}
	Let $\frakS'_{\ell,m,n}$ denote the Young subgroup of $\frakS_{\ell,m,n}$ consisting of permutations which fix $1$, i.e.\ 
	\begin{equation} 
	\frakS'_{\ell,m,n} =  \{\sigma \in \frakS_{\ell,m,n} \text{ s.t. } \sigma(1)=1\}.
	\end{equation}
	Via analyticity, it suffices to prove this for all $(\bmalpha,\bmbeta,\bmgamma)\in \cap_{\sigma \in \frakS'_{\ell,m,n}} \dot{\mho}_{\ell,m,n}[F]^\sigma$. 
	
	For such $(\bmalpha,\bmbeta,\bmgamma)$, we can cite the previous proposition to get 
	\begin{multline}
	0=\sum_{\sigma \in \frakS_{\ell,m,n}'}e^{\pi i \Theta(\sigma^{-1})} \Big[ \sum_{j=1}^\ell e^{\pm  i  \theta_j^\sigma } \dot{S}_{\ell,m,n}[F]( \bmalpha,\bmbeta,\bmgamma)^{\sigma_j\sigma} \\ + \sum_{j=\ell}^{\ell+m} e^{\pm i\vartheta_j^\sigma } \dot{S}_{\ell-1,m+1,n}[F](\bmalpha,\bmbeta,\bmgamma)^{\sigma_j\sigma} +  \sum_{j=\ell+m}^{N} e^{\pm i\varphi_j^\sigma} \dot{S}_{\ell-1,m,n+1}[F](\bmalpha,\bmbeta,\bmgamma)^{\sigma_j\sigma }\Big], 
	\label{eq:misc_9h9}
	\end{multline}
	where $\theta_j^\sigma = 2\pi \sum_{2\leq j_0 \leq j} \gamma_{1,\sigma(j_0)}$, $\vartheta_j^\sigma = \pi \alpha + 2\pi  \sum_{2\leq j_0 \leq j} \gamma_{1,\sigma(j_0)}$, and $\varphi_j^\sigma = \pi \alpha+\pi \beta+2 \pi \sum_{2\leq j_0 \leq j} \gamma_{1,\sigma(j_0)}$. The order of multiplication is such that $\sigma_j \sigma$ is a permutation satisfying $(\sigma_j \sigma)(1)=j$. In \cref{eq:misc_9h9}, $\Theta$ is defined as in \Cref{prop:symmetrization}.
	
	Every $\sigma_0 \in \frakS_{\ell,m,n}$ has the form $\sigma_0 = \sigma_j \sigma$ for some $j\in \{1,\ldots,N\}$ and $\sigma\in \frakS_{\ell,m,n}$ satisfying $\sigma(1) = 1$. It can be seen that 
	\begin{equation} 
	\Theta(\sigma^{-1}_0) = \Theta(\sigma^{-1})+ \theta_j^\sigma.
	\end{equation} 
	Using \Cref{prop:symmetrization}, we check that the two cases of \cref{eq:misc_9h9} yield the two results, \cref{eq:misc_z5z} and \cref{eq:misc_z6z}. For instance, 
	\begin{align}
	\begin{split} 
	\sum_{\sigma \in \frakS_{\ell,m,n}'} e^{\pi i \Theta(\sigma^{-1})}\sum_{j=1}^\ell e^{+  i  \theta_j^\sigma } \dot{S}_{\ell,m,n}[F]( \bmalpha,\bmbeta,\bmgamma)^{\sigma_j\sigma} &= \sum_{\sigma \in \frakS_{\ell,m,n}} e^{\pi i \Theta(\sigma^{-1})} \dot{S}_{\ell,m,n}[F]( \bmalpha,\bmbeta,\bmgamma)^{\sigma} \\ 
	\label{eq:misc_8j1}
	&= \dot{I}_{\ell,m,n}[F](\bmalpha,\bmbeta,\bmgamma).
	\end{split} 
	\end{align}
	Similar statements apply to the other two sums in \cref{eq:misc_9h9} in the `$+$' case, thus yielding \cref{eq:misc_z5z}. Similar computations apply to the `$-$' case. 
\end{proof}

\subsection{The symmetric and DF-symmetric cases}
\label{subsec:continuation_symmetric}
Fix $F \in \calA^{\calD}(A_{\ell,m,n})$, not necessarily symmetric. We assume that $d_{\mathrm{F}_{S,Q;\bullet}}\in \bbZ$ for all $\mathrm{F}_{S,Q;\bullet}\in \calF(A_{\ell,m,n})$.
Let 
\begin{align}
\delta_{k} &= \min\{d_{\mathrm{F}_{S,Q;0}}  : S\subseteq \calI_1,Q\subseteq \calI_2, |S\cup Q|=k\}
\intertext{for each $k\in \{1,\ldots,\ell+m\}$,}
\atled_{k} &= \min\{d_{\mathrm{F}_{S,Q;1}}  : S\subseteq \calI_2,Q\subseteq \calI_3,|S\cup Q|=k\} 
\intertext{for each $k\in \{1,\ldots,m+n\}$, and} 
d_{k} &= -\min\{d_{\mathrm{F}_{S,Q;\infty}}  : S\subseteq \calI_3,Q\subseteq \calI_1,|S\cup Q|=k\}
\end{align}
for each $k\in \{1,\ldots,\ell+n\}$. 
Here, we are ranging over all $\{d_{\mathrm{F}}\}_{\mathrm{F}\in \calF(A_{\ell,m,n})} \in \calD$.  

Let $\dot{W}_{\ell,m,n}[\calD]$ denote the set of $(\alpha,\beta,\gamma)\in \bbC^3$ such that $(\bmalpha,\bmbeta,\bmgamma)\in \smash{\dot{V}_{\ell,m,n}[\calD]}$ whenever $\bmalpha,\bmbeta,\bmgamma$ have components given by $\alpha_j = \alpha$ and $\beta_j = \beta$ for all indices $j\in \{1,\ldots,N\}$ and $\gamma_{j,k} = \gamma$ for all $j,k\in \{1,\ldots,N\}$ with $j<k$.

\begin{proposition}
	There exists an entire function $I_{\ell,m,n;\mathrm{Reg}}[F]:\bbC^3\to \bbC$ such that 
	\begin{multline}
	\dot{I}_{\ell,m,n}[F](\alpha,\beta,\gamma) = \Big[ \prod_{k=1}^{\ell+m} \Gamma(\delta_k +k(1+\alpha+(k-1)\gamma)) \Big] \Big[ \prod_{k=1}^{m+n} \Gamma(\atled_k+k(1+\beta+(k-1)\gamma))  \Big] \\ \times \Big[ \prod_{k=1}^{\ell+n} \Gamma(-d_k-k(1+\alpha + \beta +  (2N - k - 1)\gamma )) \Big] I_{\ell,m,n;\mathrm{Reg}}[F](\alpha,\beta,\gamma)
	\label{eq:misc_bzb}
	\end{multline}
	for all $(\alpha,\beta,\gamma) \in \dot{W}_{\ell,m,n}[\calD]$.
	\label{prop:mero_sym}
\end{proposition}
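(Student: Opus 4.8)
The plan is to specialize \Cref{prop:Icontinuation} to the constant slice $\bmalpha = \alpha\mathbf{1}$, $\bmbeta = \beta\mathbf{1}$, $\gamma_{j,k}=\gamma$, and then to absorb everything except the advertised $\Gamma$-factors into the regularized piece. First I would record what the face functionals become on this slice. Writing $k = |S\cup Q|$ and using that $S,Q$ are disjoint with all $\gamma$'s equal, the formulas \cref{eq:varrho0}, \cref{eq:varrho1}, and \cref{eq:varrhoinfty} collapse to
\begin{align}
	1+\varrho_{S,Q;0} &= k(1+\alpha+(k-1)\gamma),\\
	1+\varrho_{S,Q;1} &= k(1+\beta+(k-1)\gamma),\\
	1+\varrho_{S,Q;\infty} &= -k(1+\alpha+\beta+(2N-k-1)\gamma),
\end{align}
so that on the constant slice the functional attached to a face depends only on its type $x_0\in\{0,1,\infty\}$ and on $k$. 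The only nontrivial check is the $\infty$-case, where the number of unordered pairs meeting $S\cup Q$ is $\binom{N}{2} - \binom{N-k}{2} = k(2N-k-1)/2$, which accounts for the $\gamma$-coefficient.

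Next I would invoke \Cref{prop:Icontinuation} directly: on the constant slice, $\dot{I}_{\ell,m,n}[F](\alpha,\beta,\gamma)$ is a \emph{finite} sum, over $\mathtt{I}\in\Sigma\mathtt{T}(\ell,m,n)$ and $\{d_{\mathrm{F}}\}\in\calD$, of the products $\big[\prod_{(x_0,\calS)\in\mathtt{I}}\Gamma(1+\varrho_{S,Q;x_0}+d_{\mathrm{F}_{S,Q;x_0}})\big]$ times an entire function (the restriction of $I_{\ell,m,n;\mathrm{reg},\mathtt{I},\{d_{\mathrm{F}}\}}[F]$). The decisive combinatorial observation is that within a single $\mathtt{I}$ the type-$x_0$ members $(x_0,\calS)$ are totally ordered by inclusion, hence have pairwise distinct cardinalities; so in each term a given type contributes at most one $\Gamma$-factor per value of $k$, and the occurring $k$ of types $0,1,\infty$ lie in $\{1,\dots,\ell+m\}$, $\{1,\dots,m+n\}$, $\{1,\dots,\ell+n\}$ respectively, matching the ranges in \cref{eq:misc_bzb}.

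Let $P(\alpha,\beta,\gamma)$ be the triple product of $\Gamma$-factors on the right of \cref{eq:misc_bzb}, which has exactly one factor per admissible pair $(x_0,k)$, built from the minimal weights $\delta_k,\atled_k,-d_k$. I would set $I_{\ell,m,n;\mathrm{Reg}}[F] = \dot{I}_{\ell,m,n}[F]/P$ and prove entirety term by term. For each term and each \emph{present} pair $(x_0,k)$, the factor contributes a ratio $\Gamma(\Lambda + d_{\mathrm{F}})/\Gamma(\Lambda + w_k)$, where $\Lambda$ is the relevant linear form above, $w_k\in\{\delta_k,\atled_k,-d_k\}$ is the minimal weight, and $d_{\mathrm{F}}\ge w_k$ with $d_{\mathrm{F}}-w_k\in\bbN$ by the integrality hypothesis on $\calD$; such a ratio equals the polynomial $\prod_{j=w_k}^{d_{\mathrm{F}}-1}(\Lambda+j)$ and is entire. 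For each pair $(x_0,k)$ \emph{absent} from that term, dividing by $\Gamma(\Lambda+w_k)$ merely multiplies by $1/\Gamma(\Lambda+w_k)$, which is entire. Hence each summand of $\dot{I}/P$ is a product of entire functions, the finite sum $I_{\ell,m,n;\mathrm{Reg}}[F]$ is entire, and multiplying back yields \cref{eq:misc_bzb}, first on $W_{\ell,m,n}[\calD]$ and then on all of $\dot{W}_{\ell,m,n}[\calD]$ by analytic continuation.

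The step I expect to require the most care is this uniform factorization across terms: different vertices $\mathtt{I}$ realize different subsets of the pairs $(x_0,k)$, and it is only because $1/\Gamma$ is entire that dividing by the \emph{full} product $P$ — including factors for the pairs a given term omits — causes no poles. Getting the ranges of $k$ and the sign/weight conventions of the $\infty$-factor exactly right, so that $-d_k$ is genuinely the minimal $\infty$-weight and $d_{\mathrm{F}_{S,Q;\infty}}+d_k\in\bbN$, is the other place a sign slip would propagate; beyond that the argument is the routine $\Gamma$-ratio bookkeeping already underlying \Cref{prop:Icontinuation}.
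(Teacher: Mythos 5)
Your proposal is correct and follows exactly the route the paper takes: its entire proof of this proposition is the single line ``Follows from \Cref{prop:Icontinuation},'' and your argument is precisely the bookkeeping that citation leaves implicit (specializing the face functionals $\varrho_{S,Q;x_0}$ to the constant slice, noting that nestedness within each $\mathtt{I}\in\Sigma\mathtt{T}(\ell,m,n)$ gives at most one $\Gamma$-factor per pair $(x_0,k)$, and using entirety of $1/\Gamma$ and of the ratios $\Gamma(\Lambda+d_{\mathrm{F}})/\Gamma(\Lambda+w_k)$ guaranteed by the integrality and minimality of the weights). In particular, your identification of the distinct-cardinality observation as the decisive point is exactly what makes division by the full product $P$ harmless term by term.
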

\begin{proof}
	Follows from \Cref{prop:Icontinuation}.
\end{proof}

For later reference, consider the special case $F\in \bbC[x_1,\ldots,x_N]^{\frakS_N}$. Referring to \cref{eq:misc_nu1}, \cref{eq:misc_nu2}, and \cref{eq:misc_nu3}, set $d_{\mathrm{F}_{S,Q;0}} = \delta_j[F]$, $d_{\mathrm{F}_{S,Q;1}} = \atled_j[F]$, and $d_{\mathrm{F}_{S,Q;\infty}} = \deg_j[F]$,  
for $S,Q\subseteq \{1,\ldots,N\}$ as usual, where, for each $S$ and $Q$, $j=|S\cup Q|$. Then, as follows straightforwardly from \cref{eq:misc_j2g}, \cref{eq:misc_j3g}, \cref{eq:misc_j4g}, 
\begin{equation} 
F\in \prod_{\mathrm{F}\in \calF(A_{\ell,m,n})} x_{\mathrm{F}}^{d_{\mathrm{F}}} C^\infty(A_{\ell,m,n}).
\end{equation}
Thus, letting $\calD$ denote the collection of the integers above, $F\in \calA^{\calD}(A_{\ell,m,n})$. 
We can therefore apply the results above, with $\delta_j = \delta_j[F]$, $\atled_j=\atled_j[F]$, and $d_j=-\deg_j[F]$. 

We now turn to the ``DF0-symmetric'' case. For any $\mathtt{S}\subseteq \{1,\ldots,N\}$, let 
\begin{equation}
\dot{W}_{\ell,m,n}^{\mathrm{DF0},\mathtt{S}}[F] = \{(\alpha_-,\alpha_+,\beta_-,\beta_+,\gamma_-,\gamma_0,\gamma_+)\in \bbC^7 :(\bmalpha^{\mathrm{DF0}},\bmbeta^{\mathrm{DF0}},\bmgamma^{\mathrm{DF0}}) \in \dot{V}_{\ell,m,n}[F]\}.
\end{equation} 
This is a dense, open, and connected subset of $\bbC^7$ and depends on $\mathtt{S}$ only through the numbers $|\mathtt{S} \cap \calI_j|$. Actually, we need a slightly refined version of this later; let 
\begin{multline}
\dot{W}_{\ell,m,n}^{\mathrm{DF1},\mathtt{S}}[F] = \{(\alpha_{-,1}, \alpha_{-,2},\alpha_{-,3},\alpha_{+,1}, \alpha_{+,2},\alpha_{+,3},\beta_{-,1}, \beta_{-,2},\beta_{-,3},\beta_{+,1}, \beta_{+,2},\beta_{+,3},\gamma_-,\gamma_0,\gamma_+)\in \bbC^9 \\ :(\bmalpha^{\mathrm{DF1}},\bmbeta^{\mathrm{DF1}},\bmgamma^{\mathrm{DF0}}) \in \dot{V}_{\ell,m,n}[F]\},
\end{multline}
where $\bmalpha^{\mathrm{DF1}}, \bmbeta^{\mathrm{DF1}}$ are defined as their DF0-counterparts, but defining the $j$th component using $\alpha_{+,\nu}$ in place of $\alpha_+$ and $\beta_{+,\nu}$ in place of $\beta_+$ for $\nu\in \calI_\nu$.

For $(\alpha_-,\alpha_+,\beta_-,\beta_+,\gamma_-,\gamma_0,\gamma_+)\in \dot{W}^{\mathrm{DF0},\mathtt{S}}_{\ell,m,n}[F]$, let 
\begin{equation}
\dot{I}^{\mathrm{DF0};\mathtt{S}}_{\ell,m,n}[F](\alpha_-,\alpha_+,\beta_-,\beta_+,\gamma_-,\gamma_0,\gamma_+) = \dot{I}_{\ell,m,n}[F](\bmalpha^{\mathrm{DF}0},\bmbeta^{\mathrm{DF}0},\bmgamma^{\mathrm{DF}0}). 
\end{equation}
Let $\ell_+ = \mathtt{S} \cap \calI_1$, $\ell_- = \ell-\ell_+$, $m_+ = \mathtt{S}\cap \calI_2$, $m_- = m-m_+$, $n_+ = \mathtt{S} \cap \calI_3$, and $n_- =n-n_+$. Set $N_+ = |\mathtt{S}|$ and $N_-=N-N_+$. 

Suppose now that $F \in \calA^{\calD}(A_{\ell,m,n})$ is symmetric in the variables $\{x_i\}_{i\in \mathtt{S}}$ and $\{x_i\}_{i\notin \mathtt{S}}$ separately. Let
\begin{align}
\delta_{j_-,j_+} &= \min\{d_{\mathrm{F}_{S,Q;0}}  : S\subseteq \calI_1 ,Q\subseteq \calI_2, |(S\cup Q) \backslash \mathtt{S}|=j_-, |(S\cup Q)\cap \mathtt{S}|=j_+\} \label{eq:misc_78z}
\intertext{for $j_- \in \{1,\ldots,\ell_-+m_-\}$ and $j_+ \in \{1,\ldots,\ell_++m_+\}$,} 
\atled_{j_-,j_+} &= \min\{d_{\mathrm{F}_{S,Q;1}}  : S\subseteq \calI_2,Q\subseteq \calI_3,|(S\cup Q) \backslash \mathtt{S}|=j_-, |(S\cup Q)\cap \mathtt{S}|=j_+\} \label{eq:misc_78x} 
\intertext{for $j_- \in \{1,\ldots,m_-+n_-\}$ and $j_+\in\{1,\ldots,m_++n_+\}$, and}
d_{j_-,j_+}&= -\min\{d_{\mathrm{F}_{S,Q;\infty}}  : S\subseteq \calI_3,Q\subseteq \calI_1,|(S\cup Q) \backslash \mathtt{S}|=j_-, |(S\cup Q)\cap \mathtt{S}|=j_+\} \label{eq:misc_78c}
\end{align}
for $j_- \in \{1,\ldots,\ell_-+n_-\}$ and $j_+\in\{1,\ldots,\ell_++n_+\}$.
A similar argument to that above yields:
\begin{propositionp}
	There exists an entire function $I_{\ell,m,n;\mathrm{Reg}}^{\mathrm{DF0};\mathtt{S}}[F]:\bbC^7\to \bbC$ such that 
	\begin{multline}
	\dot{I}_{\ell,m,n}^{\mathrm{DF0}}[F](\alpha_-,\alpha_+,\beta_-,\beta_+,\gamma_-,\gamma_0,\gamma_+) =
	I_{\ell,m,n;\mathrm{Reg}}^{\mathrm{DF0}}[F](\alpha_-,\alpha_+,\beta_-,\beta_+,\gamma_-,\gamma_0,\gamma_+) \\ \times  \Big[ \prod_{j_-=1}^{\ell_-+m_-} \prod_{j_+=1}^{\ell_++m_+} \Gamma(\delta_{j_-,j_+} + j_-(1+\alpha_-+(j_--1) \gamma_-)+j_+(1+\alpha_++(j_+-1)\gamma_+)+2\gamma_0 j_-j_+ ) \Big] \\  \times \Big[ \prod_{j_-=1}^{m_-+n_-} \prod_{j_+=1}^{m_++n_+} \Gamma(\atled_{j_-,j_+} + j_-(1+\beta_-+(j_--1) \gamma_-)+j_+(1+\beta_++(j_+-1)\gamma_+)+2\gamma_0 j_-j_+ ) \Big]  \\ \times \Big[ \prod_{j_-=1}^{\ell_-+n_-} \prod_{j_+=1}^{\ell_++n_+} \Gamma(-d_{j_-,j_+}-  j_-(1+\alpha_-+\beta_-+(2N_--j_--1)\gamma_- ) \\ - j_+(1+\alpha_++\beta_++(2N_+-j_+-1)\gamma_+ )  - 2\gamma_0 j_-j_+ ) \Big]
	\end{multline}
	holds whenever $(\alpha_-,\alpha_+,\beta_-,\beta_+,\gamma_-,\gamma_0,\gamma_+)\in \dot{W}_{\ell,m,n}^{\mathrm{DF0}}[F]$. 
	\label{prop:DF0_result}
\end{propositionp}

\section{Removing singularities}
\label{sec:removing_singularities}

As in previous sections, fix $\ell,m,n\in \bbN$ not all zero,  and let $N=\ell+m+n$ and $\calI_1 = \{1,\ldots,\ell\}$, $\calI_2 = \{\ell+1,\ldots,\ell+m\}$, and $\calI_3 = \{\ell+m+1,\ldots,N\}$. 
For $k\in  \bbN$, let 
\begin{equation} 
\digamma_k:\bbC_\gamma\backslash \{k \gamma \in \bbZ^{\leq -1} \text{ and }\gamma\notin \bbZ\}\to \bbC
\end{equation} 
denote the analytic function given by $\digamma_k(\gamma) = \Gamma(1+\gamma)^{-1} \Gamma(1+k\gamma )$ for $k\gamma \notin \bbZ^{\leq -1}$. We can consider $\digamma_k^{-1}$ as an entire function. 

\subsection{The symmetric case}
\label{subsec:removing_singularities_1}
Fix $F\in \bbC[x_1,\ldots,x_N]^{\frakS_N}$, and let $\delta_j,\atled_j,d_j\in \bbN$ be as above.

Let $\dot{U}_{\ell,m,n}[F]$ denote the set of $(\alpha,\beta,\gamma)\in \bbC^3$ such that $(\bmalpha,\bmbeta,\bmgamma) \in \dot{\Omega}_{\ell,m,n}[F]$ whenever $\bmalpha,\bmbeta,\bmgamma$ have components given by $\alpha_j = \alpha$ and $\beta_j = \beta$ for all indices $j\in \{1,\ldots,N\}$ and $\gamma_{j,k} = \gamma$ for all $j<k$. Thus, we can define 
\begin{equation}
\dot{S}_{\ell,m,n}[F](\alpha,\beta,\gamma) = \dot{S}_{\ell,m,n}[F](\bmalpha,\bmbeta,\bmgamma)
\end{equation}
for any $(\alpha,\beta,\gamma) \in \dot{U}_{\ell,m,n}[F]$.

\begin{proposition}
	The function $S_{\ell,m,n}^{\mathrm{reg}}[F]:\dot{U}_{\ell,m,n}[F]\to \bbC$ defined by 
	\begin{multline}
	S_{\ell,m,n}^{\mathrm{reg}}[F](\alpha,\beta,\gamma) = \Big[\prod_{k=1}^{\ell+m}  \Gamma(\delta_k+k(1+\alpha+(k-1)\gamma)) \Big]^{-1} \\  \times  \Big[ \prod_{k=1}^{m+n}\Gamma(\atled_k+k(1+\beta+(k-1)\gamma)) \Big]^{-1}\Big[ \prod_{k=1}^{\ell+n} \Gamma(-d_k-k(1+\alpha+\beta+(N+k-2)\gamma))  \Big]^{-1} \\ \times \Big[ \prod_{k=1}^\ell \frac{1}{\digamma_k(\gamma)} \Big]\Big[ \prod_{k=1}^m \frac{1}{\digamma_k(\gamma)} \Big]  \Big[ \prod_{k=1}^n \frac{1}{\digamma_k(\gamma)} \Big] \dot{S}_{\ell,m,n}[F](\alpha,\beta,\gamma) 
	\label{eq:misc_nb5}
	\end{multline}
	extends to an entire function $\bbC^3_{\alpha,\beta,\gamma}\to \bbC$.
	\label{prop:three_final}
\end{proposition}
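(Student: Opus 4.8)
The plan is to show that the combination \cref{eq:misc_nb5} is analytic across every hyperplane of the complement of $\dot U_{\ell,m,n}[F]$, and then to invoke a codimension-two removability theorem. By construction $S^{\mathrm{reg}}_{\ell,m,n}[F]$ is the product of $\dot S_{\ell,m,n}[F]$, which is already analytic on $\dot U_{\ell,m,n}[F]$, with the reciprocal $\Gamma$-factors and the $\digamma_k^{-1}(\gamma)$, all of which are entire; hence $S^{\mathrm{reg}}_{\ell,m,n}[F]$ is analytic on $\dot U_{\ell,m,n}[F]$, and the task is to remove the singularities lying on $\bbC^3\setminus \dot U_{\ell,m,n}[F]$. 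This set is a locally finite union of affine hyperplanes which, after specializing the generic description of \Cref{prop:genext} to constant exponents, falls into four families: the $\alpha$-type $\{\delta_k+k(1+\alpha+(k-1)\gamma)\in\bbZ^{\leq 0}\}$, the $\beta$-type $\{\atled_k+k(1+\beta+(k-1)\gamma)\in\bbZ^{\leq 0}\}$, the $\infty$-type involving $\alpha+\beta$, and the $\gamma$-only type $\{k\gamma\in\bbZ^{\leq -1},\ \gamma\notin\bbZ\}$. I would treat each family at its generic points (where no second hyperplane passes), following the three-step scheme of the introduction, and defer the intersections to the end.

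Step 1 removes the $\gamma$-only hyperplanes. Here I would invoke \Cref{prop:permutation_relation} to write $\dot S_{\ell,m,n}[F]=[\text{prefactor}]^{-1}\dot I_{\ell,m,n}[F]$, the prefactor $\prod_{k=1}^{\ell}\tfrac{1-e^{2\pi i k\gamma}}{1-e^{2\pi i\gamma}}\cdots$ depending only on $\gamma$, and use that $\dot I_{\ell,m,n}[F]$ carries no $\gamma$-only poles (\Cref{prop:mero_sym}). The three $\digamma_k^{-1}$ products in \cref{eq:misc_nb5} are index-matched to the three blocks of the prefactor, so it suffices to compute, via the reflection formula, that $\digamma_k^{-1}(\gamma)\,\tfrac{1-e^{2\pi i\gamma}}{1-e^{2\pi i k\gamma}}=k^{-1}e^{\pi i(1-k)\gamma}\Gamma(1-k\gamma)/\Gamma(1-\gamma)$, which is analytic along $\{k\gamma\in\bbZ^{\leq -1}\}$. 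Thus the apparent $\gamma$-only poles of $\dot S$ are cancelled by the zeros of the $\digamma_k^{-1}$, and $S^{\mathrm{reg}}$ extends across these hyperplanes.

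Step 2 removes the $\alpha$-, $\beta$-, and $\infty$-type hyperplanes at their generic points. Away from $\{k\gamma\in\bbZ\}$ the prefactor of \Cref{prop:permutation_relation} is analytic and nonvanishing, so $\dot S_{\ell,m,n}[F]$ and $\dot I_{\ell,m,n}[F]$ have the same poles there. By \Cref{prop:mero_sym}, $\dot I_{\ell,m,n}[F]$ is an explicit product of $\Gamma$-factors times an \emph{entire} function; since distinct indices $k$ yield non-parallel hyperplanes, at a generic point of any one of them exactly one $\Gamma$-factor is singular, giving at most a simple pole, which the matching reciprocal $\Gamma$-factor in \cref{eq:misc_nb5} cancels. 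Where the transfer through \Cref{prop:permutation_relation} degenerates—in particular to locate the $\infty$-type loci of $\dot S$ against those of $\dot I$ and to control pole orders at the non-generic codimension-one strata—I would use the Aomoto--Dotsenko--Fateev relations \Cref{prop:main_lemma2}, which express $\dot S_{\ell,m,n}[F]$ through $\dot S_{\ell-1,m+1,n}[F]$ and $\dot S_{\ell-1,m,n+1}[F]$, whose singular hyperplanes sit in different positions; a pole appearing in only one of the three terms of such a relation must therefore be spurious.

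After Steps 1 and 2, $S^{\mathrm{reg}}_{\ell,m,n}[F]$ is analytic on a dense open subset of every codimension-one hyperplane of the arrangement, so its residual singular locus is contained in the pairwise intersections of distinct hyperplanes, a set of codimension $\geq 2$. Step 3 finishes the proof by the second Riemann (Hartogs) extension theorem: a function holomorphic off an analytic set of codimension at least two on a domain in $\bbC^3$ extends holomorphically across it, so $S^{\mathrm{reg}}_{\ell,m,n}[F]$ extends to an entire function $\bbC^3\to\bbC$. I expect the main obstacle to be Step 1—the precise matching of the $\digamma_k^{-1}$ zeros against the $\gamma$-only poles inherited from the prefactor, including the behavior at integer $\gamma$, where the numerator and denominator of the prefactor degenerate simultaneously—together with verifying that the relations of Step 2 genuinely cover all codimension-one strata, so that only a codimension-two set is left for Hartogs.
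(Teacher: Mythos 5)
Your proposal is correct and is essentially the paper's proof: both rest on combining \Cref{prop:permutation_relation} with the entirety of $I_{\ell,m,n;\mathrm{Reg}}[F]$ from \Cref{prop:mero_sym}, on the observation that the $\gamma$-only factor $\digamma_k(\gamma)(1-e^{2\pi i k\gamma})(1-e^{2\pi i \gamma})^{-1}$ vanishes only on $\{k\gamma\in\bbN,\ \gamma\notin\bbN\}$ (your reflection-formula identity is precisely this computation), and on Hartogs' theorem applied to the leftover locus, which is a locally finite union of codimension-two affine subspaces. Your only deviation, the appeal to \Cref{prop:main_lemma2}, is superfluous: the single identity $\bigl[\prod_k \digamma_k(\gamma)(1-e^{2\pi i k\gamma})(1-e^{2\pi i \gamma})^{-1}\bigr]S^{\mathrm{reg}}_{\ell,m,n}[F]=I_{\ell,m,n;\mathrm{Reg}}[F]$ already extends $S^{\mathrm{reg}}_{\ell,m,n}[F]$ across every $\alpha$-, $\beta$-, and $\infty$-type hyperplane at once (not merely at generic points), so no pole-by-pole matching or Aomoto--Dotsenko--Fateev relation is needed here --- the paper reserves those relations for \Cref{prop:symmetric_final}.
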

\begin{proof}
	\begin{itemize}
		\item Since the prefactor on the right-hand side of \cref{eq:misc_nb5} consisting of all of the $\Gamma$-function reciprocals is entire, $\smash{S^{\mathrm{reg}}_{\ell,m,n}[F]}$ extends to an analytic function on $\dot{U}_{\ell,m,n}[F]$, the domain of $\dot{S}_{\ell,m,n}[F](\alpha,\beta,\gamma)$. 
		\item For all $(\alpha,\beta,\gamma)\in \dot{U}_{\ell,m,n}[F]$, we have 
		\begin{multline}
		\Big[ \prod_{k=1}^{\ell} \frac{1-e^{2\pi i k\gamma}}{1-e^{2\pi  i\gamma}} \digamma_k(\gamma)\Big]\Big[ \prod_{k=1}^{m} \frac{1-e^{2\pi i k\gamma}}{1-e^{2\pi  i\gamma}} \digamma_k(\gamma)\Big]\Big[ \prod_{k=1}^{n} \frac{1-e^{2\pi i k\gamma}}{1-e^{2\pi  i\gamma}} \digamma_k(\gamma)\Big] S_{\ell,m,n}^{\mathrm{reg}}[F](\alpha,\beta,\gamma) \\ = I_{\ell,m,n;\mathrm{Reg}}[F](\alpha,\beta,\gamma) 
		\end{multline} 
		by \Cref{prop:permutation_relation}. By \Cref{prop:mero_sym}, this
		extends to an entire function $\bbC^3_{\alpha,\beta,\gamma}\to \bbC$. 
		
		The product $\digamma_k(\gamma) (1-e^{2\pi i k \gamma})(1-e^{2\pi i \gamma})^{-1}$, with its removable singularities removed, vanishes if and only if $k\gamma \in \bbN$ and $\gamma\notin \bbN$.  
		Thus, $S_{\ell,m,n}^{\mathrm{reg}}[F]$ extends to an analytic function on 
		\begin{equation} 
		\bbC^3_{\alpha,\beta,\gamma}\backslash \cup_{k=2}^{M} \{k\gamma \in \bbN, \gamma\notin \bbN\},
		\end{equation} 
		where $M=\max\{\ell,m,n\}$. 
	\end{itemize}
	
	Combining these two observations, $S_{\ell,m,n}^{\mathrm{reg}}[F]$ extends to an analytic function on  $\dot{U}_{\ell,m,n}[F] \cup (\bbC^3_{\alpha,\beta,\gamma}\backslash \cup_{k=2}^{M} \{k\gamma \in \bbN, \gamma\notin \bbN\})$. 
	
	The set $\cup_{k=2}^{M} \{k\gamma \in \bbN, \gamma\notin \bbN\}$
	is a union of hyperplanes, and it is disjoint from 
	\begin{equation} 
	\bigcup_{k=1}^N \{k(k+1)\gamma \in \bbZ^{\leq -k}\},
	\end{equation} 
	so $\dot{U}_{\ell,m,n}[F] \cup (\bbC^3_{\alpha,\beta,\gamma}\backslash \cup_{k=2}^{M} \{k\gamma \in \bbN, \gamma\notin \bbN\})$ is the complement in $\bbC^3_{\alpha,\beta,\gamma}$ of a locally finite collection of complex codimension-2 affine subspaces of $\bbC^3$. 
	The result therefore follows from Hartog's extension theorem.
\end{proof}

For any $\ell\in \bbN^+$ and $m,n\in \bbN$,  
\begin{equation} 
\{(\alpha,\beta,\gamma) \in \bbC^3: (\bmalpha,\bmbeta,\bmgamma) \in \dot{\mho}_{\ell,m,n}[F]\}= \dot{U}_{\ell,m,n}[F] \cap \dot{U}_{\ell-1,m+1,n}[F] \cap \dot{U}_{\ell-1,m,n+1}[F].
\label{eq:misc_olm}
\end{equation} 
The symmetric case of \Cref{lem:main_lemma} reads, after multiplying through by $1-e^{\pm 2i \gamma}$, 
\begin{multline} 
0= (1-e^{\pm 2\pi i \ell \gamma} )\dot{S}_{\ell,m,n}[F]( \alpha,\beta,\gamma ) + e^{\pm \pi i (\alpha+2(\ell-1) \gamma )} (1-e^{\pm 2 \pi i (m+1)\gamma} ) \dot{S}_{\ell-1,m+1,n}[F]( \alpha,\beta, \gamma ) \\ +  e^{\pm  \pi i (\alpha+\beta+ 2(\ell-1+m)\gamma)} (1-e^{\pm2  \pi i (n+1)\gamma})\dot{S}_{\ell-1,m,n+1}[F](\alpha,\beta, \gamma)
\label{eq:Aomoto_relation}
\end{multline} 
for all $(\alpha,\beta,\gamma)$ in the set defined by \cref{eq:misc_olm}. Define 
\begin{align}
O_{N;0} &= \{(\alpha,\beta,\gamma)\in \bbC^3: \alpha+j\gamma \notin \bbZ\text{ for any }j\in \{0,\ldots,N-1\}\}, \\
O_{N;1} &= \{(\alpha,\beta,\gamma)\in \bbC^3: \beta+j\gamma \notin \bbZ\text{ for any }j\in \{0,\ldots,N-1\}\}. 
\end{align} 
\begin{proposition}[Cf.\ \cite{DF2}\cite{Ao}\cite{FW}]
	\hfill 
	\begin{itemize}
		\item For all  $(\alpha,\beta,\gamma) \in \dot{U}_{N,0,0}[F] \cap \dot{U}_{0,N,0}[F] \cap O_{N;1}$,
		\begin{equation}
		\dot{S}_{0,N,0}[F](\alpha,\beta,\gamma) = (-1)^N \Big[ \prod_{m=0}^{N-1} \frac{\sin(\pi(\alpha+\beta+(N+m-1)\gamma))}{\sin(\pi(\beta+m\gamma ))} \Big] \dot{S}_{N,0,0}[F](\alpha,\beta,\gamma).
		\end{equation}
		\item For all $(\alpha,\beta,\gamma) \in \dot{U}_{0,N,0}[F] \cap \dot{U}_{0,0,N}[F] \cap O_{N;0}$, 
		\begin{equation}
		\dot{S}_{0,N,0}[F](\alpha,\beta,\gamma) = (-1)^N \Big[ \prod_{m=0}^{N-1}  \frac{\sin(\pi(\alpha+\beta+(N+m-1)\gamma))}{\sin(\pi(\alpha+m\gamma)) } \Big] \dot{S}_{0,0,N}[F](\alpha,\beta,\gamma).
		\label{eq:misc_ju6}
		\end{equation}
	\end{itemize}
	\label{prop:Aomoto}
\end{proposition}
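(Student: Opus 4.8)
The plan is to run the symmetric Aomoto--Dotsenko--Fateev relation \eqref{eq:Aomoto_relation} as a recurrence that transports all $N$ variables from the chamber $[0,1]$ into the chamber $[-\infty,0]$, one at a time. Fixing $n=0$ and abbreviating $A_\ell=\dot S_{\ell,N-\ell,0}[F]$ and $B_\ell=\dot S_{\ell-1,N-\ell,1}[F]$, the relation \eqref{eq:Aomoto_relation} specializes, for each sign and each $\ell\in\{1,\ldots,N\}$, to a linear relation among the three quantities $A_\ell,A_{\ell-1},B_\ell$, valid on the set \eqref{eq:misc_olm}. Since $B_\ell$ carries $n=1$ it lies outside the family $\{A_\ell\}$, so the first step is to \emph{eliminate} it: form $c_-\times(\text{$+$ relation})-c_+\times(\text{$-$ relation})$, where $c_\pm=e^{\pm\pi i(\alpha+\beta+2(N-1)\gamma)}(1-e^{\pm2\pi i\gamma})$ are the coefficients of $B_\ell$ in the two relations. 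The $B_\ell$ terms cancel, leaving a genuine two-term relation between $A_\ell$ and $A_{\ell-1}$.

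The second step is the trigonometric simplification. Using $1-e^{\pm2\pi i\theta}=\mp2ie^{\pm\pi i\theta}\sin(\pi\theta)$ throughout, every coefficient acquires a common factor $\sin(\pi\gamma)$, and after cancelling it one is left (modulo routine algebra) with
\[
\sin(\pi\ell\gamma)\,\sin\!\big(\pi(\alpha+\beta+(2N-1-\ell)\gamma)\big)\,A_\ell+\sin\!\big(\pi(N-\ell+1)\gamma\big)\,\sin\!\big(\pi(\beta+(N-\ell)\gamma)\big)\,A_{\ell-1}=0.
\]
On the dense open locus where the sines in the $A_\ell$-coefficient are nonzero this reads $A_\ell=-r_\ell A_{\ell-1}$ for an explicit ratio $r_\ell$ of sines, so chaining the substitutions $A_N=-r_N A_{N-1}=\cdots=\prod_{\ell=1}^N(-r_\ell)A_0$ expresses $A_N=\dot S_{N,0,0}[F]$ in terms of $A_0=\dot S_{0,N,0}[F]$. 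The products $\prod_{\ell=1}^N\sin(\pi\ell\gamma)$ and $\prod_{\ell=1}^N\sin(\pi(N-\ell+1)\gamma)$ coincide and cancel, the $N$ factors of $-1$ give the global $(-1)^N$, and the two surviving products reindex to $\prod_{m=0}^{N-1}\sin(\pi(\beta+m\gamma))$ and $\prod_{m=0}^{N-1}\sin(\pi(\alpha+\beta+(N+m-1)\gamma))$; solving for $\dot S_{0,N,0}[F]$ produces exactly the first displayed identity.

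The third step is bookkeeping of domains. Each intermediate relation holds only on the corresponding set \eqref{eq:misc_olm}, and the divisions above require several sines to be nonzero, so I would first establish the identity on the (nonempty, open) intersection of all these loci with the target domain $\dot U_{N,0,0}[F]\cap\dot U_{0,N,0}[F]\cap O_{N;1}$. Both sides of the asserted identity are analytic on the latter domain — the numerator sines are entire, and $O_{N;1}$ prevents the denominators $\sin(\pi(\beta+m\gamma))$ from vanishing — and that domain is connected, being the complement of a locally finite affine hyperplane arrangement, so the identity theorem propagates the relation to all of it.

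Finally, the second bullet follows from the first via the reflection $x_i\mapsto1-x_i$. This change of variables fixes the middle chamber (reversing its order) and interchanges the two outer chambers, giving $\dot S_{0,0,N}[H](\alpha,\beta,\gamma)=\dot S_{N,0,0}[H\circ\mathsf{refl}](\beta,\alpha,\gamma)$ and $\dot S_{0,N,0}[H](\alpha,\beta,\gamma)=\dot S_{0,N,0}[H\circ\mathsf{refl}](\beta,\alpha,\gamma)$ for every symmetric $H$ (verified first on the region of absolute convergence and extended by analyticity, exactly as in \Cref{prop:simple}). Applying the first bullet to $F\circ\mathsf{refl}$ with $\alpha,\beta$ interchanged, and using $\mathsf{refl}^2=\mathrm{id}$, turns the $\sin(\pi(\beta+m\gamma))$ denominators into $\sin(\pi(\alpha+m\gamma))$ and the hypothesis $O_{N;1}$ into $O_{N;0}$, yielding \eqref{eq:misc_ju6}. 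The main obstacle I anticipate is not any single step but the coordination of (i) checking that the $\pm$ combination decouples $B_\ell$ without leaving a spurious surviving factor, and (ii) tracking precisely which auxiliary loci enter, so that the concluding analytic continuation onto $\dot U_{N,0,0}[F]\cap\dot U_{0,N,0}[F]\cap O_{N;1}$ is fully justified.
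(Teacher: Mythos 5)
Your proposal is correct and is essentially the paper's own argument run in mirror image: the paper performs the same $\pm$-elimination on the three-term relation \eqref{eq:Aomoto_relation}, but with $\ell=1$ fixed and recursing in $n$ (chaining $\dot{S}_{0,N-n,n}\to\dot{S}_{0,N-n-1,n+1}$) to prove the second bullet directly, whereas you fix $n=0$ and recurse in $\ell$ to prove the first bullet, then transfer to the second via the reflection $x_i\mapsto 1-x_i$. The key lemma, the cancellation of the eliminated chamber term, the sine reindexing, and the concluding density-plus-analytic-continuation step all coincide with the paper's proof.
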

\begin{proof}
	We prove the second claim, and the proof of the first is similar. Suppose that 
	\begin{equation} 
	(\alpha,\beta,\gamma) \in \bigcap_{n=0}^N \dot{U}_{0,N-n,n}[F] \cap \bigcap_{n=0}^{N-1} \dot{U}_{1,N-1-n,n}[F].
	\end{equation} 
	We can apply \cref{eq:Aomoto_relation} for $\ell=1$ and all pairs of $m,n \in \{0,\ldots,N-1\}$ such that $m+n = N-1$.
	Combining the plus and minus cases of \cref{eq:Aomoto_relation} to eliminate the $\dot{S}_{1,N-n-1,n}[F]$ term, 
	\begin{multline}
	\frac{1}{2 i} \Big[ e^{+\pi i \alpha} \frac{1-e^{+2 \pi i (m+1)\gamma} }{1-e^{+2 \pi  i \gamma}} -e^{-\pi i \alpha} \frac{1-e^{-2 \pi i (m+1)\gamma} }{1-e^{-2 \pi  i \gamma}}\Big] \dot{S}_{0,N-n,n}[F](\alpha,\beta,\gamma) \\ = - \frac{1}{2 i} \Big[ e^{ +\pi i (\alpha+\beta + 2 (N-n-1) \gamma )} \frac{1-e^{+2 \pi i (n+1)\gamma} }{1-e^{+2 \pi  i \gamma}} - e^{ -\pi i (\alpha+\beta + 2 (N-n-1) \gamma )} \frac{1-e^{-2 \pi i (n+1)\gamma} }{1-e^{-2 \pi  i \gamma}}\Big] \\ \times  \dot{S}_{0,N-n-1,n+1}[F](\alpha,\beta,\gamma)
	\end{multline}
	if $\gamma \notin \bbZ$. 
	We calculate: 
	\begin{equation}
	\frac{1}{2 i}  \Big[ e^{+\pi i \alpha} \frac{1-e^{+2 \pi i (N-n)\gamma} }{1-e^{+2 \pi  i \gamma}} -  e^{-\pi i \alpha} \frac{1-e^{-2 \pi i (N-n)\gamma} }{1-e^{-2 \pi  i \gamma}}\Big] =  \frac{2s(\gamma)s(\alpha+(N-n-1)\gamma)s((N-n)\gamma)}{1-\cos(2\pi \gamma)} 
	\label{eq:misc_n41}
	\end{equation}
	and 
	\begin{multline}
	\frac{1}{2 i} \Big[ e^{ +\pi i (\alpha+\beta + 2 (N-n-1) \gamma )} \frac{1-e^{+2 \pi i (n+1)\gamma} }{1-e^{+2 \pi  i \gamma}} - e^{ -\pi i (\alpha+\beta + 2 (N-n-1) \gamma )} \frac{1-e^{-2 \pi i (n+1)\gamma} }{1-e^{-2 \pi  i \gamma}}\Big]  \\ =  \frac{2s(\gamma)}{1-\cos(2\pi \gamma)} s(\alpha + \beta + (2N-n-2)\gamma)s((n+1)\gamma), 
	\end{multline}
	where $s(t) = \sin(\pi t)$. 
	So, for $(\alpha,\beta,\gamma)$ as above such that none of the trigonometric factors on the right-hand side of \cref{eq:misc_n41} vanish, 
	\begin{equation}
	\dot{S}_{0,N-n,n}[F]( \alpha,\beta, \gamma ) =-  \frac{s(\alpha+\beta+(2N-n-2)\gamma) s((n+1)\gamma)}{s(\alpha+(N-n-1)\gamma) s((N-n)\gamma)} \dot{S}_{0,N-1-n,n+1}[F](\alpha,\beta, \gamma).
	\label{eq:misc_zy1}
	\end{equation}
	Applying this recursively for $n=0,\ldots,N-1$, we end up with \cref{eq:misc_ju6}. 
	
	In summary, \cref{eq:misc_ju6} holds for a nonempty, open subset of $(\alpha,\beta,\gamma) \in \dot{U}_{0,N,0}[F]\cap \dot{U}_{0,0,N}[F] \cap O_{N;0}$. 
	By analyticity, the result follows.
\end{proof}

\begin{proposition}
	The function $S_{N;\mathrm{Reg}}[F](\alpha,\beta,\gamma)$ defined by 
	\begin{equation}
	S_{N;\mathrm{Reg}}[F](\alpha,\beta,\gamma) =\Big[\prod_{j=1}^N \frac{\Gamma(2+\bar{d}_j+\alpha+\beta+(N+j-2)\gamma)} {\Gamma(1+\bar{\delta}_j +\alpha+(j-1)\gamma) \Gamma ( 1+\bar{\atled}_j  +\beta+(j-1)\gamma)\digamma_j(\gamma)} \Big]  S_N[F](\alpha,\beta,\gamma)
	\label{eq:misc_b1y}
	\end{equation}
	extends to an entire function $S_{N;\mathrm{Reg}}[F]:\bbC^3_{\alpha,\beta,\gamma}\to \bbC$. 
	\label{prop:symmetric_final}
\end{proposition}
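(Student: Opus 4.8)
The plan is to realize $S_{N;\mathrm{Reg}}[F]$ as an explicit meromorphic multiple of $\dot{S}_N[F]=\dot{S}_{0,N,0}[F]$ and then to pin down its poles by playing three expressions for $\dot{S}_N[F]$ against one another, one from each of the entire functions $S^{\mathrm{reg}}_{0,N,0}[F]$, $S^{\mathrm{reg}}_{0,0,N}[F]$, $S^{\mathrm{reg}}_{N,0,0}[F]$ furnished by \Cref{prop:three_final}. Comparing \eqref{eq:misc_b1y} with the $\ell=n=0$ case of \Cref{prop:three_final}, the $\digamma_j(\gamma)$ factors cancel and one obtains $S_{N;\mathrm{Reg}}[F]=R\,S^{\mathrm{reg}}_{0,N,0}[F]$ with
\begin{multline}
	R = \Big[\prod_{j=1}^N \Gamma(2+\bar{d}_j+\alpha+\beta+(N+j-2)\gamma)\Big] \\ \times \prod_{j=1}^N\frac{\Gamma(\delta_j[F]+j(1+\alpha+(j-1)\gamma))\,\Gamma(\atled_j[F]+j(1+\beta+(j-1)\gamma))}{\Gamma(1+\bar{\delta}_j+\alpha+(j-1)\gamma)\,\Gamma(1+\bar{\atled}_j+\beta+(j-1)\gamma)}.
\end{multline}
Since $S^{\mathrm{reg}}_{0,N,0}[F]$ is entire and the reciprocal $\Gamma$-factors in $R$ are entire, this already exhibits $S_{N;\mathrm{Reg}}[F]$ as analytic off the poles of $R$: the slope-$j$ families $\{j(\alpha+(j-1)\gamma)\in\bbZ^{\le -j-\delta_j[F]}\}$ and $\{j(\beta+(j-1)\gamma)\in\bbZ^{\le -j-\atled_j[F]}\}$ coming from the $0$- and $1$-type faces, together with the new ``truncation'' family $\{\alpha+\beta+(N+j-2)\gamma\in\bbZ^{\le -2-\bar{d}_j}\}$ coming from the numerator $\Gamma(2+\bar{d}_j+\cdots)$.

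Next I would use the Aomoto--Dotsenko--Fateev relations of \Cref{prop:Aomoto} to rewrite $\dot{S}_N[F]$ in terms of $\dot{S}_{0,0,N}[F]$ and $\dot{S}_{N,0,0}[F]$, each multiplied by a ratio of sines. In the expression through $\dot{S}_{0,0,N}[F]$, which has no pole in the pure-$\alpha$ direction, the only $\alpha$-singularities of $\dot{S}_N[F]$ are the slope-$1$ poles $\{\alpha+m\gamma\in\bbZ\}$ of the $1/\sin(\pi(\alpha+m\gamma))$ denominators; intersecting these with the slope-$j$ locus from $R$ confines the genuine $\alpha$-poles to the sparser sublattices $\{\alpha+(j-1)\gamma\in\bbZ^{\le -1-\bar{\delta}_j}\}$ (plus a codimension-two set), which is exactly where the reciprocal factor $1/\Gamma(1+\bar{\delta}_j+\alpha+(j-1)\gamma)$ of $R$ has its cancelling zeros. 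The definition $\bar{\delta}_j=\lceil j^{-1}\delta_j[F]\rceil$ is precisely what makes the residual slope-$j$ pole and the slope-$1$ zero coincide and be of equal (simple) order. The same expression supplies, through its numerator sines $\sin(\pi(\alpha+\beta+(N+m-1)\gamma))$, simple zeros along $\{\alpha+\beta+(N+j-2)\gamma\in\bbZ\}$ that are uncancelled by any pole of $\dot{S}_{0,0,N}[F]$ in that (negative) direction, and these kill the truncation poles of $R$. The $\beta$-poles are handled symmetrically using the relation through $\dot{S}_{N,0,0}[F]$ and $\bar{\atled}_j$.

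The upshot is that near the generic point of each candidate codimension-one singular hyperplane, at least one of the three expressions is manifestly regular, and all three agree on the common dense domain on which \Cref{prop:Aomoto} applies; hence $S_{N;\mathrm{Reg}}[F]$ extends analytically across every such hyperplane, so that its singular set has codimension at least two. Hartog's extension theorem then promotes this to an entire function on $\bbC^3_{\alpha,\beta,\gamma}$, exactly as in the final step of the proof of \Cref{prop:three_final}.

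I expect the main obstacle to be the order-matching bookkeeping of the second paragraph: verifying that the floor/ceiling combinatorics of $\bar{\delta}_j,\bar{\atled}_j,\bar{d}_j$ make the slope-$j$-to-slope-$1$ collapse and the truncation-zero cancellations \emph{exact} (rather than leaving residual simple poles or zeros), and confirming that the domain restrictions built into \Cref{prop:Aomoto} --- $\gamma\notin\bbZ$ together with $\alpha+m\gamma\notin\bbZ$, respectively $\beta+m\gamma\notin\bbZ$ --- fail only on codimension-two loci, so that each candidate hyperplane really is covered by a regular expression at its generic points.
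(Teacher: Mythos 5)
Your proposal is correct and follows essentially the same route as the paper's own proof: you express $S_{N;\mathrm{Reg}}[F]$ as an explicit meromorphic prefactor times the entire function of \Cref{prop:three_final} for $(0,N,0)$, then use the two halves of \Cref{prop:Aomoto} to obtain analogous expressions through $\dot{S}_{0,0,N}[F]$ and $\dot{S}_{N,0,0}[F]$, observe that at a generic point of every candidate singular hyperplane at least one of the three expressions is regular (with the ceiling/floor definitions of $\bar{\delta}_j,\bar{\atled}_j,\bar{d}_j$ effecting exactly the pole--zero cancellations the paper encodes in its $\Upsilon_i$ factors and the sets $U_{N;\bullet},Q_{N;\bullet}$), and finish with Hartog's theorem on the resulting codimension-two singular set. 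This matches the paper's argument step for step, including the final appeal to Hartog's extension theorem.
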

\begin{proof}
	We begin by defining the following open (and dense) subsets of $\bbC^3$: 
	\begin{align}
	\begin{split} 
	Q_{N;0} &= \{(\alpha,\beta,\gamma)\in \bbC^3: \bar{\delta}_j+ \alpha+(j-1)\gamma \notin \bbN\text{ for any }j\in \{1,\ldots,N\}\}, \\
	Q_{N;1} &= \{(\alpha,\beta,\gamma)\in \bbC^3: \bar{\atled}_j+ \beta+(j-1)\gamma \notin \bbN\text{ for any }j\in \{1,\ldots,N\}\},\\
	Q_{N;\infty} &= \{(\alpha,\beta,\gamma)\in \bbC^3: \bar{d}_j+ \alpha+\beta+(N+j-2)\gamma \notin \bbZ^{\leq -2}\text{ for any }j\in \{1,\ldots,N\}\}, \\
	U_{N;0} &= \{(\alpha,\beta,\gamma)\in \bbC^3: \delta_j+ j(\alpha+(j-1)\gamma) \notin \bbZ^{\leq -j}\text{ for any }j\in \{1,\ldots,N\}\}, \\
	U_{N;1} &= \{(\alpha,\beta,\gamma)\in \bbC^3: \atled_j+ j(\beta+(j-1)\gamma) \notin \bbZ^{\leq -j}\text{ for any }j\in \{1,\ldots,N\}\},\\
	U_{N;\infty} &= \{(\alpha,\beta,\gamma)\in \bbC^3:-d_j- j(1 + \alpha+\beta+(N+j-2)\gamma) \notin \bbZ^{\leq 0}\text{ for any }j\in \{1,\ldots,N\}\} \\ 
	&= \{(\alpha,\beta,\gamma)\in \bbC^3:d_j+ j(1 + \alpha+\beta+(N+j-2)\gamma) \notin \bbN\text{ for any }j\in \{1,\ldots,N\}\}. 
	\end{split} 
	\end{align}
	\begin{figure}
		\begin{center}
			\includegraphics[scale=.35]{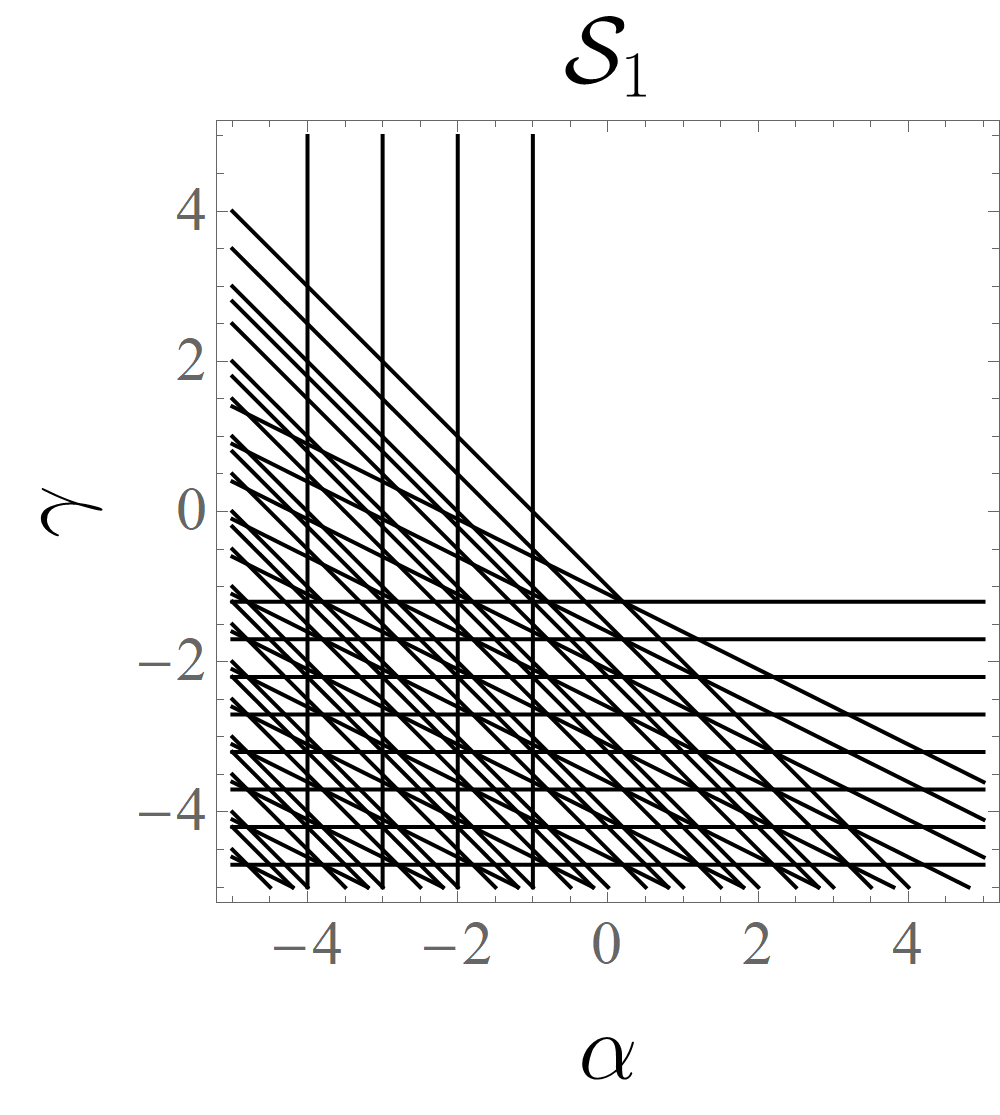}
			\includegraphics[scale=.35]{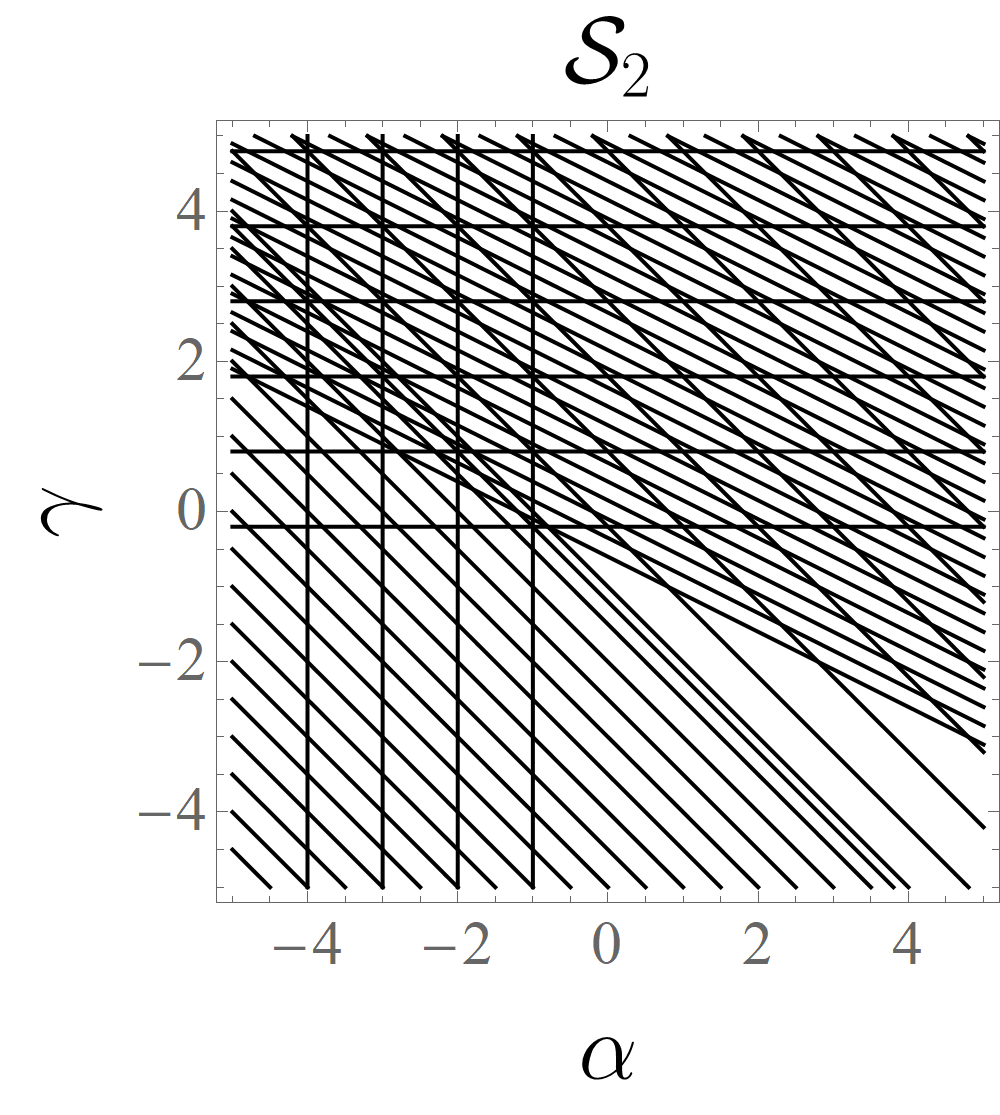}
			\includegraphics[scale=.35]{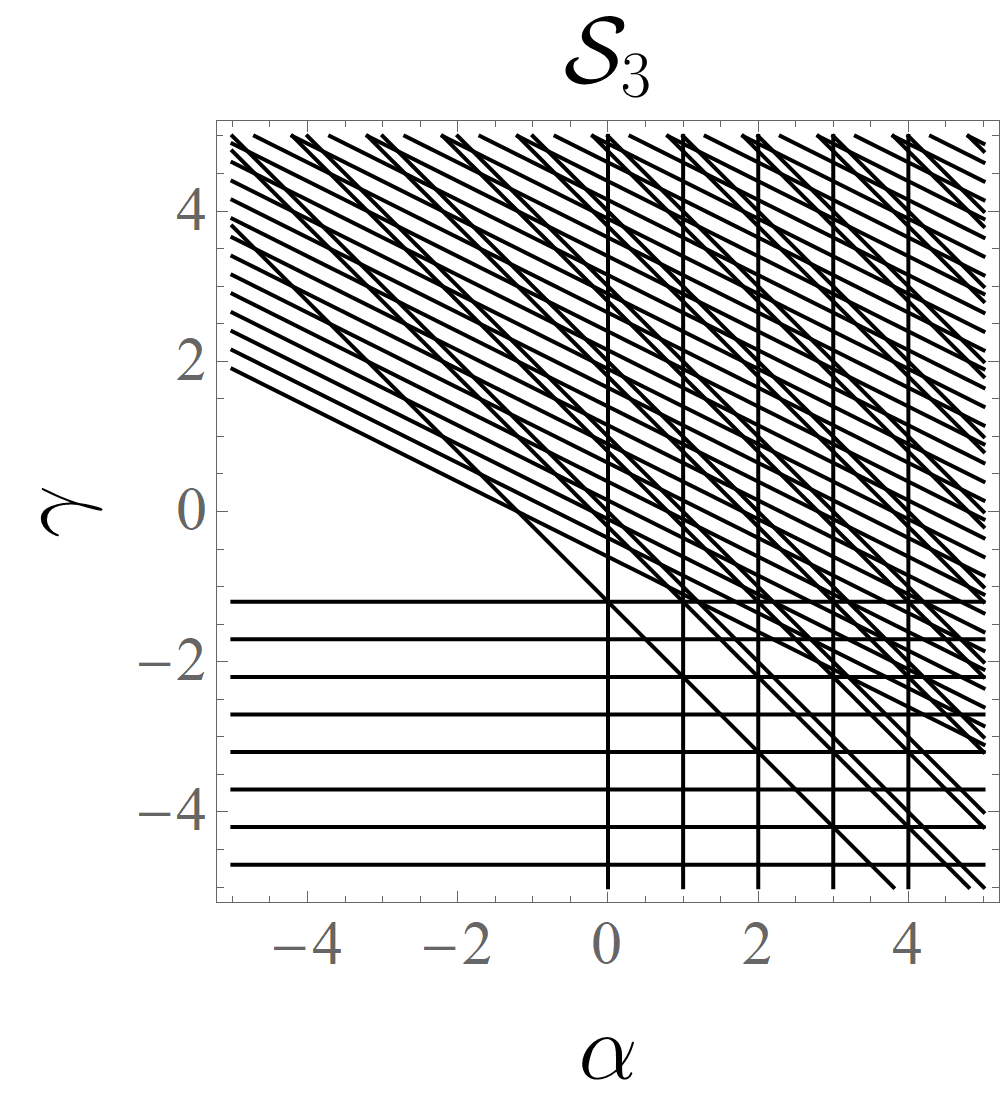}
		\end{center}
		\caption{The sets in $\calS_1,\calS_2,\calS_3$ in $\bbR^3_{\alpha,\beta,\gamma}\cap \{\beta=1/5\}$ in the case $N=2$.}
	\end{figure}
	We write 
	\begin{multline}
	S_{N;\mathrm{Reg}}[F](\alpha,\beta,\gamma) = \Upsilon_0(\alpha,\beta,\gamma) \Upsilon_1(\alpha,\beta,\gamma) \\ \times \Big[\prod_{j=1}^N \Gamma(\delta_j+j(1+\alpha+(j-1)\gamma))\Gamma(\atled_j+j(1+\beta+(j-1)\gamma)) \digamma_j(\gamma) \Big]^{-1} S_N[F](\alpha,\beta,\gamma)
	\label{eq:misc_hhj}
	\end{multline}
	for 
	\begin{align}
	\Upsilon_0(\alpha,\beta,\gamma) &=  \prod_{j=1}^N \frac{\Gamma(\delta_j+j(1+\alpha+(j-1)\gamma))\Gamma(\atled_j+j(1+\beta+(j-1)\gamma))}{\Gamma(1+ \bar{\delta}_j+ \alpha+(j-1)\gamma) \Gamma(1+\bar{\atled}_j+ \beta+(j-1)\gamma)}   , \\ 
	\Upsilon_1(\alpha,\beta,\gamma)  &= \prod_{j=1}^N \Gamma(2+\bar{d}_j+\alpha+\beta+(N+j-2)\gamma) . 
	\end{align}
	By \Cref{prop:three_final}, the second line on the right-hand side of \cref{eq:misc_hhj} defines an entire function.  
	Since $\Upsilon_0$ extends to an analytic function on $U_{N;0}\cap U_{N;1}$ and $\Upsilon_1$ extends to an analytic function on $Q_{N;\infty}$, $S_{N;\mathrm{Reg}}[F]$ extends to an analytic function on $U_{N;0}\cap U_{N;1} \cap Q_{N;\infty}$. 
	
	In $O_{N;0}  \cap \dot{U}_{0,N,0} \cap \dot{U}_{0,0,N}$, \Cref{prop:Aomoto} gives 
	\begin{multline}
	S_{N;\mathrm{Reg}}[F](\alpha,\beta,\gamma) =  (-1)^N  \Upsilon_2(\alpha,\beta,\gamma)\Upsilon_3(\alpha,\beta,\gamma)  \\ \times  \Big[ \prod_{j=1}^N  \Gamma(-d_j-j(1+\alpha+\beta+(N+j-2)\gamma))\Gamma(\atled_j+j( 1+\beta+(j-1)\gamma)) \digamma_j(\gamma)   \Big]^{-1} \dot{S}_{0,0,N}[F](\alpha,\beta,\gamma),
	\label{eq:misc_mmm}
	\end{multline}
	where
	\begin{equation}
	\Upsilon_2 = \prod_{j=1}^N \frac{ \Gamma(\atled_j+j(1+\beta+(j-1)\gamma))}{s(\alpha+(j-1)\gamma ) \Gamma(1+ \bar{\delta}_j+ \alpha+(j-1)\gamma) \Gamma(1+\bar{\atled}_j+\beta+(j-1)\gamma)}  
	\end{equation}
	\begin{multline}
	\Upsilon_3 = \prod_{j=1}^Ns(\alpha+\beta+(N+j-2) \gamma) \Gamma(2+\bar{d}_j+\alpha+\beta+(N+j-2)\gamma)  \\  \times \Gamma(-d_j-j(1+\alpha+\beta+(N+j-2)\gamma)). 
	\end{multline}
	By \Cref{prop:three_final}, the function on the second line of \cref{eq:misc_mmm} extends to an entire function of $\alpha,\beta,\gamma$. 
	On the other hand, $\Upsilon_2$ extends to an analytic function on $Q_{N;0}\cap U_{N;1}$, and $\Upsilon_3$ extends to an analytic function on $U_{N;\infty}$. Combining these observations, $S_{N;\mathrm{Reg}}[F]$ analytically continues to $Q_{N;0}\cap U_{N;1}\cap U_{N;\infty}$. 
	
	Likewise, $S_{N;\mathrm{Reg}}[F]$ extends analytically to $U_{N;0}\cap Q_{N;1} \cap U_{N;\infty}$, using $O_{N;1}$ in place of $O_{N;0}$ and the other part of \Cref{prop:Aomoto}. 
	
	So, $S_{N;\mathrm{Reg}}[F](\alpha,\beta,\gamma)$ analytically continues to 
	\begin{equation}
	U = (U_{N;0} \cap U_{N;1} \cap Q_{N;\infty} )\cup (U_{N;0} \cap Q_{N;1} \cap U_{N;\infty} ) \cup (Q_{N;0} \cap U_{N;1} \cap U_{N;\infty} ).
	\end{equation}
	This is 
	\begin{equation}
	U= \bbC^3 \Big\backslash \Big[ \Big(\bigcup_{H_1 \in \calS_1,H_2\in \calS_2,H_3 \in \calS_3} H_1\cap H_2 \cap H_3) \Big],
	\end{equation}
	where 
	\begin{itemize}
		\item 
		$\calS_1$ is the set of hyperplanes that are contained in the complement of one of $U_{N;0},U_{N;1},Q_{N;\infty}$, 
		\item  $\calS_2$ is the set of hyperplanes that are contained in the complement of one of $U_{N;0},Q_{N;1},U_{N;\infty}$, and 
		\item  $\calS_3$ is the set of hyperplanes that are contained in the complement of one of $Q_{N;0},U_{N;1},U_{N;\infty}$. 
	\end{itemize} 
	Let 
	\begin{equation} 
	\calH = \{H_1 \cap H_2 \cap H_3\neq \varnothing: H_1\in \calS_1, H_2 \in \calS_2, H_3\in \calS_3\},
	\end{equation}
	so that $S_{N;\mathrm{Reg}}[F]$ defines an analytic function on $U=\bbC^3\backslash \smash{\cup_{H\in \calH}} H$. 
	Observe that every $H\in \calH$ is an affine subspace of $\bbC^3$ of complex codimension two or three (since $\calS_1 \cap \calS_2\cap \calS_3=\varnothing$), and the collection $\calH$ is locally finite.

	Hartog's theorem therefore implies that $S_{N;\mathrm{Reg}}[F]$ analytically continues to the entirety of $\bbC^3$. 
\end{proof}

This completes the proof of \Cref{thm:symmetric}.

\subsection{The DF-symmetric case}
\label{subsec:removing_singularities_2}

Given $\gamma_+ \in \bbC\backslash \{0,1 \}$ and $\alpha_+,\beta_+\in \bbC $, let $\gamma_- = \gamma_+^{-1}$, $\alpha_- = - \gamma_- \alpha_+$, and $\beta_- = - \gamma_- \beta_+$ as in the introduction. Fix $\mathtt{S}\subseteq \{1,\ldots,N\}$. 

Given $\gamma_+ \neq 0,1$ and $F\in \operatorname{DFSym}(N;\mathtt{S},\lambda)$ for $\lambda = \gamma_+^{-1}(\gamma_+-1)$, let $\dot{W}_{\ell,m,n}^{\mathrm{DF},\mathtt{S}}[F;\gamma_+]$ denote the set of $(\alpha_+,\beta_+) \in \bbC^2$ such that 
\begin{equation}
(\alpha_-,\alpha_+,\beta_-,\beta_+,\gamma_-,-1,\gamma_+) \in \dot{W}_{\ell,m,n}^{\mathrm{DF0},\mathtt{S}}[F].
\end{equation}
For $(\alpha_+,\beta_+)\in \dot{W}_{\ell,m,n}^{\mathrm{DF},\mathtt{S}}[F;\gamma_+]$, let 
\begin{equation}
\dot{I}_{\ell,m,n}^{\mathrm{DF};\mathtt{S}}[F](\alpha_+,\beta_+,\gamma_+) = I_{\ell,m,n}^{\mathrm{DF0};\mathtt{S}}[F](\alpha_-,\alpha_+,\beta_-,\beta_+,\gamma_-,-1,\gamma_+).
\end{equation}

Then, as adumbrated by Dotsenko and Fateev:
\begin{proposition} 
	For any $\sigma \in \frakS_{\ell,m,n}$, 
	\begin{equation} 
	\dot{I}_{\ell,m,n}^{\mathrm{DF};\mathtt{S}}[F](\alpha_+,\beta_+,\gamma_+)=\dot{I}_{\ell,m,n}^{\mathrm{DF};\mathtt{S}}[F](\alpha_+,\beta_+,\gamma_+)^\sigma 
	\label{eq:misc_nv1}
	\end{equation}
	for all $(\alpha_+,\beta_+) \in \dot{W}_{\ell,m,n}^{\mathrm{DF},\mathtt{S}}[F;\gamma_+]$.
	\label{prop:switching} 
\end{proposition}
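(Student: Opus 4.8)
The plan is to verify the asserted $\frakS_{\ell,m,n}$-invariance on generators. Since $\frakS_{\ell,m,n}=\frakS_\ell\times\frakS_m\times\frakS_n$ is generated by the adjacent transpositions $\tau=(i\;i{+}1)$ with $i,i{+}1$ in a common block $\calI_\bullet$, and the superscript operation $(\cdot)^\sigma$ is a genuine action of $\frakS_{\ell,m,n}$, it suffices to prove $\dot{I}_{\ell,m,n}^{\mathrm{DF};\mathtt{S}}[F](\alpha_+,\beta_+,\gamma_+)^\tau=\dot{I}_{\ell,m,n}^{\mathrm{DF};\mathtt{S}}[F](\alpha_+,\beta_+,\gamma_+)$ for each such $\tau$ and then compose. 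Unwinding definitions, the right-hand side is $\dot{I}_{\ell,m,n}[F^\tau]((\bmalpha^{\mathrm{DF0}})^\tau,(\bmbeta^{\mathrm{DF0}})^\tau,(\bmgamma^{\mathrm{DF0}})^\tau)$ and the left-hand side is $\dot{I}_{\ell,m,n}[F](\bmalpha^{\mathrm{DF0}},\bmbeta^{\mathrm{DF0}},\bmgamma^{\mathrm{DF0}})$, evaluated at the DF specialization $\gamma_0=-1$, $\gamma_-=\gamma_+^{-1}$, $\alpha_-=-\gamma_-\alpha_+$, $\beta_-=-\gamma_-\beta_+$. I split according to whether $\tau$ preserves $\mathtt{S}$. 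If $\tau$ swaps two indices both in $\mathtt{S}$ or both outside $\mathtt{S}$, then $\tau\colon\mathtt{S}\to\mathtt{S}$, so the first clause of the definition of $\operatorname{DFSym}(N,\mathtt{S},\lambda)$ gives $F^\tau=F$, while the DF0-parameters depend on an index only through its $\mathtt{S}$-membership, which $\tau$ preserves, so $(\bmalpha^{\mathrm{DF0}})^\tau=\bmalpha^{\mathrm{DF0}}$ and likewise for $\bmbeta^{\mathrm{DF0}},\bmgamma^{\mathrm{DF0}}$; the two sides then coincide termwise and the identity is immediate.

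The substantive case is when $\tau$ swaps adjacent $j<k$ with exactly one of $j,k$ in $\mathtt{S}$. Then $\gamma_{j,k}=\gamma_0=-1\in\bbZ$, so \Cref{prop:DF_transposition} applies and, on the region $O^{\cap\{1,\tau\}}$ (which meets the DF locus for an open set of $(\alpha_+,\beta_+)$), expresses $\dot{I}-\dot{I}^\tau$ as the integral in which $z_j$ runs over a small counterclockwise circle about $z_k$. The only factor singular at $z_j=z_k$ is $(z_k-z_j+i0)^{2\gamma_{j,k}}=(z_k-z_j)^{-2}$, which is \emph{single-valued} precisely because $\gamma_{j,k}\in\bbZ$; hence the inner $z_j$-integral computes a residue at a double pole and equals, up to a nonzero constant, $[\partial_{z_j}\Psi]_{z_j=z_k}$, where $\Psi$ is the full integrand with the factor $(z_k-z_j)^{-2}$ removed. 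Writing $\Psi=\calW\,F$ with $\calW$ the remaining Selberg weight, I would then establish that at the DF point both factors satisfy the same diagonal relation. Indeed, a direct logarithmic-derivative computation using $\gamma_+\gamma_-=1$, $\alpha_-=-\gamma_-\alpha_+$, $\beta_-=-\gamma_-\beta_+$ together with the cancellation $\gamma_-\gamma_{j,k_0}+\gamma_{k,k_0}=0$ for every other index $k_0$ gives $[\partial_{z_k}\calW]_{z_j=z_k}=-\gamma_-[\partial_{z_j}\calW]_{z_j=z_k}$, while the second clause of $\operatorname{DFSym}(N,\mathtt{S},\lambda)$ with $\lambda=\gamma_+^{-1}(\gamma_+-1)=1-\gamma_-$, combined with $\partial_{x_k}(F|_{x_j=x_k})=[\partial_{x_j}F+\partial_{x_k}F]_{x_j=x_k}$, gives $[\partial_{z_k}F]_{z_j=z_k}=-\gamma_-[\partial_{z_j}F]_{z_j=z_k}$. (When $j\notin\mathtt{S}$, $k\in\mathtt{S}$, the same two relations hold with $z_j,z_k$ interchanged.)

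The product rule then yields $[\partial_{z_k}\Psi]_{z_j=z_k}=-\gamma_-[\partial_{z_j}\Psi]_{z_j=z_k}$, so, setting $\Psi_\Delta(v):=\Psi|_{z_j=z_k=v}$,
\[
\partial_{v}\Psi_\Delta=[\partial_{z_j}\Psi+\partial_{z_k}\Psi]_{z_j=z_k}=(1-\gamma_-)\,[\partial_{z_j}\Psi]_{z_j=z_k}.
\]
Because $\gamma_+\neq1$ forces $\gamma_-\neq1$, the residue integrand $[\partial_{z_j}\Psi]_{z_j=z_k}$ is a total $z_k$-derivative, so integrating it over the $z_k$-contour leaves only boundary contributions at $0,1,\infty$. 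In the subregion of $O^{\cap\{1,\tau\}}$ where the combined exponents $\alpha_++\alpha_-=\alpha_+(1-\gamma_-)$ and $\beta_++\beta_-=\beta_+(1-\gamma_-)$ have positive real part, $\Psi_\Delta$ vanishes at those endpoints, so the contour integral is zero and $\dot{I}=\dot{I}^\tau$ there. Both sides being analytic on $\dot{W}_{\ell,m,n}^{\mathrm{DF},\mathtt{S}}[F;\gamma_+]$, analytic continuation extends the equality to all of $\dot{W}_{\ell,m,n}^{\mathrm{DF},\mathtt{S}}[F;\gamma_+]$, finishing the generator case and hence the proof.

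The hard part is the key diagonal computation showing that, at the DF point, the Selberg weight $\calW$ obeys the \emph{same} first-order diagonal relation as the DF-symmetric $F$: this is exactly what the duality $\gamma_+\gamma_-=1$, the linear relations $\alpha_-=-\gamma_-\alpha_+$, $\beta_-=-\gamma_-\beta_+$, and the choice $\gamma_0=-1$ are engineered to achieve, and it is what forces the double-pole residue to collapse into a total derivative. A secondary technical point will be checking that the DF locus genuinely intersects $O^{\cap\{1,\tau\}}$ and that the endpoint terms vanish there, so that the vanishing established on an open set can be transported by analytic continuation to the full domain.
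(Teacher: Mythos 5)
Your core argument is the paper's argument: reduction to adjacent transpositions, invocation of \Cref{prop:DF_transposition}, the double-pole residue, and the key identity that the Selberg weight and the DF-symmetric $F$ satisfy the \emph{same} diagonal relation $[\partial_{z_k}(\cdot)]_{z_j=z_k}=-\gamma_-[\partial_{z_j}(\cdot)]_{z_j=z_k}$, so that the residue integrand equals $\lambda^{-1}\partial_v\Psi_\Delta$ with $\lambda=1-\gamma_-\neq 0$. This is exactly the paper's computation $G_0\,\partial_{z_j}G|_{z_j=z_{j+1}}=\lambda^{-1}\,\partial_{z_{j+1}}H$, and your direct endpoint-vanishing conclusion is an acceptable substitute for the paper's Pochhammer-contour step (the needed positivity of the combined exponents is in fact forced by the corner conditions defining $O$).

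The genuine gap is the point you defer as "a secondary technical point": that the uniform DF locus $\{(\bmalpha^{\mathrm{DF0},\mathtt{S}},\bmbeta^{\mathrm{DF0},\mathtt{S}},\bmgamma^{\mathrm{DF0},\mathtt{S}})\,:\,(\alpha_+,\beta_+)\in\bbC^2\}$ meets $O^{\cap\{1,\tau\}}$. This is false in general, and without it your analytic continuation has no open set to start from. The constraints defining $O$ pull the \emph{single} pair $(\alpha_+,\beta_+)$ in incompatible directions as soon as $\mathtt{S}$ meets more than one of $\calI_1,\calI_2,\calI_3$. Concretely, take $(\ell,m,n)=(0,2,1)$, $\mathtt{S}=\{1,3\}$, $\gamma_+=2$ (so $\gamma_-=1/2$), $\tau=(1\;2)$: absolute convergence of the $z_1,z_2$ integrals at the corners at $0$ and $1$ forces $\Re\bigl((1-\gamma_-)\alpha_+\bigr)>0$ and $\Re\bigl((1-\gamma_-)\beta_+\bigr)>0$, hence $\Re\alpha_+,\Re\beta_+>0$; but convergence of the $z_3$ integral at $\infty$ requires $\Re\bigl(\alpha_++\beta_++2\gamma_+-2\bigr)<-1$, i.e.\ $\Re(\alpha_++\beta_+)<-3$. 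So $O^{\cap\{1,\tau\}}$ is disjoint from the uniform DF locus, and your base case is empty — and such mixed configurations are precisely the ones for which \Cref{prop:switching} is invoked inside the proof of \Cref{prop:DFAomoto}. The paper's missing idea is the ``DF1'' enlargement: it proves the identity on the larger space $\dot{W}^{\mathrm{DF},1,\mathtt{S}}_{\ell,m,n}[F;\gamma_+]\subset\bbC^6$, where each block $\calI_\nu$ carries its own pair $(\alpha_{+,\nu},\beta_{+,\nu})$ (still tied by $\alpha_{-,\nu}=-\gamma_-\alpha_{+,\nu}$, $\beta_{-,\nu}=-\gamma_-\beta_{+,\nu}$), so the conflicting demands decouple (e.g.\ $\Re\alpha_{+,3}$ can be taken very negative while $\Re\alpha_{+,2}>0$) and a nonempty open $O$ exists; there your residue computation applies verbatim, analytic continuation runs on the connected six-parameter domain, and restriction to the diagonal $\alpha_{+,1}=\alpha_{+,2}=\alpha_{+,3}$, $\beta_{+,1}=\beta_{+,2}=\beta_{+,3}$ recovers the stated proposition. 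You need this enlargement (or some replacement for it) for your proof to hold in the generality claimed.
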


Since $\dot{W}_{\ell,m,n}^{\mathrm{DF},\mathtt{S}}[F]$ depends only on $\mathtt{S}$ through $|\mathtt{S} \cap \calI_1|,|\mathtt{S} \cap \calI_2|,|\mathtt{S} \cap \calI_3|$, 
\begin{equation}
\dot{W}_{\ell,m,n}^{\mathrm{DF},\mathtt{S}}[F;\gamma_+] = \dot{W}_{\ell,m,n}^{\mathrm{DF},\mathtt{S}}[F;\gamma_+]^\sigma, 
\end{equation}
so the right-hand side of \cref{eq:misc_nv1} is defined for any $(\alpha_+,\beta_+) \in \dot{W}_{\ell,m,n}^{\mathrm{DF},\mathtt{S}}[F;\gamma_+]$.
\begin{proof}
	Since $\frakS_{\ell,m,n}$ is generated by transpositions $\tau$ of adjacent elements of $\calI_1,\calI_2,\calI_3$, it suffices to consider the case when $\sigma$ is such a transposition, $\tau$. For notational simplicity, we consider the case when $\tau$ is a transposition of some $j,j+1\in \calI_2$ and $j\in \mathtt{S}$. The other cases are similar but involve some notational changes.

	Let $\dot{W}_{\ell,m,n}^{\mathrm{DF},1,\mathtt{S}}[F;\gamma_+] \subseteq \bbC^6$ denote the set of $(\alpha_{1,+},\alpha_{2,+},\alpha_{3,+},\beta_{1,+},\beta_{2,+},\beta_{3,+})\in \bbC^6$ such that 
	\begin{equation}
	(\alpha_{-,1}, \alpha_{-,2},\alpha_{-,3},\alpha_{+,1}, \alpha_{+,2},\alpha_{+,3},\beta_{-,1}, \beta_{-,2},\beta_{-,3},\beta_{+,1}, \beta_{+,2},\beta_{+,3},\gamma_-,-1,\gamma_+) \in \dot{W}_{\ell,m,n}^{\mathrm{DF1},\mathtt{S}}[F], 
	\end{equation}
	where $\alpha_{-,\nu} = - \gamma_- \alpha_{+,\nu}$ and $\beta_{-,\nu} = - \gamma_- \beta_{+,\nu}$. It suffices to prove that, for any $\sigma \in \frakS_{\ell,m,n}$, 
	\begin{multline} 
	\dot{I}_{\ell,m,n}^{\mathrm{DF},1,;\mathtt{S}}[F](\alpha_{-,1}, \alpha_{-,2},\alpha_{-,3},\alpha_{+,1}, \alpha_{+,2},\alpha_{+,3},\beta_{-,1}, \beta_{-,2},\beta_{-,3},\beta_{+,1}, \beta_{+,2},\beta_{+,3},\gamma_+) \\ =\dot{I}_{\ell,m,n}^{\mathrm{DF},1;\mathtt{S}}[F](\alpha_{-,1}, \alpha_{-,2},\alpha_{-,3},\alpha_{+,1}, \alpha_{+,2},\alpha_{+,3},\beta_{-,1}, \beta_{-,2},\beta_{-,3},\beta_{+,1}, \beta_{+,2},\beta_{+,3},\gamma_+)^\sigma 
	\label{eq:misc_nv2}
	\end{multline}
	for all $(\alpha_{1,+},\alpha_{2,+},\alpha_{3,+},\beta_{1,+},\beta_{2,+},\beta_{3,+}) \in \dot{W}_{\ell,m,n}^{\mathrm{DF},1,\mathtt{S}}[F;\gamma_+]$, where 
	\begin{multline}
	\dot{I}_{\ell,m,n}^{\mathrm{DF},1;\mathtt{S}}[F] (\alpha_{-,1}, \alpha_{-,2},\alpha_{-,3},\alpha_{+,1}, \alpha_{+,2},\alpha_{+,3},\beta_{-,1}, \beta_{-,2},\beta_{-,3},\beta_{+,1}, \beta_{+,2},\beta_{+,3},\gamma_+) = \\ 
	\dot{I}_{\ell,m,n}[F](\bmalpha^{\mathrm{DF},1},\bmbeta^{\mathrm{DF},1},\bmgamma^{\mathrm{DF}}), 
	\end{multline}
	where $\bmalpha^{\mathrm{DF},1},\bmbeta^{\mathrm{DF},1}$ are defined as $\bmalpha^{\mathrm{DF}1},\bmbeta^{\mathrm{DF}1}$, using 
	$\alpha_{-,\nu} = - \gamma_- \alpha_{+,\nu}$ and $\beta_{-,\nu} = - \gamma_- \beta_{+,\nu}$. 
	
	First observe that there exists a nonempty, open subset 
	\begin{equation} 
	O\subset  \dot{W}_{\ell,m,n}^{\mathrm{DF},1,\mathtt{S}}[F;\gamma_+]
	\end{equation} 
	(containing an affine cone)
	such that $(\bmalpha^{\mathrm{DF},1},\bmbeta^{\mathrm{DF},1},\bmgamma^{\mathrm{DF}}) \in O^{\{1,\tau\}}$ whenever $(\alpha_{+,1},\ldots,\beta_{+,3})\in O$, where $O^{\{1,\tau\}}$ is defined as in \S\ref{subsec:alternative}. We can choose $O$ such that $\Re \alpha_{\pm,2},\Re \beta_{\pm,2}>0$ everywhere in $O$. 
	
	Since $\dot{W}_{\ell,m,n}^{\mathrm{DF},\mathtt{S}}[F;\gamma_+]$ is connected, it suffices via analyticity to prove the result for \begin{equation} 
	(\alpha_{1,+},\alpha_{2,+},\alpha_{3,+},\beta_{1,+},\beta_{2,+},\beta_{3,+})\in O.
	\end{equation} 
	We write $\alpha_\pm$ in place of $\alpha_{\pm,2}$ and $\beta_\pm$ in place of $\beta_{\pm,2}$ below. 
	
	We can apply \Cref{prop:DF_transposition} for $(\alpha_{+,1},\ldots,\beta_{+,3}) \in O$. 
	By \Cref{prop:DF_transposition}, it suffices to check that, whenever all of the $z_k$'s  besides $z_j$ and $z_{j+1}$ are somewhere in the interior of the corresponding contour in \cref{eq:misc_dff}, 
	\begin{equation}
	\int_{\Gamma_{[0,1],+,j-\ell}} \oint  z_j^{\alpha_+}(1-z_j)^{\beta_+} z_{j+1}^{\alpha_-} (1-z_{j+1})^{\beta_-} 
	\Big( \prod_{\substack{ 1\leq j_0 < k_0 \leq N \\ \{j_0,k_0\} \cap \{j,j+1\}\neq \varnothing } } (z_{k_0}-z_{j_0})^{2\gamma_{j_0,k_0}} \Big) F \dd z_j \dd z_{j+1} =0,
	\end{equation} 
	where the inner integral is taken over a small circle around $z_{j+1}$, for each $z_{j+1}\in \Gamma_{[0,1],+,j-\ell} \backslash \{0,1\}$. 
	
	Since the integrand is holomorphic in $z_j$ in a punctured neighborhood of $z_{j+1}$, we apply the Cauchy residue theorem to deduce that the left-hand side is proportional to 
	\begin{equation}
	\int_{\Gamma_{[0,1],+,j-\ell}}G_0 \frac{\partial G}{\partial z_j}\Big|_{z_j = z_{j+1}} \dd z_{j+1}, 
	\label{eq:misc_11c}
	\end{equation}
	where 
	\begin{align}
	G_0(z_1,\ldots,z_N) &=  z_{j+1}^{\alpha_-}  (1-z_{j+1})^{\beta_-}  \Big[\prod_{j_0 \in ([N]\backslash \mathtt{S}) \backslash \{j+1\}} (z_{j+1}-z_{j_0})^{2\gamma_-}\Big] \Big[\prod_{j_0 \in \mathtt{S}\backslash \{j\}} (z_{j+1}-z_{j_0})^{-2}\Big], \\
	G(z_1,\ldots,z_N) &=  z_j^{\alpha_+}(1-z_j)^{\beta_+}   \Big[\prod_{j_0 \in ([N]\backslash \mathtt{S}) \backslash \{j+1\}} (z_{j}-z_{j_0})^{-2}\Big]
	\Big[\prod_{j_0 \in \mathtt{S}\backslash \{j\}} (z_{j}-z_{j_0})^{2\gamma_+}\Big] 
	F.
	\end{align}
	We are choosing branch cuts such that we do not encounter any as $z_j,z_{j+1}$ are integrated along $\Gamma_{[0,1],+,j-\ell}$ (except at the endpoints). Other than that, it is not important what the precise choice of branch cuts are.

	The integrand in \cref{eq:misc_11c} is computed to be 
	\begin{multline}
	G_0 \frac{\partial G}{\partial z_j}\Big|_{z_j=z_{j+1} } = \Big[\frac{\alpha_+}{z_{j+1}} - \frac{\beta_+}{1-z_{j+1}} + 2\gamma_+\sum_{j_0\in \mathtt{S}\backslash \{j\}} \frac{1}{z_{j+1}-z_{j_0}} - 2 \sum_{j_0 \in ([N]\backslash \mathtt{S}) \backslash \{j+1\}} \frac{1}{z_{j+1}-z_{j_0}} \\ +\frac{\partial_{z_j} F}{F}\Big|_{z_j=z_{j+1}} \Big] H, 
	\end{multline}
	where $H = G_0 G|_{z_j = z_{j+1}}$. On the other hand, 
	\begin{multline}
	\frac{\partial H}{\partial z_{j+1}} =   \Big[\frac{\alpha_-+\alpha_+}{z_{j+1}} - \frac{\beta_-+\beta_+}{1-z_{j+1}} + (2\gamma_+-2)\sum_{j_0\in \mathtt{S}\backslash \{j\}} \frac{1}{z_{j+1}-z_{j_0}} + (2\gamma_--2) \sum_{j_0 \in ([N]\backslash \mathtt{S}) \backslash \{j+1\}} \frac{1}{z_{j+1}-z_{j_0}} \\ +\frac{\partial_{z_{j+1}}( F|_{z_j=z_{j+1}})}{F|_{z_j=z_{j+1}}} \Big] H.
	\end{multline}
	Since $1-\gamma_-=\alpha_+^{-1} (\alpha_- + \alpha_+) = \beta_+^{-1}(\beta_-+\beta_+) = \gamma_+^{-1}(\gamma_+-1)=\lambda$, and since $F\in \operatorname{DFSym}(N,\mathtt{S},\lambda)$, 
	\begin{equation}
	G_0 \frac{\partial G}{\partial z_j}\Big|_{z_j=z_{j+1} }  = \frac{1}{\lambda} \frac{\partial H}{\partial z_{j+1}}.
	\end{equation}
	Consequently, 
	\begin{equation}
	\int_{\Gamma_{[0,1],+,j-\ell}} G_0 \frac{\partial G}{\partial z_j}\Big|_{z_j=z_{j+1} } \dd z_{j+1} = \frac{1}{\lambda} \int_{\Gamma_{[0,1],+,j-\ell}} \frac{\partial H}{\partial z_{j+1}} \dd z_{j+1}.
	\label{eq:misc_nh1}
	\end{equation}
	The right-hand side is proportional to 
	\begin{equation} 
	\int_p  \frac{\partial H}{\partial z_{j+1}} \dd z_{j+1}
	\label{eq:misc_j41}
	\end{equation} 
	if $\alpha_+,\beta_+\notin \bbZ$, where $p$ is a Pochhammer contour in $\bbC^2\backslash \{0,1\}$ staying sufficiently close to $\Gamma_{[0,1],+,j-\ell}$.
	Lifting to a cover of a neighborhood of $\Gamma_{[0,1],+,j-\ell}$ on which $H$ lifts to a single-valued analytic function, we can conclude (using analyticity) that the integral in \cref{eq:misc_j41} is zero. By analyticity, we can remove the nonintegrality constraint on $\alpha_+,\beta_+$ to conclude that 
	\begin{equation}
	\int_{\Gamma_{[0,1],+,j-\ell}} \frac{\partial H}{\partial z_{j+1}} \dd z_{j+1} = 0
	\end{equation}
	for all $(\alpha_{+,1},\ldots,\beta_{+,3}) \in O$.

\end{proof}

\begin{proposition}[Cf.\ \cite{DF2}]
	Given the setup above, for arbitrary $\mathtt{S}$:
	\begin{itemize}
		\item For all $(\alpha_+,\beta_+) \in \dot{W}_{N,0,0}^{\mathrm{DF},\mathtt{S}}[F;\gamma_+] \cap \dot{W}_{0,N,0}^{\mathrm{DF},\mathtt{S}}[F;\gamma_+]$ such that $\beta_\pm+m_\pm\gamma_\pm \notin \bbZ$ for any $m_\pm\in \{0,\ldots,N_\pm-1\}$
		\begin{multline}
		\dot{I}_{0,N,0}^{\mathrm{DF},\mathtt{S}}[F](\alpha_+,\beta_+,\gamma_+) = (-1)^N \dot{I}_{N,0,0}^{\mathrm{DF},\mathtt{S}}[F](\alpha_+,\beta_+,\gamma_+) \\ \times  \Big[ \prod_{m_+=0}^{N_+-1} \frac{\sin(\pi(\alpha_++\beta_++(N_++m_+-1)\gamma_+))}{\sin(\pi(\beta_++m_+\gamma_+ ))} \Big] \Big[ \prod_{m_-=0}^{N_--1} \frac{\sin(\pi(\alpha_-+\beta_-+(N_-+m_--1)\gamma_-))}{\sin(\pi(\beta_-+m_-\gamma_- ))} \Big]. 
		\label{eq:misc_ff1}
		\end{multline}
		\item For all $(\alpha_+,\beta_+) \in \dot{W}_{0,N,0}^{\mathrm{DF},\mathtt{S}}[F;\gamma_+] \cap \dot{W}_{0,0,N}^{\mathrm{DF},\mathtt{S}}[F;\gamma_+]$ such that $\alpha_\pm+m_\pm\gamma_\pm \notin \bbZ$ for any $m_\pm\in \{0,\ldots,N_\pm-1\}$, 
		\begin{multline}
		\dot{I}_{0,N,0}^{\mathrm{DF},\mathtt{S}}[F](\alpha_+,\beta_+,\gamma_+) = (-1)^N \dot{I}_{0,0,N}^{\mathrm{DF},\mathtt{S}}[F](\alpha_+,\beta_+,\gamma_+)
		\\ \times\Big[ \prod_{m_+=0}^{N_+-1}  \frac{\sin(\pi(\alpha_++\beta_++(N_++m_+-1)\gamma_+))}{\sin(\pi(\alpha_++m_+\gamma_+)) } \Big] \Big[ \prod_{m_-=0}^{N_--1}  \frac{\sin(\pi(\alpha_-+\beta_-+(N_-+m_--1)\gamma_-))}{\sin(\pi(\alpha_-+m_-\gamma_-)) } \Big] .
		\label{eq:misc_ff2}
		\end{multline}
	\end{itemize}
	For $\mathtt{S} = \{1,\ldots,N_+\}$, we also have:
	\begin{itemize}
		\item For all $(\alpha_+,\beta_+) \in \dot{W}_{N_+,N_-,0}^{\mathrm{DF},\mathtt{S}}[F;\gamma_+] \cap \dot{W}_{0,N,0}^{\mathrm{DF},\mathtt{S}}[F;\gamma_+]$ such that $\beta_++m_+\gamma_+ \notin \bbZ$ for any $m_+\in \{0,\ldots,N_+-1\}$
		\begin{multline}
		\dot{I}_{0,N,0}^{\mathrm{DF},\mathtt{S}}[F](\alpha_+,\beta_+,\gamma_+) = (-1)^{N_+} \dot{I}_{N_+,N_-,0}^{\mathrm{DF},\mathtt{S}}[F](\alpha_+,\beta_+,\gamma_+) \\ \times \Big[ \prod_{m_+=0}^{N_+-1} \frac{\sin(\pi(\alpha_++\beta_++(N_++m_+-1)\gamma_+))}{\sin(\pi(\beta_++m_+\gamma_+ ))} \Big]. 
		\label{eq:misc_ff3}
		\end{multline}
		\item For all $(\alpha_+,\beta_+) \in \dot{W}_{0,N,0}^{\mathrm{DF},\mathtt{S}}[F;\gamma_+] \cap \dot{W}_{0,N_+,N_-}^{\mathrm{DF},\mathtt{S}}[F;\gamma_+]$ such that $\alpha_-+m_-\gamma_- \notin \bbZ$ for any $m_-\in \{0,\ldots,N_--1\}$, 
		\begin{multline}
		\dot{I}_{0,N,0}^{\mathrm{DF},\mathtt{S}}[F](\alpha_+,\beta_+,\gamma_+) = (-1)^{N_-} \dot{I}_{0,N_+,N_-}^{\mathrm{DF},\mathtt{S}}[F](\alpha_+,\beta_+,\gamma_+)
		\\ \times \Big[ \prod_{m_-=0}^{N_--1}  \frac{\sin(\pi(\alpha_-+\beta_-+(N_-+m_--1)\gamma_-))}{\sin(\pi(\alpha_-+m_-\gamma_-)) } \Big] .
		\label{eq:misc_ff4}
		\end{multline}
	\end{itemize}
	Similarly, for $\mathtt{S} = \{N-N_++1,\ldots,N\}$, we have:
	\begin{itemize}
		\item For all $(\alpha_+,\beta_+) \in \dot{W}_{N_-,N_+,0}^{\mathrm{DF},\mathtt{S}}[F;\gamma_+] \cap \dot{W}_{0,N,0}^{\mathrm{DF},\mathtt{S}}[F;\gamma_+]$ such that $\beta_-+m_-\gamma_- \notin \bbZ$ for any $m_-\in \{0,\ldots,N_--1\}$
		\begin{multline}
		\dot{I}_{0,N,0}^{\mathrm{DF},\mathtt{S}}[F](\alpha_+,\beta_+,\gamma_+) = (-1)^{N_-} \dot{I}_{N_-,N_+,0}^{\mathrm{DF},\mathtt{S}}[F](\alpha_+,\beta_+,\gamma_+) \\\times  \Big[ \prod_{m_-=0}^{N_--1} \frac{\sin(\pi(\alpha_-+\beta_-+(N_-+m_--1)\gamma_-))}{\sin(\pi(\beta_-+m_-\gamma_- ))} \Big]. \label{eq:misc_ff5}
		\end{multline}
		\item For all $(\alpha_+,\beta_+) \in \dot{W}_{0,N,0}^{\mathrm{DF},\mathtt{S}}[F;\gamma_+] \cap \dot{W}_{0,N_-,N_+}^{\mathrm{DF},\mathtt{S}}[F;\gamma_+]$ such that $\alpha_++m_+\gamma_+ \notin \bbZ$ for any $m_+\in \{0,\ldots,N_+-1\}$, 
		\begin{multline}
		\dot{I}_{0,N,0}^{\mathrm{DF},\mathtt{S}}[F](\alpha_+,\beta_+,\gamma_+) = (-1)^{N_+} \dot{I}_{0,N_-,N_+}^{\mathrm{DF},\mathtt{S}}[F](\alpha_+,\beta_+,\gamma_+)
		\\ \times  \Big[ \prod_{m_+=0}^{N_+-1}  \frac{\sin(\pi(\alpha_++\beta_++(N_++m_+-1)\gamma_+))}{\sin(\pi(\alpha_++m_+\gamma_+)) } \Big] . \label{eq:misc_ff6}
		\end{multline}
	\end{itemize}
	\label{prop:DFAomoto}
\end{proposition}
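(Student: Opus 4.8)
The plan is to establish these relations as the DF-symmetric counterparts of \Cref{prop:Aomoto}, replacing the Selberg integrals $\dot{S}$ and the symmetric Aomoto relation \cref{eq:Aomoto_relation} by the Dotsenko--Fateev integrals $\dot{I}$ and the Aomoto--Dotsenko--Fateev relations of \Cref{prop:main_lemma2}. As in the proof of \Cref{prop:Aomoto}, I would work on a nonempty open subset of the relevant domain on which everything in sight is a convergent integral and on which the trigonometric denominators do not vanish, prove the identity there, and then extend by analyticity, using that each $\dot{W}^{\mathrm{DF},\mathtt{S}}_{\ell,m,n}[F;\gamma_+]$ is open, dense, and connected.

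First I would record a base recursion. Specializing \cref{eq:misc_z5z} and \cref{eq:misc_z6z} to the DF-symmetric parameters $\alpha_\pm,\beta_\pm,\gamma_\pm,\gamma_0=-1$ tied together by \cref{eq:rels}, with the block containing the distinguished variable taken to be a singleton, produces two relations among $\dot{I}_{\ell,m,n}$ at adjacent configurations. Taking the appropriate linear combination of the $+$ and $-$ forms eliminates the auxiliary term, exactly as the plus and minus cases of \cref{eq:Aomoto_relation} were combined in \Cref{prop:Aomoto} to remove $\dot{S}_{1,N-n-1,n}[F]$. This yields a two-term recursion relating $\dot{I}^{\mathrm{DF},\mathtt{S}}$ at two neighboring $(\ell,m,n)$-types, with coefficient a ratio of sines. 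Here I would invoke \Cref{prop:switching}: since $F\in\operatorname{DFSym}(N,\mathtt{S},\lambda)$, the reordered integrals $\dot{I}^{\sigma}$ entering through the permutations $\sigma_j$ coincide with the unadorned $\dot{I}^{\mathrm{DF},\mathtt{S}}$ of the corresponding type, so the recursion closes on the DF-symmetric integrals themselves.

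The key simplification is that, under \cref{eq:rels}, $\gamma_0=-1\in\bbZ$, so every phase factor built from the mixed couplings $\gamma_0$ between a $+$ and a $-$ variable is trivial. Consequently, when the distinguished variable is of type $+$, only its couplings $\gamma_+$ to the other $+$ variables survive in the exponents, and the move contributes the single factor $\sin(\pi(\alpha_++\beta_++(N_++m_+-1)\gamma_+))/\sin(\pi(\beta_++m_+\gamma_+))$ (or its $\alpha$-analog); symmetrically for type $-$. I would then iterate, moving the $N$ variables across the boundary between the two blocks one at a time. Using the permutation invariance of \Cref{prop:switching} to choose the order of the moves, I can move all $N_+$ variables of type $+$ and all $N_-$ of type $-$, assembling the two products $\prod_{m_+=0}^{N_+-1}$ and $\prod_{m_-=0}^{N_--1}$ of \cref{eq:misc_ff1} and \cref{eq:misc_ff2}. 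The genericity hypotheses $\beta_\pm+m_\pm\gamma_\pm\notin\bbZ$ (resp.\ $\alpha_\pm+m_\pm\gamma_\pm\notin\bbZ$) are precisely what guarantee that no denominator sine vanishes along the recursion, playing the role of $O_{N;0},O_{N;1}$ in \Cref{prop:Aomoto}. The remaining bullets follow the same scheme: for $\mathtt{S}=\{1,\ldots,N_+\}$ or $\mathtt{S}=\{N-N_++1,\ldots,N\}$ one only moves variables of a single type to reach the intermediate configurations $\dot{I}_{N_+,N_-,0}$, $\dot{I}_{0,N_+,N_-}$, etc., which is why a single product appears in \cref{eq:misc_ff3}--\cref{eq:misc_ff6}; and the three pairs of relations are interchanged by the $\frakS_3$-action on $\{0,1,\infty\}$, so that \Cref{prop:simple} reduces the six cases to essentially one.

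The main obstacle I anticipate is the phase bookkeeping in the base recursion: verifying that the product of the exponential prefactors from \cref{eq:misc_z5z} and \cref{eq:misc_z6z}, after combining the $+$ and $-$ forms and substituting \cref{eq:rels}, collapses to exactly the stated ratio of sines with the correct index shift $m_\pm\mapsto m_\pm+1$, and that the contributions of the $\gamma_0=-1$ couplings cancel cleanly. This is the step where the precise form of the DF relations \cref{eq:rels} and the integrality of $\gamma_0$ are essential; it is the analog of the trigonometric computation \cref{eq:misc_n41} in \Cref{prop:Aomoto}, and carrying it out requires care but no new ideas.
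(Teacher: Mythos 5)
Your proposal matches the paper's proof: the paper likewise derives these identities by repeated application of \Cref{prop:main_lemma2} exactly as in the proof of \Cref{prop:Aomoto} (combining the $\pm$ relations to eliminate the intermediate term and telescoping the resulting two-term recursion), with \Cref{prop:switching} invoked to identify the variously permuted integrals---the paper's displayed identities \cref{eq:misc_g67}---which in the symmetric case were equal automatically. Your observation that the mixed couplings $\gamma_0=-1$ contribute only trivial phases, so that each move yields a single-type ratio of sines and the one-sided choices of $\mathtt{S}$ yield a single product, is precisely the mechanism underlying the paper's argument, so there is no gap.
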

\begin{proof}
	Follows from a repeated application of \Cref{prop:main_lemma2}, as in the proof of \Cref{prop:Aomoto}. The only difference with the proof of \Cref{prop:Aomoto} is that we appeal to \Cref{prop:switching} to show (instead of it being an automatic consequence of symmetry) that 
	\begin{align}
	\dot{I}_{\ell,m,n}[F](\bmalpha^{\mathrm{DF},\mathtt{S}},\bmbeta^{\mathrm{DF},\mathtt{S}},\bmgamma^{\mathrm{DF},\mathtt{S}})  &= \dot{I}_{\ell,m,n}[F](\bmalpha^{\mathrm{DF},\mathtt{S}},\bmbeta^{\mathrm{DF},\mathtt{S}},\bmgamma^{\mathrm{DF},\mathtt{S}})^{\sigma_\ell} \\ 
	\dot{I}_{\ell-1,m+1,n}[F](\bmalpha^{\mathrm{DF},\mathtt{S}},\bmbeta^{\mathrm{DF},\mathtt{S}},\bmgamma^{\mathrm{DF},\mathtt{S}})^{\sigma_\ell}  &= \dot{I}_{\ell-1,m+1,n}[F](\bmalpha^{\mathrm{DF},\mathtt{S}},\bmbeta^{\mathrm{DF},\mathtt{S}},\bmgamma^{\mathrm{DF},\mathtt{S}})^{\sigma_{\ell+m}} \\
	\dot{I}_{\ell-1,m,n+1}[F](\bmalpha^{\mathrm{DF},\mathtt{S}},\bmbeta^{\mathrm{DF},\mathtt{S}},\bmgamma^{\mathrm{DF},\mathtt{S}})^{\sigma_{\ell+m}}  &= \dot{I}_{\ell-1,m,n+1}[F](\bmalpha^{\mathrm{DF},\mathtt{S}},\bmbeta^{\mathrm{DF},\mathtt{S}},\bmgamma^{\mathrm{DF},\mathtt{S}})^{\sigma_{N}}.
	\label{eq:misc_g67}
	\end{align}
\end{proof}

\begin{propositionp}
	For $\gamma_+ \in \bbC\backslash \{0,1\}$, the functions $I_{N;\mathrm{Reg}}^{\mathrm{DF},\mathtt{S}}[F](\alpha_+,\beta_+,\gamma_+)$ defined by 
	\begin{multline}
	\dot{I}_N^{\mathrm{DF},\mathtt{S}}[F](\alpha_+,\beta_+,\gamma_+) =   \Big[\prod_\pm \prod_{j=1}^{N_\pm} \frac{\sin(\pi(\alpha_\pm+\beta_\pm+(N_\pm+j-2)\gamma_\pm))}{\sin(\pi(\alpha_\pm+(j-1)\gamma_\pm)) \sin(\pi(\beta_\pm+(j-1)\gamma_\pm))} \Big] \\ \times  I_{N;\mathrm{Reg}}^{\mathrm{DF},\mathtt{S}}[F](\alpha_+,\beta_+,\gamma_+)
	\label{eq:misc_ggh}
	\end{multline}
	extend to  entire functions $I_{N;\mathrm{Reg}}^{\mathrm{DF},\mathtt{S}}[F] : \bbC^2_{\alpha_+,\beta_+}\to \bbC$.
\end{propositionp}
\begin{proof}
	The proof is very similar to that used to prove \Cref{prop:symmetric_final}. Using the previous proposition with \Cref{prop:switching}, it suffices to note that the union of all nine of the sets 
	\begin{multline} 
	\dot{W}_{N,0,0}^{\mathrm{DF},\mathtt{S}_-}[F;\gamma_+],\dot{W}_{0,N,0}^{\mathrm{DF},\mathtt{S}_-}[F;\gamma_+],\dot{W}_{0,0,N}^{\mathrm{DF},\mathtt{S}_-}[F;\gamma_+], \dot{W}_{N_-,N_+,0}^{\mathrm{DF},\mathtt{S}_-}[F;\gamma_+],\dot{W}_{N_-,0,N_+}^{\mathrm{DF},\mathtt{S}_-}[F;\gamma_+],\\ 
	\dot{W}_{0,N_-,N_+}^{\mathrm{DF},\mathtt{S}_-}[F;\gamma_+], \dot{W}_{N_+,N_-,0}^{\mathrm{DF},\mathtt{S}_+}[F;\gamma_+],\dot{W}_{N_+,0,N_-}^{\mathrm{DF},\mathtt{S}_+}[F;\gamma_+],\dot{W}_{0,N_+,N_-}^{\mathrm{DF},\mathtt{S}_+}[F;\gamma_+]  \subset \bbC^2_{\alpha_+,\beta_+},
	\end{multline} 
	where $\mathtt{S}_+ =\{1,\ldots,N_+\}$ and $\mathtt{S}_-=\mathtt{S}^\complement$, 
	is the complement of locally finite set of points, and then the result follows via Hartog's theorem.

\end{proof}



\appendix

\section{The $N=2$ case}
\label{sec:example}

We now consider the $N=2$ case in some detail, beginning with the formula 
\begin{equation}
S_2(\bmalpha,\bmbeta,\bmgamma) = \frac{\Gamma(1+\alpha_1)\Gamma(1+\beta_2)  \Gamma(2+2\gamma_{1,2} + \alpha_1+\alpha_2)\Gamma(1+2\gamma_{1,2})}{\Gamma(2+\alpha_1+2\gamma_{1,2})\Gamma(3+\alpha_1+\alpha_2+\beta_2+2\gamma_{1,2})} \cdot {}_3F_2(a,b;1),
\label{eq:misc_ooo}
\end{equation}
$a=(a_1,a_2,a_3)=(1+\alpha_1,-\beta_1,2+2\gamma_{1,2}+\alpha_1+\alpha_2)$ and $b=(b_1,b_2) = (2+\alpha_1+2\gamma_{1,2},3+\alpha_1+\alpha_2+\beta_2 + 2\gamma_{1,2})$.
This is asymmetric in the role of the $\alpha$'s and $\beta$'s, but there is an analogous formula with the $\alpha$'s and $\beta$'s on the right-hand side switched.
Some of the singularities of $S_2$ are manifest in this formula, but others are hidden in the ${}_3 F_2$ factor. 

Consider now the Dotsenko--Fateev-like integral 
\begin{equation}
I_2^{\mathrm{DF}0}(\alpha_1,\alpha_2,\beta_1,\beta_2,\gamma) = \int_0^1 \int_0^1  x_1^{\alpha_1} x_2^{\alpha_2}  (1-x_1)^{\beta_1} (1-x_2)^{\beta_2} (x_2-x_1+i0)^{2\gamma} \dd x_1 \dd x_2. 
\end{equation}
By the previous proposition:
\begin{corollary}
	\begin{multline}
	\dot{I}_2^{\mathrm{DF}0}(\alpha_1,\alpha_2,\beta_1,\beta_2,\gamma) = \Gamma(2+2\gamma+\alpha_1+\alpha_2)\Gamma(1+2\gamma) \Big[ \frac{\Gamma(1+\alpha_1)\Gamma(1+\beta_2)  {}_3F_2(a,b;1)}{\Gamma(2+\alpha_1+2\gamma)\Gamma(3+\alpha_1+\alpha_2+\beta_2+2\gamma)} \\ + e^{2\pi i \gamma}\frac{\Gamma(1+\alpha_2)\Gamma(1+\beta_1) {}_3F_2(a',b';1)}{\Gamma(2+\alpha_2+2\gamma)\Gamma(3+\alpha_1+\alpha_2+\beta_1+2\gamma)} \Big], 
	\label{eq:misc_b66}
	\end{multline}
	where $a'=(a_1',a_2',a_3')=(1+\alpha_2,-\beta_2,2+2\gamma+\alpha_1+\alpha_2)$ and $b'=(b_1',b_2') = (2+\alpha_2+2\gamma,3+\alpha_1+\alpha_2+\beta_1 + 2\gamma)$.
\end{corollary}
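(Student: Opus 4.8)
The plan is to recognize the integral $I_2^{\mathrm{DF}0}(\alpha_1,\alpha_2,\beta_1,\beta_2,\gamma)$ as the specialization $I_{0,2,0}[1](\bmalpha,\bmbeta,\bmgamma)$ of the Dotsenko--Fateev integral with $\bmalpha=(\alpha_1,\alpha_2)$, $\bmbeta=(\beta_1,\beta_2)$, and the single exponent $\gamma_{1,2}=\gamma$, and then to apply the symmetrization identity of \Cref{prop:symmetrization}, substituting the closed form \cref{eq:misc_ooo} for the two Selberg integrals that appear. First I would note that, since $\ell=n=0$ and $F=1$, the relevant group is $\frakS_{0,2,0}=\frakS_2=\{\mathrm{id},\tau\}$ with $\tau$ the transposition of the two indices. \Cref{prop:symmetrization} then gives, on the common domain of analyticity,
\[
	\dot{I}_{0,2,0}[1](\bmalpha,\bmbeta,\bmgamma) = \dot{S}_{0,2,0}[1](\bmalpha,\bmbeta,\bmgamma) + e^{i\Theta(\tau)}\,\dot{S}_{0,2,0}[1](\bmalpha,\bmbeta,\bmgamma)^{\tau},
\]
using $\Theta(\mathrm{id})=0$ and $\tau^{-1}=\tau$. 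The phase is computed directly from the definition $\Theta(\sigma)=2\pi\sum_{j<k}1_{\sigma(j)>\sigma(k)}\gamma_{j,k}$: for $\tau$ one has $\Theta(\tau)=2\pi\gamma_{1,2}=2\pi\gamma$, so $e^{i\Theta(\tau)}=e^{2\pi i\gamma}$, exactly the factor in \cref{eq:misc_b66}. By the definition of the $\sigma$-twist, $\dot{S}_{0,2,0}[1]^{\tau}$ is $\dot{S}_2$ evaluated at the swapped parameters $(\alpha_2,\alpha_1,\beta_2,\beta_1,\gamma)$ (with $1^{\tau}=1$).

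Next I would insert \cref{eq:misc_ooo}. Because its right-hand side is an explicit meromorphic function of the parameters, it \emph{is} the analytic continuation $\dot{S}_2$, so the substitution is legitimate wherever both sides are defined. Evaluating \cref{eq:misc_ooo} at $(\alpha_1,\alpha_2,\beta_1,\beta_2,\gamma)$ yields the first summand with its ${}_3F_2(a,b;1)$, while evaluating it at the swapped parameters yields the second; one then checks that the swapped hypergeometric parameters are precisely $a'=(1+\alpha_2,-\beta_2,2+2\gamma+\alpha_1+\alpha_2)$ and $b'=(2+\alpha_2+2\gamma,3+\alpha_1+\alpha_2+\beta_1+2\gamma)$, where the symmetry of $2+2\gamma+\alpha_1+\alpha_2$ under the swap is what forces $a_3'=a_3$. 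Finally I would factor out the two Gamma factors $\Gamma(2+2\gamma+\alpha_1+\alpha_2)$ and $\Gamma(1+2\gamma)$ that are common to both summands (again using the invariance of $\alpha_1+\alpha_2$ under $\tau$), arriving at \cref{eq:misc_b66}.

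There is no substantial obstacle here: the mathematical content is entirely carried by \Cref{prop:symmetrization} and the known evaluation \cref{eq:misc_ooo}, and the remaining work is bookkeeping. The two points that merit a little care are, first, confirming that the twist $\Theta(\tau^{-1})$ produces exactly $e^{2\pi i\gamma}$ rather than its complex conjugate — this hinges on the $+i0$ prescription fixed in the definition of $I$, and can be cross-checked by splitting $\square_2$ into $\{x_1\le x_2\}$ and $\{x_2\le x_1\}$ and noting that $(x_2-x_1+i0)^{2\gamma}=e^{2\pi i\gamma}|x_1-x_2|^{2\gamma}$ on the second region before relabeling; and second, verifying that the domain on which all integrals converge meets the domain on which \cref{eq:misc_ooo} already holds in a nonempty open set, so that the identity propagates by analytic continuation to the full domain of $\dot{I}_2^{\mathrm{DF}0}$.
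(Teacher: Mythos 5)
Your proposal is correct and takes essentially the same route as the paper: the corollary is precisely the $N=2$, $F=1$, $\ell=n=0$ instance of \Cref{prop:symmetrization} (with the phase $e^{i\Theta(\tau^{-1})}=e^{2\pi i\gamma}$ computed exactly as you do, consistent with the $+i0$ convention), combined with the evaluation \cref{eq:misc_ooo} and its parameter-swapped version. Your bookkeeping of the swapped ${}_3F_2$ parameters and the extraction of the common factors $\Gamma(2+2\gamma+\alpha_1+\alpha_2)\Gamma(1+2\gamma)$ is exactly what the paper's ``by the previous proposition'' leaves implicit.
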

The formula \cref{eq:misc_b66} is not suitable for analytic continuation to $\gamma= -1$, for which we instead use the method described in \S\ref{subsec:alternative}. That yields 
\begin{multline}
\dot{I}_2(\alpha_1,\alpha_2,\beta_1,\beta_2,\gamma) = \frac{\Gamma(1+\alpha_1)\Gamma(1+\beta_1)}{\Gamma(2+\alpha_1+\beta_1)}\int_\Gamma \Big[ z^{\alpha_2+2\gamma} (1-z)^{\beta_2} \\ 
\times {}_2 F_1\Big(-2\gamma,1+\alpha_1,2+\alpha_1+\beta_1; \frac{1}{z}\Big)\Big] \dd z,
\label{eq:misc_b67}
\end{multline}
where the $\Gamma$ is a trapezoidal contour in the upper-half of the complex plane. This formula can be used to numerically compute $\dot{I}_2^{\mathrm{DF}0}(\alpha_1,\alpha_2,\beta_1,\beta_2,\gamma)$ for $\gamma$ with large negative real part, as long as $\alpha_1,\alpha_2,\beta_1,\beta_2$ have sufficiently large positive real part relative to $\gamma$.

We illustrate the method of proof of \Cref{thm:generic} with the computation of the residues associated with $\alpha_-+\alpha_++2\gamma \in  \bbZ^{-2-d}$. Introducing coordinates $\varrho = x_2$ and $\lambda = x_1/x_2$, 
\begin{align}
\begin{split} 
S_2(\alpha_1,\alpha_2,\beta_1,\beta_2,\gamma)&= \int_0^1\int_0^{x_2} x_1^{\alpha_1} x_2^{\alpha_2}  (1-x_1)^{\beta_1} (1-x_2)^{\beta_2} (x_2-x_1)^{2\gamma} \dd x_1 \dd x_2 \label{eq:misc_bn3} \\
&= \int_0^1 \int_0^1 \varrho^{1+\alpha_1+\alpha_2+2\gamma} \lambda^{\alpha_1}  (1-\lambda\varrho)^{\beta_1}(1-\varrho)^{\beta_2}  (1-\lambda)^{2\gamma} \dd \lambda \dd \varrho. 
\end{split} 
\end{align}
Expanding $(1-\lambda\varrho)^{\beta_1}(1-\varrho)^{\beta_2}$ in Taylor series around $\varrho=0$, 
we have 
\begin{equation}
(1-\lambda\varrho)^{\beta_1}(1-\varrho)^{\beta_2} = \sum_{k=0}^\infty \sum_{\kappa=0}^k \binom{\beta_1}{\kappa} \binom{\beta_2}{k-\kappa} \lambda^\kappa(- \varrho)^k.
\end{equation}
Then, computing the outer integral term by term and using the formula for the $\beta$-function for the inner integral, 
\begin{equation}
S_2(\alpha_1,\alpha_2,\beta_1,\beta_2,\gamma) \sim \sum_{k=0}^\infty \frac{(-1)^k}{2+\alpha_1+\alpha_2+2\gamma+k} \sum_{\kappa=0}^k \binom{\beta_1}{\kappa} \binom{\beta_2}{k-\kappa}  \frac{\Gamma(1+\alpha_1+\kappa)\Gamma(1+2\gamma)}{\Gamma(2+\alpha_1+2\gamma+\kappa)}
\label{eq:misc_nn3}
\end{equation}
where the `$\sim$' means modulo an error which is not singular at (all but a positive codimension subset of) the hyperplane under investigation.
The right-hand side of this has an apparent pole whenever $\alpha_1+\alpha_2+2\gamma \in \bbZ^{\leq -2}$. 

We now examine some special cases. Fix $d\in \bbN$. First consider 
\begin{equation}
S_2[x^d](\alpha,\beta,\gamma) = S_2(\alpha,\alpha+d,\beta,\beta,\gamma) =  \int_0^1 \int_0^{x_2} x_1^\alpha x_2^{d+\alpha}(1-x_1)^\beta(1-x_2)^\beta (x_2-x_1)^{2\gamma} \dd x_1 \dd x_2.
\end{equation}
By \cref{eq:misc_ooo}, 
\begin{equation}
S_2[x^d](\alpha,\beta,\gamma) = \frac{\Gamma(1+\alpha)\Gamma(1+\beta)\Gamma(1+2\gamma)\Gamma(2+2\alpha+2\gamma+d)}{\Gamma(2+\alpha+2\gamma)\Gamma(3+2\alpha+\beta+2\gamma+d)} \cdot {}_3F_2(a,b;1),
\label{eq:misc_aaa}
\end{equation}
where now $a = (a_1,a_2,a_3) = (1+\alpha,-\beta,2+2\alpha+2\gamma+d)$ and $b = (b_1,b_2)=(2+\alpha+2\gamma,3+2\alpha+\beta+2\gamma+d)$. A numerically generated plot of the absolute value of the right-hand side is given in the case $d=2$ in \Cref{fig:example}.
Applying \Cref{thm:generic}, in which $\alpha_{1,*} =  \alpha$, $\alpha_{2,*} = d  + 2 \alpha + 2  \gamma$, $\beta_{1,*} =  \beta$, $\beta_{2,*} = 2 \beta + 2  \gamma$, and $\gamma_{1,2,*}=2 \gamma$, we deduce that $\smash{S_2[x^d](\alpha,\beta,\gamma)}$ extends to an analytic function on 
\begin{equation}
\bbC^{3}_{\alpha,\beta,\gamma} \Big\backslash \Big[ \{ \alpha \in \bbZ^{\leq -1}\} \cup \{\alpha+\gamma \in 2^{-1}\bbZ^{\leq -2-d} \} \cup \{\beta \in \bbZ^{\leq -1}\} \cup \{\beta+\gamma \in 2^{-1} \bbZ^{\leq -2}\}  \cup \{\gamma \in 2^{-1}\bbZ^{\leq -1}\} \Big]. 
\label{eq:misc_aa1}
\end{equation} 
In the $d=2$ case, this can be seen in \Cref{fig:example}.

\begin{figure}[t!]
	\begin{center}
		\includegraphics[scale=.54]{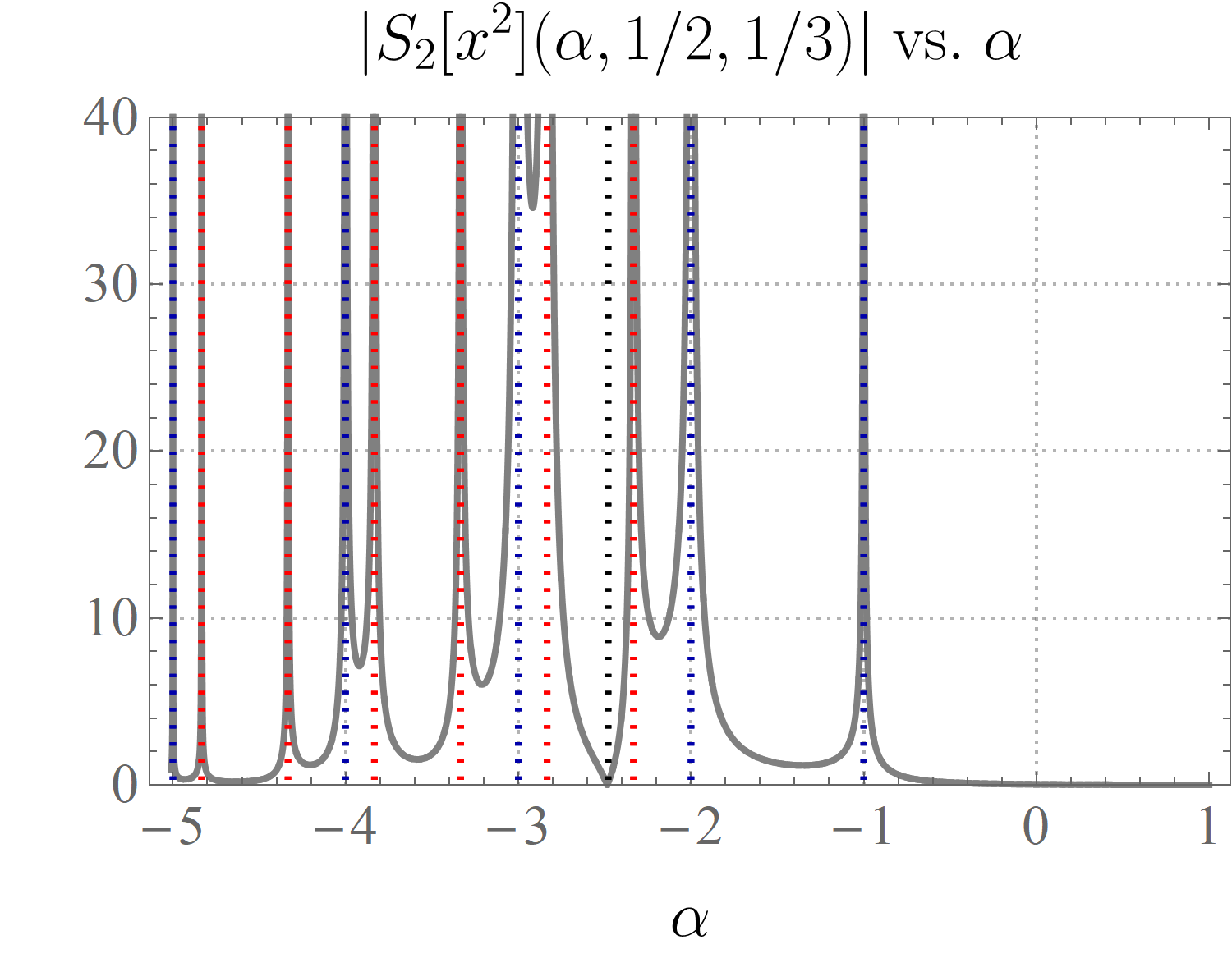}
		\includegraphics[scale=.54]{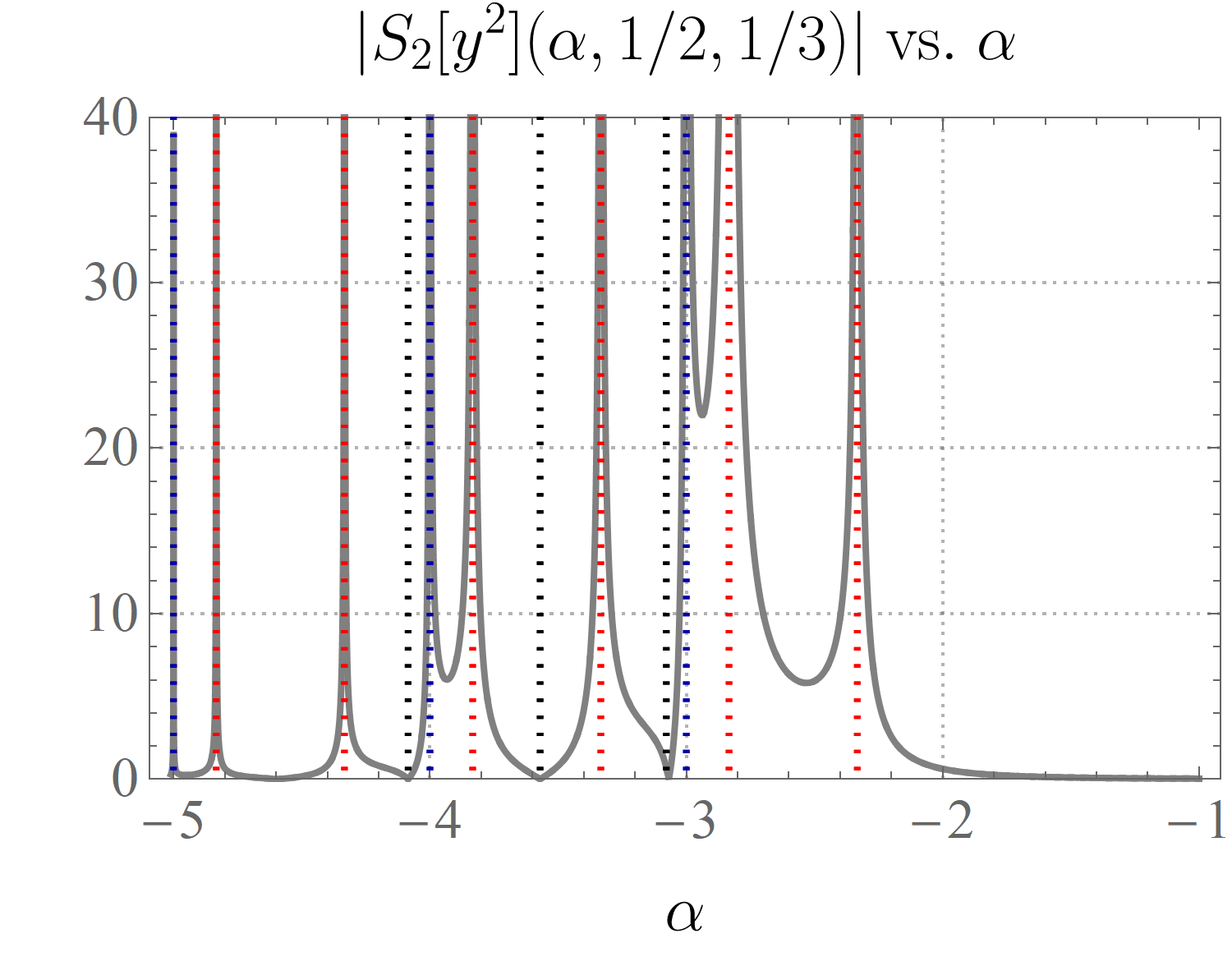}
	\end{center}
	\caption{The absolute values of the right-hand sides of \cref{eq:misc_aaa} (\emph{left}) and \cref{eq:misc_bbb} (\emph{right}) plotted against $\alpha$, for $\beta=1/2$ and $\gamma=1/3$ fixed. The singularities predicted in \cref{eq:misc_aa1}, \cref{eq:misc_bb1} have been drawn as dotted vertical lines, those associated with $\color{darkblue}\{ \alpha \in \bbZ\}$ in blue and those associated with $\color{red}\{\alpha+\gamma \in 2^{-1} \bbZ\}$ in red. It appears that all of the poles that could be present are present. The apparent zeroes of $S_2[F](\alpha,1/2,1/3)$ in the depicted range of $\alpha$ have been marked with dotted black lines and numerically computed to be $\approx -2.48503$ for $F=x^2$ and $\approx -3.06833$, $-3.57013$, and $-4.08562$ for $F=y^2$. }
	\label{fig:example}
\end{figure}

On the other hand, consider 
\begin{equation}
S_2[y^d](\alpha,\beta,\gamma) = S_2(\alpha+d,\alpha,\beta,\beta,\gamma) =   \int_0^1 \int_0^{x_2} x_1^{d+\alpha} x_2^{\alpha}(1-x_1)^\beta(1-x_2)^\beta (x_2-x_1)^{2\gamma} \dd x_1 \dd x_2. 
\end{equation}
By \Cref{eq:misc_vv1}, 
\begin{equation}
S_2[y^d](\alpha,\beta,\gamma) = \frac{\Gamma(1+\alpha+d)\Gamma(1+\beta)\Gamma(1+2\gamma)\Gamma(2+2\alpha+2\gamma+d)}{\Gamma(2+\alpha+2\gamma+d)\Gamma(3+2\alpha+\beta+2\gamma+d)} \cdot {}_3F_2(a',b';1),
\label{eq:misc_bbb}
\end{equation}
where $a'=(a_1',a_2,a_3) = (1+d+\alpha,-\beta,2+d+2\alpha+2\gamma)$ and $b'=(b'_1,b_2) = (2+d+\alpha+2\gamma,3+d+2\alpha+\beta+2\gamma)$. We again apply \Cref{thm:generic}, but now $\alpha_{1,*} = d +  \alpha$, $\alpha_{2,*} = d + 2 \alpha+2 \gamma$, in order to deduce that $S_2[y^d](\alpha,\beta,\gamma)$ extends analytically to 
\begin{equation}
\bbC^3_{\alpha,\beta,\gamma} \Big\backslash \Big[ \{ \alpha \in \bbZ^{\leq -1-d}\} \cup \{\alpha+\gamma \in 2^{-1}\bbZ^{\leq -2-d} \} \cup \{\beta \in \bbZ^{\leq -1}\} \cup \{\beta+\gamma \in 2^{-1} \bbZ^{\leq -2}\} \cup \{\gamma \in 2^{-1}\bbZ^{\leq -1}\} \Big].
\label{eq:misc_bb1}
\end{equation}
See \Cref{fig:example} for a numerically generated plot.

If we instead pick $F(x,y)=x^d+y^d$, which is in some sense the symmetrization of the previous two examples, the situation looks very different. Combining the formulas above yields 
\begin{multline}
S_2[x^d+y^d](\alpha,\beta,\gamma) = (S_2[x^d]+S_2[y^d])(\alpha,\beta,\gamma)= \frac{\Gamma(1+\alpha)\Gamma(1+\beta)\Gamma(1+2\gamma)\Gamma(2+2\alpha+2\gamma+d)}{\Gamma(2+\alpha+2\gamma)\Gamma(3+2\alpha+\beta+2\gamma+d)} \\ \times  \Big[ {}_3F_2(a,b;1)  +   \frac{\Gamma(1+\alpha+d)\Gamma(2+\alpha+2\gamma)}{\Gamma(1+\alpha)\Gamma(2+\alpha+2\gamma+d)}\cdot  {}_3F_2(a',b';1)\Big].
\label{eq:misc_ccc}
\end{multline}
On the other hand, by \Cref{thm:symmetric}, we know that $S_2[x^d+y^d]$ extends analytically to 
\begin{multline}
\bbC^{3}_{\alpha,\beta,\gamma} \Big\backslash \Big[ \{ \alpha \in \bbZ^{\leq -1-\bar{\delta}_1}\} \cup \{\alpha+\gamma \in \bbZ^{\leq -2-\bar{\delta}_2} \} \cup \{\beta \in \bbZ^{\leq -1-\bar{\atled}_1}\} \cup \{\beta+\gamma \in  \bbZ^{\leq -1-\bar{\atled}_2}\}  \\ \cup \{\gamma \in 2^{-1}\bbZ^{\leq -1}, \gamma \notin \bbZ\} \Big],
\label{eq:misc_aa2}
\end{multline}  
where $\bar{\delta}_1,\bar{\atled}_1,\bar{\delta}_2,\bar{\atled}_2$ are as in the theorem. 
In this example, $\bar{\delta}_1,\bar{\atled}_1,\bar{\atled}_2=0$, $\bar{\delta}_2= \lceil d/2 \rceil$, $\bar{d}_2=d$, and $\bar{d}_1 = \lfloor d/2 \rfloor$.
See \Cref{fig:example2}, in which $d=2$.

\begin{figure}[t!]
	\begin{center}
		\includegraphics[scale=.54]{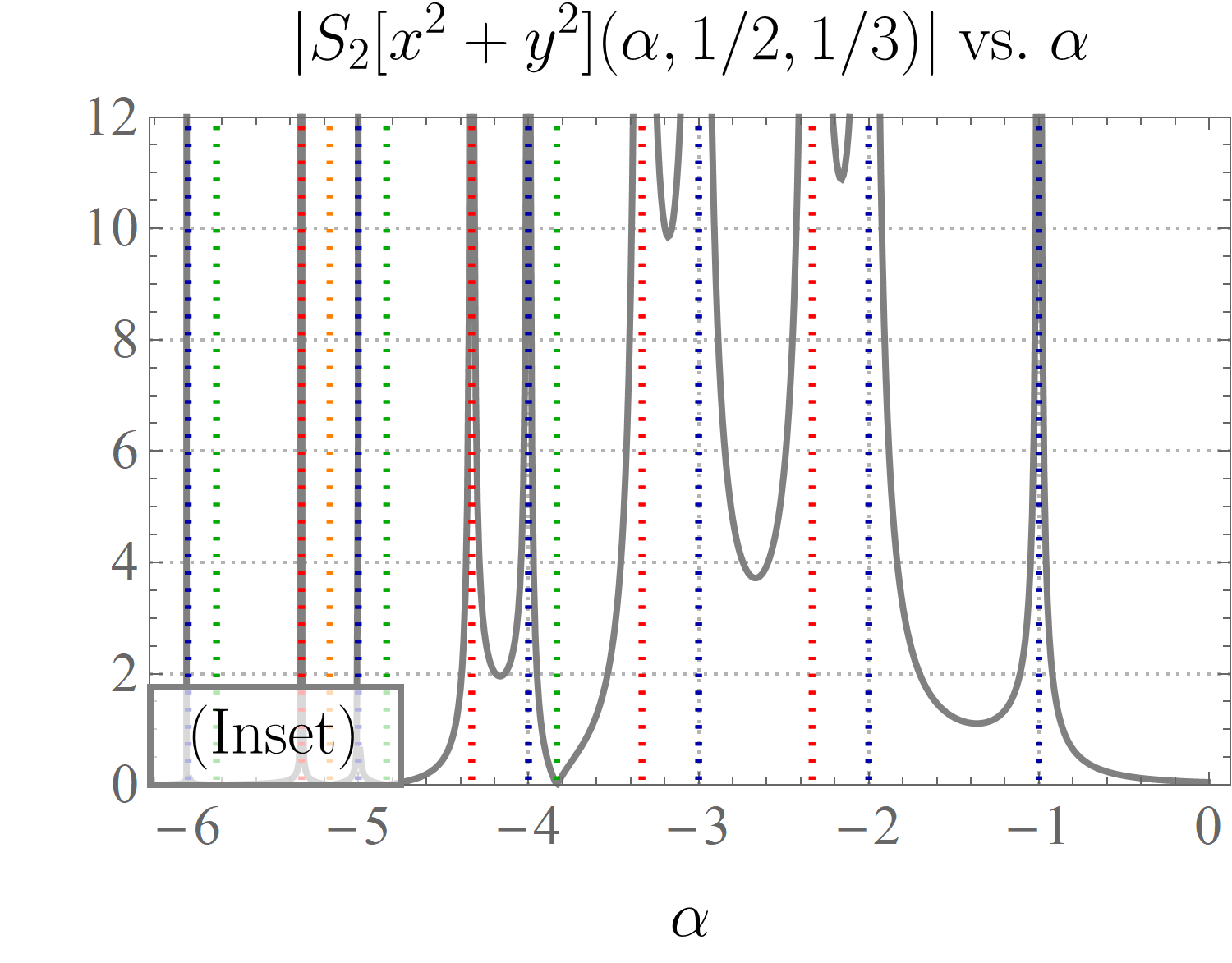}
		\includegraphics[scale=.54]{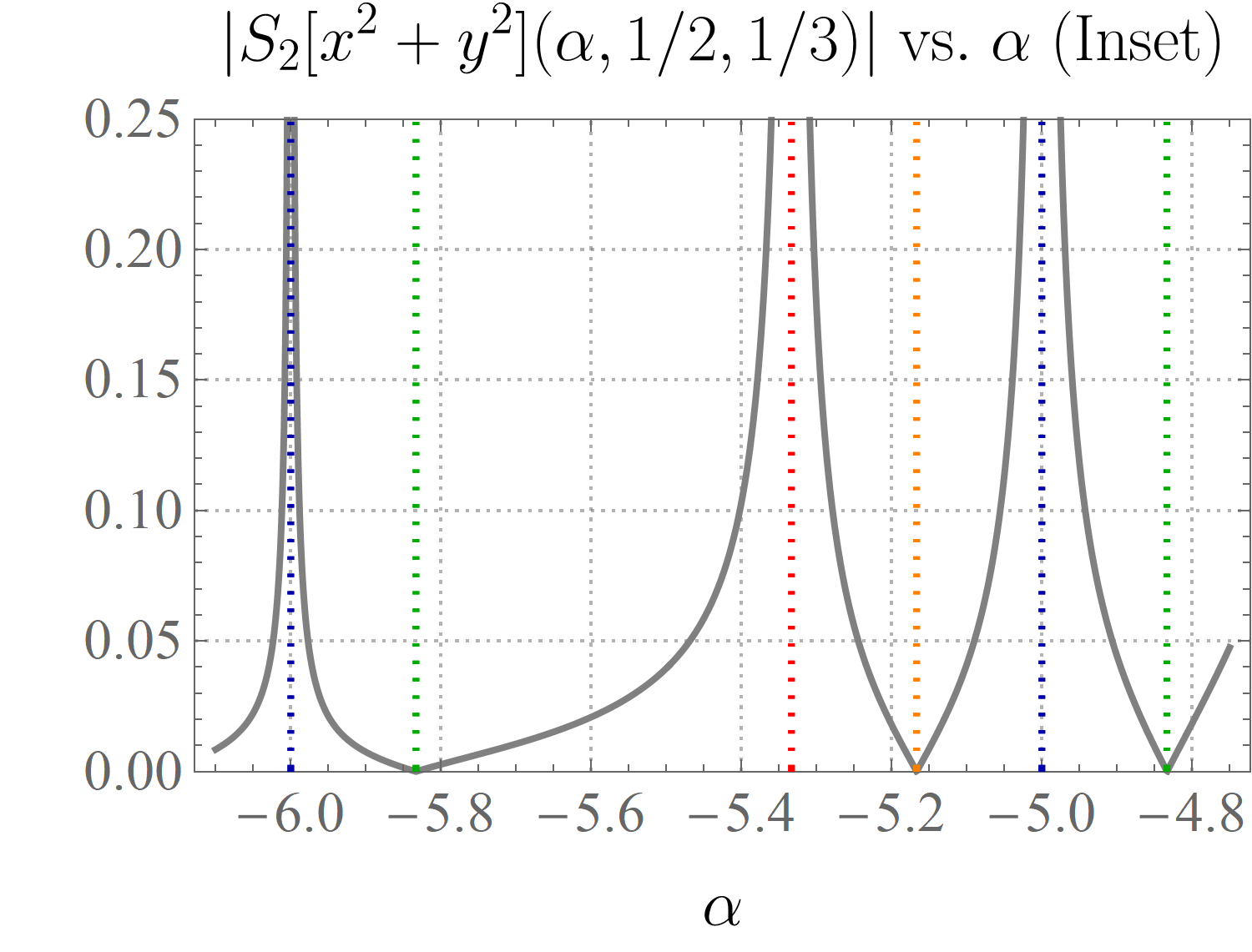}
	\end{center}
	\caption{The absolute value of the right side of \cref{eq:misc_ccc}, plotted against $\alpha \in (-6,0)$ (\emph{left}) and $\alpha \in (-6.1,-4.75)$ (\emph{right}), for $\beta=1/2$ and $\gamma=1/3$ fixed. Singularities associated with $\color{darkblue}\{ \alpha \in \bbZ\}$ are marked with blue lines and those with $\color{red}\{\alpha+\gamma \in \bbZ\}$ with red. The zeroes associated with $\color{mygreen}\{\alpha+\beta+\gamma \in \bbZ\}$ are marked in green and those with $\color{orange}\{\alpha+\beta+2\gamma \in \bbZ\}$ in orange. The location of the second plot is marked as an inset on the left plot (not to scale).}
	\label{fig:example2}
\end{figure}

In the sum \cref{eq:misc_ccc}, the poles of the individual summands at such $2\alpha + 2\gamma \in \bbZ^{\leq -2-d} \cap (2\bbZ+1) $ (which we can see from \Cref{fig:example} exist) must cancel. 
By \cref{eq:misc_nn3}, the residue of $S_2[x^d+y^d](\alpha,\beta,\gamma)$ at such a point is proportional to
\begin{equation}
\sum_{\kappa=0}^{k}  \binom{\beta}{\kappa} \binom{\beta}{k-\kappa}   \Big[ \frac{\Gamma(1+\alpha+\kappa)}{\Gamma(2+\alpha+2\gamma+\kappa)}+\frac{\Gamma(1+\alpha+\kappa+d)}{\Gamma(2+\alpha+2\gamma+\kappa+d)} \Big].
\label{eq:misc_z66}
\end{equation}
This therefore has to vanish whenever $-2-d-k$ is odd and $(\alpha,\gamma)\in \{2\alpha+2\gamma = -2 - d - k\} \subset \bbC^2_{\alpha,\gamma}$ is such that the functions in \cref{eq:misc_z66} are well-defined. 
A direct algebraic proof of this fact is not entirely trivial, but it is straightforward to check case-by-case. 

The function $S_{2;\mathrm{Reg}}[x^d+y^d](\alpha,\beta,\gamma)$ is plotted as a function of $\alpha\in \bbC$ in \Cref{fig:Delta}, still in the case $d=2$ -- for fixed $\beta,\gamma$. As expected, it appears to have no singularities, in accordance with \Cref{thm:symmetric}. Unlike in the case $F=1$, where Selberg's formula shows that $S_{2;\mathrm{Reg}}[1](\alpha,\beta,\gamma)$ is constant, $S_{2;\mathrm{Reg}}[x^d+y^d]$ is nonconstant. 

\begin{figure}[h!]
	\begin{center}
		\includegraphics[scale=.5]{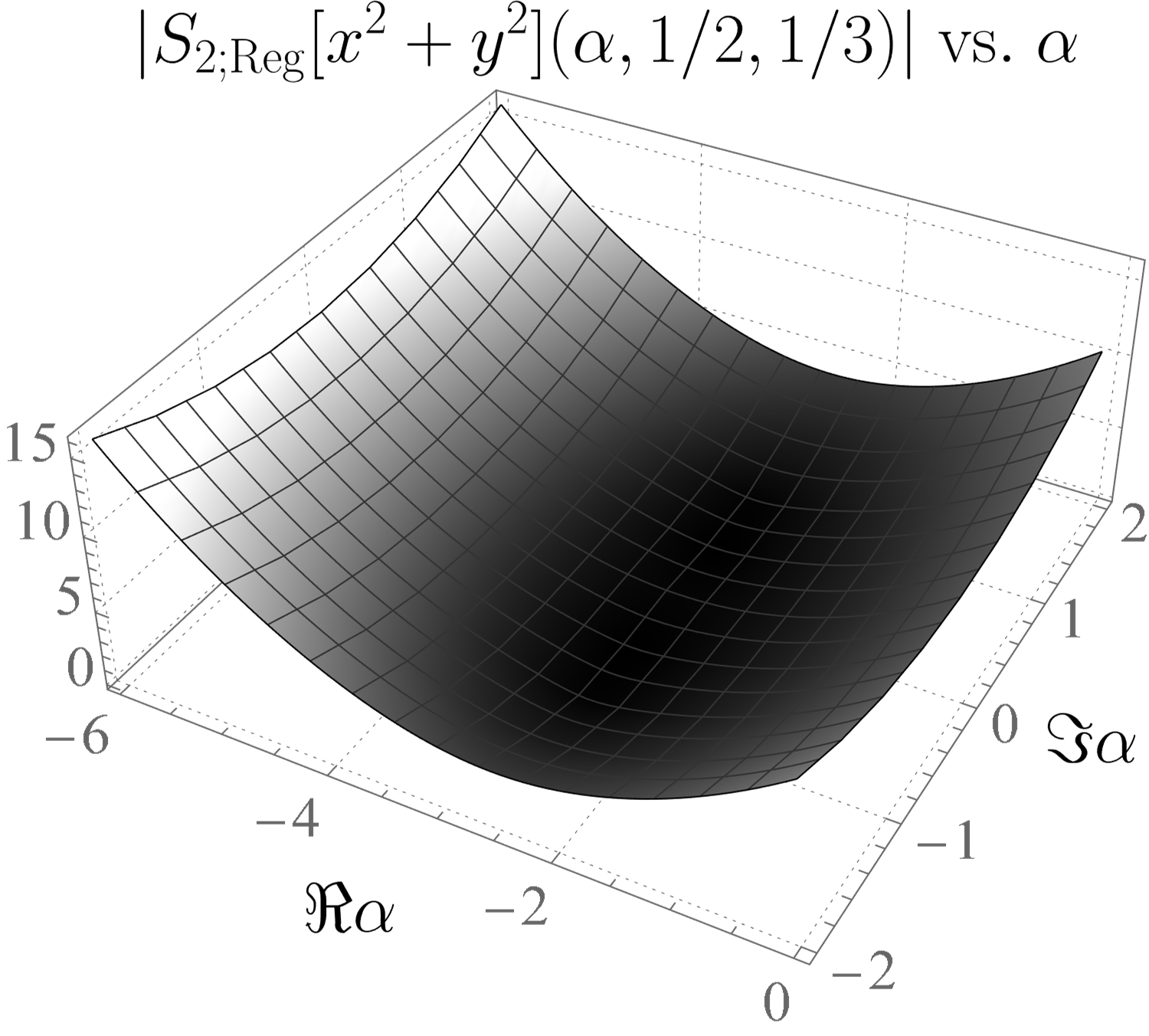}
	\end{center}
	\caption{
		The function $S_{2;\mathrm{Reg}}[x^2+y^2](\alpha,1/2,1/3)$ defined by \cref{eq:SNF_goal_symmetric}, plotted as a function of $\alpha\in \bbC$. 
	}
	\label{fig:Delta}
\end{figure}

Consider now $\dot{I}_2^{\mathrm{DF}}(\alpha_+,\beta_+)=\dot{I}_2^{\mathrm{DF}0}(\alpha_+,-\alpha_+,\beta_+,-\beta_+,1)$, which is a DF-symmetric integral with $\gamma_\pm = -1$. This is given concretely by 
\begin{multline}
\dot{I}_2^{\mathrm{DF}}(\alpha_+,\beta_+) = \frac{\Gamma(1+\alpha_+)\Gamma(1+\beta_+)}{\Gamma(2+\alpha_++\beta_+)}\int_\Gamma \Big[ z^{\alpha_++2\gamma} (1-z)^{\beta_+} \\ 
\times {}_2 F_1\Big(-2\gamma,1+\alpha_+,2+\alpha_++\beta_+; \frac{1}{z}\Big)\Big] \dd z
\end{multline}
when the real parts of $\alpha_+,\beta_+$ are sufficiently large. 
The Dotsenko--Fateev claim, \cref{eq:misc_x6x}, is, up to a sign, that
\begin{equation}
\dot{I}_2^{\mathrm{DF}}(\alpha_+,\beta_+) =-  \frac{\Gamma(1+\alpha_+)\Gamma(1+\beta_+) \Gamma(\alpha_+ )\Gamma(\beta_+)}{2\Gamma(1+\alpha_++\beta_+)\Gamma(\alpha_++\beta_+)}.
\label{eq:misc_nb6}
\end{equation}

\section{Explicit coordinates on $[0,1)^N_{\mathrm{tb}}$}
\label{sec:coordinates}

In this appendix, we discuss the total boundary (tb) blowup $[0,1)^N_{\mathrm{tb}}$, the mwc constructed by blowing up all of the facets of $[0,1)^N$, starting with those of the lowest dimension.

For each nonempty subset $S\subseteq \{1,\ldots,N\}$, let $\mathrm{F}_S$ denote the face of $\smash{[0,1)^N_{\mathrm{tb}}}$ corresponding to the facet $\{j\in S\Rightarrow x_j =0\}$ of $[0,1)^N_x$.
Tracing through the construction of the total boundary blowup, we have the following explicit choice of boundary-defining-functions (bdfs) of the various faces. If $N\geq 3$, these are different from the recursively defined boundary-defining-functions discussed in the introduction to \S\ref{sec:geometry}.

It is possible to prove:
\begin{propositionp}
	The function 
	\begin{equation}
	x_{\mathrm{F}_S} = x_{\mathrm{F}_S,N}  = \prod_{S_0\supseteq S} \Big[ \sum_{j\in S_0} x_j \Big]^{(-1)^{|S|-|S_0|}} 
	\end{equation}
	serves as a bdf of $\mathrm{F}_S$.
	
	Suppose that $\mathtt{I}$ is a (possibly empty) set of nested nonempty subsets of $\{1,\ldots,N\}$. Then, 
	\begin{equation} 
	\mathrm{f}_{\mathtt{I}}=\{x_{\mathrm{F}_S}=0\text{ for all }S\in \mathtt{I}\}
	\end{equation} 
	is a codimension $|\mathtt{I}|$ facet of $[0,1)_{\mathrm{tb}}^N$. This defines a bijective correspondence between the set of nested nonempty subsets of $\{1,\ldots,N\}$ and the set of facets of $[0,1)_{\mathrm{tb}}^N$.
	
	If $p$ lies in the interior of $\mathrm{f}_{\mathtt{I}}$, then, letting $\sigma \in \frakS_N$ denote any permutation consistent with $\mathtt{I}$, 
	\begin{equation}
	\varrho=x_{\sigma(1)},\hat{x}_{\sigma(2)}= x_{\sigma(2)}/x_{\sigma(1)},\cdots ,\hat{x}_{\sigma(N)}=x_{\sigma(N)}/x_{\sigma(N-1)} 
	\label{eq:misc_b77}
	\end{equation}
	give a local set of coordinates near $p$. 
	\label{prop:coordinates}
\end{propositionp}
Here, we say that $\sigma \in \frakS_N$ is \emph{consistent} with $\mathtt{I}$ if, whenever $j<k$, $\sigma(j)\in S\in \mathtt{I} \Rightarrow \sigma(k)\in S$.

We can cover $[0,1)_{\mathrm{tb}}^N$ with the $N!$ coordinate charts whose restrictions to the interior are of the form 
\begin{multline}
\{0 < x_{\sigma(N)} < 2 x_{\sigma(N-1)} < \cdots < 2^N x_{\sigma(1)} < 2^N \} \cap (0,1)^N \\ \to [0,1)_{x_{\sigma(1)}}\times [0,2)_{x_{\sigma(2)}/x_{\sigma(1)}}\times\cdots \times [0,2)_{x_{\sigma(N)}/x_{\sigma(N-1)}},
\end{multline}
for $\sigma \in \frakS_N$.

The preceding proposition is used to prove:
\begin{proposition}
	For any $M\in \{1,\ldots,N-1\}$ and nonempty $Q\subseteq \{1,\ldots,M\}$, 
	\begin{equation}
	x_{\mathrm{F}_Q,M} = \prod_{Q_0\subseteq \{M+1,\ldots,N\}} x_{\mathrm{F}_{Q\cup Q_0},N}
	\label{eq:forgetful_explicit}
	\end{equation}
	in $(0,1)^N_x$. 
	\label{prop:forgetful_explicit}
\end{proposition}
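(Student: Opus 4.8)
The plan is to reduce the claimed identity to a single binomial cancellation. First I would unwind the notation: by \Cref{prop:coordinates}, for $Q \subseteq Q_0 \subseteq \{1,\dots,N\}$ the function $x_{\mathrm{F}_S,N}$ is the product $\prod_{S_0 \supseteq S}[\sum_{j\in S_0} x_j]^{(-1)^{|S|-|S_0|}}$, and $x_{\mathrm{F}_Q,M}$ is understood as the analogous product for the total boundary blowup of the $M$-cube, namely $\prod_{Q \subseteq S_0 \subseteq \{1,\dots,M\}} [\sum_{j\in S_0} x_j]^{(-1)^{|Q|-|S_0|}}$, regarded as a function on $(0,1)^N_x$ depending only on $x_1,\dots,x_M$. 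Substituting the defining product into the right-hand side of \cref{eq:forgetful_explicit} gives
\begin{equation}
	\prod_{Q_0 \subseteq \{M+1,\dots,N\}} x_{\mathrm{F}_{Q\cup Q_0},N} = \prod_{Q_0 \subseteq \{M+1,\dots,N\}} \;\prod_{S_0 \supseteq Q\cup Q_0}\Big[\sum_{j\in S_0} x_j\Big]^{(-1)^{|Q\cup Q_0|-|S_0|}} .
\end{equation}

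Next I would interchange the order of the two products and collect, for each fixed $S_0$, the total exponent of the factor $\sum_{j\in S_0} x_j$. Since both inner and outer index sets force $Q \subseteq Q\cup Q_0 \subseteq S_0$, only those $S_0 \supseteq Q$ appear, and for such $S_0$ the contributing $Q_0$ are exactly the subsets of $T := S_0 \cap \{M+1,\dots,N\}$. Because $Q \subseteq \{1,\dots,M\}$ and $Q_0 \subseteq \{M+1,\dots,N\}$ are disjoint, $|Q\cup Q_0| = |Q|+|Q_0|$, so the total exponent of $\sum_{j\in S_0} x_j$ is
\begin{equation}
	\sum_{Q_0 \subseteq T} (-1)^{|Q|+|Q_0|-|S_0|} = (-1)^{|Q|-|S_0|}\sum_{Q_0 \subseteq T}(-1)^{|Q_0|} = (-1)^{|Q|-|S_0|}\,(1-1)^{|T|}.
\end{equation}

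The key step is then the observation that $(1-1)^{|T|}$ vanishes unless $T = \varnothing$, i.e. unless $S_0 \cap \{M+1,\dots,N\} = \varnothing$, in which case $S_0 \subseteq \{1,\dots,M\}$ and the surviving exponent is $(-1)^{|Q|-|S_0|}$. Thus all factors indexed by $S_0 \not\subseteq \{1,\dots,M\}$ cancel, and the right-hand side collapses to $\prod_{Q\subseteq S_0 \subseteq \{1,\dots,M\}}[\sum_{j\in S_0}x_j]^{(-1)^{|Q|-|S_0|}} = x_{\mathrm{F}_Q,M}$, as desired. I do not expect any genuine obstacle here: the content is entirely the inclusion–exclusion cancellation $\sum_{Q_0\subseteq T}(-1)^{|Q_0|}=0$ for $T\neq\varnothing$. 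The only thing requiring care is the bookkeeping of index ranges — keeping straight that $Q_0$ ranges over subsets of $\{M+1,\dots,N\}$ while $S_0$ ranges over supersets of $Q\cup Q_0$ — so I would state the interchange of products explicitly and verify that the identity holds as functions on the open cube $(0,1)^N_x$, where every $\sum_{j\in S_0}x_j$ is strictly positive and the products are honest (finite) products of nonvanishing smooth functions.
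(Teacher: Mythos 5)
Your proof is correct and follows essentially the same route as the paper's: interchange the two products and kill every factor $\sum_{j\in S_0}x_j$ with $S_0\cap\{M+1,\ldots,N\}\neq\varnothing$ via the binomial identity $\sum_{Q_0\subseteq T}(-1)^{|Q_0|}=(1-1)^{|T|}$. In fact your write-up is the more careful one — the paper's one-line version states the same cancellation but with garbled indexing (its displayed exponent should read $\sum_{Q_1\subseteq Q_0}(-1)^{|Q_1|}$, and its final identity should produce $x_{\mathrm{F}_Q,M}$ rather than $\sum_{j\in Q}x_j$) — so your bookkeeping of which $S_0$ survive is exactly what is needed.
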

\begin{proof}
	A factor of $\sum_{j\in Q\cup Q_0} x_j$ appears on the right-hand side of \cref{eq:forgetful_explicit} to the power
	\begin{equation}
	\sum_{Q_1 \subseteq Q_0} (-1)^{|Q_0|},
	\end{equation}
	which is, by the binomial theorem, $+1$ if $Q_0=\varnothing$ and $0$ otherwise. Thus, $\prod_{Q_0\subseteq \{M+1,\ldots,N\}} x_{\mathrm{F}_{Q\cup Q_0},N} = \sum_{j\in Q} x_j$. 
\end{proof}

The full proof of \Cref{prop:coordinates} is somewhat incidental to the rest of the paper, so we merely illustrate the argument in the case $N=3$. This generalizes to the $N\geq 3$ case, and applies in an even simpler form to the $N=2$ case.

The total boundary blowup $[0,1)_{\mathrm{tb}}^N$ is defined as 
\begin{equation}
[[0,1)^3;\{x,y,z=0\};\{y,z=0\};\{x,z=0\},\{x,y=0\}],
\end{equation}
where the first blowup is that of $\{x,y,z=0\}$ and must be performed first. The other three blowups can be performed in any order, and each order yields a canonically diffeomorphic mwc. 
The input and output of the first blowup, yielding $[[0,1)^3;\{x,y,z=0\}]$, are
\begin{center}
	\begin{tikzpicture}
	\begin{scope}[shift={(-4,0,0)}, scale=1.65]
	\filldraw[lightgray!20] (0,0,0) -- (0,1.8,0) -- (1.8,1.8,0) -- (1.8,0,0) -- cycle;
	\filldraw[lightgray!20] (0,0,0) -- (0,1.8,0) -- (0,1.8,1.8) -- (0,0,1.8) -- cycle;
	\filldraw[lightgray!20] (0,0,0) -- (1.8,0,0) -- (1.8,0,1.8) -- (0,0,1.8) -- cycle;
	\node (1) at (1,1,0) {$y$};
	\node (2) at (1,0,1) {$z$};
	\node (3) at (0,1,1) {$x$};
	\draw[->] (0,0,0) -- (0,2,0);
	\draw[->] (0,0,0) -- (0,0,2);
	\draw[->] (0,0,0) -- (2,0,0);
	\end{scope}
	\begin{scope}[shift={(4,0,0)}, scale=1.65]
	\filldraw[lightgray!20] (0,0,0) -- (0,1.8,0) -- (1.8,1.8,0) -- (1.8,0,0) -- cycle;
	\filldraw[lightgray!20] (0,0,0) -- (0,1.8,0) -- (0,1.8,1.8) -- (0,0,1.8) -- cycle;
	\filldraw[lightgray!20] (0,0,0) -- (1.8,0,0) -- (1.8,0,1.8) -- (0,0,1.8) -- cycle;
	\draw (0,1,0) -- (0,0,1) -- (1,0,0) -- cycle;
	\node (1) at (1.5,1,0) {$y/(x+y+z)$};
	\node (2) at (1.5,0,1.5) {$z/(x+y+z)$};
	\node (3) at (0,2.25,2.5) {$x/(x+y+z)$};
	\node (4) at (.4,.2,.25) {$x+y+z$};
	\draw[->] (0,1,0) -- (0,2,0);
	\draw[->] (0,0,1) -- (0,0,2);
	\draw[->] (1,0,0) -- (2,0,0);
	\end{scope}
	\end{tikzpicture}
\end{center}
respectively, 
where we are marking the faces with boundary-defining-functions (using the Cartesian coordinates $x,y,z$ in place of $x_1,x_2,x_3$). The choice of bdfs on the blowup is in accordance with the prescription in the introduction of \S\ref{sec:geometry}. 

The next blowup, yielding 
\begin{equation} 
[[0,1)^3;\{x,y,z=0\};\{y,z=0\}],
\end{equation} 
has input and output
\begin{center}
	\begin{tikzpicture}
	\begin{scope}[shift={(-4,0,0)}, scale=1.65]
	\filldraw[lightgray!20] (0,0,0) -- (0,1.8,0) -- (1.8,1.8,0) -- (1.8,0,0) -- cycle;
	\filldraw[lightgray!20] (0,0,0) -- (0,1.8,0) -- (0,1.8,1.8) -- (0,0,1.8) -- cycle;
	\filldraw[lightgray!20] (0,0,0) -- (1.8,0,0) -- (1.8,0,1.8) -- (0,0,1.8) -- cycle;
	\draw (0,1,0) -- (0,0,1) -- (1,0,0) -- cycle;
	\node (1) at (1.5,1,0) {$y/(x+y+z)$};
	\node (2) at (1.5,0,1.5) {$z/(x+y+z)$};
	\node (3) at (0,2.25,2.5) {$x/(x+y+z)$};
	\node (4) at (.4,.2,.25) {$x+y+z$};
	\draw[->] (0,1,0) -- (0,2,0);
	\draw[->] (0,0,1) -- (0,0,2);
	\draw[->] (1,0,0) -- (2,0,0);
	\end{scope}
	\begin{scope}[shift={(4,0,0)}, scale=1.65]
	\filldraw[lightgray!20] (0,1,0) -- (0,0,1) -- (1,0,.35) -- (1,.35,0) -- cycle;
	\filldraw[lightgray!20] (1,.35,0) -- (0,1,0) -- (0,1.8,0) -- (2.4,1.8,0) -- (2.4,.35,0) -- cycle;
	\filldraw[lightgray!20] (1,0,.35) -- (1,.35,0) -- (2.4,.35,0) -- (2.4,0,.35) -- cycle;
	\filldraw[lightgray!20] (0,0,0) -- (0,1.8,0) -- (0,1.8,1.8) -- (0,0,1.8) -- cycle;
	\filldraw[lightgray!20] (1,0,.35) -- (0,0,1) -- (0,0,1.8) -- (2.4,0,1.8) -- (2.4,0,.35) -- cycle;
	\draw (0,1,0) -- (0,0,1) -- (1,0,.35) -- (1,.35,0) -- cycle;
	\node (1) at (1.5,1,0) {$y/(y+z)$};
	\node (2) at (1.5,0,1.5) {$z/(y+z)$};
	\node (3) at (0,2.25,2.5) {$x/(x+y+z)$};
	\node (4) at (.45,.2,.25) {$x+y+z$};
	\node (5) at (2,.1,0) {$(y+z)/(x+y+z)$};
	\draw[->] (0,1,0) -- (0,2,0);
	\draw[->] (0,0,1) -- (0,0,2);
	\draw[->] (1,0,.35) -- (2.5,0,.35);
	\draw[->] (1,.35,0) -- (2.5,.35,0);
	\end{scope}
	\end{tikzpicture}
\end{center}
Again, the choices of bdfs are in accordance with \S\ref{sec:geometry}.

Next, we blow up the facet of $[[0,1)^3;\{x,y,z=0\};\{y,z=0\}]$ corresponding to the $y$-axis. Because the previous blowup was located away from the facet being blown up now, we can use the sum 
\begin{equation} 
\frac{x}{x+y+z} + \frac{z}{x+y+z} = \frac{x+z}{x+y+z} 
\label{eq:misc_hu4}
\end{equation} 
of the bdfs of the adjacent faces in $[[0,1)^3;\{x,y,z=0\}]$ as a bdf of the front face of the current blowup rather than 
\begin{equation}
\frac{x}{x+y+z} + \frac{z}{y+z} = \frac{xy+2xz+yz+z^2}{(y+z)(x+y+z)}, 
\label{eq:misc_hu3}
\end{equation}
which would be the prescription in \S\ref{sec:geometry}. The choices in \cref{eq:misc_hu4}, \cref{eq:misc_hu3} are equivalent, in the sense that their quotient is a smooth, nonvanishing function on $[[0,1)^3;\{x,y,z=0\};\{y,z=0\};\{x,z=0\}]$. 
Given that \cref{eq:misc_hu4} serves as a bdf of the front face of the latest blowup, the quotient 
\begin{equation}
\frac{x/(x+y+z)}{(x+z)/(x+y+z)} = \frac{x}{x+z} 
\end{equation}
serves as a bdf in $[[0,1)^3;\{x,y,z=0\};\{y,z=0\};\{x,z=0\}]$ for the lift of the $yz$-plane, and 
\begin{equation}
\frac{z/(y+z)}{(x+z)/(x+y+z)} = \frac{z(x+y+z)}{(x+z)(y+z)}
\end{equation}
serves as a bdf for the lift of the $xy$-plane. In summary, the third blowup has input and output
\begin{center}
	\begin{tikzpicture} 
	\begin{scope}[shift={(-5,0,0)}, scale=1.65]
	\filldraw[lightgray!20] (0,0,0) -- (0,1.8,0) -- (2.4,1.8,0) -- (2.4,0,0) -- cycle;
	\filldraw[lightgray!20] (0,0,0) -- (0,1.8,0) -- (0,1.8,1.8) -- (0,0,1.8) -- cycle;
	\filldraw[lightgray!20] (0,0,0) -- (2.4,0,0) -- (2.4,0,1.8) -- (0,0,1.8) -- cycle;
	\draw (0,1,0) -- (0,0,1) -- (1,0,.35) -- (1,.35,0) -- cycle;
	\node (1) at (1.5,1,0) {$y/(y+z)$};
	\node (2) at (1.5,0,1.5) {$z/(y+z)$};
	\node (3) at (0,2.25,2.5) {$x/(x+y+z)$};
	\node (4) at (.45,.2,.25) {$x+y+z$};
	\node (5) at (2,.1,0) {$(y+z)/(x+y+z)$};
	\draw[->] (0,1,0) -- (0,2,0);
	\draw[->] (0,0,1) -- (0,0,2);
	\draw[->] (1,0,.35) -- (2.5,0,.35);
	\draw[->] (1,.35,0) -- (2.5,.35,0);
	\end{scope}
	\begin{scope}[shift={(3.5,0,0)}, scale=1.65]
	\filldraw[lightgray!20] (.3,0,2.4) -- (2.4,0,2.4) -- (2.4,0,.35) -- (1,0,.35) -- (.3,0,1) -- cycle;
	\filldraw[lightgray!20] (1,.35,0) -- (0,1,0) -- (0,1.8,0) -- (2.4,1.8,0) -- (2.4,.35,0) -- cycle;
	\filldraw[lightgray!20] (1,0,.35) -- (1,.35,0) -- (2.4,.35,0) -- (2.4,0,.35) -- cycle;
	\filldraw[lightgray!20] (.3,0,1) -- (0,.3,1) -- (0,.3,2.4) -- (.3,0,2.4) -- cycle;
	\filldraw[lightgray!20] (0,.3,1) -- (0,1,0) -- (0,1.8,0) -- (0,1.8,2.4) -- (0,.3,2.4) -- cycle;
	\draw[fill=lightgray!20] (0,1,0) -- (0,.3,1) -- (.3,0,1) -- (1,0,.35) -- (1,.35,0) -- cycle;
	\node (1) at (1.5,1,0) {$y/(y+z)$};
	\node (2) at (2.1,0,1.95) {$z(x+y+z)(x+z)^{-1}(y+z)^{-1}$};
	\node (3) at (0,2.25,2.25) {$x/(x+z)$};
	\node (4) at (.4,.25,.25) {$x+y+z$};
	\node (5) at (2,.1,0) {$(y+z)/(x+y+z)$};
	\draw[dotted] (0,0,1.5) -- (0,0,2.1) -- (-.2,0,2) node[left] {$(x+z)/(x+y+z)$};
	\draw[->] (0,1,0) -- (0,2,0);
	\draw[->] (.3,0,1) -- (.3,0,2.5);
	\draw[->] (0,.3,1) -- (0,.3,2.5);
	\draw[->] (1,0,.35) -- (2.5,0,.35);
	\draw[->] (1,.35,0) -- (2.5,.35,0);
	\end{scope}
	\end{tikzpicture}
\end{center}

The final blowup, yielding $[0,1)^3_{\mathrm{tb}}=[[0,1)^3;\{x,y,z=0\};\{y,z=0\};\{x,z=0\},\{x,y=0\}]$, is similar. We use $(x+y)/(x+y+z)$ as a bdf of the blowup of the face corresponding to the $z$-axis, and we can then use 
\begin{align}
\begin{split}
\frac{x/(x+z)}{(x+y)/(x+y+z)} &= \frac{x(x+y+z)}{(x+y)(x+z)} \\ 
\frac{y/(y+z)}{(x+y)/(x+y+z)} &= \frac{y(x+y+z)}{(x+y)(y+z)} 
\end{split}
\end{align}
as bdfs of the faces corresponding to the $yz$- and $xz$-planes, respectively.
Thus, we end up with
\begin{center}
	\begin{tikzpicture}
	\begin{scope}[scale=1.65]
	\filldraw[lightgray!20] (.3,0,2.4) -- (2.4,0,2.4) -- (2.4,0,.35) -- (1,0,.35) -- (.3,0,1) -- cycle;
	\filldraw[lightgray!20] (1,.35,0) -- (.3,1,0) -- (.3,1.8,0) -- (2.4,1.8,0) -- (2.4,.35,0) -- cycle;
	\filldraw[lightgray!20] (1,0,.35) -- (1,.35,0) -- (2.4,.35,0) -- (2.4,0,.35) -- cycle;
	\filldraw[lightgray!20] (.3,0,1) -- (0,.3,1) -- (0,.3,2.4) -- (.3,0,2.4) -- cycle;
	\filldraw[lightgray!20] (0,.3,1) -- (0,1,.3) -- (0,1.8,.3) -- (0,1.8,2.4) -- (0,.3,2.4) -- cycle;
	\filldraw[lightgray!20] (.3,1,0) -- (0,1,.3) -- (0,1.8,.3) -- (.3,1.8,0) -- cycle;
	\draw[fill=lightgray!20] (.3,1,0) -- (0,1,.3) -- (0,.3,1) -- (.3,0,1) -- (1,0,.35) -- (1,.35,0) -- cycle;
	\node (1) at (2,1.5,0) {$y(x+y+z)(x+y)^{-1}(y+z)^{-1}$};
	\node (2) at (2.1,0,1.95) {$z(x+y+z)(x+z)^{-1}(y+z)^{-1}$};
	\node (3) at (-1.25,2.25,2.25) {$x(x+y+z)(x+y)^{-1}(x+z)^{-1}$};
	\node (4) at (.4,.25,.25) {$x+y+z$};
	\node (5) at (2,.1,0) {$(y+z)/(x+y+z)$};
	\draw[dotted] (0,0,1.5) -- (0,0,2.1) -- (-.2,0,2) node[left] {$(x+z)/(x+y+z)$};
	\draw[dotted] (.1,1.5,.1) -- (.1,2.1,.1) node[above] {$(x+y)/(x+y+z)$};
	\draw[->] (.3,1,0) -- (.3,2,0);
	\draw[->] (0,1,.3) -- (0,2,.3);
	\draw[->] (.3,0,1) -- (.3,0,2.5);
	\draw[->] (0,.3,1) -- (0,.3,2.5);
	\draw[->] (1,0,.35) -- (2.5,0,.35);
	\draw[->] (1,.35,0) -- (2.5,.35,0);
	\end{scope} 
	\end{tikzpicture}
\end{center}
as our final result.

This establishes the first part of \Cref{prop:coordinates}, at least in the $N=3$ case. 

The rest can be deduced. For example, consider the upper-left corner of the hexagonal face $\mathrm{f}_{\{\{1,2,3\}\}}$ in $[0,1)^3_{\mathrm{tb}}$. This is $\mathrm{f}_{\{\{1\} ,\{1,2\},\{1,2,3\}\}}$.
Nearby, $z\gg y \gg x$, so, in some neighborhood $U$ of that corner, and
\begin{equation}
x+y+z  \in z C^\infty(U;\bbR^+), \qquad 
x+z \in z C^\infty(U;\bbR^+), \quad x+y \in y C^\infty(U;\bbR^+).
\end{equation}
Thus, the chosen bdfs depicted above are $x+y+z \in z C^\infty(U;\bbR^+)$, $(x+y)/(x+y+z) \in (y/z) C^\infty(U;\bbR^+)$, and 
\begin{equation}  
x(x+y+z)(x+y)^{-1} (x+z)^{-1} \in( x/y )C^\infty(U;\bbR^+). 
\end{equation}
This shows that $z,y/z,x/y$ serve as a valid coordinate system within $U$. The only permutation $\sigma\in \frakS_3$ consistent with $\mathtt{I} = \{\{1\} ,\{1,2\},\{1,2,3\}\}$ is $\sigma = (1,3)$, which reverses the order of $1,2,3$. That is, $\sigma(1) = 3$, $\sigma(2)=2$, and $\sigma(3) = 1$. The coordinates $\varrho,\hat{x}_j$ defined in \cref{eq:misc_b77} are 
\begin{equation} 
\varrho  = x_3 = z, \quad \hat{x}_2 = x_{2} / x_3 = y/z,
\end{equation} 
and $\hat{x}_1 = x_1/x_2 = x/y$. It can be seen that $U$ can be taken to be any open set not containing any of the other corners of $\mathrm{f}_{\{\{1,2,3\}\}}$. Each corner is analogous, so the final clause of \Cref{prop:coordinates} follows, at least in the considered $N=3$ case, from the computations above.

\printbibliography

\end{document}